\numberwithin{equation}{section}
\numberwithin{figure}{section}
\theoremstyle{plain}
\newtheorem*{thm*}{\protect\theoremname}
\theoremstyle{plain}
\newtheorem{thm}{\protect\theoremname}[section]
\theoremstyle{plain}
\newtheorem{lem}[thm]{\protect\lemmaname}
\theoremstyle{remark}
\newtheorem{rem}[thm]{\protect\remarkname}
\theoremstyle{plain}
\newtheorem{prop}[thm]{\protect\propositionname}
\theoremstyle{plain}
\newtheorem{cor}[thm]{\protect\corollaryname}
\theoremstyle{plain}
\theoremstyle{plain}
\theoremstyle{plain}
\theoremstyle{plain}
\newtheorem{defn}[thm]{Definition}
\theoremstyle{plain}
\theoremstyle{plain}
\newtheorem{ass}[thm]{Assumption}
\newcommand{\SLE}{\mathrm{SLE}}
\newcommand{\PR}{\mathbb{P}}
\newcommand{\EX}{\mathbb{E}}
\newcommand{\sE}{\mathscr{E}}
\newcommand{\sF}{\mathscr{F}}
\newcommand{\sG}{\mathscr{G}}
\newcommand{\sB}{\mathscr{B}}
\newcommand{\sH}{\mathscr{H}}
\newcommand{\sZ}{\mathcal{Z}}
\newcommand{\bR}{\mathbb{R}}
\newcommand{\R}{\bR}
\newcommand{\bZ}{\mathbb{Z}}
\newcommand{\Z}{\bZ}
\newcommand{\bN}{\mathbb{N}}
\newcommand{\N}{\bN}
\newcommand{\bC}{\mathbb{C}}
\newcommand{\C}{\bC}
\newcommand{\bH}{\mathbb{H}}
\newcommand{\domain}{\Lambda}
\newcommand{\bdry}{\partial}
\newcommand{\Mob}{\mu}
\newcommand{\confmap}{\phi} 
\newcommand{\confmapDH}{\boldsymbol{\psi}} 
\newcommand{\ctsfcns}{\mathcal{C}} 
\newcommand{\ud}{\mathrm{d}}
\newcommand{\pder}[1]{\frac{\partial}{\partial#1}}
\newcommand{\pdder}[1]{\frac{\partial^{2}}{\partial#1^{2}}}
\newcommand{\clos}[1]{\overline{ #1 }}
\newcommand{\Arch}{\mathrm{LP}}
\newcommand{\LP}{\Arch}
\newcommand{\chamber}{\mathfrak{X}}
\newcommand{\PartF}{\sZ}
\newcommand{\HarmMeas}{\mathsf{H}}
\newcommand{\Pf}{\mathrm{Pf}}
\newcommand{\diam}{\mathrm{diam}}
\newcommand{\eps}{\varepsilon}
\newcommand{\Gr}{\mathcal{G}}
\renewcommand{\Vert}{\mathcal{V}}
\newcommand{\Edg}{\mathcal{E}}
\newcommand{\edgeof}[2]{{\langle #1 , #2 \rangle}}
\newcommand{\shrinkto}{\downarrow}
\newcommand{\incrto}{\uparrow}
\newcommand{\InfiniteGr}{\Gamma}
\newcommand{\UnitD}{\mathbb{D}}
\newcommand{\InitSegmNoInd}{\lambda}
\newcommand{\InitSegm}[1]{\lambda_{#1}} 
\newcommand{\InitSegmLatt}[2]{\lambda^{(#1)}_{#2}} 
\newcommand{\InitSegmDelta}[1]{\lambda_{({#1})}} 
\newcommand{\FinalSegmDelta}[1]{\vartheta_{({#1})}} 
\newcommand{\DrFcn}[2]{W_{#1; #2}} 
\newcommand{\DrFcnNoInd}{W}
\newcommand{\DrFcnLattNotime}[2]{W^{(#1)}_{#2}} 
\newcommand{\DrFcnLatt}[3]{W^{(#1)}_{#2; #3}} 
\newcommand{\IterDrFcnNoInd}{\tilde{W}}
\newcommand{\IterDrFcnLattNotime}[2]{\tilde{W}^{(#1)}_{#2}} 
\newcommand{\IterDrFcnLatt}[3]{\tilde{W}^{(#1)}_{#2; #3}} 
\newcommand{\RandCurve}[1]{\gamma_{#1}} 
\newcommand{\RandCurveD}[1]{\gamma_{\mathbb{D}; #1}} 
\newcommand{\DetDrFcn}{V}
\newcommand{\LoeNbhd}{\mathcal{N}}
\newcommand{\confmapSH}{\varpi}
\newcommand{\DetIterDrFcn}{\tilde{V}}
\newcommand{\Dr}{\DetDrFcn}
\newcommand{\EXNSLE}{\EX^{N\textrm{-}\SLE}}
\newcommand{\PRNSLE}{\PR^{N\textrm{-}\SLE}}
\newcommand{\EXSLEcurves}[1]{\EX^{#1\textrm{-}\SLE}}
\newcommand{\SLEcurve}{\eta}
\newcommand{\DetCurve}{\nu}
\newcommand{\Unitp}{\tilde{p}} 
\newcommand{\eqd}{\stackrel{\scriptsize{(d)}}{=}}
\newcommand{\FKsub}{\omega}
\newcommand{\Mart}{\mathscr{M}}
\newcommand*{\centerfloat}{%
  \parindent \z@
  \leftskip \z@ \@plus 1fil \@minus \textwidth
  \rightskip\leftskip
  \parfillskip \z@skip}
\providecommand{\corollaryname}{Corollary}
\providecommand{\lemmaname}{Lemma}
\providecommand{\propositionname}{Proposition}
\providecommand{\remarkname}{Remark}
\providecommand{\theoremname}{Theorem}
\providecommand{\conjecturename}{Conjecture}
\definecolor{kallecol}{rgb}{.75,.0,.55}
\begin{document}



\

\vspace{2.5cm}

\begin{center}
\LARGE \bf \scshape {Multiple SLE type scaling limits: from local to global
}
\end{center}

\vspace{0.75cm}

\begin{center}
{\large \scshape Alex Karrila}\\
{\footnotesize{\tt alex.karrila@aalto.fi}; \texttt{alex.karrila@gmail.com}}\\
{\small{Department of Mathematics and Systems Analysis}}\\
{\small{P.O. Box 11100, FI-00076 Aalto University, Finland}}\bigskip{}
\end{center}

\vspace{0.75cm}

\begin{center}
\begin{minipage}{0.85\textwidth} \footnotesize
{\scshape Abstract.}
We consider collections of $N$ chordal random curves obtained from a critical lattice model on a planar graph, in the limit when a fine-mesh graph approximates a simply-connected domain.
We define and study candidates for such limits in terms of conformally invariant collections of random curves, generated via iterated Loewner equations. These curves are a natural ``domain Markov extension'' of the earlier introduced local multiple SLE initial segments to global multiple SLE curves. For realizing them as scaling limits, we provide two \emph{a priori} results to guarantee the precompactness of the discrete random curves and to allow promoting a discrete domain Markov property to the scaling limit. These results essentially only take as input certain crossing conditions, very similar to those introduced by Kemppainen and Smirnov, and they allow the identification of scaling limits via the martingale strategy of classical SLE convergence proofs.
The use of these results is exemplified with convergence proofs in various lattice models.
\end{minipage}
\end{center}

\vspace{0.75cm}
\tableofcontents

\bigskip{}

\section{Introduction}
\label{sec: intro}

\addtocontents{toc}{\setcounter{tocdepth}{1}}

\subsection{Background}

The scaling limits of critical random models on lattices, as the lattice mesh size tends to zero, are studied in physics via Conformal field theory~\cite{Polyakov, BPZ-infinite_conformal_symmetry_in_2D_QFT, BPZ-infinite_conformal_symmetry_of_critical_fluctiations, Cardy-conformal_invariance_and_statistical_mechanics}. One mathematical approach to proving conformal invariance in such limits is to characterize the scaling limits of some discrete interfaces in terms of conformally invariant random curves. A breakthrough in this approach was the observation by Schramm~\cite{Schramm-LERW_and_UST} that if such conformally invariant scaling limits exist and inherit the domain Markov property --- a domain reduction property prominent in many simple lattice models --- they belong to a one-parameter family of random curve models, called Schramm-Loewner evolutions (SLEs). This has led to the identification of scaling limits in various lattice models in terms of SLE type curves; see~\cite{Smirnov-critical_percolation, LSW-LERW_and_UST, SS05, CN07, Zhan-scaling_limits_of_planar_LERW, SS09,  CDHKS-convergence_of_Ising_interfaces_to_SLE} on chordal SLEs and~\cite{LSW-LERW_and_UST, Zhan-scaling_limits_of_planar_LERW, HK-Ising_interfaces_and_free_boundary_conditions, Izyurov-Smirnovs_observable_for_free_boundary_conditions, KS-bdary_loops_FK, LV-natural_parametrization_for_SLE, Wu17, Izyurov-critical_Ising_interfaces_in_multiply_connected_domains, KS18, BPW, GW18} on other SLE type curves. 
This paper in concerned with SLE type models and convergence results for multiple simultaneous chordal curves.

All SLE convergence proofs consist of two parts: precompactness and identification. Precompactness means that any sequence of discrete curves on lattices of decreasing mesh sizes has a weakly convergent subsequence. The identification of any subsequential weak limit then proves weak convergence along the entire sequence. These two parts are in a typical proof very different in spirit: the precompactness relies on verifying certain \emph{a priori} crossing estimates that are \emph{non-specific}, in the sense that they hold in a wide range of lattice models. A machinery of precompactness results then applies for the curves~\cite{AB-regularity_and_dim_bounds_for_random_curves, KS, mie}.
The identification, following the nowadays established strategy of~\cite{Smirnov-ICM}, relies on finding an observable in the lattice model that is a martingale under growing an interface, then promoting this martingale to the subsequential limit by a strong enough convergence of the observable, and finally showing that the obtained continuous martingale characterizes the scaling limit. In contrast to the precompactness part, the identification step relies on the exact, highly \emph{model-specific} relations in the matringale and its convergence, and it seems to be the bottleneck in finding SLE convergence proofs. This paper derives non-specific \emph{a priori} results that allow the use of such martingale identification of multiple SLE type scaling limits, and provides several examples of convergence proofs.

Two SLE variants have then been proposed to describe the scaling limit of multiple simultaneous chordal interfaces: the local~\cite{BBK-multiple_SLEs, Dubedat-commutation, Graham-multiple_SLEs, KP-pure_partition_functions_of_multiple_SLEs} and global~\cite{LK-configurational_measure, Lawler-glob_NSLE, PW, BPW} multiple SLEs, and both models have their advantages and disadvantages when working with convergence results. 

Let us briefly discuss local multiple SLEs first --- see Section~\ref{sec: SLE} for a more formal introduction and Figure~\ref{fig: local and global NSLE}(left) for an illustration. Consider a simply-connected domain $\domain$ with $2N$ distinct marked boundary points $p_1, \ldots, p_{2N}$. The local multiple SLE on some disjoint neighbourhoods $U_1, \ldots, U_{2N}$ of the marked boundary points in $\domain$ yields $2N$ curve initial segments (``localizations'') starting from a point $p_i$, $1 \le i \le 2N$, up to exiting the corresponding neighbourhood $U_i$. These initial segments are described explicitly via Loewner's equation, as suitably weighted chordal SLE measures of initial segments in the localization neighbourhoods. One important motivation for studying multiple SLEs is that these weights are given by the most central objects of Conformal field theory, the correlation functions, see e.g.~\cite{BBK-multiple_SLEs, Graham-multiple_SLEs, KKP2}. Back to scaling limits, an advantage of the local multiple SLEs is its similarity to the chordal SLE, while its disadvantages are that it only describes initial segments and that a new martingale observable needs to be introduced for a convergence proof. 

\begin{figure}
\includegraphics[width=0.45\textwidth]{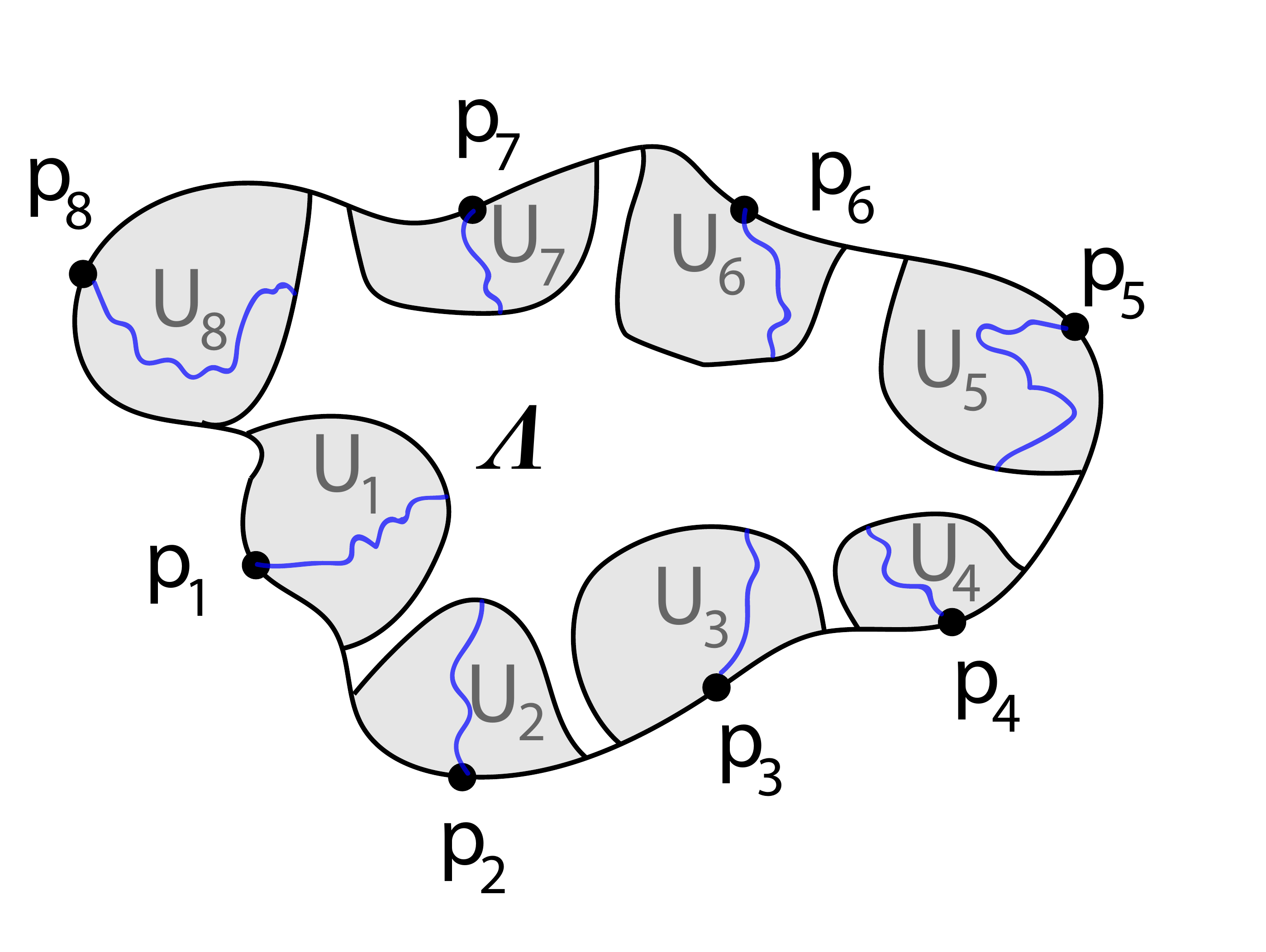} \quad
\includegraphics[width=0.45\textwidth]{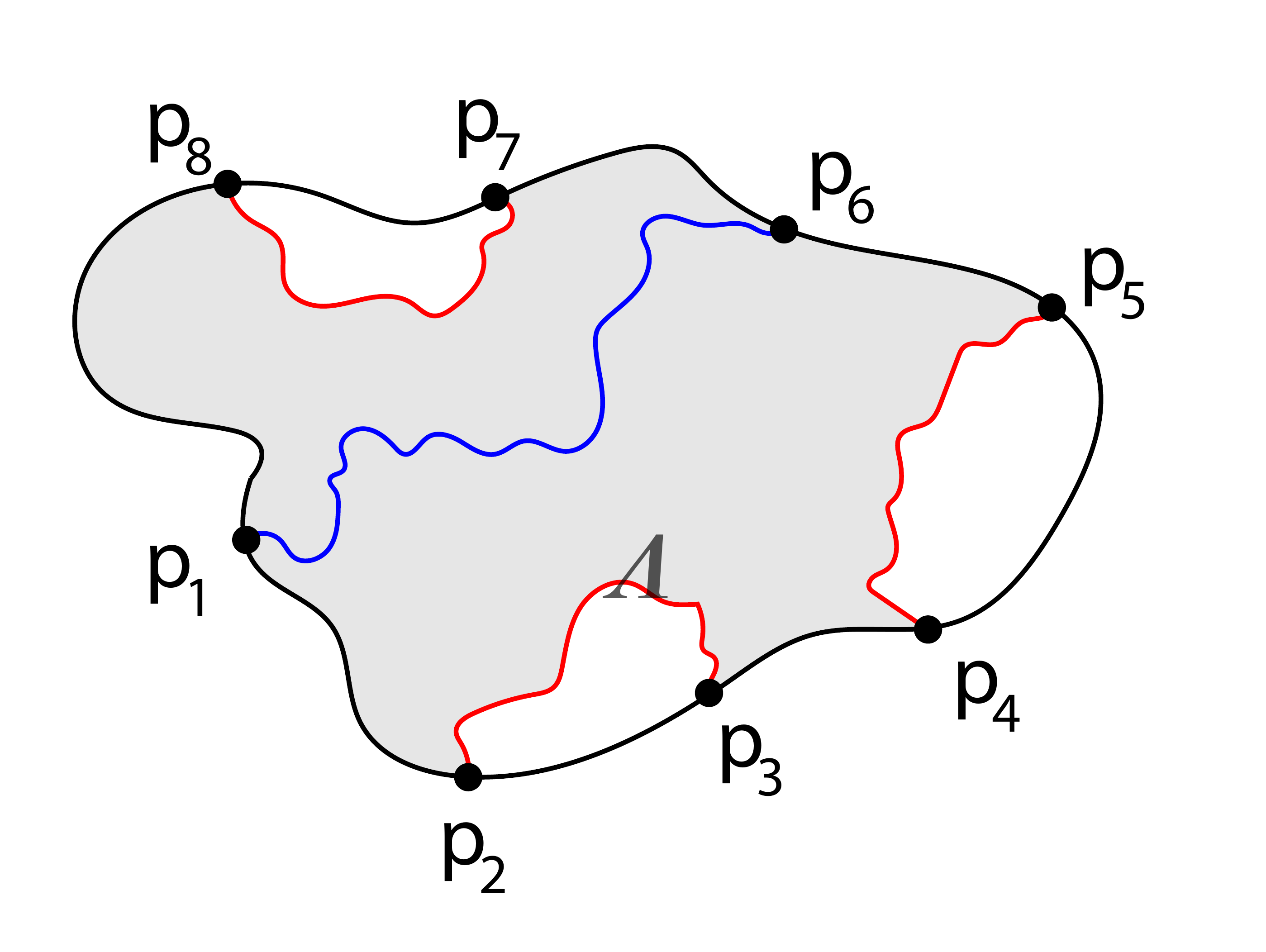}
\caption{ \label{fig: local and global NSLE}
Left: A schematic illustration of the local multiple SLE in a domain $\domain$ with eight marked boundary points and their localization neighbourhoods. Right: A schematic illustration of the global multiple SLE with the pairing $\{ \{ 1, 6\}, \{2, 3 \}, \{ 4, 5 \}, \{ 7, 8 \} \}$ of the boundary points. Conditional on the red curves, the blue one is a chordal SLE in the subdomain of $\domain$ left for it, shaded in the figure.
}
\end{figure}

The global multiple SLE on $(\domain; p_1, \ldots, p_{2N})$, in turn, describes collections of $N$ mutually non-crossing random curves $\gamma_1, \ldots, \gamma_N$, pairing the marked boundary points in some predetermined manner, see Figure~\ref{fig: local and global NSLE}(right) for an illustration. Given the SLE parameter $\kappa \in (0, 8)$ it is  defined, following~\cite{PW, BPW} (see also~\cite{IG2, MSW}), as the stationary distribution of the discrete time Markov chain on collections of $N$ curves, where at each time step one curve is resampled as a chordal $\SLE(\kappa)$ in the domain left for it by the remaining curves. Such a stationary distribution is proven to be unique~\cite{BPW} and exist~\cite{PW} for $\kappa \in (0, 4]$; also the case $\kappa \in (4, 8)$ is conjectured in~\cite{BPW}. This definition is rather implicit, but it can be shown to yield a local multiple $\SLE(\kappa)$ if $\kappa \in (0, 4]$~\cite{PW}. We will not rely on global multiple SLEs in this paper, but we will show that the obtained scaling limits satisfy the above Markov chain stationarity.

From the lattice model point of view, a great benefit of the global multiple SLE is that it often yields miraculously short convergence proofs, provided that the convergence of the corresponding one-curve lattice model to chordal $\SLE(\kappa)$ has been established. Namely, lattice models with domain Markov property satisfy a discrete version of this curve resampling stationarity, and with some \emph{a priori} estimates, it can be promoted to a subsequential scaling limit; see~\cite{BPW} for examples. In particular, no new matringale observable is needed after to the one-curve convergence. Nevertheless, convergence proofs of this type only hold for conditioned lattice models, where the pairing of the boundary points by the interfaces is predetermined. Such a conditioning may appear slightly unnatural, for instance for magnetization cluster interfaces in the Ising magnetism model. To find an unconditional scaling limit, one would thus need to solve the probabilities of the different pairings of boundary points as in~\cite{PW18} for the Ising model and~\cite{Dubedat-Euler_integrals, {KW-boundary_partitions_in_trees_and_dimers}, {KW-double_dimer_pairings_and_skew_Young_diagrams}, KKP, PW} for some other models. This seems not to be easy. Indeed, in lattice models with the discrete domain Markov property, such pairing probabilities yield, under growing an interface, conditional pairing probabilities, and are hence martingales. Proving their convergence should thus be roughly equivalent to an SLE identification step with the usual martingale strategy; see~\cite{Smirnov-critical_percolation, KS18} for examples.

Finally, we remark that the connection of these two multiple SLE type models is not completely clear. For $\kappa \in (0,4]$ the initial segments of a global multiple SLEs are local multiple SLEs~\cite{PW} (see also~\cite{Wu17} on $N=2$ curves), but for $\kappa \in (4, 8)$ such a connection remains conjectural. Furthermore, in this paper we will provide a warning example (with $\kappa = 6$ and $N \ge 3$) showing that curves whose initial segments in any localization neighbourhoods are local multiple SLEs are not necessarily global multiple SLEs.

%
%
%
%
%

\subsection{Contributions of this paper}

In this paper, we show how the convergence of multiple simultaneous chordal interfaces can be proven following the classical strategy of~\cite{Smirnov-ICM}. We characterize such limits in terms of explicit Loewner growth processes similar to local multiple SLEs, and show that such scaling limits are convex combinations of global multiple SLEs wih different pairings. Roughly speaking, this takes three ingredients.

First, we propose a natural ``domain Markov extension'' of local multiple SLEs to full curves, which we call local-to-global multiple SLEs. The well-definedness of the obtained curves follows by realizing them as scaling limits. (For $\kappa \in (0, 4]$ it could also be done based on global multiple SLEs being local, but we avoid taking this or other SLE theory as logical inputs, consistently relying only on arguments based on the underlying lattice models.)

Second, we provide two important \emph{non-specific} results related to the convergence of lattice models: a straightforward generalization of the precompactness conditions~\cite{KS, mie} for multiple curves, and a result showing that any subsequential scaling limit inherits a domain Markov type property from the discrete model. By the latter property, identifying \emph{one initial segment} of \emph{one curve} as a local multiple SLE suffices to identify the \emph{full collection} of \emph{full curves} as its domain Markov extension. The \emph{a priori} results needed for these non-specific results to hold are the discrete domain Markov property and a crossing condition, very similar to that in~\cite{KS} to guarantee precompactness. In particular, these conditions are known to be satisfied in most well-studied lattice models.
As a by-product of the domain Markov type properties, we also obtain the connection to global multiple SLEs.

Third, a convergence proof requires an identification step, in this case identifying one initial segment of one curve as a local multiple SLE. We review three priorly known convergence proofs in Ising, FK-Ising and percolation models, and two new proofs, in detail for the multiple harmonic explorer curves and a sketch for the uniform spanning tree branches. Also FK cluster model is discussed.

Except for referring to the precompactness results of~\cite{KS, mie}, the paper is self-contained. We have tried minimize the amount of logical inputs taken, as well as the \emph{a priori} estimates required from the lattice models.

%
%
%
%

\subsection{Related work}

Apart from the related work mentioned so far, let us mention some references that address similar underlying principles.

One motivation and a Conformal field theory approach to the study of multiple SLEs is their description as chordal SLEs weighted by correlation functions. Some central notions of Conformal field theory, such as fusion and conformal blocks, do not arise when studying single SLEs. This is not the perspective of this paper, but should be kept in mind, see e.g.~\cite{BBK-multiple_SLEs, Graham-multiple_SLEs, Dubedat-commutation, Dubedat-SLE_and_Virasoro_representations_localization, Dubedat-SLE_and_Virasoro_representations_fusion, Peltola-basis_for_solutions_of_BSA_PDEs, KKP2} for more.

The idea of working with non-specific results based on crossing estimates dates back to~\cite{AB-regularity_and_dim_bounds_for_random_curves, KS}. We also have to prove precompactness in different topologies and the agreement of the different weak limits, similarly to~\cite{KS, mie}. As regards the non-specific result on the domain Markov property, the non-triviality of promoting the discrete domain Markov property to a scaling limit has been addressed recently in, e.g.,~\cite{GW18, BPW}. It should be noticed the latter non-specific results in this paper take very little inputs and follow (essentially) once precompactness is verified with the stadard crossing estimates, cf.~\cite{KS}.

The idea of proposing scaling limit random models that are well-defined due to being scaling limits is present in SLE literature at least in~\cite{LSW-LERW_and_UST, Zhan-scaling_limits_of_planar_LERW, Izyurov-critical_Ising_interfaces_in_multiply_connected_domains, BPW}.

This work was initiated in attempt to answer Conjecture~4.3 in the author's earlier paper~\cite{KKP}, whose proof is now sketched in Section~\ref{subsec: LERW outline}. 
Multiple SLE type models have since then attracted quite some attention, see~\cite{Izyurov-critical_Ising_interfaces_in_multiply_connected_domains, Wu17, PW, KS18, BPW, PW18}. 


%
%

\subsection{Organization}

This paper is organized as follows. In Section~\ref{sec: SLE}, we introduce the local multiple SLE and propose its domain Markov extension, the local-to-global multiple SLE. Section~\ref{sec: preli} contains some preliminaries used throughout the paper. Section~\ref{sec: precompactness} addresses non-specific results on precompactness and contains our first main result, Theorem~\ref{thm: precompactness thm multiple curves}. Section~\ref{sec: loc-2-glob} addresses non-specific results on domain Markov property, with the main theorem~\ref{thm: loc-2-glob multiple SLE convergence, kappa le 4} guaranteeing that identification of one initial segment actually identifies the full collection of full curves. We also give a variant of that theorem, suited for the local multiple SLE collection of initial segments, as well as some consequences. For the ease of reading, Sections~\ref{sec: precompactness} and~\ref{sec: loc-2-glob} are arranged so that the statements of the main results are given first, and the technical proofs are postponed to the end of the section. Finally, in Section~\ref{sec: application examples} we give various applications of these results, addressing several lattice models.

\subsection{Acknowledgements}

The author wishes to thank Dmitry Chelkak, Konstantin Izyurov, Antti Kemppainen, Hao Wu, and especially Kalle Kyt\"{o}l\"{a} and Eveliina Peltola for useful and interesting discussions and comments on a preliminary version of this paper. The author is supported by the Vilho, Yrj\"{o} and Kalle V\"{a}is\"{a}l\"{a} Foundation.

\bigskip{}

\section{Multiple SLE type models}
\label{sec: SLE}

\addtocontents{toc}{\setcounter{tocdepth}{2}}


The purpose of this section is to give a sufficient overview of $N-\SLE$ type random curve models. We emphasize that the results of this paper, describing scaling limits as SLE type curves, \emph{do not take SLE theory as logical inputs}, but only rely properties of the converging lattice models\footnote
{To be very precise, there is one small exception to this rule, namely the \emph{a posteriori} argument in the proof of Proposition~\ref{prop: initial segment end points}, using the fact that the chordal $\SLE(\kappa)$ has no boundary visits if and only if $\kappa \le 4$~\cite{RS-basic_properties_of_SLE}. This result is used only after convergence of a lattice model has already been proven, to yield a more convenient description of the scaling limit.
}.


We begin with a brief exposition of local multiple SLEs in Section~\ref{subsec: local NSLE}. No new results are introduced there. Then, in Section~\ref{subsec: loc-2-glob}, we define local-to-global multiple SLEs. This definition is new, ans it will describe the scaling limits in our main results.
We assume that the reader is familiar with the basic properties of the most well-known SLE variant, the chordal SLE; see, e.g., the text books~\cite{Lawler-SLE_book, Berestycki-SLE_book, Kemppainen-SLE_book}.


\subsection{Local multiple SLE}
\label{subsec: local NSLE}

The local multiple SLE is a generalization of the chordal SLE to handle collections of $N$ simultaneous chordal SLE type random curves connecting $2N$ boundary points, first proposed in~\cite{Dubedat-commutation}. Similarly to the chordal SLE, it is defined via conformal invariance and the hulls of a random Loewner growth process in $\bH$. However, the definition of the local multiple SLE does not give full chordal curves, but in stead only initial segments. A familiar example of an analogous restriction the chordal $\SLE(\kappa)$ between two real points $x_1$ and $x_2$~\cite[Lemma 3]{Dubedat-commutation}.

More formally, the local multiple SLE is defined in simply-connected domains $\domain$ with $2N$ distinct boundary points (or in more general, prime ends) $p_i$, $1 \le i \le 2N$, numbered counterclockwise, and their localization neighbourhoods $U_i$ which are closed neighbourhoods of $p_i$ in $\overline{\domain}$, and pairwise disjoint, ${U_i} \cap {U_j} = \emptyset$ if $i \ne j$, and such that $\domain \setminus U_i$ is simply-connected for all $i$. See Figure~\ref{fig: local and global NSLE}(left) for an illustration. The  local multiple SLE is then a measure on collections of curve initial segments from $p_i$ up to the \emph{exit time} of $U_i$, i.e., the first hitting time of $ \overline{ (\domain \setminus U_i ) } $.

Fundamentally, local multiple SLEs arise as multiple random random curve initial segments that satisfy conformal invariance and the domain Markov property and such that the marginal law of each initial segment is absolutely continuous with respect to initial segments of the chordal SLE, see~\cite{Dubedat-commutation, KP-pure_partition_functions_of_multiple_SLEs}. 
Nevertheless,~\cite[Theorem A.4]{KP-pure_partition_functions_of_multiple_SLEs} gives an equivalent characterization in terms of a Loewner chain driven by the sum of a Brownian motion and a partition function term. We adopt the latter as a defintion of local multiple SLEs for the rest of this paper. 

In this paper, we will restrict our attention to SLEs with parameter $\kappa \in (0, 8)$.

\subsubsection{\textbf{Partition functions}}
\label{subsubsec: local multiple SLE partition functions}

The definition of the local multiple $\SLE(\kappa)$ with $2N$ boundary points relies on a partition function $\PartF$.
A function $\PartF$ defined on a chamber $\chamber_{2N} = \{ (x_1, \ldots, x_{2N}) \in \R^{2N} \; : x_1 < \ldots < x_{2N}  \}$ is called an $N$-$\SLE(\kappa)$ partition function if it is positive, $\PartF (x_1, \ldots, x_{2N})  > 0$ for all $(x_1, \ldots, x_{2N}) \in \chamber_{2N}$, and  satisfies the linear partial differential equations (PDEs)
\begin{align}
\label{eq: PDE for multiple SLEs} \tag{PDE} 
& \left[ \frac{\kappa}{2} \pdder{x_j}
    + \sum_{ \substack{ i = 1 \\ i \neq j } }^N \Big( \frac{2}{x_i-x_j} \pder{x_i} - \frac{2h}{(x_i-x_j)^2} \Big) \right] \PartF (x_1 , \ldots, x_{2N}) = 0 \qquad \text{for all } j=1,\ldots,2N,
\end{align}
where 
\begin{align*}
h = 
h ( \kappa )= \frac{6 - \kappa}{2 \kappa}
\end{align*}
and the M\"obius covariance
\begin{align}
\label{eq: COV for multiple SLEs} \tag{COV} 
& \PartF(x_1 , \ldots, x_{2N}) = 
    \prod_{i=1}^{2 N} \Mob'(x_i)^{h} \times \PartF(\Mob(x_1) , \ldots, \Mob(x_{2N}))  \\
\nonumber
& \text{for all } \Mob(z) = \frac{a z + b}{c z + d}, \; \text{ with } a,b,c,d \in \bR, \; ad-bc > 0, 
 \text{ such that } \Mob(x_1) < \cdots < \Mob(x_{2N}).
\end{align}

\subsubsection*{\textbf{Remarks}}

Characterizing the positive solutions to~\eqref{eq: PDE for multiple SLEs} and~\eqref{eq: COV for multiple SLEs}, and hence all local multiple SLEs, is a long-standing task, recently completed for $\kappa \in (0, 4]$, and still partly open for $\kappa \in (4, 8)$~\cite{FK-solution_space_for_a_system_of_null_state_PDEs_1, FK-solution_space_for_a_system_of_null_state_PDEs_2, FK-solution_space_for_a_system_of_null_state_PDEs_3, FK-solution_space_for_a_system_of_null_state_PDEs_4, KP-pure_partition_functions_of_multiple_SLEs, KKP, Wu17, PW, BPW}. 
We stress that \emph{the results in this paper do not rely on the analysis of these PDE solutions}. In stead, we assume that partition functions are obtained as a part of the identification of a scaling limit; see Section~\ref{subsubsec: HE local identification} for a conctrete example.

As a second remark, 
the conditions~\eqref{eq: PDE for multiple SLEs} and~\eqref{eq: COV for multiple SLEs} arise in the derivation of~\cite{Dubedat-commutation} by purely probabilistic arguments, but the exact same conditions are also encountered in Conformal field theory as the covariance rule and degeneracy PDEs~\cite{BPZ-infinite_conformal_symmetry_in_2D_QFT} for primary boundary fields of conformal weight $h$; see, e.g.,~\cite[Section~3.3]{KKP2}.



\subsubsection{\textbf{One-curve marginals in $\bH$}}

Let us now describe the marginal law of the initial segment from the $j$:th marked boundary point in a local multiple $\SLE(\kappa)$ in $\bH$, given the partition function $\PartF$ as above. The initial segment is described by a Loewner equation up to the hitting time $T_j$ of $\overline{ ( \bH \setminus U_j  ) }$, where $U_j \subset \overline{\bH}$ is the localization neighbourhood. Let us denote the real boundary points by $p_i = x_i$, $1 \le i \le 2N$, and assume that $-\infty < x_1 < \ldots < x_{2N}< + \infty.$ We will also need to assume that the localization neighbourhood $U_j $ is bounded (in other words, it is a compact $\bH$-hull). Then, the marginal law of the $j$:th initial segment is described by the Loewner differential equation
\begin{align}
\label{eq: loc NSLE Loewner eq}
\partial_t g_t (z) &= \frac{2}{g_t(z) - \DrFcn{j}{t}}, 
\end{align}
where the \emph{driving function} $\DrFcn{j}{t}$, for $t \in [0, T_j ]$ is determined by the system stochastic differential equations
\begin{align}
\nonumber
& (\DrFcn{1}{0}, \ldots, \DrFcn{2N}{0} ) = (x_0, \ldots, x_{2N}) \\
\label{eq: SDE definition of N-SLE}
&
\begin{cases}
\ud \DrFcn{j}{t} &= \sqrt{\kappa} \ud B_t + \kappa \partial_j \left( \log \PartF (\DrFcn{1}{t}, \ldots, \DrFcn{2N}{t} ) \right) \ud t \\
\ud  \DrFcn{i}{t} &= \frac{ 2 \ud t}{ \DrFcn{i}{t} - \DrFcn{j}{t} }, \qquad i \ne j.
\end{cases}
\end{align}
Here $\PartF$ is the partition function of the local multiple SLE, and two partition functions that are not constant multiples of each others will yield different multiple SLE measures. From basic SDE theory, the driving function stopped at $T_j$ is a measurable random variable in the topology of Section~\ref{subsubsec: space of fcns}.

\subsubsection*{\textbf{Remarks}} First, by absolute continuity with respect to the chordal SLE~(see~\cite{Dubedat-commutation} or~\cite[Section~A.3]{KP-pure_partition_functions_of_multiple_SLEs}), the one-curve marginals up to the exit time of $U_j$ enjoy many good properties of the chordal SLE. For instance, for the model in $\bH$, the local multiple SLE hulls are indeed curves~\cite{RS-basic_properties_of_SLE} and share the same fractal dimension depending on $\kappa$~\cite{Beffara-dimension}. Likewise, for $\kappa < 8$, their conformal images in a bounded domain $(\domain; p_1, \ldots, p_{2N})$, with marked prime ends where radial limits exist (see Section~\ref{subsubec: rad cont ext of conf maps}), are curves and measurable random variables the topology of Section~\ref{subsubsec: space of curves}; see~\cite[Proposition~5.2]{mie}.

Second, if there are $N=1$ curves, condition~\eqref{eq: COV for multiple SLEs} for scalings $\mu$ alone determines a solution to~\eqref{eq: PDE for multiple SLEs}, unique up to scaling, namely $\PartF(x_1, x_2) \propto (x_2 - x_1)^{-2h}$. Then, the growth process~\eqref{eq: SDE definition of N-SLE} coincides with the chordal $\SLE(\kappa)$ from $x_1$ to $x_2$, appearing in, e.g.,~\cite[Lemma 3]{Dubedat-commutation}. For general $N$, the variables $W^{(i)}_t$ , $i \ne j$ are the conformal images of the boundary points, $W^{(i)}_t = g_t (x_i)$, and $ W^{(j)}_t$ is the conformal image of the tip of the growing curve at time $t$.


%

\subsubsection{\textbf{Curve collections in $\UnitD$}}
\label{subsubsec: loc N-SLE in D}

Let us finally address the local multiple $\SLE(\kappa)$ as a collection of curves. Due to using the topology of compact curves in this paper (see Section~\ref{subsubsec: space of curves}) we will now use the unit disc $\UnitD$ as our reference domain in stead of $\bH$.
Thus, we consider the domain $(\UnitD; \Unitp_1, \ldots, \Unitp_{2N})$ with localization neighbourhoods $U_1, \ldots, U_{2N}$. We fix a point $\Unitp_\infty \in \bdry \UnitD$ on the counterclockwise arc of $\bdry \UnitD$ from $\Unitp_{2N} $ to $\Unitp_1$, and a conformal map $\confmapDH$ taking $(\UnitD; \Unitp_\infty)$ to $(\bH, \infty)$. (Hence $- \infty < \confmapDH(\Unitp_1) < \ldots < \confmapDH(\Unitp_{2N}) < + \infty$.) 

The local multiple SLE in this setup is a collection of curve initial segments, defined via the regular conditional laws of the driving function of the $j$:th initial segment, conditional on the initial segments $1, 2, \ldots, (j-1)$, for each $j$. (For basics of regular conditional laws, see Appendix~\ref{sec: abstract nonsense}.) Denote by $\InitSegm{j}$ the $j$:th initial segment, up to the hitting time $T_j$ of $\UnitD \setminus U_j$. Given a partition function $\PartF$, the local multiple SLE is now defined as follows.

\begin{defn}
Given the previous initial segments $\InitSegm{1}, \ldots, \InitSegm{{j-1}}$ and a conformal map $\confmapDH_j$ (where $\confmapDH_0 = \confmapDH$) from the connected component of $\UnitD \setminus ( \InitSegm{1} \cup \ldots \cup \InitSegm{j-1} )$ adjacent to the marked boundary points $\InitSegm{1}(T_1), \ldots, \InitSegm{j-1}(T_{j-1}), p_j, \ldots, p_{2N}$, the regular conditional law of the $j$:th initial segment $\InitSegm{j}$, is the conformal image under $\confmapDH_{j-1}^{-1}$ of the curve given by multiple SLE one-curve marginal~\eqref{eq: SDE definition of N-SLE} in the localization neighbourhood $\confmapDH_{j-1} (U_j)$ of $\bH$. The map $\confmapDH_j$ is $g_{T_j} \circ \confmapDH_{j-1}$, where $g_{T_j}$ is given by~\eqref{eq: loc NSLE Loewner eq} and $T_j$ is the exit time of $\confmapDH_{j-1} (U_j)$ by the growth process in $\bH$.
\end{defn}

\subsubsection*{\textbf{Important remarks}}

The definition above does not depend on the choice of the reference point $\Unitp_\infty$ and the conformal map $\confmapDH$ due to the conformal invariance of the local multiple SLE initial segments in $\bH$.

Even if in the above definition, the initial segments are sampled in the order from $1$ to $2N$, any order of sampling will produces the same law of the curves, see~\cite[Sampling procedure~A.3]{KP-pure_partition_functions_of_multiple_SLEs}.

An alternative and perhaps more fundamental way to state the definition above would be in terms of the regular conditional laws of the driving functions of $\confmapDH_{j-1} ( \InitSegm{j} )$ being given by~\eqref{eq: SDE definition of N-SLE}.
To see the equivalence, first by~\cite[Proposition~5.2]{mie} (stated for chordal SLEs, holding for multiple SLEs by absolute continuity) and Corollary~\ref{cor: coordinate change is a bijection}, the driving functions of $\InitSegm{j} $ and the curves $\InitSegm{j} $ are measurable functions of each other. Then, as discussed in Section~\ref{subsec: precompactness for local NSLE} of this paper, specifically commutative diagram~\eqref{dia: local multiple SLE second commutation}, the collections of driving functions of $\InitSegm{j} $ and $\confmapDH_{j-1} ( \InitSegm{j} )$ are measurable functions of each other. By these two-way measurabilities, one can see the equivalence of conditional-law descriptions via curves or driving functions. We have nevertheless chosen to postpone the further treatment in terms of driving functions to later sections to keep the notation minimal in this introductory section.

\subsubsection{\textbf{General domains}}

For a general bounded simply-connected domain $\domain$ with marked prime ends $p_1, \ldots, p_{2N}$ where radial limits exist (see Section~\ref{subsubec: rad cont ext of conf maps}) and their localization neighbourhoods $U_1, \ldots, U_{2N}$, the local multiple SLE is the conformal image of a local multiple SLE in $\UnitD$, with the boundary points and localization neighbourhoods chosen according to the conformal images.

\subsubsection{\textbf{Continuous stopping times}}
\label{subsubsec: cts stopping times}

The (capacity at the) hitting time $T_j$, introduced in the previous paragraph, is not continuous in any of the topologies that we will impose on curves.
This is illustrated in Figure~\ref{fig: discont st} in Appendix~\ref{app: continuous stopping times}. Because the main focus of this article is on weak convergence results, we will often have to use the continuous modifications $\tau_j$ of the hitting times $T_j$. These stopping times are introduced in more detail in Appendix~\ref{app: continuous stopping times} --- for a busy reader it suffices to us to know that they are conformally invariant and satisfy $\tau_j > T_j$. It then follows from the ``local commutation property'' of~\cite{Dubedat-commutation} that if a collection of initial segments $\InitSegm{1}([0, \tau_1]), \ldots, \InitSegm{j-1}([0, \tau_{j-1}])$ satisfies the regular conditional law property of the previous subsection, then also the shorter intial segments $\InitSegm{1}([0, T_1]), \ldots, \InitSegm{j-1}([0, T_{j-1}])$ satisfy the same property. Thus, treating continuous stopping times should be regarded merely as a technicality arising from weak convergence.

\subsection{Local-to-global multiple SLE}
\label{subsec: loc-2-glob}

We now define the local-to-global multiple SLE, which is a natural domain Markov extension of the local multiple SLEs in the preceding subsection, and the main object of interest in this paper.

\subsubsection{\textbf{Unconditional and conditional random curve models}}

A \emph{link pattern} of $N$ links is a partition of $\{1, 2, \ldots, 2N \}$ into $N$ disjoint pairs $\{ \{ a_1, b_1 \}, \ldots, \{ a_N, b_N \} \}$, called \emph{links}, such that the real-line points $ a_i $ and $b_i$, for all $1 \le i  \le N$, can be connected by pairwise disjoint curves in the upper half-plane. The set of all link patterns of $N$ links is denoted by $\LP_N$. We use link patterns to encode in which way some chordal curves pair $2N$ marked boundary points of a simply-connected domain. Note also that due to parity reasons, every link of a link pattern must contain one odd and one even boundary point.

We will define separately local-to-global multiple SLEs and conditional local-to-global multiple SLEs. The unconditional versions arise as scaling limits of interface models when no condition is imposed on the link pattern formed by the interfaces in the corresponding lattice model. Similarly, the conditional version will be scaling limits of $N$ interfaces conditional on each particular link pattern $\alpha \in \LP_N$.

\subsubsection{\textbf{The definitions}}

Let us begin with the unconditional version of the local-to-global multiple SLE. Similarly to local multiple SLE, the definition relies on conformal invariance, Loewner growth processes for suitable initial segments, and regular conditional laws given the initial segments. A new ingredient is induction the number $N$ of curves.

Let $(\domain; p_1, \ldots, p_{2N})$ be a bounded simply-connected planar domain  with $2N$ marked prime ends with radial limits (indexed counterclockwise). Suppose that we are given a family of local multiple $\SLE(\kappa)$ partition functions $\PartF_N$, for $N$ up to some value (possibly all $N \in \N$). We define the \emph{local-to-global multiple $\SLE(\kappa)$} as the following random curves.

\begin{itemize}
\item[1)] (Induction.) If $N=1$, we define the symmetric multiple $\SLE(\kappa)$ to be the usual chordal $\SLE(\kappa)$ on $(\domain; p_1, p_2)$. Assume now that the unconditional multiple $\SLE(\kappa)$ with partition functions $\PartF_N$, is defined for $k$ curves (in any bounded domain with degenerate prime ends), for all $1 \le k \le N-1$, and define it for $N$ curves as follows.
\item[2)] (Conformal invariance.)  Let $\confmap: \domain \to \UnitD$ be a conformal map taking our domain of interest $(\domain; p_1, \ldots, p_{2N})$ to $(\UnitD; \Unitp_1, \ldots, \Unitp_{2N})$. We will define the symmetric mutliple SLEs below in $(\UnitD; \Unitp_1, \ldots, \Unitp_{2N})$ as random curves\footnote{In our topology of random curves, the curves have a direction. We thus choose the convention that every curve flows from odd to even boundary point, $\RandCurveD{1}$ starting from $\Unitp_1$, $\RandCurveD{2}$ from $\Unitp_3$, etc.} $(\RandCurveD{1}, \ldots, \RandCurveD{N})$ and then in $(\domain; p_1, \ldots, p_{2N})$ as the conformal image curves $(\RandCurve{1}, \ldots, \RandCurve{N}) = (\confmap^{-1} (\RandCurveD{1} ), \ldots, \confmap^{-1} ( \RandCurveD{N} ) )$. 
\item[3)] (Initial segments; Figure~\ref{fig: init segments}.) Denote by $\InitSegmDelta{\delta}$ the initial segments of the random curve $\RandCurveD{1}$ in $\UnitD$ starting from the boundary point $\Unitp_1$, until the continuous modification of the hitting time of the $\delta$-neighbourhood of the boundary arc $(\Unitp_2 \Unitp_{2N})$. For all $\delta > 0$, $\InitSegmDelta{\delta}$ is described by the local multiple SLE growth process~\eqref{eq: SDE definition of N-SLE} (with the partition function $\PartF_N$). As $\delta \shrinkto 0$, $\InitSegmDelta{\delta}$ almost surely tend to a closed curve $\InitSegmDelta{0}$ from $\Unitp_1$ to $(\Unitp_2 \Unitp_{2N})$.
\item[4a)] (Conditional laws for $\kappa \in (0, 4]$; Figure~\ref{fig: kappa < 4 sampling}.) The initial segment $\InitSegmDelta{0}$ will almost surely hit the arc $(\Unitp_2 \Unitp_{2N})$ at some even-index marked boundary point, and forms one full random curve, $\InitSegmDelta{0} = \RandCurveD{1}$ (Figure~\ref{fig: kappa < 4 sampling}(left)). The regular conditional distribution of the remaining curves  $\RandCurveD{2}, \ldots, \RandCurveD{N}$ are two independent local-to-global multiple $\SLE(\kappa)$:s in the relevant connected components  of $\UnitD \setminus \RandCurveD{1}$ and with the relevant marked boundary points~(the brown and green domains, curves, and boundary points in Figure~\ref{fig: kappa < 4 sampling}(right)).
\item[4b)] (Conditional laws for $ \kappa \in (4, 8)$; Figure~\ref{fig: kappa > 4 sampling}.) The initial segment $\InitSegmDelta{0}$ will almost surely not hit the arc $(\Unitp_2 \Unitp_{2N})$ at any of the marked boundary points $\Unitp_2, \ldots, \Unitp_{2N}$, and thus $\UnitD \setminus \InitSegmDelta{0} $ has two connected components adjacent to the remaining boundary points $\Unitp_2, \ldots, \Unitp_{2N}$: one with an even and one with an odd number of them; see Figure~\ref{fig: kappa > 4 sampling}(left).
Declare the tip of the initial segment $\InitSegmDelta{0}$ as a new marked boundary point in the ``odd'' component~(in brown in Figure~\ref{fig: kappa > 4 sampling}(right)), so that both components now have an even number of boundary points. The regular conditional distribution of the remainder of the curves $\RandCurveD{1}, \ldots, \RandCurveD{N}$ are two independent local-to-global multiple $\SLE(\kappa)$:s in the relevant connected components of $\UnitD \setminus \RandCurveD{1}$ and with the relevant marked boundary points~(the brown and green domains, curves, and boundary points in Figure~\ref{fig: kappa < 4 sampling}(right)).
\end{itemize}

\begin{figure}
\includegraphics[width=0.4\textwidth]{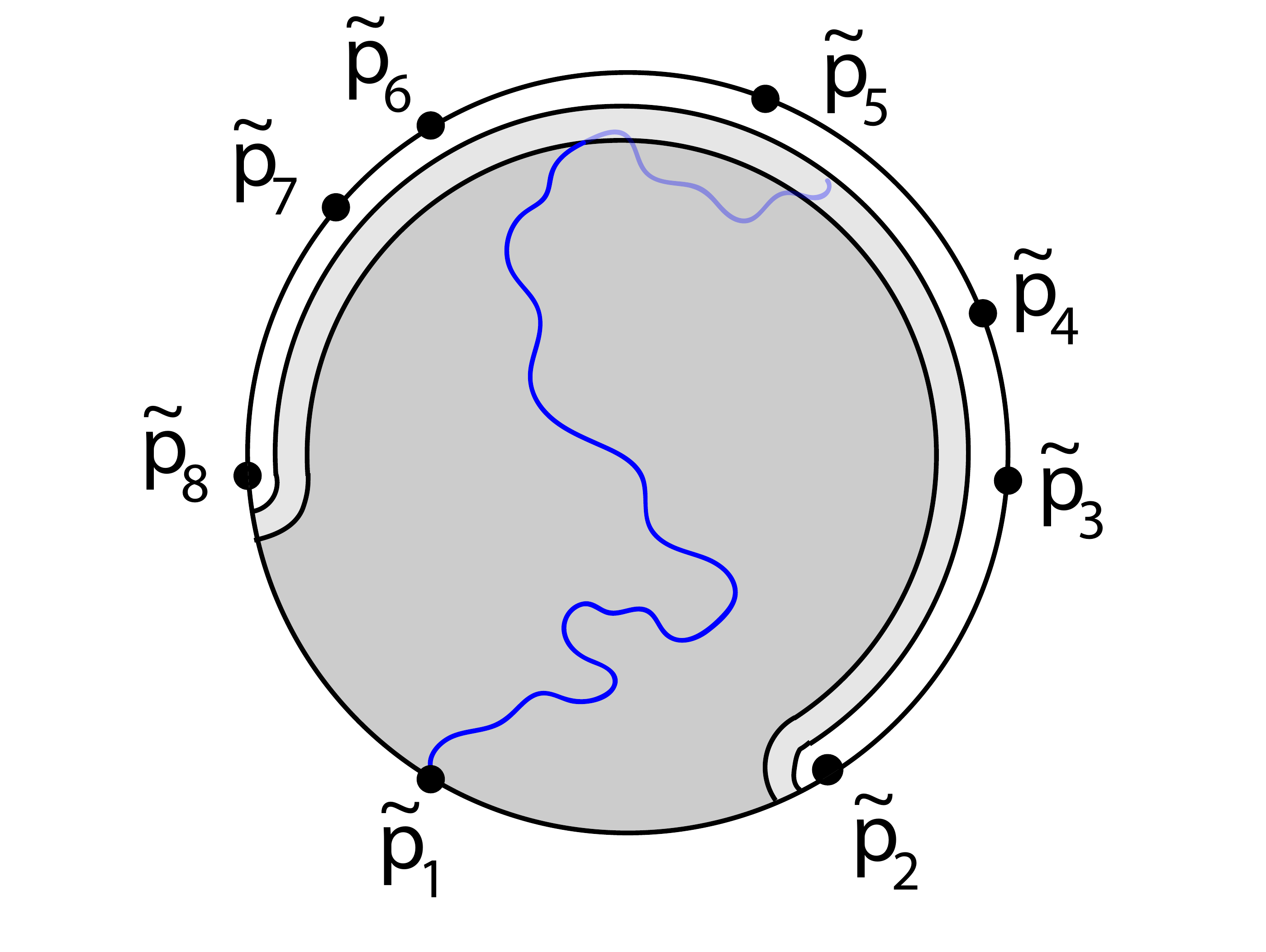}
\caption{\label{fig: init segments}
A schematic illustration of step (3) in the definition of the {local-to-global multiple SLE.}
}
\end{figure}

\begin{figure}
\includegraphics[width=0.4\textwidth]{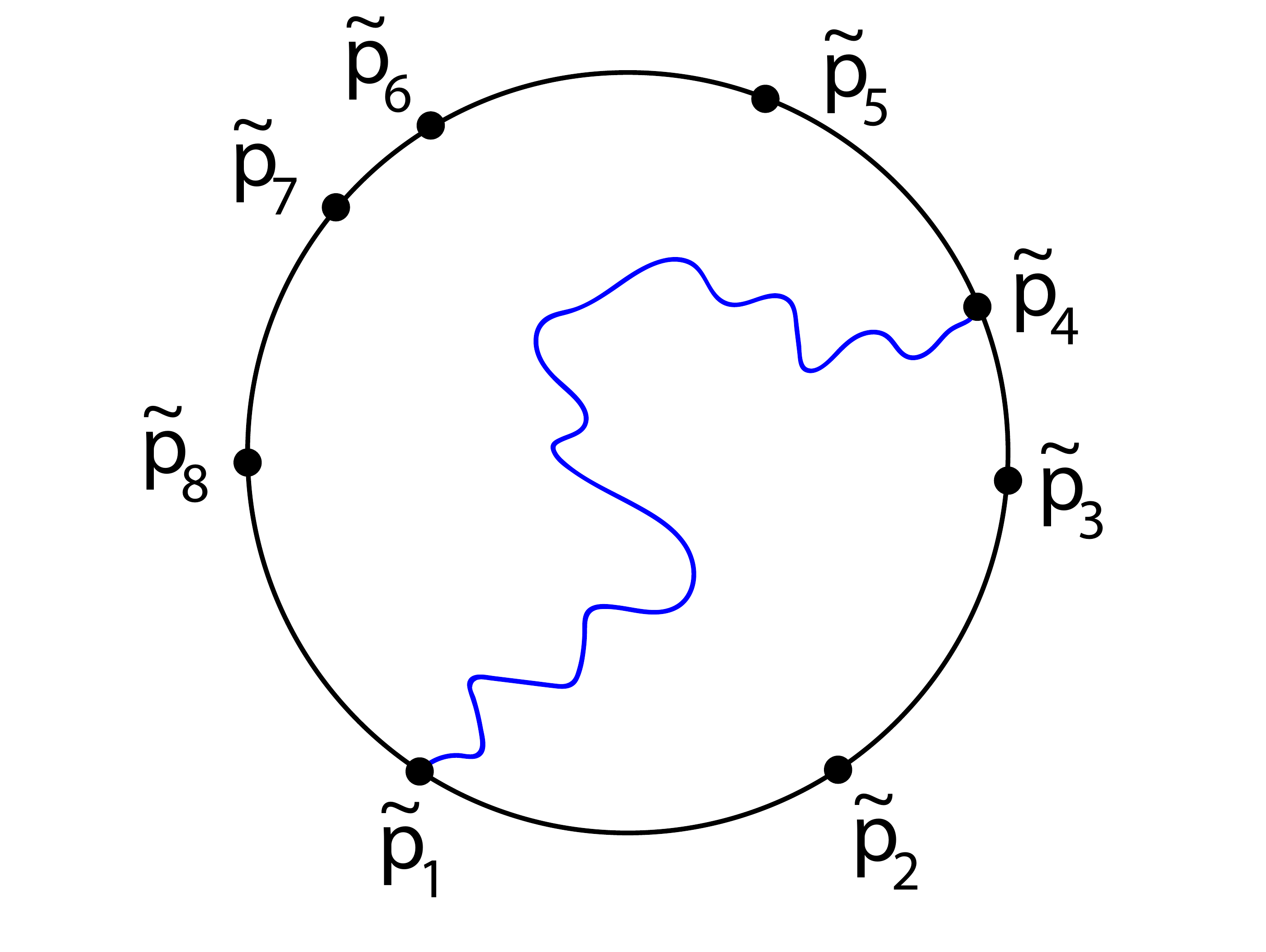} \quad
\includegraphics[width=0.4\textwidth]{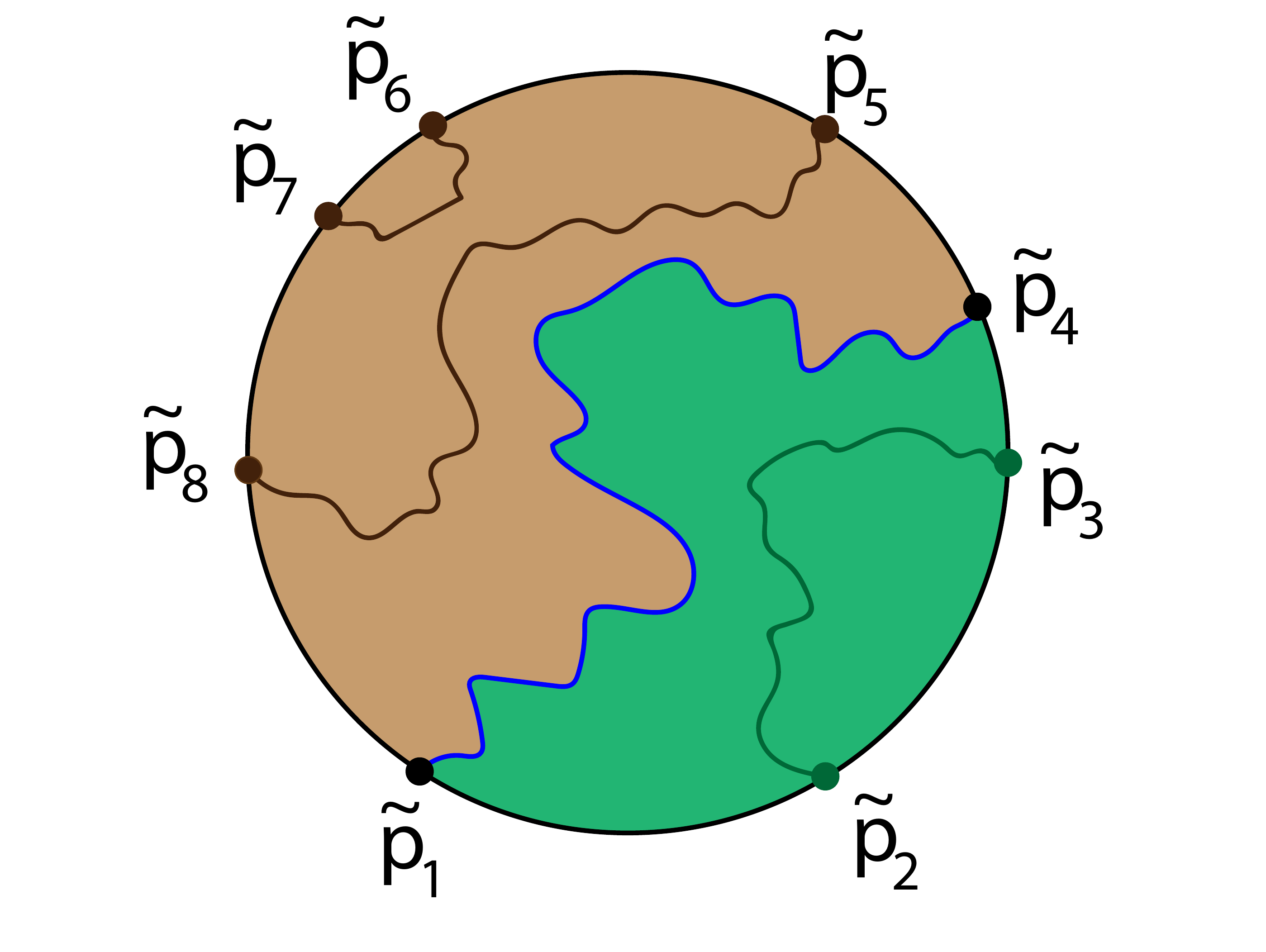} 
\caption{\label{fig: kappa < 4 sampling}
A schematic illustration of step (4a) in the definition of the local-to-global multiple SLE for $\kappa \in (0, 4]$.
}
\end{figure}

\begin{figure}
\includegraphics[width=0.4\textwidth]{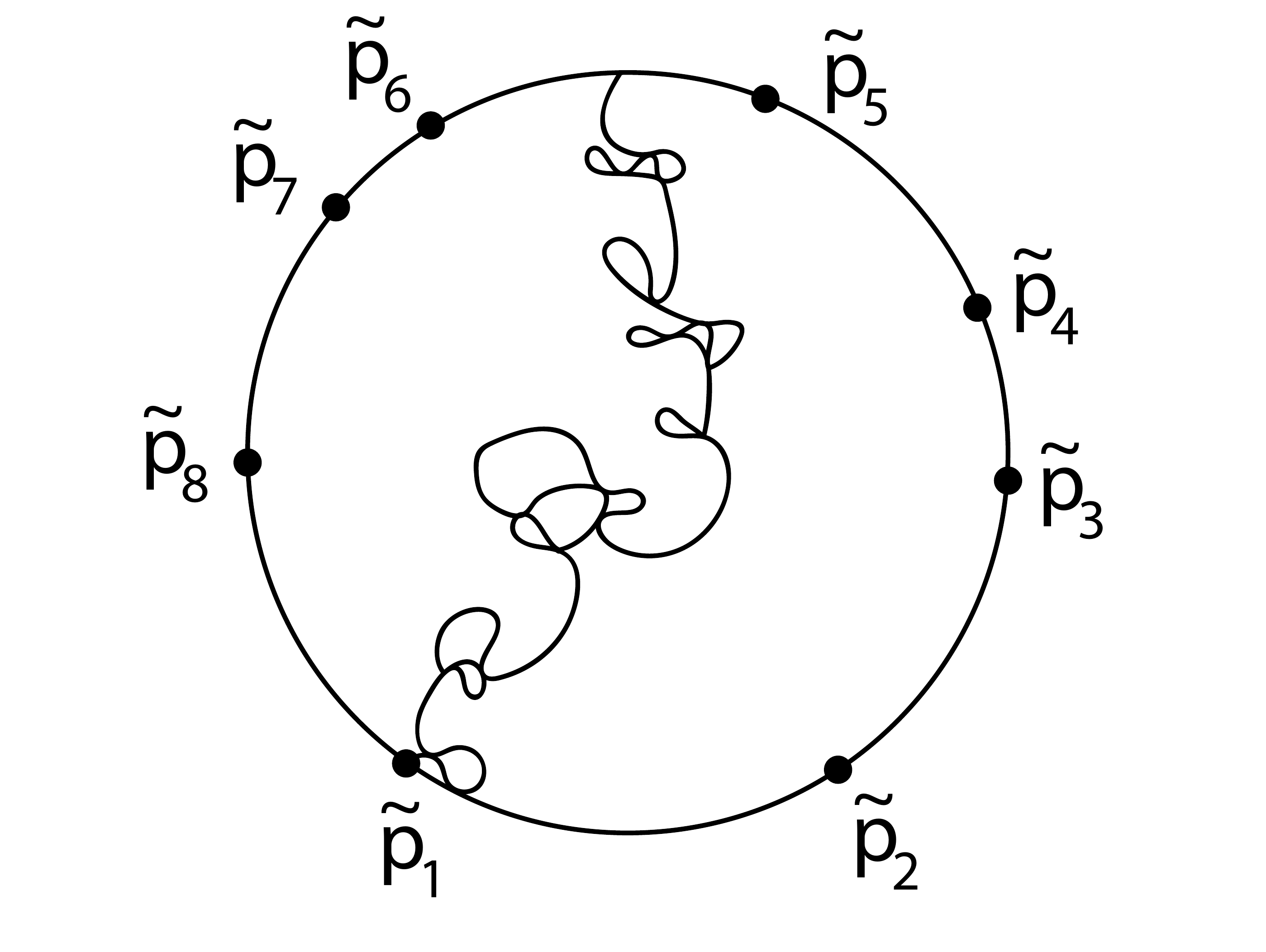} \quad
\includegraphics[width=0.4\textwidth]{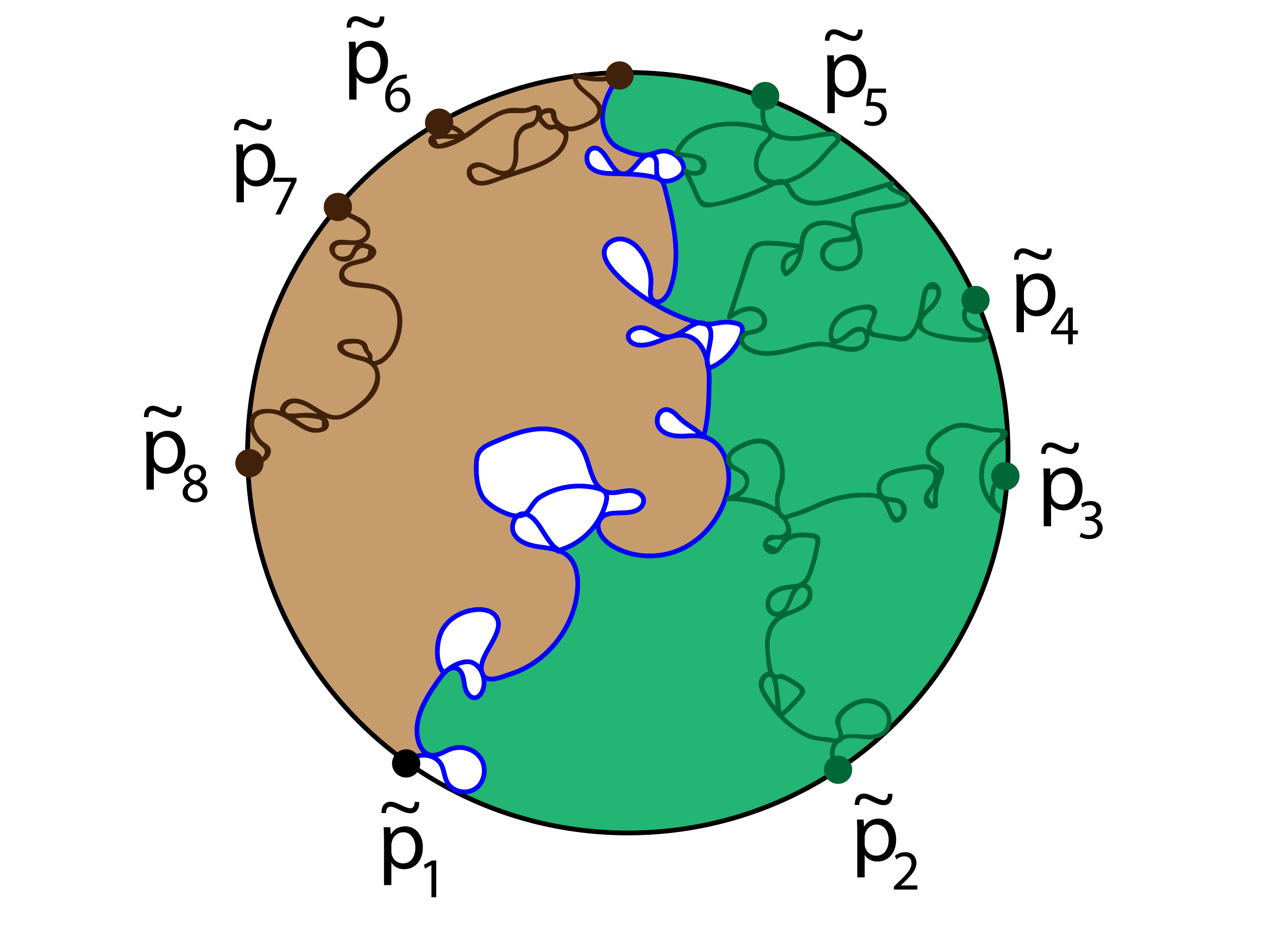}
\caption{\label{fig: kappa > 4 sampling}
A schematic illustration of step (4b) in the definition of the local-to-global multiple SLE for $\kappa \in (4, 8)$.
}
\end{figure}

For $0 < \kappa \le 4$, we will also consider conditional lattice models. The conditional local-to-global multiple SLE is defined almost identically, except that the collection of partition functions $\PartF_\alpha$ is now indexed by $N$ and link patterns $\alpha \in \LP_N$. Only step (4) is slightly modified:

\begin{itemize}
\item[4a')] The initial segment $\InitSegmDelta{0}$ will almost surely hit the arc $(\Unitp_2 \Unitp_{2N})$ at the even-index marked boundary point linked to $1$ in the link pattern $\alpha$, and forms one full random curve, $\InitSegmDelta{0} = \RandCurveD{1}$. The regular conditional distribution of the remaining curves  $\RandCurveD{2}, \ldots, \RandCurveD{N}$ are two independent conditional $\SLE(\kappa)$:s in the relevant connected components of $\UnitD \setminus \RandCurveD{1}$, with the relevant marked boundary points, and conditional on the relevant link patterns.
\end{itemize}

\subsubsection*{\textbf{Important remarks}}


%

The existence of curves given by the definition above is not immediate, and we will rely on realizing them as scaling limits. Some non-trivial obstacles that we will take care of are:
\begin{itemize}
\item The operation of $\confmap^{-1}$ maps curves to curves and the obtained collection of curves $(\RandCurveD{1}, \ldots, \RandCurveD{2N})$ is a measurable random variable (cf.~\cite{mie}).
\item The initial segment $\InitSegmDelta{0}$ exists as a closed curve up to and \emph{including} its end point (cf.~\cite{PW18}).
\item Being regular conditional laws requires some regularity properties from the local-to-global multiple SLEs.
\end{itemize}

The above regular conditional laws can be interpreted as a sampling procedure, and above we always sampled the initial segment of $\RandCurveD{1}$ first. This is only for definiteness; one can start by sampling the initial or final segment adjacent to a boundary point of choice. This will be proven basing on the undelying discrete models.

In step (4) we did not specify how the boundary points in the two connected components are re-labelled (although parities must be preserved). This need not be done due to the properties that we will require of the underlying lattice models.

Only very specific multiple SLE partition functions should yield the initial segments of the above type. These are called symmetric and pure partition functions in~\cite{KP-pure_partition_functions_of_multiple_SLEs, PW, Wu17}, corresponding to the unconditional and conditional cases, respectively, and finding such functions is not always easy. Nevertheless, in the strategy of this paper, the partition functions will be found as a step in the limit identification.


\bigskip{}

\section{Preliminaries}
\label{sec: preli}

This section introduce some notations, definitions, and concepts that are used throughout the paper.

\subsection{Lattice models}

\subsubsection{\textbf{Discrete random curve models}}

We start with the setup and notation that we refer to as \emph{discrete random curve models}. Various examples will be given in Section~\ref{sec: application examples}.
\begin{itemize}
\item $\InfiniteGr = (\Vert(\InfiniteGr) , \Edg(\InfiniteGr) )$ is a (possibly infinite) connected planar graph with fixed planar embedding, such as $\Z^2$, or an isoradial graph as in~\cite{CS-discrete_complex_analysis_on_isoradial}, or a more general graph as in~\cite{Chelkak-Robust_DCA_toolbox}. We call $\InfiniteGr$ a \emph{lattice}.
\item $\domain_\Gr$ is a bounded simply-connected planar domain, whose boundary consists of edges and vertices in $\InfiniteGr$.
\item $\Gr=(\Vert, \Edg)$ is the following graph: its vertices $\Vert$ consist of interior vertices $\Vert^\circ = \Vert(\InfiniteGr) \cap \domain$ and boundary vertices $\Vert^\bdry = \Vert(\InfiniteGr) \cap \bdry \domain .$ Its edges $\Edg$ consist of $\Edg = \Edg(\InfiniteGr) \cap \overline{\domain_\Gr}$. We call boundary edges $\bdry \Edg \subset \Edg$ the edges that connect $\Vert^\circ$ to $\Vert^\bdry$, and edges running between interior vertices are called interior edges $\Edg^\circ$. We call $\Gr$ the \textit{simply-connected subgraph} of $\InfiniteGr$ corresponding to $\domain_\Gr$.
\item Let $\Gr$ be as above $e_1, \ldots, e_{2N} \in \bdry \Edg$ be $2N$ distinct boundary edges, indexed counterclockwise along $\bdry \domain_\Gr$. A \emph{measure with random curves} on $(\Gr; e_1,  \ldots, e_{2N})$ is a pair $(\PR^{(\Gr; e_1,  \ldots, e_{2N})}, (\gamma_{\Gr; 1}, \ldots, \gamma_{\Gr; N}))$ of a probability measure and a measurable random variable $(\gamma_{\Gr; 1}, \ldots, \gamma_{\Gr; N})$,
supported on $N$-tuples of \nolinebreak{(vertex-)disjoint} simple paths on the graph $\Gr$, pairing the boundary edges $e_1, \ldots, e_{2N}$, and otherwise running in the interior edges $\Edg^\circ$. We choose here the convention that every path travels from odd to even boundary boundary edge, $\RandCurveD{1}$ starting from $e_1$, $\RandCurveD{2}$ from $e_3$, etc.
\item A \emph{discrete (random) curve model} on $\InfiniteGr$ is a collection of measures with random curves \newline
$(\PR^{(\Gr; e_1,  \ldots, e_{2N})}, (\gamma_{\Gr; 1}, \ldots, \gamma_{\Gr; N}))$. This collection is indexed by some positive integers $N$ and some simply-connected subgraphs $(\Gr; e_1,  \ldots, e_{2N})$ of the lattice $\InfiniteGr$ with $2N$ distinct marked boundary edges. 
\item Given a discrete random curve model, we define a \emph{discrete conditional (random) curve model} as a collection measures
\begin{align*}
\PR_{\alpha}^{(\Gr; e_1,  \ldots, e_{2N})} [\; \cdot \; ] = \PR^{(\Gr; e_1,  \ldots, e_{2N})} [ \; \cdot \; \vert \text{curves form the link pattern } \alpha ] = \PR^{(\Gr; e_1,  \ldots, e_{2N})} [ \; \cdot \; \vert  \alpha ]
\end{align*}
with the associated random curves.
This collection is indexed by $N$, graphs $(\Gr; e_1,  \ldots, e_{2N})$, and link patterns $\alpha \in \LP_N$ such that the measures $\PR^{(\Gr; e_1,  \ldots, e_{2N})}$ and the conditioning on $\alpha$ make sense.
\end{itemize}


\subsubsection{\textbf{Discrete domain Markov property}}

A key property of the discrete random curve models will be the discrete domain Markov property (DDMP). To be able to define the DDMP with a reasonably light notation, we will make one more assumption on the discrete random curve models. Namely, 
we will assume that if $(\PR^{(\Gr; e_1,  \ldots, e_{2N})}, (\gamma_1, \ldots, \gamma_N))$ is one measure with random curves in our discrete random curve model, and if we denote $( \hat{e}_1,  \ldots, \hat{e}_{2N} ) = ( e_3,  e_4, \ldots,  e_{2N}, e_1,  e_2)$, then also $ ( \PR^{(\Gr; \hat{e}_1,  \ldots, \hat{e}_{2N} )}, (\hat{\gamma}_1, \ldots, \hat{\gamma}_N ))$ is a pair in that model, and have the equality in distribution
\begin{align*}
(\hat{\gamma}_1, \ldots, \hat{\gamma}_N ) \eqd (\gamma_2, \ldots, \gamma_N , \gamma_1 ).
\end{align*}
If this is satisfied, we say that a symmetric random curve model has \emph{alternating boundary conditions}. In other words, we may re-label the edges from $e_1,  \ldots, e_{2N}$ to $\hat{e}_1,  \ldots, \hat{e}_{2N}$, using any cyclic permutation that labels the edges counterclockwise and \emph{preserves the parities}, and our random curve model yields the same random curves on $(\Gr; e_1,  \ldots, e_{2N})$ and $(\Gr; \hat{e}_1,  \ldots, \hat{e}_{2N})$; there are hence two kinds of boundary segments between the marked edges.

 
Informally speaking, a discrete random curve model satisfies the DDMP if conditioning the model on $\Gr$ on an initial segment or a full curve is equivalent to reducing the graph $\Gr$ by that initial segment or curve. Formally, let the collection of pairs $(\PR^{(\Gr; e_1,  \ldots, e_{2N})}, (\gamma_1, \ldots, \gamma_N ))$ be a discrete random curve model  on $\InfiniteGr$ with alternating boundary conditions. We say that it satisfies the DDMP the following hold:
\begin{itemize}
\item[i)] Consider the curves given by the random curve model on $(\Gr; e_1, \ldots, e_{2N})$,
 and for some $ 1 \le j \le 2N$, condition them 
 on a sequence of vertices $\lambda(0), \lambda(1), \ldots, \lambda(t)$, appearing in this order as the first vertices along the random curve adjacent to $e_j = \edgeof{\lambda(0)}{\lambda(1)}$, started from that edge. Denote by $\Gr_{t}$ the graph corresponding to the simply-connected domain $\domain_{t}$, whose boundary consists of $\bdry \domain$ and the graph path $\lambda(0), \lambda(1), \ldots, \lambda(t-1)$.
Now, on this condition, the remaining parts of the curves given by the random curve model on $(\Gr; e_1, e_2, \ldots, e_{2N})$ are in distribution equal to the random curve model on $(\Gr_{t}; e_1, \ldots,  e_{j-1}, \edgeof{\lambda(t-1)}{\lambda(t)}, e_{j+1}, \ldots, e_{2N})$.
\item[ii)] Condition the curves given by the random curve model on $(\Gr; e_1, e_2, \ldots, e_{2N})$, where $N \ge 2$, on the full random curve $\gamma_{\Gr; *}$ that reaches the boundary via $e_j$. The simple curve $\gamma_{\Gr; *}$ divides the domain $\domain_\Gr$ into two simply-connected domains $D_L$ and $D_R$. Let $\Gr_L$ and $\Gr_R$, be the corresponding simply-connected subgraphs of $\InfiniteGr$.
 Then, given the curve $\gamma_{\Gr; *}$, the conditional law  the remaining random curves is the following: the curves in $\Gr_L$ are independent of those in $\Gr_L$, the marginal law of the curves on each of these, say for definiteness $\Gr_L$, is (up to relabelling the curves) given by the random curve model on $(\Gr_L; \hat{e}_{1}, \ldots, \hat{e}_{2M})$, where $\hat{e}_{1}, \ldots, \hat{e}_{2M}$ are those of the marked boundary edges $e_1, e_2, \ldots, e_{2N}$ left in $\Gr_L$, relabelled countarclockwise in such a way that the parities are preserved.
\end{itemize}

\subsubsection*{\textbf{Remarks}}

As a consequence, analogues of properties~(i) and~(ii) hold for the conditional measures $\PR_{\alpha}^{(\Gr; e_1, e_2, \ldots, e_{2N})}$. The analogue of~(i) is obvious. In the analogue of~(ii), the conditional laws of the curves on $\Gr_L$ and $\Gr_R$ are independent of each other and given by the \emph{conditional} random curve model on $\Gr_L$ and $\Gr_R$, where the link pattern formed by the curves between the boundary edges in $\Gr_L$ and $\Gr_R$ are those inherited from $\alpha$.

Property~(i) can be equivalently stated in terms of stopping times. Then, in stead of some fixed first vertices $\lambda(0), \lambda(1), \ldots, \lambda(t)$, one conditions on the first vertices $\lambda(0), \lambda(1), \ldots, \lambda(\tau)$ up to a stopping time $\tau$ in the filtration $\mathcal{F}_1, \mathcal{F}_2, \ldots$, where $\mathcal{F}_t$ is the sigma-algebra generated by the first $t$ vertices $\lambda(0), \ldots, \lambda(t)$. 

From measures with $N=1$ curves, only property~(i) is required. 

Property~(ii) can be used inductively to deduce the distribution of the remaining curves, 
given any collection of full random curves. In particular, if we condition on all but one curves, so that the remaining one has to stay on the simply-connected subgraph $\tilde{\Gr}$ with marked boundary edges $e_{k_1}, e_{k_2}$, with $k_1$ odd, then the remaining one is described by the random curve model on $(\tilde{\Gr}; e_{k_1}, e_{k_2})$.

\subsection{Approximations of planar domains}

In this subsection, we introduce the concepts related to approximations of planar domains. All notations and convetions are identical to the previous paper of the author~\cite{mie}.

\subsubsection{\textbf{Prime ends}}
\label{subsubsec: prime ends}


We are dealing with simply-connected domains $\domain$ with possibly a very rough boundary. The notion of boundary points must thus be replaced with that of prime ends. Define first the following.
\begin{itemize}
\item A cross cut $S$ is an open Jordan arc in $\domain$ such that $\overline{S} = S \cup \{ a, b \}$, with $a, b \in \bdry \domain$.
\item A sequence $(S_n)_{n \in \N}$ of cross cuts is a null chain if $S_n \cap S_{n+1} = \emptyset$ for all $n$, $S_n$ separates $S_{n+1}$ from $S_0$ in $\domain$ for all $n \ge 1$, and $\diam (S_n) \to 0$ as $n \to \infty$.
\item Null chains $(S_n)_{n \in \N}$ and $(\tilde{S})_{n \in \N}$ are equivalent if for any large enough $m$ there exists $n$ such that $S_m$ separates $\tilde{S}_n$ from $\tilde{S}_0$ and $\tilde{S}_m$ separates ${S}_n$ from ${S}_0$.
\item A \emph{prime end} is an equivalence class of null chains under this equivalence relation.
\end{itemize}
A conformal map $\confmap: \domain \to \UnitD$ induces a bijection $\hat{\confmap}$ between the prime ends of $\domain$ and $\bdry \UnitD$, such that if a null chain $S_n$ determines a prime end $p$ of $\domain$, then $\confmap(S_n)$ is a null chain in $\UnitD$ and determines the prime end $\hat{\confmap} (p) \in \bdry \UnitD$~\cite[Theorem~2.15]{Pommerenke-boundary_behaviour_of_conformal_maps}. In this sense, prime ends are ``the conformal notion of boundary points''.

 \subsubsection{\textbf{Radial limits of conformal maps}} 
 \label{subsubec: rad cont ext of conf maps}


We will extend the conformal map $\confmap^{-1}: \UnitD \to \domain$ to $\bdry \UnitD$ by radial limits whenever they exist:
denote by $P_\eps : \overline{\UnitD} \to {\UnitD}$ the radial projection  on $\overline{ \UnitD }$, 
\begin{align*}
P_\eps (z) = \frac{z}{ \vert z \vert } \min \{ 1- \eps, \vert z \vert \},  
\end{align*}
where $0 < \eps < 1$, and for $z \in \bdry \UnitD$ denote by $\confmap^{-1} (z)$ the limit
\begin{align*}
\lim_{\eps \shrinkto 0} \confmap^{-1} \circ P_\eps (z)
\end{align*}
whenever it exists (by Fatou's theorem, it exists for Lebesgue-almost every $z \in \bdry \UnitD$ if $\domain$ is bounded).

It holds true that the existence and value of such a radial limit of a conformal map $\confmap^{-1}$ at some $z = \hat{\confmap} (p) \in \bdry \UnitD$ only depends on the corresponding prime end $p$ of $\domain$, but not the choice of the conformal map $\confmap: \domain \to \UnitD$~\cite[Corollary~2.17]{Pommerenke-boundary_behaviour_of_conformal_maps}. We thus say that \emph{radial limits exist at $p$} or do not exist at $p$. In particular, radial limits exist at degenerate prime ends $p$. We will often restrict our consideration to prime ends $p$ with radial limits, and we will then with a slight abuse of notation also denote the radial limit point in $\C$ by $p$. 


\subsubsection{\textbf{Carath\'{e}odory convergence of domains}}
\label{subsubsec: Cara convergence}

The notion of domain approximations that we use will be \nolinebreak{Carath\'{e}odory} convergence.
Let $( \domain_n )_{n \in \N}$ and $\domain$ by simply-connected open sets $\domain, \domain_n \subsetneq \C$, all containing a common point $u$. We say that $\domain_n \to \domain$ in the sense of kernel convergence with respect to $u$ if 
\begin{itemize}
\item[i)] every $z \in \domain$ has some neighbourhood $V_z$ such that $V_z \subset \domain_n$ for all large enough $n$; and
\item[ii)] for every point $p \in \bdry \domain$, there exists a sequence $p_n \in \bdry \domain_n$ such that $p_n \to p$.
\end{itemize}
Let $\confmap_n$ be the \emph{Riemann uniformization maps} from $\domain_n$ to $\UnitD$ normalized at $u$, i.e., $\confmap_n(u)=0$ and $\confmap_n'(u) > 0$. Let $\confmap$ be the Riemann uniformization map from $\domain$ to $\UnitD$. The kernel convergence
$\domain_n \to \domain$  with respect to $u$
holds if and only if the inverses $\confmap_n^{-1}$ converge uniformly on compact subsets of $\UnitD$ to $\confmap^{-1}$~\cite[Theorem~1.8]{Pommerenke-boundary_behaviour_of_conformal_maps}. Then, also $\confmap_n \to \confmap$ uniformly on compact subsets of $\domain$.

It is easy to see that if $\domain_n \to \domain$ in the sense of kernel convergence with respect to $u$, then the same convergence holds with respect to any $\tilde{u} \in \domain$, taking the tail of the sequence $\domain_n$ if needed. We then say that $\domain_n \to \domain$ \emph{in the Carath\'{e}odory sense} as $n \to \infty$, or that $\domain_n$ are  \emph{Carath\'{e}odory approximations} of $\domain$. Working from the point of view of uniformization maps, this relates to the following elemntary lemma, whose proof we leave to the reader.

\begin{lem}
\label{lem: Cara iff conv of confmaps}
$\domain_n \to \domain$ in the Carath\'{e}odory sense if and only if there exist some conformal maps $\confmap_n: \domain_n \to \UnitD$ and $\confmap: \domain \to \UnitD$ such that $\confmap_n^{-1}$ converge uniformly on compact subsets of $\UnitD$ to $\confmap^{-1}$
\end{lem}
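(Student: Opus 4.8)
\textbf{Proof proposal for Lemma~\ref{lem: Cara iff conv of confmaps}.}

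The plan is to deduce this from the already-quoted characterization of Carath\'{e}odory convergence in terms of the Riemann uniformization maps normalized at the common point $u$. The ``only if'' direction is immediate: if $\domain_n \to \domain$ in the Carath\'{e}odory sense, then by definition this is kernel convergence with respect to some $u$ (after passing to the tail of the sequence, which does not affect a convergence statement), and by~\cite[Theorem~1.8]{Pommerenke-boundary_behaviour_of_conformal_maps} the normalized Riemann uniformization maps $\confmap_n$ and $\confmap$ are conformal maps to $\UnitD$ for which $\confmap_n^{-1} \to \confmap^{-1}$ uniformly on compact subsets of $\UnitD$. These are the desired maps.

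The ``if'' direction is where there is something to check: we are handed arbitrary conformal maps $\confmap_n : \domain_n \to \UnitD$ and $\confmap : \domain \to \UnitD$ with $\confmap_n^{-1} \to \confmap^{-1}$ locally uniformly on $\UnitD$, and we must recover Carath\'{e}odory convergence, which is phrased via the \emph{normalized} uniformization maps at a common point. First I would fix $u := \confmap^{-1}(0) \in \domain$. Since $\confmap_n^{-1}(0) \to \confmap^{-1}(0) = u$, the points $\confmap_n^{-1}(0)$ lie in $\domain_n$ and converge to $u$, so $u$ lies in $\domain_n$ for all large $n$; discarding finitely many terms we may assume $u \in \domain_n$ for all $n$ and $u \in \domain$. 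Now let $\tilde\confmap_n$ be the uniformization map $\domain_n \to \UnitD$ normalized by $\tilde\confmap_n(u) = 0$, $\tilde\confmap_n'(u) > 0$, and similarly $\tilde\confmap$ for $\domain$. Each $\tilde\confmap_n$ differs from $\confmap_n$ by a M\"obius automorphism $m_n$ of $\UnitD$, namely $\tilde\confmap_n = m_n \circ \confmap_n$; explicitly $m_n$ sends $\confmap_n(u)$ to $0$ and rotates so that the derivative is positive, and all the data defining $m_n$ — the point $\confmap_n(u) = $ (the preimage is $\confmap_n^{-1}$ evaluated backwards, so better to say) $\confmap_n(u)$, which converges to $\confmap(u)$ because $\confmap_n^{-1} \to \confmap^{-1}$ forces $\confmap_n \to \confmap$ locally uniformly on $\domain$ (a standard consequence, e.g.\ via Hurwitz/normal families, or simply by composing the locally uniform convergence of inverses with continuity) — converges to the corresponding data for $m := $ the automorphism relating $\confmap$ and $\tilde\confmap$. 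Hence $m_n \to m$ uniformly on $\overline{\UnitD}$, and therefore $\tilde\confmap_n^{-1} = \confmap_n^{-1} \circ m_n^{-1} \to \confmap^{-1} \circ m^{-1} = \tilde\confmap^{-1}$ uniformly on compact subsets of $\UnitD$. Applying~\cite[Theorem~1.8]{Pommerenke-boundary_behaviour_of_conformal_maps} in the reverse direction yields kernel convergence $\domain_n \to \domain$ with respect to $u$, hence Carath\'{e}odory convergence.

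The only mildly delicate point — and thus the ``main obstacle,'' though it is a soft one — is the passage from locally uniform convergence of the inverse maps $\confmap_n^{-1} \to \confmap^{-1}$ on $\UnitD$ to locally uniform convergence of the forward maps $\confmap_n \to \confmap$ on $\domain$, which is needed to see that the normalizing M\"obius maps $m_n$ converge. This is genuinely standard (the $\confmap_n$ are uniformly bounded into $\UnitD$, hence form a normal family, and any locally uniform subsequential limit must be a conformal map whose inverse is $\confmap^{-1}$, forcing it to equal $\confmap$), and is in fact already asserted in the paragraph preceding the lemma. Since the author explicitly leaves the proof to the reader, I would simply record these reductions concisely and cite~\cite[Theorem~1.8]{Pommerenke-boundary_behaviour_of_conformal_maps} as the engine in both directions.
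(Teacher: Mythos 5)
The paper gives no proof of this lemma (it is explicitly left to the reader), so there is nothing to compare against; your reduction to the normalized uniformization maps via M\"obius automorphisms of $\UnitD$, with \cite[Theorem~1.8]{Pommerenke-boundary_behaviour_of_conformal_maps} as the engine in both directions, is surely the intended argument, and its overall structure is sound.

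One inference is not valid as written: from $\confmap_n^{-1}(0)\in\domain_n$ and $\confmap_n^{-1}(0)\to u$ you conclude that $u\in\domain_n$ for all large $n$, but a sequence of points of $\domain_n$ converging to $u$ does not place $u$ itself in $\domain_n$ (the $\domain_n$ could all avoid $u$ while containing points arbitrarily close to it). What is actually needed -- here, and again when you treat the $\confmap_n$ as a normal family ``on $\domain$'' (they are a priori only defined on $\domain_n$) and when you assert that a subsequential limit of the $\confmap_n$ is a non-constant conformal map -- is the standard Rouch\'e/Hurwitz fact: if univalent maps $f_n:\UnitD\to\C$ converge locally uniformly to a univalent $f$, then every compact subset of $f(\UnitD)$ is eventually contained in $f_n(\UnitD)$, and $f_n^{-1}\to f^{-1}$ locally uniformly on $f(\UnitD)$. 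Applied to $f_n=\confmap_n^{-1}$ this yields in one stroke that $u=\confmap^{-1}(0)$ lies in $\domain_n$ for large $n$ and that $\confmap_n\to\confmap$ locally uniformly on $\domain$, hence $\confmap_n(u)\to 0$ and $\confmap_n'(u)\to\confmap'(u)\neq 0$, so the normalizing automorphisms $m_n$ converge uniformly on $\overline{\UnitD}$. With that lemma stated explicitly in place of the flawed sentence, the rest of your argument goes through.
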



For domains with marked prime ends, we say $(\domain_n; p_1^{(n)}, \ldots, p_m^{(n)} ) \to (\domain; p_1,  \ldots, p_m )$ in the Carath\'{e}odory sense as $n \to \infty$, if $\domain_n \to \domain$ in the Carath\'{e}odory sense and $(\hat{\confmap}_n (p_1^{(n)}), \ldots, \hat{\confmap}_n (p_m^{(n)}) ) \to (  \hat{\confmap} (p_1), \ldots, \hat{\confmap} (p_m) )$ as $n \to \infty$, where $\hat{\confmap}_n$ and $\hat{\confmap}$ are the induced maps of prime ends.

\subsubsection{\textbf{Close approximations of prime ends with radial limits}}
\label{subsubsec: close approximations}

A Carath\'{e}odory approximation of domains $(\domain_n; p_1^{(n)}, \ldots, p_m^{(n)} ) \to (\domain; p_1,  \ldots, p_m )$ allows wild behaviour of the boundaries at the marked prime ends. We wish to consider compact curves ending at these prime ends. For such compact curves to exist at all, the prime ends $p_1^{(n)}, \ldots, p_m^{(n)}$ and $ p_1,  \ldots, p_m $ must possess radial limits. Furthermore, to avoid bad boundary approximations, we need to restrict to close Carath\'{e}odory approximations. Informally, being a close approximation means that a chordal curve in $\domain_n$ starting from $p_1^{(n)}$ is not forced to wiggle a macroscopic distance to enter into $\domain_n \cap \domain$. This concept was introduced by the author in~\cite{mie}, and we repeat the definition below.

Assume that $(\domain_n; p_1^{(n)}, \ldots, p_m^{(n)} ) \to (\domain; p_1,  \ldots, p_m )$ in the Carath\'{e}odory sense, and that the radial limits exist at the prime ends $p_1^{(n)}, \ldots, p_m^{(n)}$  and $p_1,  \ldots, p_m$ of the respective domains. We say that
$p_1^{(n)}$ are \textit{close} approximations of a prime end $p_1$, if $p_1^{(n)} \to p_1$ as $n \to \infty$ (as points in $\C$), and in addition the following holds:
for any $r > 0$, $r < d (p_1, u)$ (where $u$ denotes the reference point of the approximation $\domain_n \to \domain$), denote by $S_r$ be the connected component of $ \bdry B(p_1, r)$ disconnecting $p_1$ from $u$ in $\domain$ that lies \emph{innermost}, i.e., closest to $p_1$ in $\domain$. Such a component exists by the existence of radial limits at the prime end $p_1$. Let $w_r \in S_r$ be any fixed reference point; the precise choice makes no difference. Now, $p_1^{(n)}$ are close approximations of $p_1$ if for any fixed $0 < r < d (p_1, u)$, taking a large enough $n$, $p_1^{(n)}$ is connected to $w_r$ inside $\domain_n \cap B(p_1, r)$.

\subsection{The different metric spaces}

In this subsection, we introduce the different metric spaces. All notations and convetions are identical to the previous paper of the author~\cite{mie}.

\subsubsection{\textbf{Space of plane curves modulo reparametrization}}
\label{subsubsec: space of curves}

A planar \emph{curve} is a continuous function $\gamma: [0,1] \to \C$. 
Define an equivalence relation $\sim$ on curves: $\gamma \sim \tilde{\gamma}$ if
\begin{align*}
\inf_{\psi} \left\{  \sup_{t \in [0,1]} \vert \gamma(t) - \tilde{\gamma} \circ \psi (t) \vert  \right\} = 0,
\end{align*}
where the infimum is taken over all reparametrizations (continuous increasing bijections) $ \psi: [0,1] \to [0,1]$. The space of curves modulo this equivalence relation is denoted by $X(\C)$.

We equip $X(\C)$ with the following metric. For two curves $\gamma$, $\tilde{\gamma}$ the distance between their equivalence classes $[\gamma]$ and $[\tilde{\gamma} ]$ in this metric is
\begin{align}
\label{eq: metric of curves}
d([\gamma], [\tilde{\gamma}]) = \inf_{\psi} \left\{  \sup_{t \in [0,1]} \vert \gamma(t) - \tilde{\gamma} \circ \psi (t) \vert  \right\},
\end{align}
where the infimum is taken over all reparametrizations $\psi$. The closed subset of  $X(\C)$ consisting of curves that stay in $\clos{\UnitD}$ is denoted by  $X(\clos{\UnitD})$. The  spaces $X (\C )$ and $X (\clos{\UnitD})$ are both complete and separable. 
We will in this paper only study curves $\gamma$ via the space $X(\C)$. As there is thus no danger of confusion, we will denote the equivalence class $[\gamma]$ by $\gamma$ for short.


The space $X(\C)^N$ of collections of $N$ curves modulo reparametrization is equipped with the metric
\begin{align*}
d((\gamma_1, \ldots, \gamma_N), (\tilde{\gamma}_1, \ldots, \tilde{\gamma}_N) ) = \max_{1 \le i \le N} d ({\gamma}_i, \tilde{\gamma}_i ),
\end{align*}
where the distance on the right-hand side is given by~\eqref{eq: metric of curves}. This space is complete and separable, too.

\subsubsection{\textbf{Space of continuous functions}}
\label{subsubsec: space of fcns}

We equip the space $\ctsfcns$ of continuous functions $W_\cdot : \R_{\ge 0} \to \R$ with the metric of uniform convergence over compact subsets
\begin{align}
\label{eq: metric of uniform convergence over compact subsets for functions of reals}
d(W, \tilde{W}) = \sum_{n \in \N} 2^{-n} \min \{ 1, \sup_{t \in [0,n]} \vert \tilde{W}_t - W_t \vert \}.
\end{align}
The space $\ctsfcns$ is then complete and separable.
The space $\ctsfcns^N$ of collections of continuous functions will be equipped with the metric
\begin{align*}
d((W_1, \ldots, W_N),(\tilde{W}_1, \ldots, \tilde{W}_N)) = \max_{1 \le i \le N} d(W_i, \tilde{W}_i),
\end{align*}
where the distance on the right-hand side is given by~\eqref{eq: metric of uniform convergence over compact subsets for functions of reals}. This space is complete and separable, too.

\section{Precompactness theorems}
\label{sec: precompactness}

\subsection{The main precompactness theorem}

\subsubsection{\textbf{Setup and notation}}
\label{subsec: setup and notation}

The setup and notation for the main theorem~\ref{thm: precompactness thm multiple curves} of this subsection is the following.

Let $(\InfiniteGr_n)_{n \in \N}$ be a sequence of lattices, and $(\Gr_n; e_1^{(n)}, \ldots, e_{2N}^{(n)})$, for each $n$, simply-connected subgraphs of $\InfiniteGr_n$, with $N$ giving the number of boundary edges fixed. Assume that for each $n$, we have a discrete random curve model on $\InfiniteGr_n$, defined on the subgraph $(\Gr_n; e_1^{(n)}, \ldots, e_{2N}^{(n)})$. Denote the measures with random curves on $(\Gr_n; e_1^{(n)}, \ldots, e_{2N}^{(n)})$ by $(\PR^{(\Gr_n; e_1^{(n)}, \ldots, e_{2N}^{(n)}) }, ( \gamma_{\Gr_n; 1}, \ldots, \gamma_{\Gr_n; 2N}) ) = (\PR^{(n)}, (\gamma_1^{(n)}, \ldots, \gamma_{2N}^{(n)}) )$.

Let $\domain_n = \domain_{\Gr_n}$ be the simply-connected domains corresponding to $\Gr_n$, and let $p_1^{(n)}, \ldots, p_{2N}^{(n)}$ be the prime ends of $\domain_n$ where the edges $e_1^{(n)}, \ldots, e_{2N}^{(n)}$, respectively, land. Assume that $(\domain_n; p_1^{(n)}, \ldots, p_{2N}^{(n)})$ are close Carath\'{e}odory approximations of a domain $(\domain; p_1, \ldots, p_{2N})$ with marked prime ends where radial limits exist. (The limiting prime ends need not be distinct for the statement and proof of Theorem~\ref{thm: precompactness thm multiple curves}, but they will be in all the applications in this paper.) Let $\confmap_n: \domain_n \to \UnitD$ and $\confmap: \domain \to \UnitD$ be any conformal maps such that $\confmap_n^{-1} \to \confmap^{-1}$ uniformly over compact subsets of $\UnitD$. Denote $(\gamma_{\UnitD; 1}^{(n)}, \ldots, \gamma_{\UnitD; 2N}^{(n)}) = (\confmap_n(\gamma_1^{(n)}), \ldots, \confmap_n(\gamma_{2N}^{(n)})) \in X(\overline{\UnitD})^N$. We also assume that $\domain_n$ are uniformly bounded.

Fix a point $\Unitp_\infty \in \bdry \UnitD$ on the counterclockwise arc of $\bdry \UnitD$ from $\Unitp_{2N} = \confmap(p_{2N})$ to $\Unitp_1 = \confmap(p_1)$, and a conformal map $\confmapDH$ taking $(\UnitD; \Unitp_\infty)$ to $(\bH, \infty)$. Let $U_j$ be a localizations neighbourhoods  of $\Unitp_j$ in $\UnitD$, for each $j$, i.e., $U_j $ only contain the marked boundary point $\Unitp_j$ and $\confmapDH (U_j)$ are compact $\bH$-hulls. (The neighbourhoods need not be disjoint.) For each $1 \le j \le 2N$, denote by $\InitSegmLatt{n}{j}$ the initial segment of the one of the random curves $\gamma_{\UnitD; 1}^{(n)}, \ldots, \gamma_{\UnitD; 2N}^{(n)}$ adjacent to $p_j^{(n)}$, started from that boundary point, up to the continuous modification $\tau^{(n)}_j$ of the hitting time $T^{(n)}_j$ of $\overline{(\UnitD \setminus U_j )}$ by the curve. Let $\DrFcnLattNotime{n}{j}$  be the driving function of the curve $\confmapDH( \InitSegmLatt{n}{j} )$ in $\overline{\bH}$, stopped at the half-plane capacity corresponding to $\tau_j^{(n)}$.

Conditional discrete random curve models are studied in an identical notation, with the only difference that also the link pattern $\alpha \in \LP_N$ is fixed, and $\PR^{(n)}$ then denotes the corresponding conditional measures, $\PR^{(n)} [\cdot] = \PR^{(\Gr_n, e_1^{(n)}, \ldots, e_{2N}^{(n)})}[\cdot \vert \alpha]$.

\subsubsection{\textbf{Statement of the theorem}}

We now state the main theorem of this section, giving the precompactness results needed for convergence proofs to local-to-global multiple SLEs. Analogues of this result for models with only one curve have been given in~\cite{KS, mie}, see also~\cite{AB-regularity_and_dim_bounds_for_random_curves, Wu17}.

\begin{thm}
\label{thm: precompactness thm multiple curves}
Consider the setup and notation of Section~\ref{subsec: setup and notation}, for discrete random curve models (resp. conditional discrete random curve models). Assume that the discrete curve models on $\InfiniteGr_n$ (resp. on which we impose the conditioning) have alternating boundary conditions and satisfy the DDMP. Assume in addition that the collection of one-curve measures $\PR^{(\Gr, e_1, e_2)}$ in these random curve models, indexed by $n$ and simply-connected subgraphs $\Gr$ of $\InfiniteGr_n$, satisfy the equivalent conditions (C) and (G), as defined below. Then, the following hold: \\
\textbf{A)} The measures $\PR^{(n)}$ are precompact in the following senses:
\begin{itemize}
\item[i)] as laws of the collections of curves $(\gamma_1^{(n)}, \ldots, \gamma_{N}^{(n)})$ on the space $X(\C)^N$;
\item[ii)] as laws of the collections of curves $(\gamma_{\UnitD; 1}^{(n)}, \ldots, \gamma_{\UnitD; N}^{(n)})$ on the space $X(\overline{\UnitD})^N \subset X(\C)^N$;
\item[iii)] as laws of the curves $\InitSegmLatt{n}{j}$ on the space $X(\overline{\UnitD}) \subset X(\C)$, for any $j$; and
\item[iv)] as laws of the driving functions $\DrFcnLattNotime{n}{j}$ on the space $\ctsfcns$, for any $j$.
\end{itemize}
In other words, there exist subsequences $(n_k)_{k \in \N}$ such that the random objects above converge weakly. \\
\textbf{B)} For a subsequence $(n_k)_{k \in \N}$, a weak convergence takes place in topology~(i), $(\gamma_1^{(n_k)}, \ldots, \gamma_{N}^{(n_k)}) \to (\gamma_1, \ldots, \gamma_{N})$, if and only if it takes place in topology~(ii),  $(\gamma_{\UnitD; 1}^{(n_k)}, \ldots, \gamma_{\UnitD; N}^{(n_k)}) \to (\gamma_{\UnitD; 1}, \ldots, \gamma_{\UnitD; N})$. Furthermore, we then have the equality
\begin{align*}
(\gamma_1, \ldots, \gamma_{N}) \eqd  (\confmap^{-1} ( \gamma_{\UnitD; 1} ), \ldots, \confmap^{-1} ( \gamma_{\UnitD; N}) )
\end{align*}
in distribution\footnote{
More precisely, the random variable $(\confmap^{-1} ( \gamma_{\UnitD; 1} ), \ldots, \confmap^{-1} ( \gamma_{\UnitD; N}) )$ in $X (\C)^N$ denotes the following: the map $\confmap^{-1}$, as extended by radial limits to $\bdry \UnitD$ whenever possible, is almost surely defined on all points of the curves $\gamma_{\UnitD; i} $, $1 \le i \le N$. Picking a parametrization of the curves $\gamma_{\UnitD; i} : [0,1] \to \C$, the functions $t \mapsto \confmap^{-1} (\gamma_{\UnitD; i} (t))$ are almost surely curves. The  collection of curves $(\confmap^{-1} ( \gamma_{\UnitD; 1} ), \ldots, \confmap^{-1} ( \gamma_{\UnitD; N}) )$, as an element of $X(\C)^N$, is almost surely equal to an $X(\C)^N$-valued random variable measurable with respect to the sigma algebra of $(\gamma_{\UnitD; 1}, \ldots, \gamma_{\UnitD; N}) \in X(\C)^N$.  This $X(\C)^N$-valued random variable is denoted, slightly abusively, by $(\confmap^{-1} ( \gamma_{\UnitD; 1} ), \ldots, \confmap^{-1} ( \gamma_{\UnitD; N}) )$ in the statement.}, and $(\gamma_{\UnitD; 1}, \ldots, \gamma_{\UnitD; N})$ has the unique distribution on $X(\overline{\UnitD})^N$ satisfying this equality. \\
\textbf{C)} For a subsequence $(n_k)_{k \in \N}$, a weak convergence takes place in topology~(iii), $\InitSegmLatt{n}{j} \to \InitSegm{j}$, if and only if it takes place in topology~(iv), $\DrFcnLattNotime{n}{j} \to W_j$. Furthermore, $\InitSegm{j}$ and $W_j$ are then Loewner transforms of each other\footnote{
More precisely, the curve $\InitSegm{j} \in X(\overline{\UnitD})$ almost surely has a Loewner transform, and the Loewner driving function obtained from this transform is almost surely equal to a $\ctsfcns$-valued random variable measurable with respect to the sigma algebra of $\InitSegm{j} \in X(\C)$. This random variable is in distribution equal to $W_j$. 

Conversely, the driving function $W_j$ almost surely has a Loewner transform curve, and the curve in $X(\overline{\UnitD})$ obtained from this transform is almost surely equal to an $X(\overline{\UnitD})$-valued random variable measurable with respect to the sigma algebra of $W_j \in \ctsfcns$. This random variable is in distribution equal to $\InitSegm{j}$.
}. \\
\textbf{D)} If the weak convergences of part~(B) above takes place, then so do the weak convergences of part~(C).
\end{thm}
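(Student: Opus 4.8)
\emph{Overall strategy.} The plan is to reduce every assertion to the corresponding single-curve statement of~\cite{KS, mie}, using the DDMP to bring each individual curve under the scope of the one-curve crossing condition, and then to assemble the $N$-tuple statements from the one-curve ones. The only genuinely new ingredient is the reduction in Step~1 below; everything afterwards is essentially bookkeeping on top of~\cite{KS, mie} together with a finite-product tightness argument and the continuous-stopping-time device.

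\emph{Step 1: each curve satisfies the one-curve crossing condition.} Fix $1\le i\le N$ and consider $\gamma_i^{(n)}$, i.e.\ the curve through the corresponding marked boundary edge, in the model on $(\Gr_n;e_1^{(n)},\ldots,e_{2N}^{(n)})$. Conditioning on the other $N-1$ curves (and, if desired, additionally on an initial segment of $\gamma_i^{(n)}$), part~(ii) of the DDMP applied inductively, together with part~(i) --- in the conditional setting one uses the conditional analogues of~(i)--(ii) noted after the DDMP --- shows that the remaining future of $\gamma_i^{(n)}$ is, up to relabelling two boundary edges, a one-curve measure $\PR^{(\tilde{\Gr},\tilde{e}_1,\tilde{e}_2)}$ on a (random) simply-connected subgraph $\tilde{\Gr}$ of $\InfiniteGr_n$. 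By hypothesis every such one-curve measure satisfies~(C) and~(G) with constants independent of the subgraph; since, in one of its equivalent forms, (C)/(G) is a bound on the conditional probability of an annulus crossing given the explored part of the curve, the tower property lets one drop the conditioning on the other curves and conclude that the unconditional law of $\gamma_i^{(n)}$ --- hence of any initial segment of it, in particular $\InitSegmLatt{n}{j}$, and, transported to $\bH$ by $\confmapDH$, of the driving function $\DrFcnLattNotime{n}{j}$ --- again satisfies~(C) and~(G) with the same constants. This is the only place the DDMP and the ``all subgraphs'' clause enter; from here on everything takes place inside the fixed, uniformly bounded, close Carath\'eodory approximations $\domain_n$.

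\emph{Steps 2--3: parts A, C, B, D.} The single-curve machinery of~\cite{KS, mie} now applies to each $\gamma_i^{(n)}$: conditions~(C) and~(G), together with the close Carath\'eodory approximation hypothesis, yield tightness of the laws of $\gamma_i^{(n)}$ on $X(\C)$, of $\confmap_n(\gamma_i^{(n)})$ on $X(\overline{\UnitD})$, of $\InitSegmLatt{n}{j}$ on $X(\overline{\UnitD})$, and of $\DrFcnLattNotime{n}{j}$ on $\ctsfcns$ (cf.~\cite[Proposition~5.2]{mie}), as well as the one-curve analogues of~(B) (weak convergence of $\confmap_n(\gamma_i^{(n)})$ in $X(\overline{\UnitD})$ is equivalent to that of $\gamma_i^{(n)}$ in $X(\C)$, with limit $\confmap^{-1}(\cdot)$) and of~(C) ($\InitSegmLatt{n}{j}$ and $\DrFcnLattNotime{n}{j}$ converge weakly together, as Loewner transforms of one another). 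Part~A then follows since, given $\eps>0$, compact $K_i$ with $\PR^{(n)}[\gamma_i^{(n)}\notin K_i]<\eps/N$ for all $n$ produce the compact $K_1\times\cdots\times K_N$ that is missed with probability $<\eps$, and~C is the single-index ($j$) case. For~B, along any subsequence realising convergence in $X(\overline{\UnitD})^N$ I would pass to a Skorokhod coupling and apply the one-curve result coordinatewise, $\gamma_i^{(n_k)}=\confmap_{n_k}^{-1}(\gamma_{\UnitD;i}^{(n_k)})\to\confmap^{-1}(\gamma_{\UnitD;i})$ in $X(\C)$, which gives the implication ``(ii)$\Rightarrow$(i)'' and the identity in distribution; the reverse implication and the uniqueness of the law of $(\gamma_{\UnitD;1},\ldots,\gamma_{\UnitD;N})$ follow from the tightness in A(ii) by a standard subsequence argument, using that each $\gamma_{\UnitD;i}$ is a measurable function of $\gamma_i$~\cite{mie}. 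For~D, if the convergences of~B hold then, since $\InitSegmLatt{n}{j}$ is $\gamma_{\UnitD;i}^{(n)}$ stopped at the \emph{continuous} modification $\tau_j^{(n)}$ of the hitting time of $\overline{\UnitD\setminus U_j}$ --- which is precisely designed so that ``stop a curve at $\tau_j$'' is a.s.\ continuous at the limit (Appendix~\ref{app: continuous stopping times} and~\cite{mie}) --- we obtain $\InitSegmLatt{n_k}{j}\to\InitSegm{j}$ in $X(\overline{\UnitD})$, i.e.\ the convergence in topology~(iii); the convergence in~(iv) then follows from~C.

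\emph{Main obstacle.} The delicate point is Step~1: each $\gamma_i^{(n)}$ must be realised, conditionally, as a one-curve model on an \emph{uncontrolled} random subgraph --- which is exactly why~(C) and~(G) are required for \emph{all} subgraphs with uniform constants --- and one must check that the crossing estimate genuinely survives both the averaging over the conditioning and, in the subsequent steps, the rough boundary behaviour near the marked prime ends that close Carath\'eodory approximation permits. After Step~1 the argument is an application of~\cite{KS, mie} with the two elementary additions mentioned above.
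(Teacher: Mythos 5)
Your Steps 2--3 follow essentially the same route as the paper (product-of-compacts tightness via Prohorov, coordinatewise use of the one-curve results of~\cite{KS,mie} for parts (B) and (C), and the continuity of stopping at the continuous modification of the exit time for (D)), but Step~1 --- which you yourself flag as the delicate point --- contains a genuine gap, and it is precisely the gap that the paper's Lemma~\ref{lem: multi-G} is designed to close. The problem is that the notion of an \emph{unforced} crossing is domain-dependent. Conditionally on the other $N-1$ curves, $\gamma_i^{(n)}$ is indeed a one-curve model on the reduced domain $\domain'$, and condition~(G) for that model controls crossings of annulus components that are non-disconnecting \emph{in $\domain'$}. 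What you need for the marginal law of $\gamma_i^{(n)}$, however, is control of crossings of components that are non-disconnecting \emph{in $\domain_n$}, relative to the two endpoints of $\gamma_i^{(n)}$. These are not the same event: the other curves can pinch a component $C$ of $A(z,r,R)\cap\domain_n$ so that the component of $A(z,r,R)\cap\domain'$ contained in it \emph{does} separate the endpoints of $\gamma_i^{(n)}$ in $\domain'$; that is, an unforced crossing in $\domain_n$ can be a \emph{forced} crossing in $\domain'$, about which the conditional one-curve condition~(G) says nothing. The tower property therefore does not transfer the estimate, and the claim that the constants are ``the same'' cannot be correct.

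The paper resolves this by an induction on $N$: splitting the annulus into $A(z,r,r')$ and $A(z,r',R)$, it observes that if a non-disconnecting (in $\domain_n$) component of $A(z,r,R)$ contains a component of the inner sub-annulus that is disconnecting in $\domain'$, then one of the \emph{other} curves must itself make a crossing of the outer sub-annulus that is unforced in the domain left to it by $\gamma_i^{(n)}$ (applying the DDMP in the other direction), an event controlled by the inductive hypothesis. The resulting threshold is $M(N+1,\eps)=M(N,\eps/(4N+4))\,M(1,\eps/(4N+4))$, so the constants genuinely deteriorate with $N$. Two smaller omissions in your write-up: for the \emph{unconditional} model a curve has no fixed target, so (G) is not even well-defined for it and one must first decompose $\PR^{(n)}$ into the finitely many link-pattern-conditioned measures (as the paper does in parts (A) and (B)); and your passage from curve convergence to convergence of the driving functions $\DrFcnLattNotime{n}{j}$ silently changes Loewner coordinates by $n$-dependent conformal maps, which requires the paper's Lemma~\ref{lem: driving functions of conformally perturbed curves - 1 curve} rather than a direct citation of the one-curve theorem.
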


Informally speaking, Theorem~\ref{thm: precompactness thm multiple curves} above proves two commutative diagrams:
\begin{equation}
\label{dia: N-curve commutation with conf maps}
\begin{gathered}
\xymatrix@C+1.5pc{
(\gamma_{\UnitD; 1}^{(n)}, \ldots, \gamma_{\UnitD; N}^{(n)})  \ar[d]^{n \to \infty} \ar[r]^{\text{ conformal }}  & (\gamma_{\ 1}^{(n)}, \ldots, \gamma_{ N}^{(n)})  \ar[l]  \ar[d]^{n \to \infty } \\
(\gamma_{\UnitD; 1}, \ldots, \gamma_{\UnitD; N}) \ar[r]^{\text{ conformal }}  & (\gamma_{ 1}, \ldots, \gamma_{ N}) \ar[l] 
}
\end{gathered}
\end{equation}
and
\begin{equation}
\label{dia: N-curve commutation with Loewner transform}
\begin{gathered}
\xymatrix@C+0.75pc{
\InitSegmLatt{n}{j}  \ar[d]^{n \to \infty} \ar[r]^{\text{ Loewner }}  & \DrFcnLattNotime{n}{j} \ar[l]  \ar[d]^{n \to \infty } \\
\InitSegm{j} \ar[r]^{\text{ Loewner }}  & W_j. \ar[l] 
}
\end{gathered}
\end{equation}

\begin{rem}
\label{rem: precompactness for irregular boundary}
The assumptions that the limiting prime ends $p_1, \ldots, p_{2N}$ possess radial limits and that $p_1^{(n)}, \ldots, p_{2N}^{(n)}$ are their close approximations are only needed in order to study the curves $(\gamma_1^{(n)}, \ldots, \gamma_{N}^{(n)})$ in the natural planar topology~(i). Removing these assumptions, statements~(A)(ii)--(iv) and~(C) still hold. Also~(D) holds, with the modification that ``weak convergences of part (B)'' should be replaced with ``weak convergence in topology (ii)''.
\end{rem}

\subsubsection{\textbf{Hypotheses of the theorem}}
\label{subsubsec: hypotheses of the main precompactness thm}

The hypotheses of Theorem~\ref{thm: precompactness thm multiple curves}, i.e., the equivalent conditions (C) and (G), are the well-established crossing conditions of Kemppainen and Smirnov~\cite{KS}. The same hypotheses will later used in Theorem~\ref{thm: one curve precompactness}, where we recall some prior results from~\cite{KS} and~\cite{mie}. The latter, and hence conditions (C) and (G) below, are given in a more general setup with measures $\PR^{(n)}$ with random curves $\gamma^{(n)}$ on some simply-connected planar graphs $(\Gr_n; e^{(n)}_1,  e^{(n)}_2)$. These measures need not originate in a random curve model on a lattice $\InfiniteGr_n$ (and in particular not a DDMP model).

Both conditions (C) and (G) require the following filtrations: consider $\gamma^{(n)}$ as a path on the graph $\Gr^{(n)}$. Let $\mathcal{F}^{(n)}_m$ be the sigma algebras generated by the $m$ first vertices of $\gamma^{(n)}$. We call $(\mathcal{F}^{(n)}_1, \mathcal{F}^{(n)}_2, \ldots )$ the filtration of the path $\gamma^{(n)}$. We denote by $T_2^{(n)}$ the ending time of the path $\gamma^{(n)}$, i.e., the time when $\gamma^{(n)}$ uses the edge $e_2^{(n)} $.

Let us start with condition (G). Let $0 < r < R$.
Denote open annuli by,
\begin{align*}
A(z, r, R) = B(z, R) \setminus \overline{B(z, r)}.
\end{align*}
Let $\domain_\Gr$ be a simply-connected domain. (In our case, there is always an underlying planar graph $\Gr$, hence the notation.)
We say that an annulus $A(z, r, R)$ is \textit{on the boundary} of a simply-connected domain $\domain_\Gr$ if $ B(z, r) \cap \bdry \domain_\Gr \ne \emptyset$. Let $p_1^{(\Gr)}$ and $p_2^{(\Gr)}$ be prime ends of $\domain_\Gr$. A chordal curve $\gamma_\Gr$ from $p_1^{(\Gr)}$ to $p_2^{(\Gr)}$ in $\domain_\Gr$ makes an \textit{unforced crossing} of $A(z, r, R)$ if for some connected component $C$ of $A(z, r, R) \cap \domain_\Gr$ which does not disconnect $p_1^{(\Gr)}$ from $p_2^{(\Gr)}$ in $\domain_\Gr$, there exists a subinterval $[t_0, t_1] \subset [0 ,1]$ such that $\gamma_\Gr ([t_0, t_1])$ intersects both connected components of $\C \setminus A(z, r, R)$, but for $t \in (t_0, t_1)$ we have $\gamma_\Gr (t) \in C$.

Condition (G): We say that the measures $\PR^{(n)}$ with random curves $\gamma^{(n)}$ \emph{satisfy condition (G)} if for all $\eps > 0$  there exists $M > 0$, independent of $n$, such that the following holds for all stopping times $u^{(n)}<T^{(n)}_2$ with respect to the filtrations of the paths $\gamma^{(n)}$: for any annulus $A(z, r, R)$ with $R/r \ge M$ on the boundary of $\domain_n \setminus \gamma^{(n)}([0, u^{(n)} ])$, we have
\begin{align*}
\PR^{(n)} [\gamma^{(n)}([u^{(n)} , T_2^{(n)}]) \text{ makes a crossing of }  A(z, r, R)
\text{ unforced in } \domain_n \setminus \gamma^{(n)}([0,u^{(n)} ])
\; \vert \; \mathcal{F}^{(n)}_{ u^{(n)} } ] \le \eps.
\end{align*}

Let us now work towards condition (C). A \textit{topological quadrilateral} $(Q; S_0, S_1, S_2, S_3)$ consists of a planar domain $Q$ homeomorphic to a square, and arcs $S_0, S_1, S_2, S_3$ of its boundary, indexed counterclockwise, that correspond to the closed edges of the square under the homeomorphism. There is a one-parameter family of classes of conformally equivalent topological quadrilaterals with labelled sides, and the equivalence class of $(Q; S_0, S_1, S_2, S_3)$ is captured by the \textit{modulus} $m(Q)$. It is the unique $L > 0$ such that there exists a biholomorphism between $Q$ and the rectangle $(0, L) \times (0,1)$, so that the sides $S_0, S_1, S_2, S_3$ of $Q$ correspond to the edges of the rectangle, and $S_0$ to $\{ 0 \} \times [0,1]$. (There is an alternative terminology and notation: $m(Q)$ is the extremal distance $d_Q (S_0, S_2)$ of $S_0$ and $S_2$ in $Q$, see, e.g.,~\cite[Chapter~4]{Ahlfors}.)

Let $\domain_\Gr$ be a simply-connected planar domain. We say that a topological quadrilateral $(Q; S_0, S_1, S_2, S_3) $ is \textit{on the boundary} of $\domain_\Gr$, if $Q \subset \domain_\Gr$ and $S_1, S_3 \subset \bdry \domain_\Gr$, while $S_0$ and $S_2$ lie inside $\domain_\Gr$, except for their end points. Let $p_1^{(\Gr)}$ and $p_2^{(\Gr)}$ be prime ends of $\domain_\Gr$. A chordal curve $\gamma_\Gr$ from $p_1^{(\Gr)}$ to $p_2^{(\Gr)}$ in $\domain$ is said to make an \textit{crossing} of $Q$ if there is a subinterval $[t_0, t_1] \subset [0 ,1]$ such that $\gamma_\Gr ([t_0, t_1])$ intersects both $S_0$ and $S_2$, but for $t \in (t_0, t_1)$ we have $\gamma_\Gr (t) \in Q$. The crossing is \emph{unforced} if $Q$ does not disconnect $p_1^{(\Gr)}$ from $p_2^{(\Gr)}$ in $\domain_\Gr$.

Condition (C): We say that the measures $\PR^{(n)}$ with random curves $\gamma^{(n)}$ \emph{satisfy condition (C)} if for all $\eps > 0$  there exists $M > 0$, independent of $n$, such that the following holds for all stopping times $u^{(n)}<T^{(n)}_2$ with respect to the filtrations of the paths $\gamma^{(n)}$: for any topological quadrilateral $Q$ with $m(Q) \ge M$ on the boundary of $\domain_n \setminus \gamma^{(n)}([0, u^{(n)}])$, we have
\begin{align*}
\PR^{(n)} [\gamma^{(n)}([u^{(n)}, T_2^{(n)}]) \text{ makes a crossing of } Q \text{ unforced in } \domain_n \setminus \gamma^{(n)}([0, u^{(n)}])
\; \vert \; \mathcal{F}^{(n)}_{u^{(n)}} ] \le \eps.
\end{align*}

\begin{rem}
\label{rem: crossing conditions and DDMP}
If the measures $\PR^{(n)}$ with random curves $\gamma^{(n)}$, for each $n$, originate in random curve models on $\InfiniteGr_n$ with the DDMP, we know that conditioning on an initial segment $\gamma^{(n)}([0, u^{(n)}])$ is equivalent to reducing the graph $\Gr^{(n)}$ by that segment. Thus, we may assume that $u^{(n)}=0$ in the conditions above, with the cost that in stead of merely the graphs $(\Gr_n; e_1^{(n)}, e_2^{(n)})$, we will have to consider all simply-connected subgraphs $(\Gr; e_1, e_2)$ of the lattices $\InfiniteGr_n$ that may appear as such reduced graphs. 
\end{rem}

\subsection{Proof of Theorem~\ref{thm: precompactness thm multiple curves}}

\subsubsection{\textbf{An analogous theorem for $N=1$ curve}}

The proof of Theorem~\ref{thm: precompactness thm multiple curves} relies heavily on the analogue of that theorem for $N=1$ curves, given in~\cite{KS} and~\cite{mie}.
To state this analogue, consider the setup described in Section~\ref{subsec: setup and notation} with $N=1$, and omitting the assumption that the measures with random curve $ (\PR^{(n)}, \gamma )$ on $(\Gr_n; e_1^n, e_2^n)$ originate in some random curve model. Note that the choice of conformal maps $\confmap_n$ is free in Section~\ref{subsec: setup and notation}, as long as they converge. In the special case of the conformal maps $\confmap_n$ chosen so that in addition $(\domain_n; p^{(n)}_1, p^{(n)}_2) $ maps to $(\UnitD; -1, 1)$, 
denote by $\tilde{\gamma}_\UnitD^{(n)} = \confmap_n(\gamma^{(n)})$ the curves from $-1$ to $1$ in $\UnitD$. Denote by $\Dr^{(n)}$ the Loewner driving functions of the curves $\tilde{\gamma}_\UnitD^{(n)}$ (where a conformal map $ (\UnitD; -1, 1) \to (\bH; 0, \infty)$ is fixed independent of $n$).

\begin{thm}
\label{thm: one curve precompactness} \emph{(\cite[Theorems~1.5 and~1.7]{KS} and~\cite[Theorem~4.4 and Proposition~4.7]{mie})}
In the setup and notation given above, suppose that the measures with random curves $(\PR^{(n)}, \gamma^{(n)})$ satisfy the equivalent conditions (C) and (G).
Then the following hold:\\ 
 \textbf{A)} The measures $\PR^{(n)}$ 
 are precompact in the following senses:
 \begin{itemize}
 \item[i)] as laws of the curves $\gamma^{(n)}$ on the space $X(\C)$;
  \item[ii)] as laws of the curves 
  $\tilde{\gamma}^{(n)}_\UnitD$ (or ${\gamma}^{(n)}_\UnitD$, obtained with any converging conformal maps) on the space $X(\overline{\UnitD})$; and
 \item[iii)] as laws of the driving functions $\Dr^{(n)}$ on the space $\ctsfcns$.
 \end{itemize}
\textbf{B)} If for some subsequence $(n_k)_{k \in \N}$ weak convergence takes place in one of the topologies above, it also takes place in the two other ones. Furthermore, denoting the respective weak limits by $\gamma$, $\tilde{\gamma}_\UnitD$, ${\gamma}_\UnitD$, and $\Dr$, it holds that $\tilde{\gamma}_\UnitD$ and $\Dr$ are Loewner transforms of each other, while $\gamma$ and ${\gamma}_\UnitD$
 satisfy
\begin{align*}
\gamma \eqd \confmap^{-1} (\gamma_\UnitD),
\end{align*}
and $\gamma_\UnitD$ has the unique distribution on $X(\overline{\UnitD})$ satisfying this.
\end{thm}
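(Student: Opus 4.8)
The plan is to assemble the statement from two prior inputs and glue them with the continuous mapping theorem, keeping the ``reference disc'' part and the ``rough domain'' part strictly separate. The first input, from~\cite{KS}, handles everything internal to $\UnitD$: precompactness of the curves $\tilde\gamma^{(n)}_\UnitD$ in $X(\overline{\UnitD})$ and of the driving functions $\Dr^{(n)}$ in $\ctsfcns$, together with the fact that in the limit the curve and its driving function determine one another. The second input, from~\cite{mie}, handles the transfer back to the possibly rough domain $\domain$ via $\confmap^{-1}$. So I would organise the proof in four steps: (a) isolate the consequence of conditions (C)/(G) that the $\UnitD$-estimates rest on; (b) run the Kemppainen--Smirnov machinery to obtain the parts concerning $\tilde\gamma^{(n)}_\UnitD$ and $\Dr^{(n)}$; (c) run the near-boundary control of~\cite{mie} to obtain the $X(\C)$ statement and the identification $\gamma \eqd \confmap^{-1}(\gamma_\UnitD)$; (d) deduce the equivalences between the three modes of weak convergence and the uniqueness of the limiting laws.

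For (a)--(b): the sole use of (C) and (G) is a quantitative bound, \emph{uniform in $n$ and over the stopping times $u^{(n)}$}, on the probability that the curve makes an unforced crossing of a fat annulus (equivalently, of a high-modulus quadrilateral) on the boundary of the already-drawn hull. From this I would derive, following~\cite{KS}, first an equicontinuity-type modulus for a suitably time-changed $\tilde\gamma^{(n)}_\UnitD$, yielding tightness in $X(\overline{\UnitD})$ by an Arzel\`a--Ascoli argument; and second a modulus of continuity for the half-plane-capacity parametrised driving functions $\Dr^{(n)}$, via the comparison between capacity increments and the spatial oscillation of the curve tip, which the annulus-crossing bound controls --- this gives tightness in $\ctsfcns$. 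The same regularity estimate shows that any limiting curve $\gamma_\UnitD$ a.s.\ lies in the set of curves admitting a Loewner transform with continuous driving function, and that on this set the Loewner transform and its inverse are continuous; hence by the continuous mapping theorem (applied subsequentially, e.g.\ through a Skorokhod coupling) weak convergence of $\tilde\gamma^{(n)}_\UnitD$ and of $\Dr^{(n)}$ are equivalent, with mutually Loewner-transformed limits.

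For (c): I would pass from $\UnitD$ back to $\domain$ using that $\confmap^{-1}$ is conformal in $\UnitD$, extends by radial limits a.e.\ on $\bdry\UnitD$, and --- by the hypothesis that the approximations are \emph{close} Carath\'{e}odory approximations --- that the discrete curves $\gamma^{(n)}$ are not forced to travel a macroscopic distance to leave the neighbourhoods of the marked prime ends $p^{(n)}_1, p^{(n)}_2$; this is exactly what rules out the boundary pathology that would otherwise prevent $\confmap_n^{-1}(\gamma^{(n)}_\UnitD)$ from converging in $X(\C)$. Combining the $X(\overline{\UnitD})$-tightness of step (b) with the uniform convergence $\confmap_n^{-1}\to\confmap^{-1}$ on compact subsets of $\UnitD$ and the near-boundary control, I obtain tightness of $\gamma^{(n)}$ in $X(\C)$ and identify every subsequential limit with the $X(\C)$-random variable $\confmap^{-1}(\gamma_\UnitD)$ of the statement; here one must check that $\confmap^{-1}$ is a.s.\ defined along all of $\gamma_\UnitD$ and that the result is a curve measurable with respect to $\gamma_\UnitD$, which is where the a.s.\ good behaviour of $\gamma_\UnitD$ near $\bdry\UnitD$ (again a consequence of (C)/(G)) is used. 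Conversely $\confmap:\domain\to\UnitD$ \emph{is} continuous, so $\gamma_\UnitD=\confmap(\gamma)$ a.s.\ and the law of $\gamma_\UnitD$ on $X(\overline{\UnitD})$ is determined by that of $\gamma$.

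Steps (a)--(c) give (A) and (B) together: the three precompactness statements, the equivalence of the three modes of weak convergence along subsequences, and the two identifications. I expect the main obstacle throughout to be the boundary behaviour --- both passing through the Loewner correspondence at the endpoints and extending $\confmap^{-1}$ to $\bdry\UnitD$ are genuinely discontinuous operations in general, so the crux is to show that the relevant discontinuity sets carry zero mass under every subsequential limit. That is precisely the role of the crossing conditions (regularity, no spiralling, controlled approach to the boundary) and of the closeness of the domain approximations; checking that these qualitative ``almost sure niceness'' statements hold uniformly enough to survive weak limits is the technical heart of~\cite{KS, mie}, and for the present theorem it is inherited from those works.
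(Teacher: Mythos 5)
The paper does not prove this theorem: it is imported verbatim as a combination of \cite[Theorems~1.5 and~1.7]{KS} (tightness of $\tilde\gamma^{(n)}_\UnitD$ and $\Dr^{(n)}$, and the Loewner correspondence in the limit) and \cite[Theorem~4.4 and Proposition~4.7]{mie} (the transfer to $X(\C)$ via $\confmap^{-1}$ under close Carath\'{e}odory approximation, and the uniqueness of the law of $\gamma_\UnitD$). Your outline is a faithful reconstruction of exactly the strategy of those two references, so it matches the (cited) proof in approach; the only step stated a little too glibly is the converse identification $\gamma_\UnitD=\confmap(\gamma)$, since $\confmap$ need not extend continuously to $\overline{\domain}$ for rough boundaries, which is why \cite[Proposition~4.7]{mie} handles the uniqueness claim with more care.
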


The statements that $\tilde{\gamma}_\UnitD$ and $\Dr$ are Loewner transforms of each other and $\gamma \eqd \confmap^{-1} (\gamma_\UnitD)$ are formally interpreted as in Theorem~\ref{thm: precompactness thm multiple curves}. Taking the conformal maps so that $\gamma_\UnitD = \tilde{\gamma}_\UnitD$, this theorem can be summarized in the commutative diagram
\begin{equation}
\label{dia: 1-curve commutative diagram}
\begin{gathered}
\xymatrix@C+0.75pc{
\gamma^{(n)} \ar[d]^{n \to \infty} \ar[r]^{\text{ conformal }} & \tilde{\gamma}_{\UnitD }^{(n)} \ar[l] \ar[d]^{ n \to \infty } \ar[r]^{\text{ Loewner }}  & \Dr^{(n)} \ar[l] \ar[d]^{n \to \infty}  \\
\gamma \ar[r]^{\text{ conformal }} & \tilde{\gamma}_{\UnitD} \ar[l] \ar[r]^{\text{ Loewner }} & \Dr. \ar[l]
}
\end{gathered}
\end{equation}

\subsubsection{\textbf{One-curve marginals for general $N$}}

To prove Theorem~\ref{thm: precompactness thm multiple curves}, our strategy is based on establishing Theorem~\ref{thm: one curve precompactness} for the one-curve marginal laws. We thus start with an analogue of condition (G) for multiple curves. Note that the assumptions of the below lemma hold in the setup of Theorem~\ref{thm: precompactness thm multiple curves}.


\begin{lem}
\label{lem: multi-G}
Let $\InfiniteGr_n$ be a sequence of lattices, and assume that we have, for each $n \in \N$, a discrete random curve model $( \PR^{(\Gr; e_1, \ldots, e_{2N})}, (\gamma_{\Gr;1}, \ldots, \gamma_{\Gr;N} ) )$ on some simply-connected subgraphs $\Gr$ of $\InfiniteGr_n$. Assume that these models have alternating boundary conditions and satisfy the DDMP. Fix $N \in \N$ and $\alpha \in \LP_N$, and from these symmetric curve models, extract the collection of conditioned measures $\PR_{\alpha}^{(\Gr; e_1, \ldots, e_{2N})} [\cdot]= \PR^{(\Gr; e_1, \ldots, e_{2N})} [\cdot \vert \alpha]$ with random curves $(\gamma_{\Gr;1}, \ldots, \gamma_{\Gr;2N} )$, indexed by $n$ and the simply-connected subgraphs $\Gr$ of $\InfiniteGr_n$ on which these measures make sense. This collection of measures with random curves 
 satisfies the condition (multi-G) below.
\end{lem}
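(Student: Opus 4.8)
The plan is to obtain (multi-G) --- the natural multiple-curve analogue of condition (G), designed so that the one-curve precompactness input of Theorem~\ref{thm: one curve precompactness} can later be applied to each curve separately --- purely from the DDMP, by reducing every conditional crossing probability for the $N$-curve model to a conditional crossing probability for a genuine one-curve model of the same family.

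The starting point is the structural consequence of the DDMP recorded right after its definition: if one conditions the conditioned curve model $\PR_\alpha^{(\Gr; e_1, \ldots, e_{2N})}$ on all but one of its random curves, revealed in full, then the remaining curve must lie in the simply-connected subgraph $\tilde{\Gr}$ of $\InfiniteGr_n$ corresponding to the connected component of $\domain_\Gr$ minus the revealed curves that is adjacent to the two endpoints of the remaining curve, and on this event its conditional law is exactly a one-curve model $\PR^{(\tilde{\Gr}; \tilde{e}_1, \tilde{e}_2)}$ of the family. This is property~(ii) of the DDMP used inductively; the alternating boundary conditions are precisely what permit singling out any one of the $N$ curves and relabelling the boundary edges counterclockwise with the parities preserved, so that property~(ii) applies. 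Property~(i), in its stopping-time form, then shows that conditioning such a one-curve model further on its initial segment up to a stopping time in the filtration of the exploration path is the same as passing to the still smaller simply-connected subgraph obtained by removing that initial segment, the curve being issued from the current tip. Composing these reductions, any conditioning appearing in (multi-G) --- a (partial or full) exploration of the other curves together with an initial segment of the distinguished one, or any coarsening thereof --- is realized as a finite chain of DDMP reductions terminating in a one-curve model $\PR^{(\Gr'; a, b)}$ on a simply-connected subgraph $\Gr'$ of $\InfiniteGr_n$, with the leftover domain $\domain_{\Gr'}$ equal to $\domain_n$ with everything revealed removed.

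With this identification, (multi-G) follows: the event that the remainder of the distinguished curve makes an unforced crossing of a boundary annulus of $\domain_{\Gr'}$, conditionally on the revealed configuration, is exactly the (G)-event for the one-curve model $\PR^{(\Gr'; a, b)}$ with stopping time $0$ (cf.\ Remark~\ref{rem: crossing conditions and DDMP}); for a coarser conditioning the same bound passes through by the tower property for regular conditional laws (Appendix~\ref{sec: abstract nonsense}), since neither the threshold $M$ nor the class of one-curve models involved depends on the coarsening, and in the ambient setup of Theorem~\ref{thm: precompactness thm multiple curves} every such one-curve model satisfies (G) uniformly. I expect the only genuinely delicate point to be the bookkeeping of these reductions: one must check that each intermediate object is again a simply-connected subgraph of some $\InfiniteGr_n$, that the exploration of the distinguished curve may be consistently interleaved with partial explorations of the others via the stopping-time version of property~(i), and that the ``unforced in $\domain_n$ minus the revealed set'' clause of (multi-G) coincides with the ``unforced in $\domain_{\Gr'}$'' clause of (G), so that the two events are literally equal rather than merely comparable.
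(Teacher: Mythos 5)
Your reduction via the DDMP to a one-curve model is the right starting point, and it is also how the paper begins. But the point you defer to ``bookkeeping'' --- whether the ``unforced in $\domain_\Gr$'' clause of (multi-G) coincides with the ``unforced in $\domain'$'' clause of (G) for the reduced one-curve model --- is precisely where the argument breaks, and the two events are \emph{not} equal. After conditioning on the other curves, $\gamma_j$ is a one-curve model in the component $\domain'$ of $\domain_\Gr \setminus \{\gamma_i\}_{i\ne j}$ containing its endpoints. A component $C$ of $A(z,r,R)\cap\domain_\Gr$ that does not disconnect the endpoints of $\gamma_j$ in $\domain_\Gr$ (so a crossing of it counts as unforced for (multi-G)) may contain a component $C'$ of $A(z,r,R)\cap\domain'$ that \emph{does} disconnect the endpoints in $\domain'$, simply because some other curve $\gamma_i$ dips into the annulus. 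On that event the crossing is forced in $\domain'$, condition (G) for the one-curve model says nothing, and indeed the conditional crossing probability is $1$. So the conditional bound you invoke does not exist on this event, and no tower-property argument recovers it.

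The paper's proof exists to handle exactly this case: it proceeds by induction on $N$, splits $A(z,r,R)$ into two sub-annuli $A(z,r,r')$ and $A(z,r',R)$ with ratios $M(1,\cdot)$ and $M(N,\cdot)$, and shows that for $C'$ to become disconnecting in $\domain'$ while $C$ is non-disconnecting in $\domain_\Gr$ and separated from the endpoints by $\bdry B(z,R)$, some other curve $\gamma_i$ must itself make an \emph{unforced} crossing of the outer sub-annulus $A(z,r',R)$ in the domain $\domain''$ it is confined to; this is then controlled by the inductive hypothesis for $N$ curves. Your proposal contains neither the induction on $N$ nor the two-scale decomposition, and without them the mismatch between the two notions of ``unforced'' is not a cosmetic issue but the entire content of the lemma.
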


Condition (multi-G): We say that a collection of measures with random curves $(\PR_{\alpha}^{(\Gr; e_1, e_2, \ldots, e_{2N})}, (\gamma_{\Gr;1}, \ldots, \gamma_{\Gr;2N} )) $, where the link pattern formed by the curves $\gamma_{\Gr;1}, \ldots, \gamma_{\Gr;2N}$ is always $\alpha$, \textit{satisfies condition (multi-G)} if for all $\eps > 0$ there exists $M > 0$ such that the following holds: 
for any annulus $A(z, r, R)$ with $R/r \ge M$ on the boundary of the simply-connected domain corresponding to $\Gr$,
\begin{align}
\label{eq: multi-G}
\PR^{(\Gr; e_1, e_2, \ldots, e_{2N})} [\text{for some $j$, }\gamma_{\Gr; j} \text{ makes an unforced crossing of } A(z, r, R) ] \le \eps.
\end{align}

Condition (multi-G) is a direct analogue of condition (G) in the case $\tau = 0$ for multiple curves. Note that we need to fix the link pattern $\alpha$ in order to be able to talk about forced and unforced crossings of $\gamma_{\Gr; j}$.

\begin{proof}[Proof of Lemma~\ref{lem: multi-G}]
We prove the proposition by induction on $N$. In the base case $N=1$ condition (multi-G) becomes simply condition (G) with $\tau=0$, and holds by assumption. Assume now that the claim holds for each number of curves $\ell = 1, \ldots, N$ and any link patterns with that number of curves, with some $M= M(\ell, \eps)$. (We may assume that $M(\ell, \eps)$ does not depend on the link pattern since, for any $\ell$, there are finitely many link patterns.) Let us study the model with $N+1$ curves, and fix an index $j \in \{1, \ldots, N+1 \}$ of the considered curve. To satisfy~\eqref{eq: multi-G} in Condition~(multi-G) it clearly suffices to show that for any $j$ 
\begin{align}
\label{eq: new multicurve crossing cond}
p =: \PR_\alpha^{(\Gr; e_1, \ldots, e_{2N + 2})} [\gamma_{\Gr; j} \text{ makes an unforced crossing of } A(z, r, R) ] \le \eps/(N+1).
\end{align} 
We claim that~\eqref{eq: new multicurve crossing cond} holds when we choose $M(N+1, \eps) = M(N, \eps/(4N + 4)) M(1, \eps/(4N + 4))$. For the rest of this proof, let us fix the index $n$ of our random curve model and the subgraph $(\Gr; e_1, \ldots, e_{2N + 2})$ of $\InfiniteGr_n$, and show that $p \le \eps/(N+1)$ irrespective of these choices. We will also drop all subscripts $\Gr$ for short.

%

Notice first that the curve $\gamma_j$ lies in the connected component $\domain'$ of $ \domain \setminus \{ \gamma_1, \ldots, \gamma_{2j-2}, \gamma_{2j}, \ldots, \gamma_{2N} \}$ containing the $j$:th odd-index edge $e_{2j-1}$. Let us denote the corresponding simply-connected subgraph of $\Gr$ by $\Gr'$, and the two marked boundary edges left in that graph by $e_1', e_2'$, where $e_1' = e_{2j-1}$ is the odd one. By the DDMP, when we condition on the remaining curves $(\gamma_1, \ldots, \gamma_{2j-2}, \gamma_{2j}, \ldots, \gamma_{2N}$), $\gamma_j$ is in distribution equal to $\gamma_{\Gr'}$ from the pair $( \PR^{(\Gr', e_1', e_2')}, \gamma_{\Gr'})$ in our random curve model on $\InfiniteGr_n$.

Consider now the event in~\eqref{eq: new multicurve crossing cond} that $\gamma_{ j}$ makes an unforced crossing of $A(z, r, R)$. Let us first study the case that that $\gamma_j$ makes this unforced crossing from the outside of the annulus $A (z, r, R)$. (Formally, $\bdry B(z, R)$ separates the crossed component $C$ of $A (z, r, R)$ in $\domain$ from the end points of $\gamma_j$.) Denote the probability of such an unforced crossing from outside by $p'$. Divide $A(z, r, R)$ into boundary annuli $A(z, r, r')$ and $A(z, r', R)$, where $r'/r =  M(1, \eps/(4N + 4))$ and $R/r' =  M(N, \eps/(4N + 4))$. 
A crossing of $C$ includes a crossing of the inner subannulus $A(z, r, r')$. At least one of the following two thus has to occur: 
\begin{itemize}
\item[i)] some connected component of $A(z, r, r')$ not disconnecting in $\domain'$ is crossed by $\gamma_j$; or
\item[ii)] there is a component $C'$ of  $A(z, r, r')$ in $\domain'$, with $C' \subset C$ for some non-disconnecting component $C$ of $A(z, r, R)$ in $\domain$, such that $C'$ is disconnecting in $\domain'$.
\end{itemize} 
Case~(i) occurs with probability $\le \eps / (4N + 4)$, by the conditional law of $\gamma_j$ deduced above. 
In case~(ii), recall that $C$ is separated from the end points of $\gamma_j$ in $\domain$, by $\bdry B(z, R)$. It follows that $C'$ is separated from the end points of $\gamma_j$ in $\domain$, and hence also in $\domain'$, by $\bdry B(z, r')$.
On the other hand, $C'$ is disconnecting in $\domain'$ if and only if both the clockwise and counterclockwise boundary arcs of $\bdry \domain'$ from $e_1'$ to $e_2'$ touch $C'$. Likewise, since $C$ does not disconnect $e_1'$ from $e_2'$ in $\domain$, we know that one of the arcs of $\bdry \domain$ from $e_1'$ to $e_2'$, say for definiteness the clockwise one, does not touch $C$ and is thus separated from $C$ by $\bdry B(z, R)$. In other words, for $C'$ to be disconnecting in $\domain'$, one of the remaining curves $\gamma_i$, $i \ne j$, starting and ending on the clockwise arc of $\bdry \domain$, has to cross $\bdry B(z, R)$, then enter $C$ and enter it deep enough to touch $\bdry B(z, r')$, and finally touch $C'$. In particular, this curve $\gamma_i$ crosses the annulus $A(z, r', R)$ inside $C$. Now, study the component $\domain''$ of $\domain \setminus \gamma_j$ containing $\gamma_i$, and let $C'' \subset C$ be the component of $A(z, r', R)$ in $\domain''$ containing a crossing of $\gamma_i$ (if there are several, pick one). We claim that the crossing of $C''$ by $\gamma_i$ is unforced in $\domain''$: indeed, the clockwise boundary of $\domain''$ between the end points of $\gamma_i$ is contained in that of $\domain$ between the end points of $\gamma_j$, and we already know that the latter does not touch $C$. Thus, the small clockwise boundary arc of $\domain''$ between the end points of $\gamma_i$ does not touch the smaller set $C''$, so the crossing of $C''$ is unforced in $\domain''$. This holds for any $C$, and $C''$ is always a connected component of the same annulus. Thus, by the DDMP, such an unforced crossing by $\gamma_i$, for some $i$, occurs with probability $\le M(N, \eps/(4N + 4))$.

Finally, summing up the contributions of cases (i) and (ii) above, we notice that $p' \le 2\eps/(4N + 4)$. Crossings of $A(z, r, R)$ from the inside are treated similarly. We thus obtain
\begin{align*}
p \le  2\eps/(4N + 4) + 2\eps/(4N + 4) = \eps/(N+1),
\end{align*}
as required.
\end{proof}

The lemma above allows us to apply Theorem~\ref{thm: one curve precompactness} for the one-curve marginals of the random curve collections in Theorem~\ref{thm: precompactness thm multiple curves}

\begin{cor}
\label{cor: marginal precompactness}
Consider the setup of Theorem~\ref{thm: precompactness thm multiple curves}, in the version where the measures $\PR^{(n)}$ are those conditional on a fixed link pattern $\alpha \in \LP_N$, and let the assumptions in that theorem hold. Then, the measures with random curves $(\PR^{(n)}, \gamma^{(n)}_j)$, for any fixed $j$, satisfy condition (G), and hence all the consequences of Theorem~\ref{thm: one curve precompactness}
\end{cor}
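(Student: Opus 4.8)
The plan is to deduce condition~(G) for the single curve $\gamma^{(n)}_j$ directly from condition~(multi-G) of Lemma~\ref{lem: multi-G}, following the reduction spelled out in Remark~\ref{rem: crossing conditions and DDMP}: since the measures $\PR^{(n)}=\PR^{(\Gr_n;e_1^{(n)},\ldots,e_{2N}^{(n)})}_\alpha$ come from a DDMP model, conditioning on an initial segment of $\gamma^{(n)}_j$ is the same as reducing the graph, so that verifying~(G) amounts to verifying its $u^{(n)}=0$ version uniformly over all simply-connected subgraphs of the $\InfiniteGr_n$ that can arise as such reductions. For a given $\eps>0$ let $M=M(N,\eps)$ be the constant supplied by Lemma~\ref{lem: multi-G} (uniform over the finitely many link patterns in $\LP_N$).

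So I would fix $n$, the subgraph $(\Gr_n;e_1^{(n)},\ldots,e_{2N}^{(n)})$, a stopping time $u^{(n)}<T_2^{(n)}$ in the filtration of the path $\gamma^{(n)}_j$, and an annulus $A(z,r,R)$ with $R/r\ge M$ on the boundary of $\domain_n\setminus\gamma^{(n)}_j([0,u^{(n)}])$. By the stopping-time form of DDMP property~(i) in its conditional-measure analogue (see the remarks after the definition of the DDMP), conditioning $\PR^{(n)}$ on $\mathcal{F}^{(n)}_{u^{(n)}}$ --- equivalently, on the initial segment of the curve adjacent to the odd edge $e_{2j-1}^{(n)}$, which by our orientation convention is $\gamma^{(n)}_j$ --- produces the conditional curve model on the reduced simply-connected subgraph $\Gr'$ corresponding to $\domain_n\setminus\gamma^{(n)}_j([0,u^{(n)}])$, with the marked boundary edges relabelled accordingly and the link pattern $\alpha'$ inherited from $\alpha$. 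Under this conditional model on $\Gr'$, the remainder $\gamma^{(n)}_j([u^{(n)},T_2^{(n)}])$ is, in distribution, the curve of the new collection issued from the (relabelled) odd marked edge of $\Gr'$, and an unforced crossing of $A(z,r,R)$ by $\gamma^{(n)}_j([u^{(n)},T_2^{(n)}])$ in $\domain_n\setminus\gamma^{(n)}_j([0,u^{(n)}])$ is exactly an unforced crossing of $A(z,r,R)$ by that curve in the domain of $\Gr'$.

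Now Lemma~\ref{lem: multi-G}, applied to the conditional model on $(\Gr';\ldots)$ with link pattern $\alpha'$, gives that --- since $R/r\ge M(N,\eps)$ and $A(z,r,R)$ is on the boundary of the domain of $\Gr'$ --- the probability that \emph{some} curve of that collection makes an unforced crossing of $A(z,r,R)$ is at most $\eps$; in particular so is the probability for the curve distributed as $\gamma^{(n)}_j([u^{(n)},T_2^{(n)}])$. This is precisely the bound demanded by condition~(G) for $(\PR^{(n)},\gamma^{(n)}_j)$, with $M(N,\eps)$ independent of $n$, of the subgraph, of the stopping time and of the annulus. Hence $(\PR^{(n)},\gamma^{(n)}_j)$ satisfies~(G); by~\cite{KS} it also satisfies the equivalent condition~(C), and Theorem~\ref{thm: one curve precompactness} applies, yielding in particular precompactness of the one-curve marginals and their driving functions together with the commutative diagram~\eqref{dia: 1-curve commutative diagram}. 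The one subtle point is the DDMP bookkeeping in the second paragraph --- that conditioning on $\alpha$ commutes with the graph reduction so as to leave a \emph{conditional} model on $\Gr'$ with the inherited link pattern, and that ``unforced'' is always measured relative to the endpoints of $\gamma_j$ in the correct reduced domain; both are exactly what the remarks accompanying the DDMP provide, so beyond this the statement is an immediate corollary of Lemma~\ref{lem: multi-G}.
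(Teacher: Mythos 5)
Your proposal is correct and is exactly the paper's argument: the paper's proof is a one-line citation of Remark~\ref{rem: crossing conditions and DDMP} together with condition (multi-G) from Lemma~\ref{lem: multi-G}, and you have simply spelled out the DDMP bookkeeping (reduction to the graph $\Gr'$ with the inherited link pattern, and the matching of the ``unforced'' notions in the reduced domain) that makes that combination work.
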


\begin{proof}
Condition (G) for the curve $\gamma^{(n)}_j$ follows immediately by combining Remark~\ref{rem: crossing conditions and DDMP} and condition (multi-G) obtained in Lemma~\ref{lem: multi-G}.
\end{proof}

\subsubsection{\textbf{Proofs of the statements about curves}}

We can now rather straightforwardly prove the statements of Theorem~\ref{thm: precompactness thm multiple curves} that only employ random variables in the spaces of curves $X(\C)$ and $X(\overline{\UnitD})$.

\begin{proof}[Proof of Theorem~\ref{thm: precompactness thm multiple curves}(A)(i)--(iii)]
Let us first prove part~(i). Consider first Theorem~\ref{thm: precompactness thm multiple curves} with the measures $\PR^{(n)}$ being those conditional on a fixed link pattern $\alpha \in \LP_N$.
Recall that by Prohorov's theorem, tightness and precompactness are equivalent for measures on Polish spaces (i.e., complete separable metric spaces). Thus, by Corollary~\ref{cor: marginal precompactness} on the measures with random curves $(\PR^{(n)}, \gamma^{(n)}_j)$ and Theorem~\ref{thm: one curve precompactness}(A)(i), we know that $\PR^{(n)}$ are tight as laws of the random curves $\gamma^{(n)}_j$. In other words, for all $\eps > 0$, there exists a (sequentially) compact set $K^{(j)}_\eps \subset X(\C)$ such that $\PR^{(n)} (\gamma^{(n)}_j \in K^{(j)}_\eps ) \ge 1 - \eps/N$ for all $n$. This holds for all $1 \le j \le N$.

Next, take for all $1 \le j \le N$ the sets $K^{(j)}_\eps \subset X(\C)$ as above. Their product set $K^{(1)}_\eps \times \ldots \times K^{(N)}_\eps \subset X(\C)^N$ is a (sequentially) compact set in the space $X(\C)^N$. Furthermore, it clearly holds that 
\begin{align}
\label{eq: tightness of RVs implies tightness of tensor RV}
\PR^{ (n)} [ (\gamma^{(n)}_1, \ldots, \gamma^{(n)}_{N} ) \in K^{(1)}_\eps \times \ldots \times K^{(N)}_\eps ) \ge 1 - \eps.
\end{align}
Thus, the measures $\PR^{(n)}$ are tight in topology (i). Using Prohorov's theorem to the converse direction, we deduce that they are precompact. This proves Theorem~\ref{thm: precompactness thm multiple curves}(A)(i) for measures $\PR^{(n)}$ conditional on a fixed link pattern $\alpha \in \LP_N$

For the symmetric measures $\PR^{(n)}$, notice that
\begin{align}
\label{eq: curve measure as a convex combi of LP conditional measures}
\PR^{(n)} [\; \cdot \; ] = \sum_{\alpha \in \LP_N} \PR^{(n)} [\alpha] \PR^{(n)} [\; \cdot \;  \vert \alpha ].
\end{align}
Since there are finitely many link patterns $\alpha \in \LP_N$, we can extract a subsequence so that the numbers $\PR^{(n)} [\alpha]$ converge for all $\alpha \in \LP_N$. Then, we can use the precompactness of the conditional measures $\PR^{(n)} [\cdot \vert \alpha ]$, deduced in the previous paragraph, to extract a further subsequence where the conditional measures $\PR^{(n)} [\cdot \vert \alpha ]$ converge weakly as laws of $(\gamma^{(n)}_1, \ldots, \gamma^{(n)}_{N} ) \in X(\C)^N$, for all $\alpha \in \LP_N$. Then, also the symmetric measures $\PR^{(n)}$ converge weakly as laws of $(\gamma^{(n)}_1, \ldots, \gamma^{(n)}_{N} )$ along this subsequence.

The proof of part~(ii) is identical to the proof of part~(i) above.

For part~(iii), the initial segments $\InitSegmLatt{n}{j}$ obtained from the continuous stopping time is a continuous function of the full curves $(\gamma^{(n)}_{\UnitD; 1}, \ldots, \gamma^{(n)}_{\UnitD; N} )$, see Appendix~\ref{app: continuous stopping times}. Thus, the weak convergence of the former follows from that of the latter.
\end{proof}
%

\begin{proof}[Proof of Theorem~\ref{thm: precompactness thm multiple curves}(B)]
The proof for $N=1$ curve is given by the author in~\cite[Theorem~4.4 and Proposition~4.7]{mie}. For $N \ge 2$, the proof essentially identical, and we thus only outline the proof here. Let us first give the proof in the case when the measures $\PR^{(n)}$ are those conditional on a fixed link pattern $\alpha \in \LP_N$. By Corollary~\ref{cor: marginal precompactness}, the curves $\gamma^{(n)}_j$, for all $j$, satisfy condition (G), which is the hypothesis in~\cite[Theorem~4.4 and Proposition~4.7]{mie}. The proofs of~\cite{mie} can now be straightforwardly repeated for the collections of curves $(\gamma^{(n)}_1, \ldots, \gamma^{(n)}_{2N})$.

To handle the case when the measures $\PR^{(n)}$ are not conditional on some link pattern, we have to be careful with Condition~(G), which does not make sense any more now that a curve $\gamma^{(n)}_j$ does not have a single target point. Condition (G) only appears in the proof of~\cite[Theorem~4.4 and Proposition~4.7]{mie} inside the proof of the Key lemma~4.6. The conclusion of that lemma (for which we refer to~\cite{mie}) thus has to be reached differently: Condition (G), and thus~\cite[Key lemma~4.6]{mie} holds for the curves $\gamma^{(n)}_j$ under the conditional measures $\PR^{(n)} [\cdot \vert \alpha ]$, for all $\alpha \in \LP_N$. Since there are finitely many link patterns $\alpha \in \LP_N$, we can make the conclusion of that lemma hold for all of them simultaneously. Since $\PR^{(n)}$ is a convex combination of such conditional measures, we then obtain the conclusion of that lemma also for curves $\gamma^{(n)}_j$ under $\PR^{(n)}$, irrespective of the convex weights $\PR^{(n)}[\alpha]$. The rest of the arguments in~\cite[Theorem~4.4 and Proposition~4.7]{mie} can be repeated straightforwardly.
\end{proof}

\subsubsection{\textbf{Proofs of the statements about driving functions}}

Let us now prove the statements about driving functions in Theorem~\ref{thm: precompactness thm multiple curves}. The strategy will be once again to first prove the theorem in the case when the random curves are conditional on some particular link pattern $\alpha \in \LP_N$. Let us introduce some notation in that case. Note first that under the assumptions of Theorem~\ref{thm: precompactness thm multiple curves}, Corollary~\ref{cor: marginal precompactness} and the conclusions of Theorem~\ref{thm: one curve precompactness} hold for the curves $\gamma^{(n)}_j$ starting from the odd boundary points. The same deduction also applies for their reversals, starting from the even boundary points. In particular, if we choose the conformal maps $\confmap_n$ so that the $i$:th boundary point, for a fixed $1 \le i \le 2N$, maps to $-1$ and the other endpoint of that curve to $+1$, the driving functions $\Dr^{(n)}_i$ of the corresponding curves are precompact, as detailed in Theorem~\ref{thm: one curve precompactness}. Unfortunately, the driving functions $\DrFcnLattNotime{n}{i}$ in Theorem~\ref{thm: precompactness thm multiple curves} are driving functions of some conformal images of the curves described by $\Dr^{(n)}_i$. Let us relate $\DrFcnLattNotime{n}{i}$ and $\Dr^{(n)}_i$, and let us for a moment fix $i$ and omit the subscripts $i$.

Now, more precisely, $\Dr^{(n)}$, parametrized by time $t$,  are the Loewner driving functions of some random growing hulls $K^{(n)}_t$ from zero to infinity in $\bH$. Likewise $\DrFcnNoInd^{(n)}$, parametrized by time $s$, are the driving functions of $\confmapSH_n (K^{(n)}_t)$; here $\confmapSH_n$ are suitable conformal (M\"{o}bius) maps $\overline{\bH} \to \overline{\bH}$ that converge uniformly over compacts, $\confmapSH_n \to \confmapSH$, and the time parametrization $s$ is different than $t$, $s = s^{(n)} (t)$. Recall that $\DrFcnNoInd^{(n)}$ is stopped when the corresponding hulls $\confmapSH_n (K^{(n)}_t)$ reach the continuous exit time of the localization neighbourhood $\confmapDH(U_i)$ of the $i$:th boundary point in $\bH$; we denote that that value of $s$ by $\sigma^{(n)}$.


\begin{lem}
\label{lem: driving functions of conformally perturbed curves - 1 curve}
Under the notation above and assumptions of Theorem~\ref{thm: one curve precompactness}, the continuously stopped driving functions $\DrFcnNoInd^{(n)}_{s \wedge \sigma^{(n)}}$ are precompact in $\ctsfcns$. Furthermore, if the driving functions $\Dr^{(n)}_t$ converge weakly in $\ctsfcns$ to $\Dr_t$, describing some random growing hulls $K_t$, then the driving functions $\DrFcnNoInd^{(n)}_{s \wedge \sigma^{(n)}}$ converge weakly in $\ctsfcns$ to $\DrFcnNoInd_{s \wedge \sigma}$, describing the hulls $\confmapSH(K_t)$ up to their continuous exit time $\sigma$ of $\confmapDH(U_i)$.
\end{lem}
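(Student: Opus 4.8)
\emph{Approach.} The unperturbed statement, for the driving functions $\Dr^{(n)}$ of curves between the fixed prime ends, is Theorem~\ref{thm: one curve precompactness}; the plan is to bootstrap from $\Dr^{(n)}$ to $\DrFcnNoInd^{(n)}$ through the Loewner coordinate change under a M\"obius automorphism of $\bH$ (cf.\ \cite[Proposition~4.7]{mie}). First I would record its form. Writing $g_t$ for the Loewner maps of the hulls $K^{(n)}_t$ (driven by $\Dr^{(n)}_t$, normalized by $\hcap(K^{(n)}_t) = 2t$) and $\tilde{g}_s$ for those of $\confmapSH_n(K^{(n)}_t)$ in the reparametrized capacity time $s = s^{(n)}(t)$, the maps $\MobF_t := \tilde{g}_{s^{(n)}(t)} \circ \confmapSH_n \circ g_t^{-1}$ are M\"obius automorphisms of $\bH$ with $\MobF_0 = \confmapSH_n$, and one has
\[
\DrFcnNoInd^{(n)}_{s^{(n)}(t)} = \MobF_t(\Dr^{(n)}_t), \qquad \dot{s}^{(n)}(t) = \MobF_t'(\Dr^{(n)}_t)^2,
\]
where $(\MobF_t, s^{(n)})$ solve a system of ODEs driven by $\Dr^{(n)}$, with coefficients that are smooth functions of the unknowns and of $\Dr^{(n)}_t$ away from the configurations where $\MobF_t$ develops a pole at $\Dr^{(n)}_t$ --- equivalently, where $K^{(n)}_t$ reaches the point $\confmapSH_n^{-1}(\infty)$.

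\emph{Uniform non-degeneracy.} Next I would argue that this ODE runs without reaching its singularity, uniformly in $n$, up to the continuous exit time $\sigma^{(n)}$. By the properties of continuous stopping times recalled in Appendix~\ref{app: continuous stopping times}, up to time $\sigma^{(n)}$ the hull $\confmapSH_n(K^{(n)}_t)$ is a compact $\bH$-hull contained in a fixed bounded enlargement of the compact hull $\confmapDH(U_i)$; hence $K^{(n)}_t$ stays in the $\confmapSH_n^{-1}$-image of that enlargement. Since $\confmapSH_n \to \confmapSH$ uniformly on compacts (with $\confmapSH$ again a M\"obius automorphism of $\bH$), also $\confmapSH_n^{-1} \to \confmapSH^{-1}$, so for all large $n$ these images lie in a fixed bounded set $V$ with $\confmapSH^{-1}(\infty) \notin \overline{V}$ and $\dist(V, \confmapSH_n^{-1}(\infty))$ bounded away from zero. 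On such configurations the maps $\MobF_t$ remain uniformly bi-Lipschitz near $\Dr^{(n)}_t$, so the coordinate change is uniformly non-degenerate up to $\sigma^{(n)}$.

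\emph{A solution map.} Let $\mathcal{D} \subset \ctsfcns$ be the set of driving functions whose Loewner transform is a continuous curve. The above defines, for each $n$, a map $\Psi_n : \mathcal{D} \to \ctsfcns$ sending a driving function to the coordinate-changed driving function, stopped at the continuous exit time of $\confmapDH(U_i)$ by $\confmapSH_n$ of the corresponding hull. I would check that each $\Psi_n$ is continuous on $\mathcal{D}$ and that $\Psi_n(V_n) \to \Psi(V_0)$ whenever $V_n \to V_0$ in $\ctsfcns$ with $V_0 \in \mathcal{D}$: on $\mathcal{D}$ the passage driving function $\leftrightarrow$ curve is continuous (\cite[Theorem~1.5]{KS}); applying $\confmapSH_n$ and stopping at the continuous exit time of the fixed compact hull $\confmapDH(U_i)$ are continuous operations on curves (Appendix~\ref{app: continuous stopping times}) that converge as $\confmapSH_n \to \confmapSH$; and the resulting stopped hull is again curve-generated, so its driving function depends continuously on it. The uniform non-degeneracy above (equivalently, continuous dependence of the coordinate-change ODE on its driving term on the relevant bounded configurations) makes all of this locally uniform in $n$. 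By construction $\Psi(\Dr)$ is the driving function of $\confmapSH(K_t)$ up to the continuous exit time $\sigma$ of $\confmapDH(U_i)$.

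\emph{Conclusion.} For the second assertion: if $\Dr^{(n)}_t \Rightarrow \Dr_t$ in $\ctsfcns$, describing hulls $K_t$, then by Theorem~\ref{thm: one curve precompactness}(B) the Loewner transform of $\Dr$ is the curve weak limit of the $\tilde{\gamma}^{(n)}_\UnitD$, so $\Dr \in \mathcal{D}$ almost surely; the generalized continuous mapping theorem applied to $\Psi_n \to \Psi$ then gives $\DrFcnNoInd^{(n)}_{s \wedge \sigma^{(n)}} = \Psi_n(\Dr^{(n)}) \Rightarrow \Psi(\Dr) = \DrFcnNoInd_{s \wedge \sigma}$, describing $\confmapSH(K_t)$ up to $\sigma$. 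For precompactness: given any subsequence, Theorem~\ref{thm: one curve precompactness}(A)(iii) lets us pass to a further subsequence along which $\Dr^{(n)}$ converges weakly, and then the preceding argument makes $\DrFcnNoInd^{(n)}_{s \wedge \sigma^{(n)}}$ converge weakly along it; hence these laws are precompact in $\ctsfcns$. The step I expect to be the main obstacle is the uniform-in-$n$ non-degeneracy of the coordinate change up to the continuous stopping time, since this is exactly what licenses the ODE-continuity and continuous-stopping arguments to be run locally uniformly; it is here that one genuinely uses that $U_i$ is a bona fide localization neighbourhood (so $\confmapDH(U_i)$ is a compact $\bH$-hull) and that $\confmapSH_n \to \confmapSH$ uniformly on compacts.
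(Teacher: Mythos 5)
Your proposal is correct and follows the same overall architecture as the paper's proof: both reduce the lemma to (a) precompactness of $\Dr^{(n)}$ from Theorem~\ref{thm: one curve precompactness}, plus (b) the deterministic statement that the Loewner coordinate change under converging conformal maps acts continuously, and locally uniformly in $n$, on driving functions stopped at the continuous exit times. Where you differ is in the packaging of (b) and of the final probabilistic step. The paper delegates (b) entirely to Appendix~\ref{app: deterministic LE} --- Corollary~\ref{cor: coordinate change of LE is cts fcn of dr fcns} for a fixed map and Corollary~\ref{cor: LE coordinate changes with converging maps converge uniformly over compacts} for the varying maps $\confmapSH_n$ --- and then concludes by an explicit triangle inequality with bounded Lipschitz test functions, splitting $\EX^{(n)}[f(\confmapSH_n(\Dr^{(n)}))]-\EX[f(\confmapSH(\Dr))]$ into a ``varying map'' term (controlled by tightness plus the uniform-over-compacts corollary) and a ``fixed map'' term (controlled by weak convergence plus continuity). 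You instead sketch the content of the appendix results directly, via the M\"obius coordinate-change ODE and a uniform non-degeneracy estimate up to $\sigma^{(n)}$, and replace the triangle inequality by the extended continuous mapping theorem; these are equivalent, and your non-degeneracy discussion usefully makes explicit what the appendix builds into its nested-neighbourhood setup. Note, however, that the underlying deterministic input (Proposition~\ref{lem: ctty of driving functions - pain lemma}) is stated in the paper without proof, so your ODE sketch is doing genuine extra work rather than citing something available. One imprecision to fix: the map from a driving function in $\mathcal{D}$ to its trace is \emph{not} continuous on all of $\mathcal{D}$, only on the compact sets of curves produced by the crossing conditions of~\cite{KS}; your route through curves in the ``solution map'' step should therefore be phrased as continuity on those compact sets (which tightness lets you restrict to), or else carried out purely at the level of driving functions and hulls via the ODE, which never requires the hulls to be curves.
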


This lemma can be summarized in the commutative diagram
\begin{equation}
\label{dia: dr fcns and conf maps commutative diagram}
\begin{gathered}
\xymatrix@C+1.25pc{
\Dr^{(n)} \ar[d]^{n \to \infty} \ar[r]^{\text{ conformal }} & \DrFcnNoInd^{(n)}_{s \wedge \sigma^{(n)}} \ar[d]^{ n \to \infty }  \\
\Dr \ar[r]^{\text{ conformal }} & \DrFcnNoInd_{s \wedge \sigma}. 
}
\end{gathered}
\end{equation}

\begin{proof}[Proof of Lemma~\ref{lem: driving functions of conformally perturbed curves - 1 curve}]
The proof becomes more transparent if we operate with the conformal maps $\confmapSH_n$ and $\confmapSH$ on driving functions in stead of their hulls, with the undestanding that after this operation, the driving functions are stopped at their continuous exit times of $\confmapDH(U_i)$. For instance, we replace $\DrFcnNoInd^{(n)}_{s \wedge \sigma^{(n)}}$ by $\confmapSH_n (\Dr^{(n)}_t)$ and $\DrFcnNoInd_{s \wedge \sigma}$  by $\confmapSH (\Dr_t)$. By Corollary~\ref{cor: coordinate change of LE is cts fcn of dr fcns}, conformal maps operate continuously on driving functions (when interpreted this way), so $\confmapSH_n (\Dr^{(n)}_t)$ and $\confmapSH (\Dr_t)$ are measurable random variables.

Now, by Theorem~\ref{thm: one curve precompactness}, the functions $\Dr^{(n)}_t$ are precompact. It thus suffices to show that 
 if $\Dr^{(n)}_t \to \Dr_t$ weakly in $\ctsfcns$, then also $\confmapSH_n (\Dr^{(n)}_t)\to \confmapSH (\Dr_t)$ weakly in $\ctsfcns$. Take thus $f: \ctsfcns \to \R$ a bounded, Lipschitz continuous test function, and compute
\begin{align}
\nonumber
 & \vert \; \EX^{(n)} [f (\confmapSH_n (\Dr^{(n)}_t))] - \EX [ f (\confmapSH (\Dr_t))] \; \vert \\
 \label{eq: another triangle ineq}
 \le &  \vert \; \EX^{(n)} [f (\confmapSH_n (\Dr^{(n)}_t)) - f (\confmapSH (\Dr^{(n)}_t))] \; \vert 
 + 
 \vert  \; \EX^{(n)} [f (\confmapSH (\Dr^{(n)}_t))] - \EX [ f (\confmapSH (\Dr_t))] \; \vert.
\end{align}

Consider first the latter term on the right-hand side of~\eqref{eq: another triangle ineq}. By Corollary~\ref{cor: coordinate change of LE is cts fcn of dr fcns}, the mapping $\confmapSH$ operates continuously on driving functions, and hence we deduce that $\confmapSH (\Dr^{(n)}_t) \to \confmapSH (\Dr_t)$ weakly in $\ctsfcns$. Thus, the latter term tends to zero as $n \to \infty$.

Cosider now the former term on the right-hand side of~\eqref{eq: another triangle ineq}. Since the random variables $\Dr^{(n)} \in \ctsfcns$ are tight and $f$ is bounded, we may with an arbitrarily small error restrict our consideration to the case where $\Dr^{(n)}$ belongs to a suitable compact set of the space $\ctsfcns$. Now, by Corollary~\ref{cor: LE coordinate changes with converging maps converge uniformly over compacts}, we have
\begin{align*}
 d_{\ctsfcns} ( \; \confmapSH_n ( \cdot ), \; \confmapSH ( \cdot )\; ) \to 0 \qquad \text{as } n \to \infty,
\end{align*}
uniformly over any compact set of $\ctsfcns$. Since $f$ is Lipschitz, it follows that also
\begin{align*}
\vert f (\confmapSH_n ( \cdot )) - f (\confmapSH ( \cdot )) \vert \to 0
\qquad \text{as } n \to \infty,
\end{align*}
uniformly over any compact set of $\ctsfcns$. This shows that the former term on the right-hand side of~\eqref{eq: another triangle ineq} tends to zero as $n \to \infty$.

Having analyzed both terms of~\eqref{eq: another triangle ineq}, we now deduce that
\begin{align*}
\vert \; \EX^{(n)} [f (\confmapSH_n (\Dr^{(n)}_t))] - \EX [ f (\confmapSH (\Dr_t))]  \;\vert \to 0
\qquad \text{as } n \to \infty.
\end{align*}
This shows that $\confmapSH_n (\Dr^{(n)}_t)\to \confmapSH (\Dr_t)$ weakly in $\ctsfcns$, and hence completes the proof.
\end{proof}

\begin{rem}
\label{rem: weak conv of stopped dr fcns w different stopping times}
Lemma~\ref{lem: driving functions of conformally perturbed curves - 1 curve} and its proof also apply for the slightly different definition of the stopping times $\sigma^{(n)}$, as in Remarks~\ref{rem: exit times of neighbourhoods before or after conf map} and~\ref{rem: exit times of neighbourhoods before or after conf map 2}.
\end{rem}

With the above lemma at hand, we can finsh the proof of Theorem~\ref{thm: precompactness thm multiple curves}.

\begin{proof}[Proof of Theorem~\ref{thm: precompactness thm multiple curves}(A)(iv)]
In the case where the measures $\PR^{(n)}$ are those conditional on a link pattern $\alpha \in \LP_N$,
the precompactness of the stopped driving function $\DrFcnLattNotime{n}{i}$ was stated and proven in Lemma~\ref{lem: driving functions of conformally perturbed curves - 1 curve} above. In the general case, it follows from Equation~\eqref{eq: curve measure as a convex combi of LP conditional measures}.
\end{proof}

\begin{proof}[Proof of Theorem~\ref{thm: precompactness thm multiple curves}(C)]

We prove the equivalence as two implications.

Implication 1: Let us first assume that the initial segments $\InitSegmLatt{n}{i}$ converge weakly, $\InitSegmLatt{n}{i} \to \InitSegm{i}$. Start by showing that the corresponding stopped driving functions $\DrFcnLatt{n}{i}{s \wedge \sigma^{(n)}_i}$ then converge weakly to the stopped driving functions $\DrFcnNoInd_{i; s \wedge \sigma_i}$ of $\InitSegm{i}$. (We suppress here and in continuation the subsequence notation $n_k$.) By the precompactness of $\DrFcnLatt{n}{i}{s}$, it suffices to prove that the claimed convergence holds for some futher subsequence. Employing this strategy, we can pick a subsequence of $n$:s so that the probabilities $\PR^{(n)}[\alpha]$ of all link patterns $\alpha \in \LP_N$ converge. It thus suffices to prove the claim subsequentially for the conditional measures $\PR^{(n)}_\alpha$.

Now, for the conditional measure $\PR^{(n)}_\alpha$, apply Corollary~\ref{cor: marginal precompactness} and thus Theorem~\ref{thm: one curve precompactness} to deduce the commutative diagram~\eqref{dia: 1-curve commutative diagram}, where the curves $\gamma^{(n)}$ and $\tilde{\gamma}_\UnitD^{(n)}$ are those starting from the boundary point $i$ in the link pattern $\alpha$, and $\Dr^{(n)}$ their driving functions. We can pick a subsequence so that all the weak convergences in that diagram hold. By~\cite[Proposition~4.3]{mie}, also the curves ${\gamma}_\UnitD^{(n)}$ obtained by a different confromal map to $\UnitD$ then converge weakly to the suitable conformal images of $\tilde{\gamma}_\UnitD^{(n)}$. 
Lemma~\ref{lem: driving functions of conformally perturbed curves - 1 curve} now states that the stopped driving functions $\DrFcnLatt{n}{i}{s}$ converge weakly to that of $\gamma_\UnitD$ up to $\sigma_i$, in other words, the driving function of $\InitSegmNoInd_i$. This proves the claimed weak convergence.

We yet have to prove the random variable measurability claimed in Theorem~\ref{thm: precompactness thm multiple curves}(C). Here $\DrFcn{i}{s \wedge \sigma_i }$ is a measurable function of $\Dr_{t \wedge \tau_i}$ by Corollary~\ref{cor: coordinate change of LE is cts fcn of dr fcns} and $\Dr_{t \wedge \tau_i}$ is a measurable function of $\InitSegm{i}$ by the commutative diagram~\eqref{dia: 1-curve commutative diagram} and~\cite[Proposition~4.3]{mie} (see also Appendix~\ref{app: continuous stopping times} on the measurability of the the stopping times).

Implication 2: Let us now assume that the stopped driving functions $\DrFcnLatt{n}{i}{s}$ converge weakly, $\DrFcnLatt{n}{i}{s} \to \DrFcn{i}{s}$. Show first that the corresponding initial segments $\InitSegmLatt{n}{i}$ then also converge weakly, $\InitSegmLatt{n}{i} \to \InitSegm{i}$, where $\InitSegm{i}$ is the curve described by the driving function $\DrFcn{i}{s}$ up to time $\sigma_i$. As in the previous implication, it suffices to prove this claim subsequentially for the conditional measures $\PR^{(n)}_\alpha$.

The argument to prove the subsequential convergence is now identical to the previous implication. To argue the measurabilities of random variables, $\InitSegm{i}$ is a measurable with function of $\Dr_{t \wedge \tau_i}$ by the arguments above, and $\Dr_{t \wedge \tau_i}$ with respect to $\DrFcn{i}{s \wedge \sigma_i }$ by Corollary~\ref{cor: coordinate change is a bijection}. This finishes the proof of Theorem~\ref{thm: precompactness thm multiple curves}(C).
\end{proof}

We finish the proof of Theorem~\ref{thm: precompactness thm multiple curves} with a triviality.

\begin{proof}[Proof of Theorem~\ref{thm: precompactness thm multiple curves}(D)]
An initial segment $\InitSegm{i}$ up to a continuous exit time are a continuous function of the full curves $(\gamma_{1; \UnitD}, \ldots, \gamma_{N; \UnitD})$, see Appendix~\ref{app: continuous stopping times}.
\end{proof}

\subsection{A precompactness result for local multiple SLEs}
\label{subsec: precompactness for local NSLE}

We have now in complete detail proven Theorem~\ref{thm: precompactness thm multiple curves}, sufficient for the main results of this paper. In this subsection, we will discuss some slight improvements to that theorem, holding under the same assumptions. These improvements are required in order to prove convergence of the collection of initial segments to a local multiple SLEs, in stead of convergence of full curves to a local-to-global multiple SLEs. We will keep the discussion in this subsection on an informal level, trusting that the interested reader can formalize the arguments presented here.

Let us still consider the setup of Section~\ref{subsec: setup and notation}. The prime ends $p_1, \ldots, p_{2N}$ of $\domain$ now need to be distinct, and their localization neighbourhoods $U_1, \ldots, U_{2N}$ disjoint, but otherwise the assumptions of Theorem~\ref{thm: precompactness thm multiple curves} can for the discussion of this subsection be relaxed as in Remark~\ref{rem: precompactness for irregular boundary}.

The local multiple SLE describes the collection of initial segments $\InitSegmLatt{n}{1}, \ldots, \InitSegmLatt{n}{2N} \in X(\overline{\UnitD})$ in terms of their \emph{iterated driving functions} $\IterDrFcnLattNotime{n}{1}, \ldots, \IterDrFcnLattNotime{n}{2N} \in \ctsfcns$. These are described as follows: the first function $\IterDrFcnLattNotime{n}{1}$ is the driving function of $ \confmapDH (\InitSegmLatt{n}{1})$ stopped at the continuous exit time $\sigma^{(n)}_1$ of $ \confmapDH ( U_1 )$, i.e., $\IterDrFcnLattNotime{n}{1} = \DrFcnLattNotime{n}{1}$. Given the first one, denote by $g_{\InitSegmLatt{n}{1}}$ the mapping-out function of the initial segment $\confmapDH ( \InitSegmLatt{n}{1} )$, obtained from the Loewner equation with the first driving function $\IterDrFcnLattNotime{n}{1}$. Then, $\IterDrFcnLattNotime{n}{2} $ is the driving function of the mapped-out second initial segment $g_{\InitSegmLatt{n}{1}} ( \confmapDH ( \InitSegmLatt{n}{2}) )$. Similarly, the further iterated driving functions are defined as the driving functions of the conformally mapped initial segment, when one maps out the previous initial segments. Note that the local multiple SLE is defined via its iterated driving functions, see Section~\ref{subsubsec: loc N-SLE in D}.

Now, our precompactness result for the local multiple SLEs is the precompactness of the collections of driving functions $( \IterDrFcnLattNotime{n}{1}, \ldots, \IterDrFcnLattNotime{n}{2N} ) \in \ctsfcns^{2N}$ and the commutative diagram
\begin{equation}
\label{dia: local multiple SLE commutation}
\begin{gathered}
\xymatrix@C+2.5pc{
( \InitSegmLatt{n}{1}, \ldots, \InitSegmLatt{n}{2N} )  \ar[d]^{n \to \infty} \ar[r]^{\text{ iter. Loewner }}  & ( \IterDrFcnLattNotime{n}{1}, \ldots, \IterDrFcnLattNotime{n}{2N} )  \ar[l]  \ar[d]^{n \to \infty } \\
(\InitSegm{ 1}, \ldots, \InitSegm{2N} ) \ar[r]^{\text{ iter. Loewner }}  & (\IterDrFcnNoInd_{ 1}, \ldots, \IterDrFcnNoInd_{2N}), \ar[l] 
}
\end{gathered}
\end{equation}
formally interpreted similarly to all previous commutative diagrams.

Let us sketch the proof of this result. First, one utilizes the earlier commutation relations and to establishe the diagram
\begin{equation}
\label{dia: local multiple SLE first commutation}
\begin{gathered}
\xymatrix@C+1.5pc{
( \InitSegmLatt{n}{1}, \ldots, \InitSegmLatt{n}{2N} )  \ar[d]^{n \to \infty} \ar[r]^{\text{ Loewner }}  & ( \DrFcnLattNotime{n}{1}, \ldots, \DrFcnLattNotime{n}{2N} )  \ar[l]  \ar[d]^{n \to \infty } \\
(\InitSegm{ 1}, \ldots, \InitSegm{2N} ) \ar[r]^{\text{ Loewner }}  & (\DrFcnNoInd_{ 1}, \ldots, \DrFcnNoInd_{2N}), \ar[l] 
}
\end{gathered}
\end{equation}
with the usual (not iterated) driving functions.
This diagram is a direct multicurve analogue of diagram~\eqref{dia: N-curve commutation with Loewner transform}, i.e., Theorem~\ref{thm: precompactness thm multiple curves}(C). The tightness of the collections of curves (resp. driving functions) follows from that of the individual curves (resp. driving functions) similarly to the proof of Theorem~\ref{thm: precompactness thm multiple curves}(A)(i)~(see especially Equation~\eqref{eq: tightness of RVs implies tightness of tensor RV}). The commutation in diagram~\eqref{dia: local multiple SLE first commutation} follows by straightforwardly repeating for multiple curves the arguments that yielded the analogous commutation relation for a single initial segment, needed to prove Theorem~\ref{thm: precompactness thm multiple curves}(C), given in~\cite{KS} and the previous subsection.

Having proven~\eqref{dia: local multiple SLE first commutation}, one next establishes the commutative diagram
\begin{equation}
\label{dia: local multiple SLE second commutation}
\begin{gathered}
\xymatrix@C+1.5pc{
( \DrFcnLattNotime{n}{1}, \ldots, \DrFcnLattNotime{n}{2N} )  \ar[d]^{n \to \infty} \ar[r]^{\text{ iterated }}  & ( \IterDrFcnLattNotime{n}{1}, \ldots, \IterDrFcnLattNotime{n}{2N} )  \ar[l]  \ar[d]^{n \to \infty } \\
(\DrFcnNoInd_{ 1}, \ldots, \DrFcnNoInd_{2N}) \ar[r]^{\text{ iterated }}  & (\IterDrFcnNoInd_{ 1}, \ldots, \IterDrFcnNoInd_{2N}), \ar[l] 
}
\end{gathered}
\end{equation}
This diagram follows from the observation that the mapping $f$ from the original driving functions $ ( \DrFcnLattNotime{n}{1}, \ldots, \DrFcnLattNotime{n}{2N} )$ to the iterated ones $ ( \IterDrFcnLattNotime{n}{1}, \ldots, \IterDrFcnLattNotime{n}{2N} )$ is a continuous bijection, whose inverse is also continuous.
Let us now illustrate how this is proven for $N=1$ with two initial segments --- larger $N$ are treated identically.

Let us first argue that $f$ is continuous. The first functions are identical, $\IterDrFcnLattNotime{n}{1} = \DrFcnLattNotime{n}{1}$. For the second functions, $\DrFcnLattNotime{n}{2} $ is the driving function of the second initial segment $\confmapDH (\InitSegmLatt{n}{2} )$, and $\IterDrFcnLattNotime{n}{2} $ that of the mapped-out second initial segment $g_{\InitSegmLatt{n}{1}} ( \confmapDH ( \InitSegmLatt{n}{2} ) )$. Now, the mapping-out function $g_{\InitSegmLatt{n}{1}}$ depends continuously on the first driving function $\DrFcnLattNotime{n}{1}$ (equipping analytic functions with the topology of uniform convergence over compacts), see~\cite[Lemma~5.1]{Kemppainen-SLE_book}, and the driving function of the conformal image $g_{\InitSegmLatt{n}{1}} ( \confmapDH ( \InitSegmLatt{n}{2} ) )$ thus depends continuously on the conformal map $g_{\InitSegmLatt{n}{1}}$ and the original driving function $\DrFcnLattNotime{n}{2}$, see Corollary~\ref{cor: coordinate change with converging maps} and Remark~\ref{rem: exit times of neighbourhoods before or after conf map}. This shows that $f$ is continuous.

To argue that $f$ is bijective and its inverse is continuous, we can find $f^{-1}$ and prove its continuity identically to the previous paragraph. The only difference is that here we have to use the inverse mapping-out function $g_{\InitSegmLatt{n}{1}}^{-1}$, whose continuity follows from \cite[Lemma~5.8]{Kemppainen-SLE_book}. This finishes our sketch of proof of the commutative diagram~\eqref{dia: local multiple SLE second commutation}

Finally, combining the commutative diagrams~\eqref{dia: local multiple SLE first commutation} and~\eqref{dia: local multiple SLE second commutation} yields~\eqref{dia: local multiple SLE commutation}.

\bigskip{}

\section{Local-to-global properties}
\label{sec: loc-2-glob}
%

In this section, we will assume that the study of a discrete random curve model is in a phase where the precompactness conditions of Theorem~\ref{thm: precompactness thm multiple curves} have been verified, and the scaling limit $W_j$ of the driving function of an initial segment has been identified as that of a local multiple SLE.
The objective is to show that the discrete domain Markov property is (under some reasonable assumptions on the discrete curve model) inherited to a domain Markov type property of the scaling limit, so that this identification of the scaling limit of one initial segment actually identifies the scaling limit of the full collection of full curves.

Our results of this type are stated in two theorems, Theorem~\ref{thm: local multiple SLE convergence} addressing the convergence of collections of initial segments to local multiple SLEs, and Theorem~\ref{thm: loc-2-glob multiple SLE convergence, kappa le 4} addressing convergence of collections of full curves to local-to-global multiple SLEs. The latter has two alternative assumptions and proofs, a simpler one with a strong \emph{a priori} estimate only possible for SLE scaling limits with $\kappa \le 4$, and a longer one with an assumption applicable for general $\kappa < 8$. Finally, in Theorems~\ref{thm: relation to global multiple SLE 1} and~\ref{thm: relation to global multiple SLE 2}, we explicate the relation of the obtained scaling limits to the global multiple SLEs of~\cite{PW, BPW}, introduced in Section~\ref{sec: intro}.

\subsection{Statements of the main theorems}

This subsection contains the statements of the main theorems of this 
section. The rest of this section will constitute their proofs.

\subsubsection{\textbf{Notations and domain discretizations}}
\label{subsubsec: relaxed assumptions}

We will continue in the notation and setup introduced in Section~\ref{subsec: setup and notation}. The only difference is that in some statements, we should only assume that the domains $(\domain_n; p_1^{(n)}, \ldots, p_{2N}^{(n)})$ with marked boundary points are uniformly bounded and converge in the Carath\'{e}odory sense to $(\domain; p_1, \ldots, p_{2N})$, without additional requirements on existence of radial limits at $p_1, \ldots, p_{2N}$ or closeness of approximations (cf. Remark~\ref{rem: precompactness for irregular boundary}). If the assumptions in Section~\ref{subsec: setup and notation} are relaxed in this way, we say that we have \emph{relaxed regularity at marked boundary points}.

\subsubsection{\textbf{Assumptions on the discrete curve models}}
\label{subsubsec: setup and assumptions}

Throughout this section, we consider discrete random curve models under the following setup and assumptions.

We have a sequence of lattices $\InfiniteGr_n$, and a discrete random curve model on each lattice. These lattices describe a scaling limit, in the sense that the maximal length of a lattice edge in any bounded domain tends to zero as $n \to \infty$. (In many applications $\InfiniteGr_n$ are just scalings of $\InfiniteGr_1$, and we may thus think of having a single random curve model on $\InfiniteGr_1$.) The assumptions imposed on the random curve models in Theorem~\ref{thm: precompactness thm multiple curves} are satisfied, i.e., they have alternating boundary conditions,  satisfy the DDMP, and the collection of one-curve measures $\PR^{(\Gr, e_1, e_2)}$ in these random curve models, indexed by $n$ and simply-connected subgraphs $\Gr$ of $\InfiniteGr_n$, satisfy the equivalent conditions (C) and (G).

The above assumptions are well established properties for many discrete curve models. To state our nontrivial standing assumption about convergence of driving functions, consider the setup of Section~\ref{subsec: setup and notation} with relaxed regularity at marked boundary points.
 By Theorem~\ref{thm: precompactness thm multiple curves} and Remark~\ref{rem: precompactness for irregular boundary}, the stopped driving functions $\DrFcnLattNotime{n}{j}$ of the curve initial segment starting from the $j$:th boundary point, for any $j$ and any localization neighbourhood $U_j$ in $\UnitD$, are precompact.
Throughout this section, we will assume that the weak limit of any such initial segment has been in that case identified as a local multiple SLE with partition function $\PartF_N$:
\begin{ass}
\label{ass: dr fcns converge to loc mult SLE}
For any $(\Gr_n; e_1^{(n)}, \ldots, e_{2N}^{(n)})$ converging in the Carath\'{e}odory sense, and any localization neighbourhood $U_j$,
the stopped driving functions $\DrFcnLattNotime{n}{j}$ of the random curves converge weakly in $\ctsfcns$,
\begin{align*}
\DrFcnLattNotime{n}{j} \to \DrFcnNoInd_j \qquad \text{as } n \to \infty,
\end{align*}
where $\DrFcnNoInd_j$ is the driving function of the local multiple SLE, stopped at the continuous exit time of $U_j$, and with partition function $\PartF_N$.
%
%
\end{ass}

The partition functions $\PartF_N$ above, indexed by $N$, should be partition functions with the same parameter $\kappa \in (0, 8)$. We will mostly suppress the notation $\kappa$ in our discussions. Let us make two remarks about this assumption.

First, it is largely for simplicity that we restrict our consideration to unconditional measures, or partition function $\PartF_N$ here. The proofs and statements of Theorem~\ref{thm: local multiple SLE convergence} and Theorem~\ref{thm: loc-2-glob multiple SLE convergence, kappa le 4} under Assumption~\ref{ass: quantitative no boundary visits assumption} (i.e., the \emph{a priori} estimate only possible for $\kappa \le 4$) can be straightforwardly generalized to convergence of conditional discrete curve models. Assumption~\ref{ass: dr fcns converge to loc mult SLE} above should then be modified so that the convergence holds for the conditional discrete curve models with any link pattern $\alpha \in \LP_N$, to local multiple SLEs with the corresponding partition functions $\PartF_\alpha$. However, the alternative Assumption~\ref{ass: cond C'} in Theorem~\ref{thm: loc-2-glob multiple SLE convergence, kappa le 4}, suitable for general $0 < \kappa < 8$, intrinsically requires considering non-conditional measures.

Second, note that whether Assumption~\ref{ass: dr fcns converge to loc mult SLE} holds or not is independent of the choice of conformal maps from $\domain_n$ to the upper half-plane $\bH$, as long as the conformal maps converge to that of $\domain$. Namely, the local multiple SLE is conformally invariant, and on the other hand, so is the weak limit of $\DrFcnLattNotime{n}{j}$, by applying the commutative diagram~\eqref{dia: N-curve commutation with Loewner transform} with different conformal maps.

\subsubsection{\textbf{Convergence to local multiple SLEs}}
\label{subsec: conv to loc N-SLE: statement}



Our first theorem states, roughly, that condition (C), discrete the domain Markov property, and the identification of one initial segment in Assumption~\ref{ass: dr fcns converge to loc mult SLE} together identify the collection of initial segments as a local multiple SLE.


\begin{thm}
\label{thm: local multiple SLE convergence}
Consider the setup of Section~\ref{subsec: setup and notation} with relaxed regularity at marked boundary points, and let the discrete curve models satisfy the assumptions of Section~\ref{subsubsec: setup and assumptions}. Then, the iterated driving functions $( \IterDrFcnLattNotime{n}{1}, \ldots, \IterDrFcnLattNotime{n}{2N} ) \in \ctsfcns^{2N}$ (resp. the initial segments $( \InitSegmLatt{n}{1}, \ldots, \InitSegmLatt{n}{2N} )\in X(\overline{\UnitD})^{2N}$) converge weakly to the iterated driving functions (resp. to the initial segments) of the local multiple SLE in $(\UnitD; \Unitp_1, \ldots, \Unitp_{2N})$, stopped at the continuous exit times of $U_1, \ldots, U_{2N}$, and with partition function $\PartF_N$.
\end{thm}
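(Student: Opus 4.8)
The plan is to combine the precompactness machinery of Theorem~\ref{thm: precompactness thm multiple curves} (and its local-SLE refinement, commutative diagram~\eqref{dia: local multiple SLE commutation}) with Assumption~\ref{ass: dr fcns converge to loc mult SLE} and the DDMP, via an induction on the index $j$ of the initial segment. By the precompactness of the collection $(\IterDrFcnLattNotime{n}{1},\ldots,\IterDrFcnLattNotime{n}{2N})\in\ctsfcns^{2N}$, every subsequence has a weakly convergent sub-subsequence, so it suffices to show that \emph{any} subsequential limit $(\IterDrFcnNoInd_1,\ldots,\IterDrFcnNoInd_{2N})$ is distributed as the iterated driving functions of the local multiple $\SLE(\kappa)$ with partition function $\PartF_N$, stopped at the continuous exit times of $U_1,\ldots,U_{2N}$. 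Fix such a subsequence (and suppress it from the notation); by diagram~\eqref{dia: local multiple SLE commutation} the initial segments converge jointly to $(\InitSegm{1},\ldots,\InitSegm{2N})$ along it as well.

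\textbf{The inductive step.} The key point is to identify, for each $j$, the \emph{regular conditional law} of $\IterDrFcnNoInd_j$ given $\InitSegm{1},\ldots,\InitSegm{j-1}$ as the local multiple SLE one-curve marginal~\eqref{eq: SDE definition of N-SLE}. First I would use the DDMP (in its stopping-time form, cf.\ Remark~\ref{rem: crossing conditions and DDMP}): conditioning the discrete model on the first $j-1$ initial segments is, at the discrete level, the same as passing to the reduced subgraph $\Gr_n'$ adjacent to the tips $\InitSegmLatt{n}{1}(\tau_1),\ldots,\InitSegmLatt{n}{j-1}(\tau_{j-1})$ and the remaining marked edges $e_j^{(n)},\ldots,e_{2N}^{(n)}$. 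The crux is a \emph{convergence of conditional laws} statement: conditionally on $(\InitSegmLatt{n}{1},\ldots,\InitSegmLatt{n}{j-1})$ converging to a good limiting configuration, the reduced discrete domains $\domain_{\Gr_n'}$ converge in the Carath\'eodory sense to the limiting reduced domain, and (this is where closeness of approximations / relaxed regularity and the crossing condition (C) are used to control the behaviour near the newly-created tips) these are again \emph{admissible} discretizations in the sense of Section~\ref{subsubsec: relaxed assumptions}. Then Assumption~\ref{ass: dr fcns converge to loc mult SLE}, applied in the reduced domain $\Gr_n'$ with the localization neighbourhood $\confmapDH_{j-1}(U_j)$, identifies the conditional law of the driving function of $\confmapDH_{j-1}(\InitSegm{j})$ as~\eqref{eq: SDE definition of N-SLE} with partition function $\PartF_N$ in the reduced domain --- which, unwinding diagram~\eqref{dia: local multiple SLE second commutation}, is exactly the defining conditional law of $\IterDrFcnNoInd_j$.

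\textbf{Assembling.} Having identified all the regular conditional laws $\IterDrFcnNoInd_j \mid (\InitSegm{1},\ldots,\InitSegm{j-1})$, the joint law of $(\IterDrFcnNoInd_1,\ldots,\IterDrFcnNoInd_{2N})$ is determined (a tower-of-conditionings argument: the law of the tuple is the unique one with the prescribed successive conditional distributions, using that all the relevant maps are measurable by the commutative diagrams and Corollaries~\ref{cor: coordinate change is a bijection}, \ref{cor: coordinate change of LE is cts fcn of dr fcns}), and it coincides with the local multiple SLE by its very definition in Section~\ref{subsubsec: loc N-SLE in D} --- recalling from there that the sampling order is immaterial, so sampling $1,2,\ldots,2N$ is legitimate. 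Since every subsequential limit has this law, the full sequence converges. The statement for the initial segments then follows from diagram~\eqref{dia: local multiple SLE commutation}, and the statement for the (non-iterated, domain-reference) driving functions $\DrFcnLattNotime{n}{j}$ from diagram~\eqref{dia: local multiple SLE first commutation}.

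\textbf{Main obstacle.} The delicate step is the second one: promoting the discrete domain Markov property to a statement about regular conditional laws \emph{in the scaling limit}. The technical heart is showing that the reduced discrete domains $\domain_{\Gr_n'}$ remain close Carath\'eodory approximations of the limiting reduced domain, uniformly enough that Assumption~\ref{ass: dr fcns converge to loc mult SLE} and the precompactness of Theorem~\ref{thm: precompactness thm multiple curves} may be re-applied in them; this requires controlling the geometry of the discrete curves near their own tips, which is precisely the kind of estimate condition (C) is designed to deliver (no unforced crossings of thin quadrilaterals), and also requires care that the continuous stopping times $\tau_j$ behave well under this reduction (Appendix~\ref{app: continuous stopping times}). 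One must also handle the bookkeeping that the limiting tips are bona fide prime ends with radial limits, so that curves starting from them make sense --- but since these tips are interior to the original domain and the curves arriving there are (subsequential limits of) simple discrete curves satisfying (C), this follows from the one-curve regularity results already imported from~\cite{mie, KS}.
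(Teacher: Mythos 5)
Your proposal is correct and follows essentially the same route as the paper's proof: precompactness together with the commutative diagrams reduces the problem to identifying subsequential limits of the iterated driving functions, the DDMP converts conditioning on the previous segments into the reduced-graph discrete model, and Assumption~\ref{ass: dr fcns converge to loc mult SLE} applied there --- made uniform over compact sets of previous configurations by a compactness--contradiction argument, which is precisely the obstacle you flag --- identifies each successive regular conditional law, with Proposition~\ref{prop: conditional expectation determine conditional law} and induction assembling the joint law. The paper packages this as a test-function identity (Lemma~\ref{lem: loc multiple SLE cond exp property}) proved by a triangle inequality, isolating the uniformity in Lemma~\ref{lem: uniform convergence}, but the substance matches your sketch.
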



\begin{cor}
\label{cor: local multiple SLE convergence -  strong topology}
If the regularity at marked boundary points is not relaxed above, then the weak convergence to local multiple SLE also takes place in the sense that initial segments of any subsequential weak limit in topology~(i) of Theorem~\ref{thm: precompactness thm multiple curves} are in distribution equal to the conformal images $( \confmap^{-1}(\InitSegm{1}), \ldots, \confmap^{-1}(\InitSegm{2N)} )) $ of local multiple SLE curves $( \InitSegm{1}, \ldots, \InitSegm{2N} ) $ in $\UnitD$.
%
\end{cor}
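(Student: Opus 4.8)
The plan is to obtain this as a direct corollary of Theorem~\ref{thm: local multiple SLE convergence} together with parts~(B) and~(D) of Theorem~\ref{thm: precompactness thm multiple curves}. Both ingredients are available here: not relaxing the regularity at the marked boundary points is a special case of the relaxed setting, so Theorem~\ref{thm: local multiple SLE convergence} still applies and yields the weak convergence of the initial segments in $\UnitD$; and the radial limits at $p_1, \ldots, p_{2N}$ together with the closeness of the approximations $p_1^{(n)}, \ldots, p_{2N}^{(n)}$ are exactly what makes topology~(i) and hence Theorem~\ref{thm: precompactness thm multiple curves}(B),(D) applicable.

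Let $(\gamma_1, \ldots, \gamma_N)$ be a subsequential weak limit of $(\gamma_1^{(n)}, \ldots, \gamma_N^{(n)})$ in topology~(i), attained along a subsequence $(n_k)$. First I would apply Theorem~\ref{thm: precompactness thm multiple curves}(B): along $(n_k)$ one also has $(\gamma_{\UnitD; 1}^{(n_k)}, \ldots, \gamma_{\UnitD; N}^{(n_k)}) \to (\gamma_{\UnitD; 1}, \ldots, \gamma_{\UnitD; N})$ in topology~(ii), with $(\gamma_1, \ldots, \gamma_N) \eqd (\confmap^{-1}(\gamma_{\UnitD; 1}), \ldots, \confmap^{-1}(\gamma_{\UnitD; N}))$, interpreted via radial limits as in the footnote to Theorem~\ref{thm: precompactness thm multiple curves}(B). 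Next I would use the continuity (Appendix~\ref{app: continuous stopping times}, as invoked in the proof of Theorem~\ref{thm: precompactness thm multiple curves}(D)) of the map sending a collection of $N$ curves in $X(\overline{\UnitD})$ to the $2N$-tuple of its initial segments from $\Unitp_1, \ldots, \Unitp_{2N}$, each run up to the continuous modification of the exit time of $U_j$. Since $\InitSegmLatt{n_k}{j}$ is the value of this map at $(\gamma_{\UnitD; 1}^{(n_k)}, \ldots, \gamma_{\UnitD; N}^{(n_k)})$, the continuous mapping theorem gives that $(\InitSegmLatt{n_k}{1}, \ldots, \InitSegmLatt{n_k}{2N})$ converges weakly in $X(\overline{\UnitD})^{2N}$ to the $2N$-tuple of initial segments of $(\gamma_{\UnitD; 1}, \ldots, \gamma_{\UnitD; N})$.

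On the other hand, Theorem~\ref{thm: local multiple SLE convergence} asserts that $(\InitSegmLatt{n}{1}, \ldots, \InitSegmLatt{n}{2N})$ converges weakly, along the whole sequence and thus along $(n_k)$, to the $2N$-tuple $(\InitSegm{1}, \ldots, \InitSegm{2N})$ of initial segments of the local multiple $\SLE(\kappa)$ in $(\UnitD; \Unitp_1, \ldots, \Unitp_{2N})$ with partition function $\PartF_N$. By uniqueness of weak limits on the Polish space $X(\overline{\UnitD})^{2N}$, the $2N$-tuple of initial segments of $(\gamma_{\UnitD; 1}, \ldots, \gamma_{\UnitD; N})$ has the law of $(\InitSegm{1}, \ldots, \InitSegm{2N})$. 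Finally, since continuous exit times are conformally invariant (Section~\ref{subsubsec: cts stopping times}) and $\confmap^{-1}$, extended by radial limits, is a homeomorphism compatible with the marked prime ends, the initial segment of $\gamma_i = \confmap^{-1}(\gamma_{\UnitD; i})$ adjacent to $p_j$, run up to the continuous exit time of the localization neighbourhood $\confmap^{-1}(U_j)$, equals $\confmap^{-1}$ of the corresponding initial segment of $\gamma_{\UnitD; i}$; hence the collection of initial segments of $(\gamma_1, \ldots, \gamma_N)$ is in distribution $(\confmap^{-1}(\InitSegm{1}), \ldots, \confmap^{-1}(\InitSegm{2N}))$, as claimed.

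I do not expect a substantial obstacle: the statement is essentially a bookkeeping corollary threading together Theorems~\ref{thm: local multiple SLE convergence} and~\ref{thm: precompactness thm multiple curves}. The points needing care are the simultaneous handling of topologies~(i)--(iv) and the fact that the weak limit of the discrete initial segments genuinely is the initial segment of the limit curve collection, which is why I route the argument through the continuity of the continuous-stopping-time map; the conformal invariance of those stopping times, already established, then does the transport back to $\domain$.
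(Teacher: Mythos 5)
Your proposal is correct and follows essentially the same route as the paper: the paper's own proof is a one-line appeal to the commutative diagram of Theorem~\ref{thm: precompactness thm multiple curves}(B) combined with the identification of the initial segments in $\UnitD$ from Theorem~\ref{thm: local multiple SLE convergence}, which is exactly the argument you spell out. The extra steps you make explicit (continuity of the initial-segment map from Appendix~\ref{app: continuous stopping times} and conformal invariance of the continuous exit times) are the same ingredients the paper leaves implicit.
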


\subsubsection{\textbf{Additional assumptions on the discrete curve models}}
\label{subsubsec: approximability}

For the other main result of this section, we need two more assumptions on our discrete curve models, in addition to those in Section~\ref{subsubsec: setup and assumptions}.

First, informally, the discrete curve models must allow increasingly dense-mesh discretizations of any desired limiting domain. This assumption is necessary in the proofs since contrary to the case of local multiple SLEs, we cannot directly rely the  Carath\'{e}odory stability of the scaling limit, but it has to be deduced via the discrete curve models\footnote{
As regards the Carath\'{e}odory stability of local multiple SLE driving functions, we rather consider it as an input from basic SDE theory than SLE theory.
}.

\begin{ass}
\label{ass: approximability}
For any bounded simply-connected domain $(\domain; {p}_1, \ldots, {p}_{2N})$ with marked prime ends that possess radial limits, there exist close lattice approximations $( \Gr^{(n)}; e^{(n)}_1, \ldots, e^{(n)}_{2N} )$ on the graphs $\InfiniteGr^{(n)}$, such that the measures $\PR^{  ( \Gr^{(n)}; e^{(n)}_1, \ldots, e^{(n)}_{2N} )  }$ are defined and $\Gr^{(n)}$ lies inside of $\domain$ for all large enough $n$. 
\end{ass}

Second, in order to handle full curves in stead of initial segments, an initial segment in a very lage neighbourhood must yield some information about the tagret of the curve we are following. There are two alternative assumptions ensuring this, the first one possible only for discrete models corresponding to (multiple) SLEs with parameter $\kappa \le 4$, and the second one applicable for all $0 < \kappa < 8$.

\begin{ass}
\label{ass: quantitative no boundary visits assumption}
For any fixed sequence of Carath\'{e}odory converging graphs $(\Gr_n; e_1^{(n)}, \ldots, e_{2N}^{(n)})$, we have the following.
For any $\delta'> 0$ and any $\eps>0$, taking $\delta$ small enough, the following holds. Denote by $({\gamma}^{(n)}_{\UnitD; 1}, \ldots, {\gamma}^{(n)}_{\UnitD; N}) \in E( \delta, \delta')$ if some of the curves $({\gamma}^{(n)}_{\UnitD; 1}, \ldots, {\gamma}^{(n)}_{\UnitD; N})$ visits the $\delta$-neighbourhood of $\bdry \UnitD$ outside of the $\delta'$-neighbourhoods of its end points (see Figure~\ref{fig: a priori no bdry visits}). Then,
\begin{align*}
\PR^{( \Gr^{(n)}; e^{(n)}_1, \ldots, e^{(n)}_{2N} ) } [ ({\gamma}^{(n)}_{\UnitD; 1}, \ldots, {\gamma}^{(n)}_{\UnitD; N}) \in  E( \delta, \delta') ] < \eps
\end{align*}
for all large enough $n$.
\end{ass}

\begin{figure}
\includegraphics[width=0.5\textwidth]{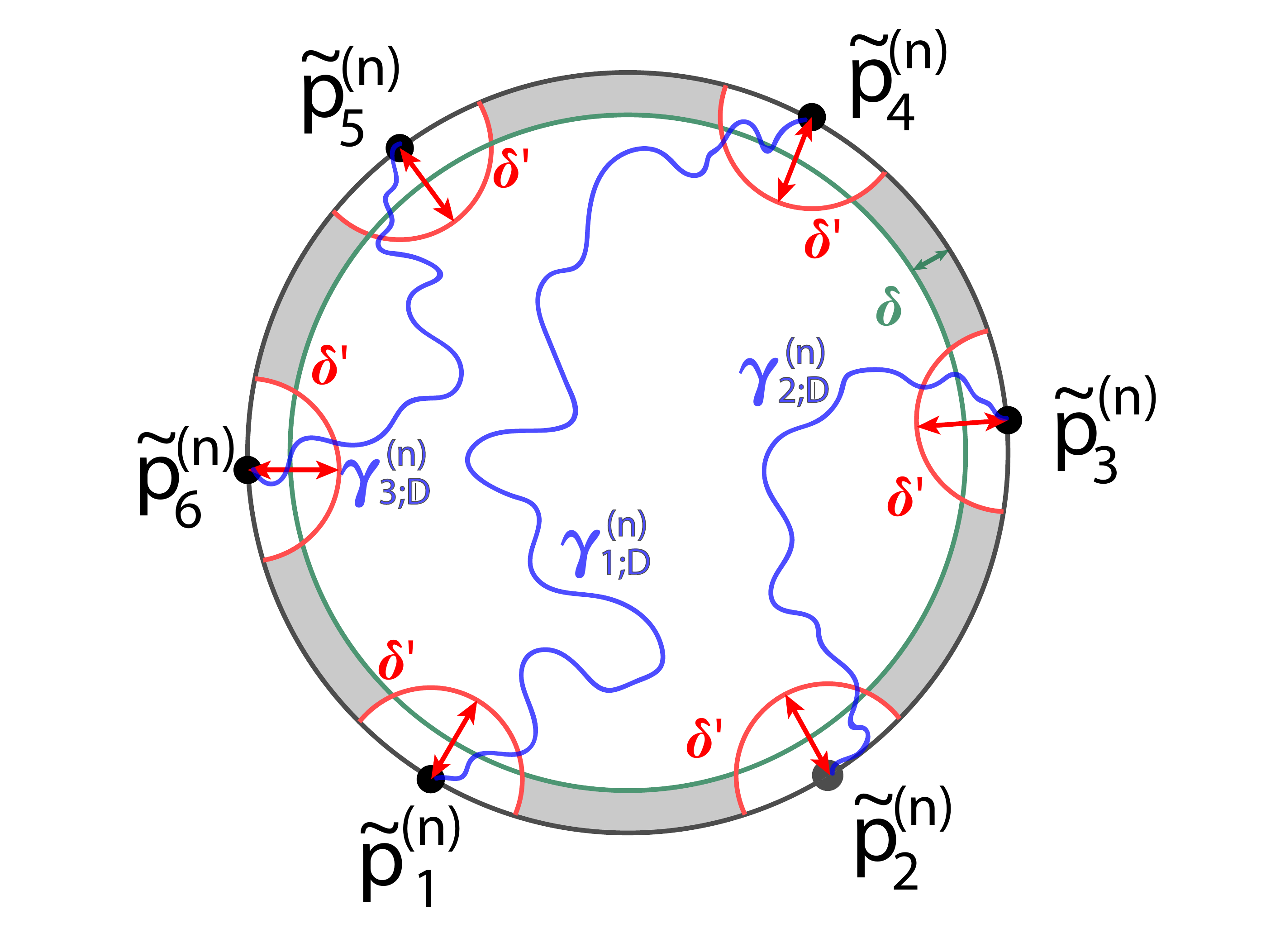}
\caption{
\label{fig: a priori no bdry visits}
Assumption~\ref{ass: quantitative no boundary visits assumption} states that boundary visits of the curves can be excluded \emph{a priori}. Using the notation there, we have  $({\gamma}^{(n)}_{\UnitD; 1}, {\gamma}^{(n)}_{\UnitD; 2}, {\gamma}^{(n)}_{\UnitD; 3}) \not \in E( \delta, \delta')$ in this figure.
}
\end{figure}

The assumption alternative to Assumption~\ref{ass: quantitative no boundary visits assumption} is a slightly improved version of condition (C), which we call condition (C').  

\begin{ass}
\label{ass: cond C'}
The collection of measures with random curves $( \PR^{(\Gr; e_1, \ldots, e_{2N})}, (\gamma_{\Gr;1}, \ldots, \gamma_{\Gr;N} ) )$, indexed by $n$ and all simply-connected subgraphs $(\Gr; e_1, \ldots, e_{2N})$ of $\InfiniteGr_n$ on which the random curve model of $\InfiniteGr_n$ is defined, satisfies condition (C') below.
\end{ass}

Let us formulate condition (C'). For this purpose, we need some new terminology. Consider a measure with random curves $( \PR^{(\Gr; e_1,  \ldots, e_{2N})}, (\gamma_{\Gr;1}, \ldots, \gamma_{\Gr;N} ) )$ on a simply-connected planar graph $(\Gr; e_1, \ldots, e_{2N})$ with $2N$  distinct boundary edges. These edges divide the boundary $\bdry \domain_\Gr$ of the corrresponding planar domain into disjoint \emph{marked boundary arcs} which we call either \emph{odd} or \emph{even}, running counterclockwise between neighbouring boundary edges $e_{2k-1}, e_{2k}$, or $e_{2k}, e_{2k+1}$, respectively (here $e_{2N} = e_0$). Let now $Q$ be a topological quadrilateral on the boundary of $\domain_\Gr$, as defined in Section~\ref{subsubsec: hypotheses of the main precompactness thm}. We say that a crossing of $Q$ by some of the curves $\gamma_{\Gr;1}, \ldots, \gamma_{\Gr;N} $ is \emph{unforced for 
the collection of random curves} $(\gamma_{\Gr;1}, \ldots, \gamma_{\Gr;N} )$ if the sides $S_1, S_3$ of $Q$ that lie on $\bdry \domain_\Gr$ are entirely inside marked boundary arcs of the same parity. Equivalently, the mark boundary edges $e_1, \ldots, e_{2N}$ all land outside of $Q$ on $\bdry \domain_\Gr$, and both connected components of $\domain_\Gr \setminus Q$ contain an even number of them.

\textbf{Condition (C')} is now stated as follows: a collection of measures with random curves $( \PR^{(\Gr; e_1, \ldots, e_{2N})}, (\gamma_{\Gr;1}, \ldots, \gamma_{\Gr;N} ) )$ satisfies condition (C') if for all $\eps > 0$  there exists $M = M(N, \eps) > 0$, such that for any topological quadrilateral $Q$ with $m(Q) \ge M$ on the boundary of $\domain_\Gr$, we have
\begin{align*}
\PR^{(\Gr; e_1, \ldots, e_{2N})} [ \text{crossing of } Q \text{ unforced for 
the collection of random curves } (\gamma_{\Gr;1}, \ldots, \gamma_{\Gr;N} )  ] \le \eps.
\end{align*}
for all the measures with random curves.

Let us make some remarks about condition (C'). First, taking $N=1$, it becomes condition (C) with the stopping time $0$, i.e., for DDMP models. Second, condition (C') is \emph{not} compatible with conditional discrete curve models; it is easy to come up with examples of graphs $(\Gr; e_1, \ldots, e_{2N})$ and $Q$, where conditioning on a link pattern $\alpha$ forces a crossing of $Q$ which is unforced for the collection of random curves.

As a final and most important remark, we give a sufficient alternative condition in terms of annuli. We say that a crossing of a boundary annulus $A(z, r, R)$ by some of the curves $\gamma_{\Gr;1}, \ldots, \gamma_{\Gr;N} $ is unforced for that collection of random curves, if it occurs in a connected component $\mathcal{C}$ of $A(z, r, R) \cap \domain_\Gr$ such that $\bdry \mathcal{C} \cap \bdry \domain_\Gr$ lies entirely inside marked boundary arcs of the same parity.

\textbf{Condition (G')}: a collection of measures with random curves $( \PR^{(\Gr; e_1, \ldots, e_{2N})}, (\gamma_{\Gr;1}, \ldots, \gamma_{\Gr;N} ) )$ satisfies condition (G') if for all $\eps > 0$  there exists $M = M(N, \eps) > 0$, such that for an annulus $A(z, r, R)$ with $R/r \ge M$ on the boundary of $\domain_\Gr$, we have
\begin{align*}
\PR^{(\Gr; e_1, \ldots, e_{2N})} [ \text{crossing of } A(z, r, R) \text{ unforced for 
the collection of random curves } (\gamma_{\Gr;1}, \ldots, \gamma_{\Gr;N} )  ] \le \eps.
\end{align*}
for all the measures with random curves.

\begin{lem} Condition (G') for $R/r \ge M$ implies Condition (C') for $m(Q) \ge 4 (M+ 1)^2$.
\end{lem}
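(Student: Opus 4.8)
The plan is to show that a topological quadrilateral $Q$ on the boundary of $\domain_\Gr$ with $m(Q) \ge 4(M+1)^2$ forces, for any curve traversing it, the crossing of a round annulus on the boundary of $\domain_\Gr$ of ratio at least $M$, and then to invoke Condition (G').

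First I would reduce the statement. A crossing of $Q$ by one of the curves $\gamma_{\Gr;1}, \dots, \gamma_{\Gr;N}$ is unforced for the collection precisely when the two sides $S_1, S_3 \subset \bdry\domain_\Gr$ of $Q$ lie inside marked boundary arcs of the same parity; for quadrilaterals violating this there is nothing to prove, so I fix $Q$ on the boundary of $\domain_\Gr$ with $m(Q) \ge 4(M+1)^2$ and with $S_1, S_3$ in same-parity arcs. It then suffices to exhibit an annulus $A(z,r,R)$ with $R/r \ge M$, on the boundary of $\domain_\Gr$, such that whenever a curve has a subarc contained in $\overline Q$ and joining $S_0$ to $S_2$ it also makes a crossing of $A(z,r,R)$ that is unforced for the collection: Condition (G') then bounds the probability of such an $A$-crossing, and hence the probability that some $\gamma_{\Gr;j}$ crosses $Q$, by $\eps$.

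The geometric heart is a length--area estimate. I would uniformize $Q$ by a conformal map $g$ onto the rectangle $(0,L) \times (0,1)$, $L = m(Q)$, with $S_0, S_2$ the images of the vertical sides, and consider the crosscuts $\beta_x := g^{-1}(\{x\} \times (0,1))$, each joining $S_1$ to $S_3$ and separating $S_0$ from $S_2$ in $Q$. From $\int_0^L (\mathrm{length}\,\beta_x)^2 \, \ud x \le \Area(Q) \le \diam(Q)^2$ one gets, averaging over the middle half of $(0,L)$, an index $x^*$ with $\diam(\beta_{x^*}) \le \mathrm{length}\,\beta_{x^*} \le 2\,\diam(Q)/\sqrt{L} \le \diam(Q)/(M+1)$, and such that the two sub-quadrilaterals of $Q$ cut off by $\beta_{x^*}$ still have modulus at least $L/4 \ge (M+1)^2$. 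I would then put $r := \diam(\beta_{x^*})$, let $z$ be an endpoint of $\beta_{x^*}$ (which lies on $S_1 \subset \bdry\domain_\Gr$, so that $A(z,r,R)$ is automatically on the boundary of $\domain_\Gr$ and $\beta_{x^*} \subset \overline B(z,r)$), and set $R := Mr$, so that $R/r = M$.

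It remains to check that every crossing $\gamma$ of $Q$ — every subarc in $\overline Q$ joining $S_0$ to $S_2$ — crosses $A(z,r,R)$. Since $\beta_{x^*}$ separates $S_0$ from $S_2$ in $Q$, $\gamma$ meets $\beta_{x^*}$, hence enters $\overline B(z,r)$. The point of the hypothesis $m(Q) \ge 4(M+1)^2$ is that $\gamma$ must also leave $B(z,R)$: if it stayed in $\overline B(z,R)$ it would lie in a single connected component of $\overline Q \cap \overline B(z,R)$ meeting both $S_0$ and $S_2$, which would exhibit $Q$ as concentrated near $z$ at all scales up to $R = Mr$; but a quadrilateral of modulus $L$ confined to a ball of radius $\rho$ has a crosscut of diameter $\lesssim \rho/\sqrt L$, and applying this to the sub-quadrilaterals of $Q$ cut off near $z$ produces a further crosscut of $Q$ lying entirely outside $B(z,R)$, which $\gamma$ is then forced to reach. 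Granting this, $\gamma$ runs from inside $\overline B(z,r)$ to outside $B(z,R)$ through $\domain_\Gr$, hence makes a crossing of $A(z,r,R)$; and the component of $A(z,r,R) \cap \domain_\Gr$ in which this occurs contains $\beta_{x^*}$ and meets $\bdry\domain_\Gr$ only along $S_1$, $S_3$ and boundary arcs between them — all of one parity by the choice of $Q$ — so the crossing is unforced for the collection, and Condition (G') completes the proof.

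I expect the step in the last paragraph — forcing a crossing of $Q$ out of the large ball $B(z,R)$, that is, excluding folded or spiralling configurations of $Q$ concentrated near a single point and scale — to be the main obstacle. This is exactly where the quadratic threshold $m(Q) \ge 4(M+1)^2$ enters rather than a linear one: it reflects that traversing a quadrilateral of modulus $L$ forces a crossing of an annulus of ratio of order $\sqrt L$, and the constant $4$ is what makes the length--area estimates close up; care is also needed in the bookkeeping identifying the boundary arcs bordering the relevant annulus component as having a single parity.
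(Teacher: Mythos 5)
Your overall strategy is the right one, and it is in fact the strategy of the proof the paper points to (the paper's own ``proof'' is a one-line reference to \cite[Proof of Proposition~2.6]{KS}): use length--area to extract a crosscut $\beta_{x^*}$ of diameter $r \lesssim \mathrm{diam}(Q)/(M+1)$, centre an annulus $A(z,r,Mr)$ at its endpoint $z \in S_1 \subset \bdry\domain_\Gr$, argue that every crossing of $Q$ is forced to cross this annulus unforcedly, and invoke (G'). The reduction to same-parity $Q$, the length--area step, and the observation that a crossing must enter $\overline{B(z,r)}$ because it meets $\beta_{x^*}$ are all fine (modulo a factor of $2$: averaging only over the middle half gives $r \le \mathrm{diam}(Q)/(M+1)$, which does not even guarantee $2R = 2Mr < \mathrm{diam}(Q)$, i.e.\ that $Q \not\subset \overline{B(z,R)}$; you need the full interval, or the isodiametric inequality, to recover $r \le \mathrm{diam}(Q)/(2(M+1))$).

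The genuine gap is the step you yourself flag as the main obstacle: showing that a crossing of $Q$ must \emph{exit} $B(z,R)$. Your justification is that if the crossing stayed in $\overline{B(z,R)}$, then applying the short-crosscut lemma ``to the sub-quadrilaterals of $Q$ cut off near $z$ produces a further crosscut of $Q$ lying entirely outside $B(z,R)$.'' This does not follow. The hypothesis for contradiction is that the \emph{curve} stays in $\overline{B(z,R)}$, not that a sub-quadrilateral does; and even if a sub-quadrilateral $Q_{\pm}$ of modulus $\ge (M+1)^2$ were confined to $\overline{B(z,R)}$, the lemma would hand you a short crosscut \emph{inside} $\overline{B(z,R)}$, which is the opposite of what you need. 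What the confinement of the curve actually yields is only that every crosscut $\beta_x$ meets $\overline{B(z,R)}$, and from this plus $\mathrm{diam}(Q) > 2R$ one cannot directly produce a full crosscut outside $\overline{B(z,R)}$ (a single long crosscut can account for all of the diameter of $Q$ while every crosscut still returns to $B(z,R)$). This is exactly the point where folded and spiralling quadrilaterals must be excluded, and the quadratic threshold $4(M+1)^2$ is essentially sharp for such examples (a spiral of modulus $L$ only forces an annulus ratio of order $\sqrt{L}$), so there is no slack to absorb a soft argument; the step needs a genuine proof, which is the content of the cited KS argument. A secondary, smaller gap is the parity bookkeeping at the end: the connected component $\mathcal{C}$ of $A(z,r,R)\cap\domain_\Gr$ in which your annulus crossing occurs need not be contained in $Q$ --- it can extend through $S_0$ or $S_2$ into the rest of $\domain_\Gr$ and its boundary $\bdry\mathcal{C}\cap\bdry\domain_\Gr$ may then pick up marked arcs of the opposite parity, in which case (G') says nothing about it; this also requires an argument rather than the assertion that $\mathcal{C}$ meets $\bdry\domain_\Gr$ only along $S_1$, $S_3$ and arcs between them.
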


\begin{proof}
The proof is identical to showing that condition (G) implies condition (C) for the one-curve models, see~\cite[Proof of Proposition~2.6]{KS}.
\end{proof}

\subsubsection{\textbf{Convergence to local-to-global multiple SLEs}}
\label{subsec: conv to loc-2-glob N-SLE: statement}

%
%

We now state the main theorem of this section.

\begin{thm}
\label{thm: loc-2-glob multiple SLE convergence, kappa le 4}
Consider the setup of Section~\ref{subsec: setup and notation} with relaxed regularity at marked boundary points, and let discrete curve models satisfy the assumptions of Section~\ref{subsubsec: setup and assumptions}, as well as Assumptions~\ref{ass: approximability}, and either~\ref{ass: quantitative no boundary visits assumption} or~\ref{ass: cond C'}.
Then, the curves $({\gamma}^{(n)}_{\UnitD; 1}, \ldots, {\gamma}^{(n)}_{\UnitD; N})$ converge weakly in $\ctsfcns^N$ to the local-to-global multiple SLE with partition function $\PartF_N$ on the domain $(\UnitD; \Unitp_1, \ldots, \Unitp_{2N})$.
If the regularity at marked boundary points is not relaxed from Section~\ref{subsec: setup and notation}, then also the curves $({\gamma}^{(n)}_{1}, \ldots, {\gamma}^{(n)}_{N})$ converge weakly in $\ctsfcns^N$ to the local-to-global multiple SLE with partition function $\PartF_N$ on the domain $(\domain; p_1, \ldots, p_{2N})$.
\end{thm}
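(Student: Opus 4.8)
The plan is to prove Theorem~\ref{thm: loc-2-glob multiple SLE convergence, kappa le 4} by induction on the number of curves $N$, mirroring the inductive structure of the definition of the local-to-global multiple SLE in Section~\ref{subsec: loc-2-glob}. The base case $N=1$ is exactly the statement that a single discrete interface with the DDMP and crossing conditions (C), (G) converges to chordal $\SLE(\kappa)$, which is the content of Assumption~\ref{ass: dr fcns converge to loc mult SLE} combined with Theorem~\ref{thm: one curve precompactness} and the standard fact that a Loewner driving function converging to $\sqrt{\kappa}B_t$ gives convergence of the full curve (here one initial segment exhausted by letting the localization neighbourhood grow). So assume the theorem holds for all $k \le N-1$ curves in any uniformly bounded Carath\'{e}odory-converging domain, and prove it for $N$ curves.

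For the induction step, first I would use Theorem~\ref{thm: precompactness thm multiple curves}(A) to extract, along any subsequence, a weak limit $(\gamma_{\UnitD;1},\ldots,\gamma_{\UnitD;N})$ in $X(\overline{\UnitD})^N$; by Prohorov it suffices to identify all such subsequential limits. Fix the curve $\RandCurveD{1}$ adjacent to $\Unitp_1$ and, for each $\delta>0$, its initial segment $\InitSegmLattDelta{n}{1}{\delta}$ up to the continuous modification of the hitting time of the $\delta$-neighbourhood of the arc $(\Unitp_2\Unitp_{2N})$. By Theorem~\ref{thm: local multiple SLE convergence} this initial segment converges weakly to the corresponding local multiple $\SLE(\kappa)$ initial segment with partition function $\PartF_N$, i.e., to $\InitSegmDelta{\delta}$ of step (3) of the definition. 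The key point, handled via Assumption~\ref{ass: quantitative no boundary visits assumption} (or, for general $\kappa<8$, Assumption~\ref{ass: cond C'} through the crossing bound (C')), is that these initial segments are tight uniformly in $\delta$ and, as $\delta\downarrow 0$, converge to a genuine closed curve $\InitSegmDelta{0}$ up to and including its endpoint on $(\Unitp_2\Unitp_{2N})$ --- this is where Proposition~\ref{prop: initial segment end points}-type arguments enter, and it is what rules out the limit curve wiggling near the boundary before terminating. One then splits $\kappa\le 4$ from $\kappa\in(4,8)$: in the former case $\InitSegmDelta{0}=\RandCurveD{1}$ hits $(\Unitp_2\Unitp_{2N})$ at a marked point (using that $\SLE(\kappa)$ has no boundary visits, the one \emph{a posteriori} SLE-theory input permitted), in the latter $\InitSegmDelta{0}$ is a strict initial segment and the tip becomes a new marked point.

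Next I would invoke the DDMP of the discrete model: conditionally on the discrete curve $\gamma^{(n)}_{\Gr;1}$ (or its initial segment in case 4b), the remaining $N-1$ curves form the discrete $N$-curve model --- or two independent models with fewer curves --- in the subgraphs cut out by $\gamma^{(n)}_{\Gr;1}$, with marked edges and parities inherited in the prescribed way. Passing to the limit, one needs that these cut-out subdomains converge in the Carath\'{e}odory sense to the components of $\UnitD\setminus\RandCurveD{1}$ (with close approximations at the new marked points), which follows from the convergence $\gamma^{(n)}_{\UnitD;1}\to\RandCurveD{1}$ together with Assumption~\ref{ass: approximability} to guarantee that honest dense-mesh discretizations of the limiting subdomains exist; this is exactly why Assumption~\ref{ass: approximability} cannot be dispensed with, since Carath\'{e}odory stability of the limit is deduced through the discrete models rather than assumed on the SLE side. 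Then the induction hypothesis applies in each component, and one concludes that the regular conditional law of $(\RandCurveD{2},\ldots,\RandCurveD{N})$ given $\RandCurveD{1}$ (resp. given the initial segment) is precisely that of independent local-to-global multiple $\SLE(\kappa)$'s with the appropriate partition functions --- matching steps (4a)/(4b) verbatim. Since the definition does not depend on the relabelling of boundary points in the subcomponents nor on which boundary point one starts from (by the symmetry of the local multiple SLE initial segments and the alternating boundary conditions), the subsequential limit is uniquely pinned down, giving weak convergence along the whole sequence in $X(\overline{\UnitD})^N$. Finally, the statement in the original domain $(\domain;p_1,\ldots,p_{2N})$ --- when regularity at marked boundary points is not relaxed --- follows from Theorem~\ref{thm: precompactness thm multiple curves}(B), which transports weak convergence in $X(\overline{\UnitD})^N$ to weak convergence of $(\gamma^{(n)}_1,\ldots,\gamma^{(n)}_N)$ in $X(\C)^N$ via $\confmap^{-1}$, together with the conformal invariance built into step (2) of the definition of the local-to-global multiple SLE.

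The main obstacle I expect is the step asserting that $\InitSegmDelta{\delta}\to\InitSegmDelta{0}$ as a closed curve \emph{including its endpoint}, uniformly enough to survive the double limit $n\to\infty$, $\delta\to 0$, and --- in the $\kappa\le 4$ case --- that this endpoint is almost surely one of the marked boundary points. Controlling the terminal behaviour of the interface near $\bdry\UnitD$ is precisely what Assumptions~\ref{ass: quantitative no boundary visits assumption}/\ref{ass: cond C'} are designed for, but weaving their quantitative estimates together with the weak convergence of the truncated segments (which only holds for each fixed $\delta$ and localization neighbourhood) and the DDMP-based conditioning, without circularity and while keeping the subdomains Carath\'{e}odory-stable, is the delicate part of the argument. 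Everything else is a fairly mechanical combination of Theorem~\ref{thm: precompactness thm multiple curves}, Theorem~\ref{thm: local multiple SLE convergence}, the DDMP, and the induction hypothesis.
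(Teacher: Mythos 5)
Your proposal is correct and follows essentially the same route as the paper: induction on $N$, identification of the one-curve marginal by letting the localization neighbourhood exhaust the domain ($\delta\shrinkto 0$), promotion of the DDMP to the subsequential limit via the Carath\'{e}odory stability of the inductively constructed $(N-1)$-curve limit (which is exactly why Assumption~\ref{ass: approximability} is needed), and Theorem~\ref{thm: precompactness thm multiple curves}(B) to transport the result back to $(\domain;p_1,\ldots,p_{2N})$. The only presentational difference is that the paper packages the induction as a three-part statement (existence and sampling-order independence, convergence, Carath\'{e}odory stability) and establishes the passage from the discrete conditional law to the SLE conditional law via a compactness/contradiction argument giving \emph{uniform} convergence of conditional expectations over compact sets of initial curves --- precisely the ``delicate part'' you correctly single out at the end.
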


%

\subsubsection{\textbf{Relation to global multiple SLE}}

The final theorems of this section connect the limits of Theorem~\ref{thm: loc-2-glob multiple SLE convergence, kappa le 4} to the global multiple SLEs. We however emphasize once again that \emph{the SLE convergence proof in Theorem~\ref{thm: loc-2-glob multiple SLE convergence, kappa le 4} by no means relies on global multiple SLEs}.

First, since Theorem~\ref{thm: loc-2-glob multiple SLE convergence, kappa le 4} addresses scaling limits unconditional curve models, while the global multiple SLEs address the conditional ones, the two models cannot be the same. Proposition~\ref{prop: relation to global multiple SLE} below however guarantees that the scaling limits from Theorem~\ref{thm: loc-2-glob multiple SLE convergence, kappa le 4} are convex combinations of measures satisfying the Markov stationarity that defines global multiple SLEs for $\kappa \in (0,4]$ and conjecturally also for $\kappa \in (4,8)$.

\begin{prop}
\label{prop: relation to global multiple SLE}
The scaling limits $({\gamma}_{\UnitD; 1}, \ldots, {\gamma}_{\UnitD; N})$ from Theorem~\ref{thm: loc-2-glob multiple SLE convergence, kappa le 4} satisfy the following property: for any $1 \le j \le N$, the regular conditional law of $\gamma_{\UnitD; j}$ given all the other curves is the chordal SLE in between the remaining marked boundary points in the remaining domain; the boundary points almost surely lie adjacent to the same simply-connected component of the complement of the remaining curves in $\UnitD$, so this chordal SLE makes sense. 
\end{prop}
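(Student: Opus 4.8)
The plan is to promote to the scaling limit the discrete resampling property furnished by part~(ii) of the DDMP. Fix $1 \le j \le N$. By the inductive consequence of DDMP(ii) recorded after its statement, for every $n$, conditionally on the other curves $(\gamma^{(n)}_{\UnitD;i})_{i \ne j}$ the remaining curve $\gamma^{(n)}_{\UnitD;j}$ is distributed as the one-curve random curve model $\PR^{(\widetilde{\Gr}_n;\, \widetilde{e}_1, \widetilde{e}_2)}$ on the simply-connected subgraph $\widetilde{\Gr}_n$ cut out by the other curves, with marked edges the two edges among $e^{(n)}_1,\dots,e^{(n)}_{2N}$ left unpaired by those curves. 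By non-crossing and planarity (a property of every discrete configuration, hence inherited by the limit), the two corresponding marked prime ends lie adjacent to a common connected component of the complement of the other curves, and this component is simply-connected because removing finitely many disjoint simple arcs with endpoints on $\bdry\UnitD$ from $\UnitD$ leaves simply-connected components; this already gives the last assertion of the proposition. By Theorem~\ref{thm: loc-2-glob multiple SLE convergence, kappa le 4} together with the precompactness of Theorem~\ref{thm: precompactness thm multiple curves}, the collections $(\gamma^{(n)}_{\UnitD;1},\dots,\gamma^{(n)}_{\UnitD;N})$ converge weakly in $X(\overline{\UnitD})^N$ to $(\gamma_{\UnitD;1},\dots,\gamma_{\UnitD;N})$, and I would invoke Skorokhod's representation theorem to realize all of these on one probability space with almost sure convergence.

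Along this coupling I would establish two facts. First, that the relevant complementary domains $\domain^{(n)}_j \subset \UnitD \setminus \bigcup_{i\ne j}\gamma^{(n)}_{\UnitD;i}$ converge almost surely in the Carath\'eodory sense to the corresponding component $\domain_j$ of $\UnitD \setminus \bigcup_{i \ne j}\gamma_{\UnitD;i}$, and that the two marked prime ends of $\domain_j$ are reached by \emph{close} approximations from $\domain^{(n)}_j$; here the a priori crossing estimates (Assumption~\ref{ass: quantitative no boundary visits assumption} or~\ref{ass: cond C'}, as used in the proof of Theorem~\ref{thm: loc-2-glob multiple SLE convergence, kappa le 4}) are exactly what prevents the conditioning curves from pinching off macroscopic regions near the endpoints of $\gamma_j$ or from piling up on $\bdry\UnitD$ in a way that destroys closeness, so that Lemma~\ref{lem: Cara iff conv of confmaps} and the definition of close approximation apply. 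Second, that for the one-curve model in a Carath\'eodory-converging sequence of such subgraphs the curve converges weakly to the chordal $\SLE(\kappa)$ of the limit domain; this is the $N=1$ instance of our standing assumptions, since for $N=1$ the local multiple SLE partition function is forced to be $(x_2-x_1)^{-2h}$ and Assumption~\ref{ass: dr fcns converge to loc mult SLE} then says the driving function converges to that of chordal $\SLE(\kappa)$, which combined with Theorem~\ref{thm: one curve precompactness} transfers the convergence between driving functions and curves.

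Combining these, on the almost sure event that $\domain^{(n)}_j \to \domain_j$ with close approximations, the conditional law of $\gamma^{(n)}_{\UnitD;j}$ given $(\gamma^{(n)}_{\UnitD;i})_{i\ne j}$ — chordal $\SLE(\kappa)$ in $\domain^{(n)}_j$ — converges weakly to chordal $\SLE(\kappa)$ in $\domain_j$, using the conformal/Carath\'eodory stability of chordal SLE. A routine argument with regular conditional probabilities, testing against bounded continuous functionals of $\gamma_{\UnitD;j}$ and of $(\gamma_{\UnitD;i})_{i\ne j}$ and passing to the limit via the almost sure convergence of the conditioning curves, then upgrades this to the statement that the regular conditional law of $\gamma_{\UnitD;j}$ given $(\gamma_{\UnitD;i})_{i\ne j}$ is the chordal $\SLE(\kappa)$ in $\domain_j$ between the two remaining marked prime ends, as claimed. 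The main obstacle is the first fact above: establishing, uniformly in $n$, Carath\'eodory convergence of the complementary domains together with closeness of the approximations at the marked points of $\gamma_j$; this is precisely the place where the non-specific crossing conditions must be invoked, in close analogy with the corresponding step in the proof of Theorem~\ref{thm: loc-2-glob multiple SLE convergence, kappa le 4}.
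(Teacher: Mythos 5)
Your strategy---promoting the discrete resampling property of DDMP(ii) to the scaling limit by hand---is a genuinely different and much heavier route than the paper's. The paper's proof is essentially one line: by Proposition~\ref{thm: k le 4 local-to-global NSLE}(i), the up-to-swallowing initial segments of the local-to-global multiple SLE may be sampled in any order, so sampling in an order that leaves $\gamma_{\UnitD;j}$ last exhibits its regular conditional law, given the others, as a one-curve local-to-global multiple SLE in the remaining domain, i.e.\ a chordal $\SLE(\kappa)$. All the hard analysis (Carath\'eodory convergence of the cut-out domains, closeness at the marked prime ends, uniformity of the convergence of conditional laws over the conditioning curves) has already been absorbed into the proofs of Theorem~\ref{thm: loc-2-glob multiple SLE convergence, kappa le 4} and Proposition~\ref{thm: k le 4 local-to-global NSLE}, and the simple connectivity of the component adjacent to the two remaining marked points comes for free from the inductive sampling construction. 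Your route re-derives a special case of that machinery from scratch.

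As written, your proposal has a genuine gap exactly where you flag it: the ``first fact'' is asserted, not proved. Almost sure convergence of $(\gamma^{(n)}_{\UnitD;i})_{i\ne j}$ in $X(\overline{\UnitD})^{N-1}$ does not by itself yield Carath\'eodory convergence of the complementary domains together with \emph{close} approximations of the two marked prime ends of $\gamma_j$; ruling out pinching near those prime ends and establishing the \emph{uniformity} (over conditioning curves in a compact set) needed to interchange the $n\to\infty$ limit with the conditioning is precisely the content of the compactness-and-contradiction arguments in Lemmas~\ref{lem: domain Markov of scaling limit curves} and~\ref{lem: domain Markov of scaling limit curves 2}, and cannot be dismissed as routine. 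A second concrete error: your justification of the last assertion of the proposition, that the complement of the other curves has simply-connected components because one removes ``finitely many disjoint simple arcs,'' fails for $\kappa\in(4,8)$, where the limit curves are neither simple nor mutually disjoint (vertex-disjointness of the discrete paths is not a closed condition and is not inherited by weak limits). The correct way to see that the two remaining marked points border a common simply-connected component is again through the sampling procedure of the local-to-global multiple SLE, where this holds by construction at every inductive step.
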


Next, an interesting question is if all the link patterns $\alpha \in \LP_N$ appear with positive probability in the scaling limits of Theorem~\ref{thm: loc-2-glob multiple SLE convergence, kappa le 4}. The answer is positive at least if Assumption~\ref{ass: cond C'} holds. In the sense formalized below, this means that Theorem~\ref{thm: loc-2-glob multiple SLE convergence, kappa le 4} also guarantees convergence of the conditional discrete models to global multiple SLEs.

\begin{thm}
\label{thm: relation to global multiple SLE 1}
Suppose that the assumptions of Theorem~\ref{thm: loc-2-glob multiple SLE convergence, kappa le 4} (with relaxed regularity at marked boundary points), including Assumption~\ref{ass: cond C'}, are satisfied. Then, all link patterns $\alpha \in \LP_N$ appear with positive probability in the scaling limit $({\gamma}_{\UnitD; 1}, \ldots, {\gamma}_{\UnitD; N})$. In particular, the conditional discrete models converge weakly to a measure satisfying the Markov stationarity, which defines the local multiple SLE if $\kappa \le 4$ and cojecturally also for $\kappa \in (4, 8)$.
\end{thm}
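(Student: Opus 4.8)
The plan is to reduce the statement to the single assertion that $c_\alpha := \PR[\text{the link pattern is }\alpha]>0$ for every $\alpha\in\LP_N$, and then to prove this positivity by induction on $N$, using the inductive structure of the local-to-global multiple SLE together with condition (C').

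For the reduction, write $\mu$ for the scaling limit of Theorem~\ref{thm: loc-2-glob multiple SLE convergence, kappa le 4} and $\Omega_\alpha$ for the clopen set of configurations whose link pattern is $\alpha$. Since the $\Omega_\alpha$ are disjoint and $\PR^{(n)}[\,\cdot\,]=\sum_{\alpha}\PR^{(n)}[\alpha]\,\PR^{(n)}[\,\cdot\mid\alpha]$, the representation $\mu=\sum_\alpha c_\alpha\,\mu(\,\cdot\mid\Omega_\alpha)$ is the unique convex combination of measures supported on the $\Omega_\alpha$, so $\PR^{(n)}[\alpha]\to c_\alpha$ along the whole sequence and, provided $c_\alpha>0$, $\PR^{(n)}[\,\cdot\mid\alpha]\to\mu_\alpha:=\mu(\,\cdot\mid\Omega_\alpha)$ weakly. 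It remains to see that each $\mu_\alpha$ satisfies the Markov stationarity. Given all curves but $\gamma_{\UnitD;j}$, the two endpoints of $\gamma_{\UnitD;j}$ are forced — it starts from $\Unitp_{2j-1}$ and ends at the unique even marked point not used by the other curves — so the single-curve resampling kernels $K_j$ of Proposition~\ref{prop: relation to global multiple SLE} preserve each $\Omega_\alpha$; decomposing the identity $\mu K_j=\mu$ over the disjoint pieces gives $c_\alpha\,\mu_\alpha K_j=c_\alpha\,\mu_\alpha$, hence $\mu_\alpha K_j=\mu_\alpha$ once $c_\alpha>0$, which is exactly the Markov stationarity of $\mu_\alpha$. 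For $\kappa\in(0,4]$ one then identifies $\mu_\alpha$ with the global multiple SLE via the uniqueness result of~\cite{BPW}.

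To prove $c_\alpha>0$, I argue by induction on $N$. By the definition of the local-to-global multiple SLE, conditionally on the first curve $\gamma_{\UnitD;1}$ (equivalently on $\InitSegmDelta{0}$) linking $\Unitp_1$ to a marked point $\Unitp_{2k}$, the remaining curves are two independent local-to-global multiple SLEs in the two relevant components of $\UnitD\setminus\gamma_{\UnitD;1}$; by the DDMP these arise as scaling limits of the discrete models restricted, via conditioning on $\gamma_1$, to the corresponding subgraphs, which still satisfy the assumptions of Section~\ref{subsubsec: setup and assumptions} and Assumptions~\ref{ass: approximability}, \ref{ass: cond C'} with the restricted partition functions. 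So, by the induction hypothesis, every link pattern of fewer than $N$ links occurs with positive probability in those subdomains; since every $\alpha\in\LP_N$ is the union of the link $\{1,2k\}$ (for the $\alpha$-partner $2k$ of $1$) with link patterns on $\{2,\dots,2k-1\}$ and on $\{2k+1,\dots,2N\}$, it suffices to show $\PR[\gamma_{\UnitD;1}\text{ links }\Unitp_1\text{ to }\Unitp_{2k}]>0$ for each admissible $k$ (the case $\kappa\in(4,8)$ being handled in the same way, with $\gamma_{\UnitD;1}$ realized through step~(4b) and the link pattern read off once all pieces terminate). For the latter, fix a large localization neighbourhood $U_1$ of $\Unitp_1$ containing no other marked point and with $\UnitD\setminus U_1$ a thin simply-connected neighbourhood of the arc $(\Unitp_2\Unitp_{2N})$. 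By Assumption~\ref{ass: dr fcns converge to loc mult SLE} the initial segment of $\gamma_{\UnitD;1}$ up to exiting $U_1$ is a local multiple $\SLE(\kappa)$ initial segment with partition function $\PartF_N$, which on the event of staying in $U_1$ is absolutely continuous with respect to a chordal $\SLE(\kappa)$ initial segment, with Radon--Nikodym density bounded above and below (the partition function is positive and continuous and is evaluated on a compact set of configurations), cf.~\cite{KP-pure_partition_functions_of_multiple_SLEs}. Hence any open family of gestures of the initial segment inside $U_1$ has positive probability; choose the family so that the initial segment first reaches $\UnitD\setminus U_1$ inside a small pocket around $\Unitp_{2k}$ that separates $\{\Unitp_2,\dots,\Unitp_{2k-1}\}$ from $\{\Unitp_{2k+1},\dots,\Unitp_{2N}\}$ and whose mouth is a quadrilateral of arbitrarily large modulus. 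Conditioning on such a gesture, the remainder of the collection is a local-to-global multiple SLE in the reduced domain, in which condition (C') still holds (it holds for all reduced discrete subgraphs, hence passes to the limit); then for $\gamma_{\UnitD;1}$ to end anywhere but $\Unitp_{2k}$ some curve must cross the mouth of the pocket in a way that is unforced for the collection of random curves, which by condition (C') has probability at most $\eps$. Taking $\eps$ small gives $\PR[\gamma_{\UnitD;1}\text{ links }\Unitp_1\text{ to }\Unitp_{2k}]\ge(1-\eps)\,\PR[\text{gesture}]>0$, closing the induction. (Equivalently, the whole last step can be run at the discrete level: the gesture has $\PR^{(n)}$-probability bounded below by Assumption~\ref{ass: dr fcns converge to loc mult SLE} and the portmanteau theorem, and the escape is excluded by Assumption~\ref{ass: cond C'} directly.)

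I expect the main obstacle to be the geometric and parity bookkeeping in the last step: one must position the pocket around $\Unitp_{2k}$ — in particular have the tip of $\InitSegmDelta{0}$ approach $\Unitp_{2k}$ from the appropriate side — so that leaving the pocket really is a crossing that is unforced for the collection in the reduced domain, whose marked boundary arcs now include a new marked point at the tip, so that condition (C') (or the annular condition (G')) genuinely applies; and one must check carefully that the two sub-models inherit all the standing hypotheses so that the induction is legitimate. These points are fiddly but routine, being multi-curve analogues of the one-curve arguments in~\cite{KS} and~\cite{mie}.
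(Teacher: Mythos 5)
Your proposal is correct and follows essentially the same route as the paper: an induction on $N$ in which the first curve is forced, with positive probability, into a tube/pocket leading to its $\alpha$-partner (positivity coming from absolute continuity of the local multiple SLE initial segment with respect to chordal SLE), condition (C') is invoked to guarantee the connection is then actually completed, and the inductive hypothesis handles the remaining curves via the conditional-law structure. Your additional reduction step deriving the Markov stationarity of the conditional measures from the resampling kernels of Proposition~\ref{prop: relation to global multiple SLE} is consistent with how the paper obtains that conclusion.
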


Conversely, suppose now that the conditional discrete models are known to converge to global multiple SLEs with $\kappa \in (0,4]$. Theorem~\ref{thm: relation to global multiple SLE 2} below guarantees that the conditional models then also converge in the sense of Theorem~\ref{thm: loc-2-glob multiple SLE convergence, kappa le 4}, to conditional local-to-global multiple SLEs.

To be precise, we say that \emph{the conditional discrete curves converge to global multiple $\SLE(\kappa)$}, for short, if the following holds.
The discrete random curves $({\gamma}^{(n)}_{\UnitD; 1}, \ldots, {\gamma}^{(n)}_{\UnitD; N})$ obtained from corresponding conditional discrete models $\PR^{(n)}_\alpha [\cdot] = \PR^{(n)} [\cdot \vert \alpha]$ converge weakly to the global multiple $\SLE(\kappa)$ on $(\UnitD; \Unitp_1, \ldots, \Unitp_{2N})$ with link pattern $\alpha$, and this holds for any $\alpha \in \LP_N$ and any Carath\'{e}odory converging domain approximations $(\Gr^{(n)}; e_1^{(n)}, \ldots, e_{2N}^{(n)})$. 

\begin{thm}
\label{thm: relation to global multiple SLE 2}
 Suppose that the assumptions imposed on the discrete curve models in the precompactness theorem~\ref{thm: precompactness thm multiple curves} and in (the usually trivial) Assumption~\ref{ass: approximability} are satisfied. Suppose also that the conditional discrete curves of the discrete curve models converge to global multiple $\SLE(\kappa)$, for some $\kappa \le 4$. Then, also the remaining Assumptions~\ref{ass: dr fcns converge to loc mult SLE} and~\ref{ass: quantitative no boundary visits assumption} of Theorem~\ref{thm: loc-2-glob multiple SLE convergence, kappa le 4} are satisfied by the conditional measures $\PR^{(n)}_\alpha $, the former in its conditional form and with the partition functions $\PartF_\alpha$ as given in~\cite[Equation~(3.7)]{PW}. Thus, Theorem~\ref{thm: loc-2-glob multiple SLE convergence, kappa le 4} holds in the sform giving convergence to conditional local-to-global multiple SLEs.
\end{thm}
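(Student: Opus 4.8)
The plan is to verify, for the conditional measures $\PR^{(n)}_\alpha[\,\cdot\,] = \PR^{(n)}[\,\cdot\,\vert\,\alpha]$, every hypothesis of the conditional version of Theorem~\ref{thm: loc-2-glob multiple SLE convergence, kappa le 4} under Assumption~\ref{ass: quantitative no boundary visits assumption} (whose availability as a straightforward generalization is noted in the remarks after Assumption~\ref{ass: dr fcns converge to loc mult SLE}), and then to quote that theorem. The assumptions of Section~\ref{subsubsec: setup and assumptions} (alternating boundary conditions, the DDMP, and the equivalent conditions~(C),(G) for the one-curve measures) and Assumption~\ref{ass: approximability} are granted by hypothesis; condition~(C') cannot be used here, being incompatible with conditioning on a link pattern. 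Hence the content of the proof is to establish, for $\PR^{(n)}_\alpha$, the conditional form of Assumption~\ref{ass: dr fcns converge to loc mult SLE} with the pure partition functions $\PartF_\alpha$ of~\cite[Equation~(3.7)]{PW}, and Assumption~\ref{ass: quantitative no boundary visits assumption}.

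For the conditional form of Assumption~\ref{ass: dr fcns converge to loc mult SLE}, fix $\alpha \in \LP_N$, a Carath\'{e}odory-converging sequence $(\Gr_n; e_1^{(n)}, \ldots, e_{2N}^{(n)})$, and a localization neighbourhood $U_j$. By hypothesis $(\gamma^{(n)}_{\UnitD; 1}, \ldots, \gamma^{(n)}_{\UnitD; N})$ under $\PR^{(n)}_\alpha$ converges weakly in $X(\overline{\UnitD})^N$ to the global multiple $\SLE(\kappa)$ with link pattern $\alpha$; since the initial segment of the $j$-th curve up to the continuous exit time of $U_j$ is a continuous function of the full collection (cf. Appendix~\ref{app: continuous stopping times} and the proof of Theorem~\ref{thm: precompactness thm multiple curves}(D)), the initial segments $\InitSegmLatt{n}{j}$ converge weakly to the corresponding initial segment of the limiting global multiple $\SLE(\kappa)$, and by Theorem~\ref{thm: precompactness thm multiple curves}(C)--(D) the stopped driving functions $\DrFcnLattNotime{n}{j}$ converge weakly in $\ctsfcns$ to the Loewner driving function $W_j$ of that limiting initial segment. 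Now I would invoke the result recalled in the introduction and due to~\cite{PW} (see also~\cite{Wu17} for $N=2$): for $\kappa \in (0,4]$ the collection of initial segments of the global multiple $\SLE(\kappa)$ with link pattern $\alpha$ is a local multiple $\SLE(\kappa)$ with the pure partition function $\PartF_\alpha$. Marginalizing to the $j$-th curve, $W_j$ is in distribution the driving function of the local multiple $\SLE(\kappa)$ with partition function $\PartF_\alpha$ stopped at the continuous exit time of $U_j$, which is exactly the conditional form of Assumption~\ref{ass: dr fcns converge to loc mult SLE}; as noted after that assumption, the statement is insensitive to the choice of conformal maps from $\domain_n$ to $\bH$ used to define the driving functions.

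For Assumption~\ref{ass: quantitative no boundary visits assumption}, fix $\delta' > 0$ and $\eps > 0$. For $\delta > 0$ let $F(\delta, \delta')$ be the event that, for some $i$, the curve $\gamma_{\UnitD; i}$ meets the compact set $\{\, z \in \overline{\UnitD} : \dist(z, \bdry\UnitD) \le \delta \text{ and } \dist(z, \{\text{endpoints of } \gamma_{\UnitD; i}\}) \ge \delta' \,\}$; in the conditional model these endpoints are the two deterministic marked boundary points joined by $\gamma_{\UnitD; i}$, so each such set is fixed, and since hitting a fixed compact set is a closed condition on curves, $F(\delta, \delta')$ is closed in $X(\overline{\UnitD})^N$. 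By construction $E(\delta, \delta') \subseteq F(\delta, \delta')$, where $E(\delta, \delta')$ is the event of Assumption~\ref{ass: quantitative no boundary visits assumption}. The sets $F(\delta, \delta')$ decrease, as $\delta \shrinkto 0$, to the event that some curve meets $\bdry\UnitD$ outside the $\delta'$-neighbourhoods of its endpoints; for $\kappa \le 4$ this limiting event has probability zero under the global multiple $\SLE(\kappa)$, because conditionally on the other curves each $\gamma_{\UnitD; i}$ is a chordal $\SLE(\kappa)$ in a simply-connected subdomain of $\UnitD$ and such a curve a.s.\ does not touch the boundary of its domain except at its two endpoints~\cite{RS-basic_properties_of_SLE}. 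By continuity from above I may then fix $\delta > 0$ so small that $F(\delta, \delta')$ has probability less than $\eps$ under the limiting global multiple $\SLE(\kappa)$; the Portmanteau theorem for closed sets gives $\limsup_n \PR^{(n)}_\alpha[E(\delta, \delta')] \le \limsup_n \PR^{(n)}_\alpha[F(\delta, \delta')] \le \eps$, which is Assumption~\ref{ass: quantitative no boundary visits assumption}.

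Having verified all hypotheses, I would conclude by applying the conditional version of Theorem~\ref{thm: loc-2-glob multiple SLE convergence, kappa le 4} to $\PR^{(n)}_\alpha$, obtaining weak convergence of the conditional discrete curves $(\gamma^{(n)}_{\UnitD; 1}, \ldots, \gamma^{(n)}_{\UnitD; N})$ to the conditional local-to-global multiple SLE with partition function $\PartF_\alpha$ on $(\UnitD; \Unitp_1, \ldots, \Unitp_{2N})$, and correspondingly for the planar curves $(\gamma^{(n)}_{1}, \ldots, \gamma^{(n)}_{N})$ when the regularity at marked boundary points is not relaxed. I expect the main obstacle to be the identification in the second paragraph: pinning down that the weak limit $W_j$ is the driving function of a \emph{local} multiple SLE with the correct, \emph{pure}, partition function $\PartF_\alpha$. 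This is the only genuinely external input and rests on the non-trivial theorem of~\cite{PW} that for $\kappa \le 4$ the global multiple SLE has local multiple SLE initial segments; once that is granted, the rest --- in particular the uniform-in-$n$ a priori boundary-visit estimate --- is Portmanteau-type bookkeeping on top of the commutative diagrams already established in Theorem~\ref{thm: precompactness thm multiple curves}.
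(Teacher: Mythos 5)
Your proposal is correct and follows essentially the same route as the paper's (much terser) proof: Assumption~\ref{ass: dr fcns converge to loc mult SLE} is obtained from the identification in~\cite{PW} of the global multiple $\SLE(\kappa)$ one-curve marginal as a local multiple SLE with partition function $\PartF_\alpha$, combined with the commutation machinery of Theorem~\ref{thm: precompactness thm multiple curves}, and Assumption~\ref{ass: quantitative no boundary visits assumption} is deduced \emph{a posteriori} from the absence of boundary visits of chordal $\SLE(\kappa)$ for $\kappa \le 4$ together with the assumed weak convergence to global multiple SLEs. The Portmanteau/closed-set bookkeeping you spell out is exactly what the paper leaves implicit.
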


\subsection{Proof of Theorem~\ref{thm: local multiple SLE convergence}}

\begin{lem}
\label{lem: loc multiple SLE cond exp property}
In the setup of Theorem~\ref{thm: local multiple SLE convergence}, let $(\IterDrFcnNoInd_{ 1}, \ldots, \IterDrFcnNoInd_{2N}) \in \ctsfcns^{2N}$ be any subsequential scaling limit of the iterated driving functions $( \IterDrFcnLattNotime{n}{1}, \ldots, \IterDrFcnLattNotime{n}{2N}  )$. 
Let $f: \ctsfcns^{m} \to \R$, where $1 \le m \le 2N - 1$, and $g: \ctsfcns \to \R$ be bounded continuous test functions. Then, we have
\begin{align*}
\EX [f(\IterDrFcnNoInd_{ 1}, \ldots, \IterDrFcnNoInd_{m}) g ( \IterDrFcnNoInd_{m+ 1} ) ] = \EX [f(\IterDrFcnNoInd_{ 1}, \ldots, \IterDrFcnNoInd_{m}) \EXNSLE_{( \IterDrFcnNoInd_{ 1}, \ldots, \IterDrFcnNoInd_{m})} [ g ( \DrFcnNoInd_{m+ 1} ) ] ],
\end{align*}
where by the random variable $ \DrFcnNoInd_{m+ 1} \in \ctsfcns$ under the measure $\PRNSLE_{( \IterDrFcnNoInd_{ 1}, \ldots, \IterDrFcnNoInd_{m})}$ we mean the driving function of the the local multiple SLE with partition function $\PartF_N$, when growing the initial segment starting from the $(m+1)$:st boundary point, and with the initial configuration of the marked boundary points being where the $m$ first iterated growth processes $(\IterDrFcnNoInd_{ 1}, \ldots, \IterDrFcnNoInd_{m})$ end, up to the stopping time corresponding to the continuous exit time of $U_j$.
\end{lem}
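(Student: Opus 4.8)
The plan is to obtain the identity first at the discrete level as a consequence of the discrete domain Markov property (DDMP), and then to pass to the limit along the subsequence $(n_k)$ realizing $(\IterDrFcnNoInd_{1}, \ldots, \IterDrFcnNoInd_{2N})$, using Assumption~\ref{ass: dr fcns converge to loc mult SLE} to identify the limiting conditional law as that of a local multiple SLE.

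\emph{The discrete identity.} Fix $n$ and grow the discrete initial segments in the order $1, 2, \ldots, m$. Applying property~(i) of the DDMP $m$ times (each step is legitimate: a graph produced by one reduction step is again a graph on which the curve model of $\InfiniteGr_n$ is defined, with the tip of $\InitSegmLatt{n}{j}$ as the new marked edge in place of $e_j^{(n)}$), one obtains that, conditionally on $(\InitSegmLatt{n}{1}, \ldots, \InitSegmLatt{n}{m})$, the initial segment $\InitSegmLatt{n}{m+1}$ is distributed as the corresponding initial segment of the discrete curve model on the reduced graph $\Gr_n \setminus (\InitSegmLatt{n}{1} \cup \cdots \cup \InitSegmLatt{n}{m})$ with the appropriate marked edges. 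Since the Riemann map of (the relevant component of) the reduced domain to $\bH$, normalized as in Section~\ref{subsec: setup and notation}, is precisely $g_{\InitSegmLatt{n}{1}} \circ \cdots \circ g_{\InitSegmLatt{n}{m}} \circ \confmapDH$, the iterated driving function $\IterDrFcnLattNotime{n}{m+1}$ is, conditionally, exactly the driving function of that reduced model, stopped at the continuous exit time of the image of $U_{m+1}$. Because the initial segments and their iterated driving functions are measurable functions of one another (diagram~\eqref{dia: local multiple SLE commutation}), the conditional expectation $\EX^{(n)}[\,g(\IterDrFcnLattNotime{n}{m+1}) \mid \InitSegmLatt{n}{1}, \ldots, \InitSegmLatt{n}{m}\,]$ may be written as $\Phi_n(\IterDrFcnLattNotime{n}{1}, \ldots, \IterDrFcnLattNotime{n}{m})$, where $\Phi_n$ is a deterministic, $\|g\|_\infty$-bounded map sending a configuration of $m$ initial segments to the $g$-average of the driving function of the next initial segment in the reduced discrete model. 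Hence
\begin{align*}
\EX^{(n)}\big[\,f(\IterDrFcnLattNotime{n}{1}, \ldots, \IterDrFcnLattNotime{n}{m})\, g(\IterDrFcnLattNotime{n}{m+1})\,\big] = \EX^{(n)}\big[\,f(\IterDrFcnLattNotime{n}{1}, \ldots, \IterDrFcnLattNotime{n}{m})\, \Phi_n(\IterDrFcnLattNotime{n}{1}, \ldots, \IterDrFcnLattNotime{n}{m})\,\big].
\end{align*}

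\emph{Passing to the limit.} The left-hand side converges to $\EX[f(\IterDrFcnNoInd_{1}, \ldots, \IterDrFcnNoInd_{m}) g(\IterDrFcnNoInd_{m+1})]$ by the joint weak convergence $(\IterDrFcnLattNotime{n}{1}, \ldots, \IterDrFcnLattNotime{n}{2N}) \to (\IterDrFcnNoInd_{1}, \ldots, \IterDrFcnNoInd_{2N})$ from Section~\ref{subsec: precompactness for local NSLE} together with boundedness and continuity of $f$ and $g$. For the right-hand side I would pass to a Skorokhod coupling in which the iterated driving functions converge almost surely; by the measurable bijection of diagram~\eqref{dia: local multiple SLE commutation} and the continuity of the mapping-out operation as a function of the driving function (cf.~\cite[Lemma~5.1]{Kemppainen-SLE_book}), the reconstructed discrete initial segments then converge as curves, the reduced domains converge in the Carath\'eodory sense (with the grown tips as new marked prime ends), and the composed conformal maps $g_{\InitSegmLatt{n}{1}} \circ \cdots \circ g_{\InitSegmLatt{n}{m}} \circ \confmapDH$ converge uniformly on compacts. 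Applying Assumption~\ref{ass: dr fcns converge to loc mult SLE} realization-by-realization to the resulting Carath\'eodory-converging sequence of reduced discretizations --- which is legitimate since that assumption is stated under relaxed regularity, holds for arbitrary such sequences, and is insensitive to the choice of converging conformal maps --- shows that $\Phi_n(\IterDrFcnLattNotime{n}{1}, \ldots, \IterDrFcnLattNotime{n}{m}) \to \EXNSLE_{(\IterDrFcnNoInd_{1}, \ldots, \IterDrFcnNoInd_{m})}[g(\DrFcnNoInd_{m+1})]$ almost surely in the coupling. As everything is bounded by $\|g\|_\infty$ and $f$ is bounded continuous, dominated convergence in the coupling yields the asserted identity.

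\emph{Main obstacle.} The delicate point is the almost-sure convergence of $\Phi_n$ in the coupling. One must verify: (a) that the (random) reduced domains $\domain_n \setminus (\InitSegmLatt{n}{1} \cup \cdots \cup \InitSegmLatt{n}{m})$, or rather the connected component adjacent to $p_{m+1}$ with the grown tips as marked prime ends, converge in the Carath\'eodory sense for a.e.\ realization --- which follows from the convergence of the $\InitSegmLatt{n}{j}$ as curves and the stability of the relevant component, but requires care precisely at the marked points and tips; (b) that this random sequence of discretizations falls within the scope of Assumption~\ref{ass: dr fcns converge to loc mult SLE}, so that its conclusion may be invoked pathwise --- here the relaxed-regularity formulation is essential, as the tips need not be regular prime ends; and (c) the measurability making ``conditional on the first $m$ segments, $\Phi_n$ is the reduced-model average'' rigorous, handled again through the bijections of diagram~\eqref{dia: local multiple SLE commutation} and the continuity of $\Phi := \EXNSLE_{(\cdot)}[g(\DrFcnNoInd_{m+1})]$ in its arguments (Carath\'eodory/SDE stability of the local multiple SLE driving function). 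Once (a)--(c) are settled, the remaining steps are routine.
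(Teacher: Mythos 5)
Your proposal is correct and rests on the same two pillars as the paper's proof: the discrete DDMP identity (the paper's equation~\eqref{eq: DDMP cond exp}) and the application of Assumption~\ref{ass: dr fcns converge to loc mult SLE} to deterministic sequences of reduced discretizations. The difference lies in how the limit of the conditional-expectation term is taken. The paper splits the discrepancy by a triangle inequality~\eqref{eq: three-eps proof}, controls the outer terms by weak convergence together with the continuity of the stopped local multiple SLE driving function in its initial configuration, and handles the middle term via Lemma~\ref{lem: uniform convergence}: a \emph{uniform-over-compacts} convergence of the reduced-model conditional expectations, proven by contradiction (extract deterministic configurations violating uniformity, pass to a convergent subsequence, invoke Assumption~\ref{ass: dr fcns converge to loc mult SLE} and Carath\'eodory/SDE stability). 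You instead pass to a Skorokhod coupling and apply Assumption~\ref{ass: dr fcns converge to loc mult SLE} pathwise, closing with dominated convergence; this trades the uniformity argument for an almost-sure pointwise one, and is arguably a little cleaner since it sidesteps the compactness bookkeeping (tightness plus restriction to $K$). Note, however, that your route still needs the continuity of $(\IterDrFcnNoInd_1,\ldots,\IterDrFcnNoInd_m)\mapsto \EXNSLE_{(\cdot)}[g(\DrFcnNoInd_{m+1})]$ at least for measurability of the right-hand side, so the stability input is not avoided, only relocated; and the obstacles (a)--(c) you flag are exactly the points the paper's Lemma~\ref{lem: uniform convergence} absorbs into its contradiction argument (the deterministic driving functions there are required to ``describe lattice curves'', which is the same Carath\'eodory-convergence-of-reduced-domains issue). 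With those settled as you indicate, the argument is complete.
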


\begin{proof}
Let us assume that a converging subsequence has been extracted, so that 
$( \IterDrFcnLattNotime{n}{1}, \ldots, \IterDrFcnLattNotime{n}{2N}  ) \to (\IterDrFcnNoInd_{ 1}, \ldots, \IterDrFcnNoInd_{2N}) $ weakly in $\ctsfcns^{2N}$.
Start with the triangle inequality,
\begin{align}
\nonumber
\vert 
\EX & [f(\IterDrFcnNoInd_{ 1}, \ldots, \IterDrFcnNoInd_{m}) g ( \IterDrFcnNoInd_{m+ 1} ) ] 
- 
\EX [f(\IterDrFcnNoInd_{ 1}, \ldots, \IterDrFcnNoInd_{m}) \EXNSLE_{ ( \IterDrFcnNoInd_{ 1}, \ldots, \IterDrFcnNoInd_{m})} [ g ( \DrFcnNoInd_{m+ 1} ) ] ]
 \vert  \\
 \label{eq: three-eps proof}
\le & \vert 
\EX [f(\IterDrFcnNoInd_{ 1}, \ldots, \IterDrFcnNoInd_{m}) g ( \IterDrFcnNoInd_{m+ 1} ) ]  
-
 \EX^{(n)} [f(\IterDrFcnLattNotime{n}{1}, \ldots, \IterDrFcnLattNotime{n}{m} ) g ( \IterDrFcnLattNotime{n}{m + 1} ) ] \vert \\
\nonumber
+ & 
 \EX^{(n)} [f(\IterDrFcnLattNotime{n}{1}, \ldots, \IterDrFcnLattNotime{n}{m} ) g ( \IterDrFcnLattNotime{n}{m + 1} ) ]
-
 \EX^{(n)} [f(\IterDrFcnLattNotime{n}{1}, \ldots, \IterDrFcnLattNotime{n}{m} ) 
 \EXNSLE_{(\IterDrFcnLattNotime{n}{1}, \ldots, \IterDrFcnLattNotime{n}{m} )} [ g ( \DrFcnNoInd_{m+ 1}) ] ] \vert   \\
\nonumber
 + &
\vert
 \EX^{(n)} [f(\IterDrFcnLattNotime{n}{1}, \ldots, \IterDrFcnLattNotime{n}{m} ) 
 \EXNSLE_{(\IterDrFcnLattNotime{n}{1}, \ldots, \IterDrFcnLattNotime{n}{m} )} [ g ( \DrFcnNoInd_{m+ 1} ) ] ]
 -
\EX [f(\IterDrFcnNoInd_{ 1}, \ldots, \IterDrFcnNoInd_{m})  \EXNSLE_{(\IterDrFcnNoInd_{ 1}, \ldots, \IterDrFcnNoInd_{m}) } [ g (\DrFcnNoInd_{m+ 1} ) ] ]
\vert.
\end{align}
We claim that, taking a large enough $n$, the right-hand side of~\eqref{eq: three-eps proof} can be made arbitrarily small. The first term becomes arbitrarily small by the weak convergence  $( \IterDrFcnLattNotime{n}{1}, \ldots, \IterDrFcnLattNotime{n}{2N}  ) \to (\IterDrFcnNoInd_{ 1}, \ldots, \IterDrFcnNoInd_{2N}) $, likewise the third one. (This uses the fact that the stopped local multiple SLE driving function $\hat{W}_{m+1}$ is continuous with respect to the initial configuration of the marked boundary points.)

Let us examine the second term of~\eqref{eq: three-eps proof}. First, by the DDMP,
\begin{align}
 \EX^{(n)} [f(\IterDrFcnLattNotime{n}{1}, \ldots, \IterDrFcnLattNotime{n}{m} ) g ( \IterDrFcnLattNotime{n}{m + 1} ) ] 
\label{eq: DDMP cond exp}
& =  \EX^{(n)} [f(\IterDrFcnLattNotime{n}{1}, \ldots, \IterDrFcnLattNotime{n}{m} )
 \EX^{(n)}_{(\IterDrFcnLattNotime{n}{1}, \ldots, \IterDrFcnLattNotime{n}{m} )} [ g ( \IterDrFcnLattNotime{n}{m+1} ) ] ],
\end{align}
where we denoted by $ \EX^{(n)}_{(\IterDrFcnLattNotime{n}{1}, \ldots, \IterDrFcnLattNotime{n}{m} )}$ the measure from the discrete random curve model on $\InfiniteGr_n$, on the graph obtained by reducing the original graph $ \Gr^{(n)}$ by the initial segments described by the driving functions $(\IterDrFcnLattNotime{n}{1}, \ldots, \IterDrFcnLattNotime{n}{m} )$. Under this measure, $\IterDrFcnLattNotime{n}{m+1}$ is the driving function of conformal image of the $(m+1)$:st curve initial segment $\InitSegmLatt{n}{m+1}$, after mapping-out of the previous initial segments $\InitSegmLatt{n}{1}, \ldots, \InitSegmLatt{n}{m}$.

Let us state the next step of the proof as a separate lemma.

\begin{lem}
\label{lem: uniform convergence}
For any fixed compact set $K \subset \ctsfcns^{m}$, we have the convergence
\begin{align*}
 \EX^{(n)}_{(\IterDrFcnLattNotime{n}{1}, \ldots, \IterDrFcnLattNotime{n}{m} )} [ g ( \IterDrFcnLattNotime{n}{m+1} ) ] 
\stackrel{n \to \infty }{\longrightarrow }
 \EXNSLE_{(\IterDrFcnLattNotime{n}{1}, \ldots, \IterDrFcnLattNotime{n}{m} )} [ g ( \DrFcnNoInd_{m+ 1}) ] ]
\qquad
\text{as } n \to \infty,
\end{align*}
uniformly over $(\IterDrFcnLattNotime{n}{1}, \ldots, \IterDrFcnLattNotime{n}{m} )$ describing possible initial segments and belonging to $K$.
\end{lem}

\begin{proof}
Assume for a contradiction than such uniform convergence does not occur, i.e., for infinitely many $n$, there exist deterministic iterated driving functions $( \DetIterDrFcn^{(n)}_{1 } , \ldots, \DetIterDrFcn^{(n)}_{m} ) \in K$ that can each appear as iterated driving functions $(\IterDrFcnLattNotime{n}{1}, \ldots, \IterDrFcnLattNotime{n}{m} )$ of the initial segments in our lattice models (i.e., they describe lattice curves), and
\begin{align}
\label{eq: counter-assumption}
\vert \EX^{(n)}_{ ( \DetIterDrFcn^{(n)}_{1 } , \ldots, \DetIterDrFcn^{(n)}_{m} ) } [ g ( \IterDrFcnLattNotime{n}{m+1} ) ] 
- \EXNSLE_{ ( \DetIterDrFcn^{(n)}_{1 } , \ldots, \DetIterDrFcn^{(n)}_{m} ) } [ g ( \DrFcnNoInd_{m+ 1}) ] \vert > \delta
\end{align} 
for some $\delta > 0$.

Now, by compactness, we may extract a convergent subsequence (which we suppress in notation), $ ( \DetIterDrFcn^{(n)}_{1 } , \ldots, \DetIterDrFcn^{(n)}_{m} ) \to ( \DetIterDrFcn_{1 } , \ldots, \DetIterDrFcn_{m} ) $. In Assumption~\ref{ass: dr fcns converge to loc mult SLE}, we assumed that the convergence of a single driving function to local multiple SLE is verified, so this implies\footnote{
The stopping times used here are slightly different that in Assumption~\ref{ass: dr fcns converge to loc mult SLE}. However, this does not change the weak convergence by Remark~\ref{rem: weak conv of stopped dr fcns w different stopping times}.
}
\begin{align*}
\EX^{(n)}_{ ( \DetIterDrFcn^{(n)}_{1 } , \ldots, \DetIterDrFcn^{(n)}_{m} ) } [ g ( \IterDrFcnLattNotime{n}{m+1} ) ] 
\stackrel{n \to \infty }{\longrightarrow }
\EXNSLE_{ ( \DetIterDrFcn_{1 } , \ldots, \DetIterDrFcn_{m} ) } [ g ( \DrFcnNoInd_{m+ 1}) ].
\end{align*}
On the other hand, the continuity of the local multiple SLE driving function with respect to the initial configuration implies
\begin{align*}
\EXNSLE_{ ( \DetIterDrFcn^{(n)}_{1 } , \ldots, \DetIterDrFcn^{(n)}_{m} ) } [ g ( \DrFcnNoInd_{m+ 1}) ] 
\stackrel{n \to \infty }{\longrightarrow }
\EXNSLE_{ ( \DetIterDrFcn_{1 } , \ldots, \DetIterDrFcn_{m} ) } [ g ( \DrFcnNoInd_{m+ 1}) ].
\end{align*}
These two convergences contradict~\eqref{eq: counter-assumption}, proving the lemma.
\end{proof}

Let us now finish the proof of Lemma~\ref{lem: loc multiple SLE cond exp property}, by bounding the second term of~\eqref{eq: three-eps proof}. First, by Prohorov's theorem, weak convergence implies tightness. Since the functions $f$ and $g$ are bounded, we can thus with arbitrarily small error assume that $(\IterDrFcnLattNotime{n}{1}, \ldots, \IterDrFcnLattNotime{n}{m} ) \in K$ for a suitable compact set $K \subset \ctsfcns^{m}$. Applying then~\eqref{eq: DDMP cond exp} and Lemma~\ref{lem: uniform convergence}, we observe that the the second term of~\eqref{eq: three-eps proof} tends to zero as $n \to \infty$. This concludes the proof.
\end{proof}

\begin{proof}[Proof of Theorem~\ref{thm: local multiple SLE convergence}] By the commutative diagram~\eqref{dia: local multiple SLE commutation}, it suffices to prove the weak convergence of the iterated driving function.
Consider a subsequential weak limit $(\IterDrFcnNoInd_{ 1}, \ldots, \IterDrFcnNoInd_{2N})$.
Lemma~\ref{lem: loc multiple SLE cond exp property}, with fixed $g: \ctsfcns \to \R$, holds for all  continuous and bounded $f: \ctsfcns^m \to \R$. Thus, the weak limit $(\IterDrFcnNoInd_{ 1}, \ldots, \IterDrFcnNoInd_{2N}) $ satisfies
\begin{align}
\label{eq: conditional expectation property of loc mult SLE}
\EX [g (\IterDrFcnNoInd_{ m+ 1} ) \; \vert \; \sigma (\IterDrFcnNoInd_{ 1}, \ldots, \IterDrFcnNoInd_{m}) ] = \EXNSLE_{(\IterDrFcnNoInd_{ 1}, \ldots, \IterDrFcnNoInd_{m})} [ g (  \DrFcnNoInd_{m+ 1} ) ].
\end{align}
The right-hand side is a continuous function of $(\IterDrFcnNoInd_{ 1}, \ldots, \IterDrFcnNoInd_{m})$ by the stability of local multiple SLE with respect to the initial configuration.

By Proposition~\ref{prop: conditional expectation determine conditional law} from the appendices, the fact that~\eqref{eq: conditional expectation property of loc mult SLE} holds for all continuous functions $g: \ctsfcns \to \R$ means that the regular conditional law of the $(m+1)$:st iterated driving function $\IterDrFcnNoInd_{ m+ 1}$ given the previous ones $ (\IterDrFcnNoInd_{ 1}, \ldots, \IterDrFcnNoInd_{m}) $ is the local multiple SLE growth driving function, launched from the boundary point configuration where the previous ones $ (\IterDrFcnNoInd_{ 1}, \ldots, \IterDrFcnNoInd_{m}) $ end. Inductively on $m$, this shows that $ (\IterDrFcnNoInd_{ 1}, \ldots, \IterDrFcnNoInd_{m}) $ are local multiple SLE iterated driving functions.
\end{proof}

\begin{proof}[Proof of Corollary~\ref{cor: local multiple SLE convergence -  strong topology}]
%
By the commutative diagram~\eqref{dia: N-curve commutation with conf maps}, depicting Theorem~\ref{thm: precompactness thm multiple curves}(B), the initial segments of any subsequential weak limit $(\gamma_1, \ldots, \gamma_{N})$ are the conformal images of the initial segments $( \InitSegm{1}, \ldots, \InitSegm{2N} ) $ on $\UnitD$.
\end{proof}

\subsection{Proof of Theorem~\ref{thm: loc-2-glob multiple SLE convergence, kappa le 4} under Assumption~\ref{ass: quantitative no boundary visits assumption}}

In this subsection, we present the proof of Theorem~\ref{thm: loc-2-glob multiple SLE convergence, kappa le 4} under Assumption~\ref{ass: quantitative no boundary visits assumption}. This proof is easier and notationally lighter than the one under the alternative assumption~\ref{ass: cond C'}. By ``assumptions of Theorem~\ref{thm: loc-2-glob multiple SLE convergence, kappa le 4}'' we refer to the set of assumptions with Assumption~\ref{ass: quantitative no boundary visits assumption}.

\subsubsection{\textbf{Identifying the scaling limit of one-curve marginals}}
\label{subsubsec: identify one curve marginals}

Note that by of Theorem~\ref{thm: precompactness thm multiple curves} and  Remark~\ref{rem: precompactness for irregular boundary}, the curves $({\gamma}_{\UnitD; 1}^{(n)}, \ldots, {\gamma}_{\UnitD; N}^{(n)})$ are precompact.
Fix a subsequential scaling limit $({\gamma}_{\UnitD; 1}, \ldots, {\gamma}_{\UnitD; N})$, and consider the marginal law of one curve. For notational simplicity, we choose this special curve to be ${\gamma}_{\UnitD; 1}$ in this and following computations, but the straightforward analogues hold for all curves ${\gamma}_{\UnitD; j}$, as well as their reversals. Let us denote by $\InitSegmDelta{\delta}$ the intial segment of ${\gamma}_{\UnitD; 1}$, up to the continuous exit time of a very large localization neighbourhood $U(\delta)$ of the first boundary point, consisting of all of $\UnitD$ except a $\delta$-neighbourhood of the arc $( \Unitp_2 \Unitp_{2N} )$ of the other 
boundary points. Note that by Assumption~\ref{ass: dr fcns converge to loc mult SLE}, $\InitSegmDelta{\delta}$ is described by the local multiple SLE growth process, for any subsequential scaling limit. 

\begin{lem}
\label{lem: from localization to one curve}
Under the setup and assumptions of Theorem~\ref{thm: loc-2-glob multiple SLE convergence, kappa le 4}, the curve ${\gamma}_{\UnitD; 1}$ almost surely visits the boundary $\bdry \UnitD$ only at its end points, and only at times $0$ and $1$, and $\InitSegmDelta{\delta} \to \gamma_{\UnitD; 1}$ almost surely as $\delta \shrinkto 0$.
\end{lem}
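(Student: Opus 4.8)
The plan is to deduce the statement from the precompactness and the \emph{a priori} Assumption~\ref{ass: quantitative no boundary visits assumption}. First I would set $\delta' > 0$ small (so that the $\delta'$-neighbourhoods of the endpoints of the various curves are disjoint) and apply Assumption~\ref{ass: quantitative no boundary visits assumption}: for every $\eps > 0$ there is $\delta > 0$ such that $\PR^{(n)}[ (\gamma^{(n)}_{\UnitD;1}, \ldots, \gamma^{(n)}_{\UnitD;N}) \in E(\delta, \delta')] < \eps$ for all large $n$. The event $E(\delta, \delta')^{\compl}$ is (up to a null set) closed in $X(\overline{\UnitD})^N$, being the event that no curve comes $\delta$-close to $\bdry \UnitD$ outside the $\delta'$-neighbourhoods of its endpoints; hence by the portmanteau theorem and weak convergence along the subsequence, $\PR[(\gamma_{\UnitD;1}, \ldots, \gamma_{\UnitD;N}) \in E(\delta,\delta')] \le \liminf_n \PR^{(n)}[\cdot] \le \eps$ (one needs a mild argument that the relevant event has boundary of measure zero, or one passes to a slightly enlarged/shrunk neighbourhood; this is routine). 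Letting $\eps \shrinkto 0$ along a sequence of $\delta$'s, and then letting $\delta' \shrinkto 0$, gives that almost surely $\gamma_{\UnitD;1}$ visits $\bdry \UnitD$ only within every $\delta'$-neighbourhood of its two endpoints, i.e.\ only at its endpoints.

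Next I would upgrade ``visits $\bdry\UnitD$ only at its endpoints'' to ``visits $\bdry\UnitD$ only at times $0$ and $1$''. Here the point is that if $\gamma_{\UnitD;1}$ returned to the boundary point $\Unitp_1$ at some intermediate time, then some sub-loop would enclose boundary; but the previous paragraph already excludes the curve coming close to $\bdry\UnitD$ away from the endpoints, so any such intermediate return would force the curve to stay on the single boundary point, which has measure zero by the same estimate applied with the $\delta'$-neighbourhood shrinking; alternatively, one invokes that a curve in $X(\overline\UnitD)$ touching $\bdry\UnitD$ only at $\{\Unitp_1,\Unitp_{k}\}$ and admitting a Loewner transform (which holds by Theorem~\ref{thm: precompactness thm multiple curves}(C) and the identification of the driving function as a continuous local-SLE driving function) cannot hit the boundary in between, since the chordal $\SLE(\kappa)$ it is absolutely continuous to does not — this is exactly the kind of \emph{a posteriori} input allowed in the footnote using \cite{RS-basic_properties_of_SLE}. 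I would phrase it so that the only genuinely new content is excluding the return, and the parametrization statement then follows by reparametrizing.

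Finally, for the convergence $\InitSegmDelta{\delta} \to \gamma_{\UnitD;1}$ almost surely as $\delta \shrinkto 0$: by definition $\InitSegmDelta{\delta}$ is the initial segment of $\gamma_{\UnitD;1}$ up to the continuous exit time $\tau_\delta$ of $U(\delta)$. As $\delta \shrinkto 0$ these are nested initial segments, so $\InitSegmDelta{\delta} \to \InitSegmDelta{0}$ for some limiting closed curve (this uses the monotonicity and that the curves live in the compact $X(\overline{\UnitD})$). On the event of full probability established above, $\gamma_{\UnitD;1}$ meets the arc $(\Unitp_2\Unitp_{2N})$ only at its terminal endpoint, hence $\tau_\delta \incrto 1$ and $\InitSegmDelta{0} = \gamma_{\UnitD;1}$; the only subtlety is that the \emph{continuous} modification $\tau_\delta$ of the hitting time may strictly exceed the true hitting time (Section~\ref{subsubsec: cts stopping times}), but since the curve does not approach $\bdry\UnitD$ before its endpoint the discrepancy vanishes as $\delta \shrinkto 0$.

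\textbf{Expected main obstacle.} The delicate point is the portmanteau/boundary-of-event issue: $E(\delta,\delta')$ is neither open nor closed, so passing the probability bound to the weak limit requires care — I would handle it by sandwiching with open and closed versions $E(\delta^-,\delta'^+) \subset E(\delta,\delta') \subset E(\delta^+, \delta'^-)$ and using that Assumption~\ref{ass: quantitative no boundary visits assumption} gives the bound uniformly over a range of $\delta$'s, so the small loss is absorbed. The rest (nestedness, reparametrization, the $\SLE$ no-boundary-visit input) is routine.
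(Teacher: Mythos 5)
Your first and third steps are essentially the paper's proof: ``boundary visits only at the endpoints'' is exactly Assumption~\ref{ass: quantitative no boundary visits assumption} combined with Portmanteau's theorem, and the convergence $\InitSegmDelta{\delta}\to\gamma_{\UnitD;1}$ is obtained, as you describe, by showing that for a fixed realization the continuous exit times $\tau(\delta)$ tend to the hitting time of the arc $(\Unitp_2\Unitp_{2N})$, which equals $1$ once boundary visits are confined to times $0$ and $1$. (A minor slip there: $X(\overline{\UnitD})$ is complete and separable but not compact; the convergence of the nested initial segments follows from the convergence of the stopping times and the uniform continuity of the fixed curve, not from compactness of the space.)

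The genuine gap is in your middle step. Assumption~\ref{ass: quantitative no boundary visits assumption} says nothing about the curve returning to its \emph{own} endpoints at intermediate times: the curve can leave $\Unitp_1$, wander in the interior, return to $\Unitp_1$ at some $t\in(0,1)$ and leave again, all without entering the $\delta$-neighbourhood of $\bdry\UnitD$ away from the $\delta'$-neighbourhoods of its endpoints. So your claim that an intermediate return ``would force the curve to stay on the single boundary point'' does not hold, and a sub-loop based at a boundary point need not enclose any boundary arc. Your fallback via absolute continuity with chordal $\SLEk$ is also not available here: that absolute continuity is only furnished for initial segments inside localization neighbourhoods bounded away from the other marked points, so it does not control the terminal endpoint of the full curve, and the paper deliberately avoids such SLE-theoretic inputs at this stage (they are reserved for the \emph{a posteriori} Proposition~\ref{prop: initial segment end points}). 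What the paper actually invokes is the annulus crossing estimate: by Corollary~\ref{cor: marginal precompactness} the one-curve marginals satisfy condition (G), and \cite[Theorem~1.5]{KS} then yields that for any subsequential limit the endpoints are almost surely not double points of the curve (a return to $\Unitp_1$ after travelling a macroscopic distance would force unforced crossings of annuli of arbitrarily large modulus centred at $\Unitp_1$). Replacing your middle step by this crossing-estimate argument closes the gap; the rest of your write-up then goes through.
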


\begin{proof}
The property that the curve ${\gamma}_{\UnitD; 1}$ almost surely visits the boundary $\bdry \UnitD$ only at its end points follows straightforwardly from Assumption~\ref{ass: quantitative no boundary visits assumption} and Portmanteau's theorem on weak convergence. The fact that these boundary visits occur at times $0$ and $1$, i.e., the end points are almost surely not double points of the curve ${\gamma}_{\UnitD; 1}$, follows from the annulus crossing estimates~\cite[Theorem~1.5]{KS}.

The convergence $\InitSegmDelta{\delta} \to {\gamma}_{\UnitD; 1}$ almost surely as $\delta \shrinkto 0$ is proven by the following argument: the curve $\InitSegmDelta{\delta}$ is the initial segment of ${\gamma}_{\UnitD; 1}$, as grown up to the continuous exit time $\tau(\delta)$ of $U(\delta)$. For a fixed realization of ${\gamma}_{\UnitD; 1}$, it is easy to show that as $\delta \shrinkto 0$, the stopping times $\tau(\delta)$ tend to the hitting time of the arc $( \confmap(p_2) \confmap(p_{2N}) )$. Since the boundary visits of ${\gamma}_{\UnitD; 1}$ a.s. only occur at times $0$ and $1$, we deduce that, a.s., $\tau(\delta) \to 1$ as $\delta \shrinkto 0$, and thus also $\InitSegmDelta{\delta} \to {\gamma}_{\UnitD; 1}$.
\end{proof}

\begin{cor}
\label{cor: scaling limits of marginal laws identified}
In the setup of Theorem~\ref{thm: loc-2-glob multiple SLE convergence, kappa le 4}, the marginal law of the curve ${\gamma}_{\UnitD; 1}$ is the same for all subsequential scaling limits $({\gamma}_{\UnitD; 1}, \ldots, {\gamma}_{\UnitD; N})$. It is the local multiple SLE which, when defined in the increasing neighbourhoods $U(\delta)$ as $\delta \shrinkto 0$, almost surely yields a continuous closed curve between two marked boundary points.
\end{cor}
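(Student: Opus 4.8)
The plan is to combine Lemma~\ref{lem: from localization to one curve} with Assumption~\ref{ass: dr fcns converge to loc mult SLE}, passing the almost-sure convergence $\InitSegmDelta{\delta} \to {\gamma}_{\UnitD; 1}$ through a diagonal argument. First I would fix an arbitrary subsequential scaling limit $({\gamma}_{\UnitD; 1}, \ldots, {\gamma}_{\UnitD; N})$, obtained along some subsequence $(n_k)$ by Theorem~\ref{thm: precompactness thm multiple curves} and Remark~\ref{rem: precompactness for irregular boundary}. For each fixed $\delta > 0$, the initial segment $\InitSegmLattDelta{n_k}{}{\delta}$ (up to the continuous exit time of the localization neighbourhood $U(\delta)$) converges weakly to $\InitSegmDelta{\delta}$ by Theorem~\ref{thm: precompactness thm multiple curves}(D), and by Assumption~\ref{ass: dr fcns converge to loc mult SLE} together with the Loewner-transform commutation of Theorem~\ref{thm: precompactness thm multiple curves}(C), the law of $\InitSegmDelta{\delta}$ is exactly that of the local multiple SLE with partition function $\PartF_N$ run up to the continuous exit time of $U(\delta)$. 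In particular this law does not depend on the chosen subsequence.

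Next I would upgrade this to the full curve. By Lemma~\ref{lem: from localization to one curve}, $\InitSegmDelta{\delta} \to {\gamma}_{\UnitD; 1}$ almost surely in $X(\overline{\UnitD})$ as $\delta \shrinkto 0$; in particular the convergence holds in distribution. Since the law of $\InitSegmDelta{\delta}$ is the local multiple SLE initial segment law --- the same for every subsequential limit --- the law of the weak limit ${\gamma}_{\UnitD; 1} = \lim_{\delta \shrinkto 0} \InitSegmDelta{\delta}$ is likewise determined independently of the subsequence. Concretely, for any bounded continuous $F: X(\overline{\UnitD}) \to \R$ one has $\EX[F({\gamma}_{\UnitD; 1})] = \lim_{\delta \shrinkto 0} \EX[F(\InitSegmDelta{\delta})]$ by dominated convergence, and the right-hand side is a quantity depending only on the family $\{\PartF_N\}$ and the domain $(\UnitD; \Unitp_1, \ldots, \Unitp_{2N})$. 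This yields both assertions: the marginal law of ${\gamma}_{\UnitD; 1}$ is the same for all subsequential scaling limits, and it is precisely the ``local multiple SLE in the increasing neighbourhoods $U(\delta)$'' --- which by Lemma~\ref{lem: from localization to one curve} is almost surely a continuous closed curve between two marked boundary points (the no-boundary-visit statement guaranteeing the endpoint is reached and is not a double point).

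The main obstacle is ensuring that the two limiting operations --- $n \to \infty$ and $\delta \shrinkto 0$ --- can be exchanged, i.e.\ that the almost-sure convergence $\InitSegmDelta{\delta} \to {\gamma}_{\UnitD; 1}$ really identifies the limiting law and is not spoiled by a failure of tightness near $\delta = 0$. This is exactly what Assumption~\ref{ass: quantitative no boundary visits assumption} is designed to rule out, as it is used in Lemma~\ref{lem: from localization to one curve}: without it, a positive fraction of mass could escape to curves touching $\bdry \UnitD$ prematurely, and the stopping times $\tau(\delta)$ would not converge to $1$. Granting Lemma~\ref{lem: from localization to one curve}, the corollary is then a short formal consequence, and the only mild care needed is to phrase ``the local multiple SLE as $\delta \shrinkto 0$'' as the almost-sure $X(\overline{\UnitD})$-limit of the consistent family of local multiple SLE initial segments, whose existence as a closed curve (including its endpoint) is precisely the content of Lemma~\ref{lem: from localization to one curve}.
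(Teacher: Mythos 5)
Your proposal is correct and follows essentially the same route as the paper: identify the law of $\InitSegmDelta{\delta}$ as the local multiple SLE initial segment via Assumption~\ref{ass: dr fcns converge to loc mult SLE}, then use the almost-sure convergence $\InitSegmDelta{\delta} \to {\gamma}_{\UnitD; 1}$ from Lemma~\ref{lem: from localization to one curve} to pass to the full curve by testing against bounded continuous functions. The paper's proof is exactly this two-line argument, with the closed-curve statement likewise read off from Lemma~\ref{lem: from localization to one curve}.
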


Note that the above properties of the local multiple SLE initial segment from $\Unitp_1$ to $(\Unitp_2 \Unitp_{2N})$ would be difficult to prove directly by SLE theory, but are now easy by the underlying full curve ${\gamma}_{\UnitD; 1}$ obtained from the lattice model.

\begin{proof}[Proof of Corollary~\ref{cor: scaling limits of marginal laws identified}] Let us first identify the marginal scaling limit ${\gamma}_{\UnitD; 1}$.
It suffices to show that for any bounded continuous function $g: X(\overline{\UnitD}) \to \R$, the expectation $\EX[g ({\gamma}_{\UnitD; 1} ) ]$ is the same for all subsequential scaling limits $({\gamma}_{\UnitD; 1}, \ldots, {\gamma}_{\UnitD; N})$. Almost sure convergence implies weak convergence, so by Lemma~\ref{lem: from localization to one curve}, we have
\begin{align*}
\EX[g ( \gamma_{\UnitD; 1} ) ] = \lim_{\delta \shrinkto 0} \EX[ g( \InitSegmDelta{\delta} )],
\end{align*}
for any subsequential limit ${\gamma}_{\UnitD; 1}$.
By Assumption~\ref{ass: dr fcns converge to loc mult SLE}, $\InitSegmDelta{\delta}$ is the local multiple SLE initial segment, and in particular, the right-hand side above the same for any subsequential scaling limit. The fact that the local multiple SLE determines a full curve is an immediate consequence of Lemma~\ref{lem: from localization to one curve}.
\end{proof}

\subsubsection{\textbf{Identifying the full scaling limit}}

We now prove the following statements inductively on the number of curves $N$. Statement~(ii) in the proposition below is Theorem~\ref{thm: loc-2-glob multiple SLE convergence, kappa le 4}.

\begin{prop}
\label{thm: k le 4 local-to-global NSLE}
Under the setup and assumptions of Theorem~\ref{thm: loc-2-glob multiple SLE convergence, kappa le 4}, the following hold.
\begin{itemize}
\item[i) ] The local-to-global multiple SLE with partition function $\PartF_N$, on any domain $(\domain; {p}_1, \ldots, {p}_{2N})$ with $2N$ distinct marked prime ends with radial limits, exists as a random variable in $X(\C)^N$. Furthermore, interpreting its conditional-law definition as a sampling procedure of curve initial segments from the local multiple SLE in a given order, sampling the initial or final segments as local multiple SLEs in any order yields the same distribution of full curves.
\item[ii) ] The weak limits $ ( {\gamma}_{\UnitD; 1}, \ldots, {\gamma}_{\UnitD; N} )$ and $ ( {\gamma}_{1}, \ldots, {\gamma}_{N} )$ (the latter only when considering non-relaxed regularity at marked boundary points) of the curves $ ( \gamma^{(n)}_{\UnitD; 1}, \ldots, \gamma^{(n)}_{\UnitD; N} )$ and $ ( {\gamma}_{1}^{(n)}, \ldots, {\gamma}^{(n)}_{N} )$, are local-to-global multiple SLE with partition function $\PartF_N$, on domains $(\UnitD; \Unitp_1, \ldots, \Unitp_{2N})$ and $(\domain; {p}_1, \ldots, {p}_{2N})$, respectively. 
\item[iii)] The multiple SLE of part~(i) above is Carath\'{e}odory stable in the following precise sense: if  $(\domain_m; p_1^{(m)}, \ldots, p^{(m)}_{2N})$ and $(\domain; p_1, \ldots, p_{2N})$ are uniformly bounded simply-connected planar domains with $2N$ distinct marked prime ends with radial limits, the former being close Carath\'{e}odory approximations of the latter as $m \to \infty$, then 
 the local-to-global multiple SLEs on $(\domain_m; {p}_1^{(m)}, \ldots, {p}_{2N}^{(m)})$ converge weakly in $X(\C)^N$ to local-to-global multiple SLE on  $(\domain; {p}_1, \ldots, {p}_{2N})$. 
\end{itemize}
\end{prop}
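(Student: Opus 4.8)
The plan is to prove the three statements of Proposition~\ref{thm: k le 4 local-to-global NSLE} simultaneously by a single induction on the number of curves $N$, exploiting the fact that, under the assumptions in force, the discrete curve models already provide \emph{a priori} precompact approximating sequences. The base case $N=1$ is the statement that the one-curve model converges to a chordal $\SLE(\kappa)$ between two prime ends with radial limits; existence, Carath\'{e}odory stability and the order-independence of sampling are then trivial or classical for the chordal SLE, but here they also follow directly from Theorem~\ref{thm: precompactness thm multiple curves} together with Assumption~\ref{ass: dr fcns converge to loc mult SLE} (with $N=1$ the local multiple SLE partition function is forced and the growth process is the chordal SLE).

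For the induction step, assume (i)--(iii) hold for all $k \le N-1$. Fix a target domain $(\domain; p_1,\dots,p_{2N})$ with distinct marked prime ends possessing radial limits. By Assumption~\ref{ass: approximability} choose close lattice approximations $(\Gr^{(n)}; e_1^{(n)},\dots,e_{2N}^{(n)})$ lying inside $\domain$; by Theorem~\ref{thm: precompactness thm multiple curves}(A) the curves $(\gamma^{(n)}_{\UnitD;1},\dots,\gamma^{(n)}_{\UnitD;N})$ are precompact, so extract a subsequential weak limit $(\gamma_{\UnitD;1},\dots,\gamma_{\UnitD;N})$. The core of the argument is to show that \emph{every} such subsequential limit is the local-to-global multiple SLE of the definition, which simultaneously proves existence (part (i)) and forces uniqueness hence genuine weak convergence (part (ii)). First, Corollary~\ref{cor: scaling limits of marginal laws identified} (via Lemma~\ref{lem: from localization to one curve}, using Assumption~\ref{ass: quantitative no boundary visits assumption}) identifies the one-curve marginal of $\gamma_{\UnitD;1}$: it visits $\bdry\UnitD$ only at its two endpoints, at times $0$ and $1$, and is the $\delta\shrinkto 0$ limit of the local multiple SLE initial segments $\InitSegmDelta{\delta}$. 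Since $\kappa\le 4$, the curve is simple and its endpoint on the arc $(\Unitp_2\Unitp_{2N})$ is almost surely one of the marked points $\Unitp_2,\dots,\Unitp_{2N}$; thus $\gamma_{\UnitD;1}$ divides $\UnitD$ into two simply-connected components each carrying an even number of marked points, matching step (4a) of the definition.

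Next I would establish the conditional law. Condition the discrete curves on the full curve $\gamma^{(n)}_{\UnitD;1}$ (more precisely, on a fine discretization of it): by the DDMP, property (ii), the conditional law of the remaining curves is the discrete curve model run independently in the two reduced subgraphs, each with at most $N-1$ curves and with marked boundary edges given by the relevant restriction of $e_2^{(n)},\dots,e_{2N}^{(n)}$. The two reduced domains are close Carath\'{e}odory approximations of the components of $\UnitD\setminus\gamma_{\UnitD;1}$; here one must verify that the components converge in the Carath\'{e}odory sense along the subsequence and that the approximations remain close --- this uses that the boundary visits of $\gamma_{\UnitD;1}$ occur only at its endpoints, so the subdomains are non-degenerate at the relevant prime ends, plus the annulus estimate \cite[Theorem~1.5]{KS} to control closeness. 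Applying the induction hypothesis (ii) and (iii) (Carath\'{e}odory stability) to these two subfamilies, their conditional weak limit is the product of two independent local-to-global multiple SLEs with the induced partition functions, which is exactly the regular conditional law demanded by step (4a). Passing this conditional statement to the limit --- the point where one must be slightly careful, converting the discrete conditioning-on-a-curve into a genuine regular conditional law in the limit, which follows by a $\sigma$-algebra/test-function argument analogous to Lemma~\ref{lem: loc multiple SLE cond exp property} together with the almost-sure convergence $\InitSegmDelta{\delta}\to\gamma_{\UnitD;1}$ --- shows the subsequential limit satisfies the definition. Because the marginal of $\gamma_{\UnitD;1}$ is uniquely pinned down (Corollary~\ref{cor: scaling limits of marginal laws identified}) and the conditional law is then determined by the induction hypothesis, the whole limit is unique: this gives (i) (well-definedness), (ii) (convergence of the actual sequence, and the transfer to $(\gamma_1,\dots,\gamma_N)$ in $\domain$ via the commutative diagram~\eqref{dia: N-curve commutation with conf maps}), and the order-independence in (i) follows from the analogous order-independence of the local multiple SLE initial segments combined with the inductive order-independence in the subdomains.

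Finally, part (iii), Carath\'{e}odory stability of the limiting object, I would deduce as a corollary of the \emph{method} rather than by a separate argument: given close Carath\'{e}odory approximations $(\domain_m;p_i^{(m)})\to(\domain;p_i)$, use Assumption~\ref{ass: approximability} to approximate each $\domain_m$ by lattice domains and a diagonal extraction so that the lattice domains also close-Carath\'{e}odory-approximate $\domain$; then both the local-to-global multiple SLE on $\domain_m$ and on $\domain$ arise as limits of the \emph{same} array of discrete curves, and the uniqueness established above forces the weak convergence. I expect the main obstacle to be the continuity/closeness bookkeeping at the third paragraph: namely showing that, given the almost-sure convergence of $\gamma^{(n)}_{\UnitD;1}$ (along the subsequence) to a simple curve touching $\bdry\UnitD$ only at its endpoints, the two complementary subgraphs are \emph{close} Carath\'{e}odory approximations of the two complementary domains uniformly enough to apply the induction hypothesis --- this is where one genuinely needs the crossing conditions (C)/(G) (to rule out the limiting curve pinching the boundary or creating bad fjords near the marked prime ends) and where the argument is most technical, though morally identical to the one-curve closeness arguments of \cite{mie}.
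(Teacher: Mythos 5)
Your proposal follows essentially the same route as the paper: induction on $N$, identification of the one-curve marginal via the $\delta\shrinkto 0$ limit of local multiple SLE initial segments (Lemma~\ref{lem: from localization to one curve} and Corollary~\ref{cor: scaling limits of marginal laws identified}), a test-function/DDMP argument to pin down the regular conditional law of the remaining curves given $\gamma_{\UnitD;1}$ using the inductive hypotheses (ii) and (iii) (this is exactly Lemma~\ref{lem: domain Markov of scaling limit curves}, including the compactness-and-contradiction uniformity step you flag as the technical crux), and a diagonal extraction for Carath\'{e}odory stability. The only caveat is that your argument covers the version under Assumption~\ref{ass: quantitative no boundary visits assumption}; the paper gives a separate, substantially longer proof of the same proposition under the alternative Assumption~\ref{ass: cond C'}, where the up-to-swallowing initial segment need not terminate at a marked point and Lemma~\ref{lem: domain Markov of scaling limit curves} must be replaced by Lemma~\ref{lem: domain Markov of scaling limit curves 2}.
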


\begin{proof}[Proof of the base case $N=1$]
For the base case $N=1$, we will prove the claim using the weak convergence $\gamma_{\UnitD; 1}^{(n)} \to \gamma_{\UnitD; 1}$ of the underlying lattice model\footnote{
The statement in the special case $N=1$ can be seen as more or less standard properties of chordal SLEs. Instead, we use here arguments that are less standard for chordal SLEs, but generalize to $N \ge 2$.
}. For the existence in part (i),
in the preceding subsection, we defined the marginal law of one curve in the local-to-global multiple SLE as the weak limit of the curves ${\gamma}_{\UnitD; 1}^{(n)}$. Thus, if there is only one curve, the local-to-global multiple SLE on the unit disc exists as this weak limit. The existence in general domains follows by Assumption~\ref{ass: approximability} and Theorem~\ref{thm: precompactness thm multiple curves}(B). For the order of sampling in part (i), there are two possible starting points from which we can grow the curve $ \gamma_{\UnitD; 1}$ when sampling with the Loewner growth processes. Lemma~\ref{lem: from localization to one curve} and the discussion on one-curve marginals holds for both starting points, so sampling the local multiple SLE growth from either starting point, we get the Loewner description of the limiting curve $ \gamma_{\UnitD; 1}$. This finishes part (i) in the base case $N=1$.

As $N=1$, part (ii) follows directly from the identification of one-curve marginals in Corollary~\ref{cor: scaling limits of marginal laws identified}.

Part~(iii) can be proven by the following argument, relying on Assumption~\ref{ass: approximability}: For each $m$, let $\gamma^{(n)}_m$ be the discrete curves on the lattice approximations $(\domain_{n; m}; {p}_1^{(n, m)}, {p}_{2}^{(n, m)})$ of $(\domain_m; {p}_1^{(m)}, {p}_{2}^{(m)})$ given by Assumption~\ref{ass: approximability}, and let $\gamma_m$ denote the local-to-global multiple SLE on  $(\domain_m; {p}_1^{(m)}, {p}_{2}^{(m)})$, i.e., the weak limit of $\gamma^{(n)}_m$ as $n \to \infty$. Recall that weak convergence is metrizable. It is easy to see that one can define inductively an increasing sequence $n(m)$ such that $(\domain_{n(m); m}; {p}_1^{(n(m), m)}, {p}_{2}^{(n(m), m)})$ are close Carath\'{e}odory approximations of $(\domain; {p}_1, {p}_{2})$, and the distance of $\gamma^{(n(m))}_m$ and $\gamma_m$ in the metric of weak convergence tends to zero as $m \to \infty$. By part~(ii), $\gamma^{(n(m))}_m$ tends weakly to $\gamma$ as $m \to \infty$, so also the distance of $\gamma$ and $\gamma_m$ in the metric of weak convergence tends to zero as $m \to \infty$.
\end{proof}

\begin{proof}[Proof of the induction step]
Let us now assume that the three properties in the statement of Proposition~\ref{thm: k le 4 local-to-global NSLE} hold for any number of curves $1, 2, \ldots, (N-1)$, and show that they then also hold for $N$ curves. We first prove property~(ii). Let us start with an analogy of Lemma~\ref{lem: loc multiple SLE cond exp property}. 

\begin{lem}
\label{lem: domain Markov of scaling limit curves}
Any subsequential limit $ ( {\gamma}_{\UnitD; 1}, \ldots, {\gamma}_{\UnitD; N} )$ of the curves $( {\gamma}^{(n)}_{\UnitD; 1}, \ldots, {\gamma}^{(n)}_{\UnitD; N} ) $ satisfies, for any bounded continuous functions $g: X(\overline{\UnitD})^{N-1} \to \R$, and $f: (\overline{\UnitD}) \to \R$,
\begin{align*}
\EX [f ( {\gamma}_{\UnitD; 1} )  g( {\gamma}_{\UnitD; 2}, \ldots, {\gamma}_{\UnitD; N}   ) ] = \EX [ f ( {\gamma}_{\UnitD; 1} )  \EXSLEcurves{(N-1)}_{\UnitD \setminus {\gamma}_{\UnitD; 1}  } [ g( \SLEcurve_{1},  \ldots, \SLEcurve_{N-1}   ) ] ],
\end{align*}
where $\EXSLEcurves{(N-1)}_{\UnitD \setminus {\gamma}_{\UnitD; 1}  }$ on the right-hand side denotes the following: $( \SLEcurve_{1},  \ldots, \SLEcurve_{N-1})$ are the local-to-global multiple SLE of $(N-1)$ curves on $\UnitD \setminus \gamma_{\UnitD; 1} $ with the remaining marked boundary points. If $\UnitD \setminus \gamma_{\UnitD; 1} $ is not simply connected, it should be interpreted as two independent local-to-global multiple SLEs on the two connected components of $\UnitD \setminus \gamma_{\UnitD; 1} $ that are adjacent to the remaining marked boundary points.
\end{lem}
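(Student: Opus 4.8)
The plan is to run the argument of Lemma~\ref{lem: loc multiple SLE cond exp property}, but with full curves in place of local multiple SLE initial segments and with ``conditioning on the first full curve'' playing the role of ``conditioning on the first $m$ iterated driving functions''. First I would extract a subsequence, suppressed from the notation, along which $( {\gamma}^{(n)}_{\UnitD; 1}, \ldots, {\gamma}^{(n)}_{\UnitD; N} ) \to ( {\gamma}_{\UnitD; 1}, \ldots, {\gamma}_{\UnitD; N} )$ weakly in $X(\overline{\UnitD})^N$, and fix bounded continuous $f : X(\overline{\UnitD}) \to \R$ and $g : X(\overline{\UnitD})^{N-1} \to \R$. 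Since ${\gamma}^{(n)}_{\UnitD; 1}$ is one of the $N$ full curves of the discrete model, the DDMP (property~(ii) of the discrete domain Markov property) gives, for each $n$,
\begin{align*}
\EX^{(n)} \big[ f ( {\gamma}^{(n)}_{\UnitD; 1} ) \, g( {\gamma}^{(n)}_{\UnitD; 2}, \ldots, {\gamma}^{(n)}_{\UnitD; N} ) \big]
= \EX^{(n)} \big[ f ( {\gamma}^{(n)}_{\UnitD; 1} ) \, \EX^{(n)}_{ \UnitD \setminus {\gamma}^{(n)}_{\UnitD; 1} } [ g ] \big],
\end{align*}
where $\EX^{(n)}_{\UnitD \setminus \nu}$ denotes the expectation under the $(N-1)$-curve discrete model on the (one or two) connected components of $\UnitD \setminus \nu$ adjacent to the remaining marked points, transported by the uniformizing maps; if $\UnitD \setminus \nu$ is disconnected this is a product of two such expectations. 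The left-hand side converges to $\EX[ f ( {\gamma}_{\UnitD; 1} ) g( {\gamma}_{\UnitD; 2}, \ldots, {\gamma}_{\UnitD; N} ) ]$ by weak convergence, so it remains to show that the right-hand side converges to $\EX [ f ( {\gamma}_{\UnitD; 1} ) \, \EXSLEcurves{(N-1)}_{\UnitD \setminus {\gamma}_{\UnitD; 1}} [ g ] ]$.

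The heart of the matter will be a uniform convergence statement analogous to Lemma~\ref{lem: uniform convergence}: for a suitable compact set $K \subset X(\overline{\UnitD})$ consisting of curves whose removal produces close Carath\'eodory approximations, one has
\begin{align*}
\sup_{\nu^{(n)} \in K} \big| \; \EX^{(n)}_{ \UnitD \setminus \nu^{(n)} } [ g ] - \EXSLEcurves{(N-1)}_{ \UnitD \setminus \nu^{(n)} } [ g ] \; \big| \stackrel{n \to \infty}{\longrightarrow} 0,
\end{align*}
the supremum being over deterministic curves $\nu^{(n)}$ that arise as first curves of the discrete model. As in Lemma~\ref{lem: uniform convergence}, I would prove this by contradiction: extract $\nu^{(n)} \to \nu$ in $K$; then $\UnitD \setminus \nu^{(n)} \to \UnitD \setminus \nu$ in the close Carath\'eodory sense, with the surviving marked prime ends as close approximations and the combinatorics of which marked points lie in which component stabilizing along the subsequence, so the induction hypothesis Proposition~\ref{thm: k le 4 local-to-global NSLE}(ii) applied to the $(N-1)$-curve discrete models on these reduced domains yields $\EX^{(n)}_{ \UnitD \setminus \nu^{(n)} } [ g ] \to \EXSLEcurves{(N-1)}_{ \UnitD \setminus \nu } [ g ]$, while Proposition~\ref{thm: k le 4 local-to-global NSLE}(iii) yields $\EXSLEcurves{(N-1)}_{ \UnitD \setminus \nu^{(n)} } [ g ] \to \EXSLEcurves{(N-1)}_{ \UnitD \setminus \nu } [ g ]$; these contradict the assumed separation. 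In particular Proposition~\ref{thm: k le 4 local-to-global NSLE}(iii) also makes $\nu \mapsto \EXSLEcurves{(N-1)}_{\UnitD\setminus\nu}[g]$ continuous on $K$, so the right-hand side of the target identity is a genuine bounded measurable function of ${\gamma}_{\UnitD; 1}$. Combining this uniform convergence with the tightness of ${\gamma}^{(n)}_{\UnitD; 1}$ --- which, since $f,g$ are bounded, lets me restrict with arbitrarily small error to the event ${\gamma}^{(n)}_{\UnitD; 1} \in K$ for a $K$ carrying almost all the mass --- and the weak convergence ${\gamma}^{(n)}_{\UnitD; 1} \to {\gamma}_{\UnitD; 1}$, the claimed limit follows by the same three-$\eps$ decomposition as in Lemma~\ref{lem: loc multiple SLE cond exp property}.

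The main obstacle will be producing the compact set $K$, i.e., controlling the conformal geometry of $\UnitD \setminus {\gamma}^{(n)}_{\UnitD; 1}$: convergence $\nu^{(n)} \to \nu$ of curves does not by itself force Carath\'eodory convergence of the complements, since one must rule out macroscopic fjords and near-self-touchings. This is exactly what the Kemppainen--Smirnov crossing conditions (C), (G) control --- through Theorem~\ref{thm: one curve precompactness} and the prior work~\cite{KS, mie}, which guarantee that with probability close to $1$ the curve ${\gamma}^{(n)}_{\UnitD; 1}$ lies in a compact family whose members have well-behaved complements --- while Lemma~\ref{lem: from localization to one curve} (using Assumption~\ref{ass: quantitative no boundary visits assumption}) ensures that the limiting curve ${\gamma}_{\UnitD; 1}$ meets $\bdry\UnitD$ only at its two endpoints, so that the marked prime ends $\Unitp_2, \ldots, \Unitp_{2N}$ persist as close approximations after cutting. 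Finally, I would not track the relabelling of the remaining marked points in the two components of $\UnitD \setminus {\gamma}_{\UnitD; 1}$: by the alternating boundary conditions and the DDMP this choice is immaterial, exactly as in the definition of the local-to-global multiple SLE. Modulo these geometric inputs, the proof is a routine adaptation of Lemma~\ref{lem: loc multiple SLE cond exp property}.
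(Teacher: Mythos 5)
Your proposal is correct and follows essentially the same route as the paper's proof: the same three-$\eps$ decomposition as in Lemma~\ref{lem: loc multiple SLE cond exp property}, the DDMP applied to the full first curve, a uniform-convergence claim proved by contradiction using the inductive hypotheses Proposition~\ref{thm: k le 4 local-to-global NSLE}(ii) (together with Theorem~\ref{thm: precompactness thm multiple curves}(B) to transfer between the reduced lattice domains and $\UnitD \setminus \nu$) and (iii), and the restriction to compact sets of curves with controlled boundary behaviour supplied by condition (G) and Assumption~\ref{ass: quantitative no boundary visits assumption}. No substantive differences or gaps.
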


Note that the one or two connected components of $\UnitD \setminus {\gamma}_{\UnitD; 1} $ adjacent to the remaining marked boundary points are (almost surely) simply-connected by Lemma~\ref{lem: from localization to one curve}, and the local-to-global multiple SLEs on them exist by the inductive assumption~(i).

\begin{proof}[Proof of Lemma~\ref{lem: domain Markov of scaling limit curves}]
Assume for notational simplicity that a weakly converging subsequence has been extracted, so that $ ( {\gamma}^{(n)}_{\UnitD; 1}, \ldots, {\gamma}^{(n)}_{\UnitD; N} ) \to  ( {\gamma}_{\UnitD; 1}, \ldots, {\gamma}_{\UnitD; N} )$,
Start with the triangle inequality:
\begin{align}
\nonumber
\vert   
\EX &[f ( {\gamma}_{\UnitD; 1} )  g( {\gamma}_{\UnitD; 2}, \ldots, {\gamma}_{\UnitD; N}   ) ] 
-
 \EX [ f ( {\gamma}_{\UnitD; 1} )  \EXSLEcurves{(N-1)}_{\UnitD \setminus {\gamma}_{\UnitD; 1}  } [ g( \SLEcurve_{1},  \ldots, \SLEcurve_{N-1}   ) ] ]
 \vert  \\
  \label{eq: another three eps proof}
 \le & \vert   \EX [f ( {\gamma}_{\UnitD; 1} )  g( {\gamma}_{\UnitD; 2}, \ldots, {\gamma}_{\UnitD; N}   ) ] 
  - 
  \EX^{(n)} [ f ( \gamma^{(n)}_{\UnitD; 1}  )   g( \gamma^{(n)}_{\UnitD; 2}, \ldots, \gamma^{(n)}_{\UnitD; N} )  ] \vert \\
 \nonumber
 &
 + \vert   \EX^{(n)} [ f ( \gamma^{(n)}_{\UnitD; 1}  )   g( \gamma^{(n)}_{\UnitD; 2}, \ldots, \gamma^{(n)}_{\UnitD; N} )  ] 
 -
  \EX^{(n)} [ f ( \gamma_{\UnitD; 1}^{(n)}  )  \EXSLEcurves{(N-1)}_{\UnitD \setminus \gamma_{\UnitD; 1}^{(n)}  } [ g(\SLEcurve_1, \ldots, \SLEcurve_{N-1}   ) ] ] \vert \\
\nonumber
 &
+ \vert   \EX^{(n)} [ f ( \gamma_{\UnitD; 1}^{(n)}  )  \EXSLEcurves{(N-1)}_{\UnitD \setminus \gamma_{1; \UnitD}^{(n)}  } [ g(\SLEcurve_1, \ldots, \SLEcurve_{N-1} ) ] ] 
 - \EX [ f ( \gamma_{\UnitD; 1}  )  \EXSLEcurves{(N-1)}_{\UnitD \setminus \gamma_{1; \UnitD  } } [ g(\SLEcurve_1, \ldots, \SLEcurve_{N-1}   ) ] ] \vert,
\end{align}
where the SLE curves on both $\UnitD \setminus \gamma_{\UnitD; 1}$ and $\UnitD \setminus \gamma_{\UnitD; 1}^{(n)}$ run between the limitng marked points $\Unitp_1, \ldots, \Unitp_{2N} \in \bdry \UnitD$.

We claim that all terms in~\eqref{eq: another three eps proof} can be made arbitrarily small by choosing $n$ large enough. For the first term, this holds by the weak convergence. Likewise, the third term follows by weak convergence: namely,
\begin{align*}
\gamma_{\UnitD; 1} \mapsto \EXSLEcurves{(N-1)}_{\UnitD \setminus \gamma_{\UnitD; 1}  } [ g (\SLEcurve_1, \ldots, \SLEcurve_{N-1}   ) ] ]
\end{align*}
is a continuous function of the curve $\gamma_{\UnitD; 1}$ not visiting $\bdry \UnitD$ except at its end points, by the inductive assumption~(iii).

For the second term, notice that by the DDMP,
\begin{align*}
\EX^{(n)} [ f ( \gamma^{(n)}_{\UnitD; 1}  )   g( \gamma^{(n)}_{\UnitD; 2}, \ldots, \gamma^{(n)}_{\UnitD; N} )  ]
 = \EX^{(n)} [ f ( \gamma^{(n)}_{\UnitD; 1}  )  \EX^{(n)}_{\UnitD \setminus \gamma^{(n)}_{\UnitD; 1} } [ g( \gamma^{(n)}_{\UnitD; 2}, \ldots, \gamma^{(n)}_{\UnitD; N} )  ] ]
\end{align*}
The rest is similar to the proof of Lemma~\ref{lem: loc multiple SLE cond exp property}: by tightness of the sequence $\gamma^{(n)}_1  $, we take a compact set $K_\eps \subset X(\overline{\UnitD})$, containing $1- \eps$ of probability mass  of $\gamma^{(n)}_{\UnitD; 1}$ for all $n$. It then suffices to show that the convergence 
\begin{align}
\label{eq: unif conv}
\EX^{(n)}_{\UnitD \setminus \gamma_{\UnitD; 1}^{(n)}  } [ g( \gamma^{(n)}_{\UnitD; 2}, \ldots, \gamma^{(n)}_{\UnitD; N} )  ] \to 
\EXSLEcurves{(N-1)}_{\UnitD \setminus {\gamma}_{\UnitD; 1}^{(n)}  } [ g( \SLEcurve_{1},  \ldots, \SLEcurve_{N-1}   ) ]
\end{align}
is uniform over $ \gamma_{\UnitD; 1}^{(n)} \in K_\eps$. The intersection of a compact set with a closed set is compact. Thus, by Assumption~\ref{ass: quantitative no boundary visits assumption}, we may assume that $K_\eps$ is such that  $ \gamma_{\UnitD; 1}^{(n)}  $ never visits at distance $< \delta$-neighbourhood of $\bdry \UnitD$, except at disctance $\le \delta'$ from its end points. 

 Assume for a contradiction that the convergence~\eqref{eq: unif conv} is not uniform over $ K_\eps$. I.e., there exist deterministic curves $\DetCurve^{(n)} \in K_\eps$, each possible to be observed as the curves $\gamma_{\UnitD; 1}^{(n)} $, such that for some $\ell > 0$ and infinitely many $n$ we have
\begin{align}
\label{eq: yet another counterassumption}
\vert \EX^{(n)}_{\UnitD \setminus \DetCurve^{(n)}  } [  g( \gamma^{(n)}_{\UnitD; 2}, \ldots, \gamma^{(n)}_{\UnitD; N} )] - \EXSLEcurves{(N-1)}_{\UnitD \setminus \DetCurve^{(n)}  } [ g( \SLEcurve_{1},  \ldots, \SLEcurve_{N-1}   )  ] \vert > \ell.
\end{align}
By compactness, extract a convergent subsequence (which we suppress in notation), $\DetCurve^{(n)} \to \DetCurve$ for which~\eqref{eq: yet another counterassumption} holds. Now, by the inductive assumption~(iii), we have 
\begin{align}
\label{eq: conv for contradiction 1}
\EXSLEcurves{(N-1)}_{\UnitD \setminus \DetCurve^{(n)}  } [ g( \SLEcurve_{1},  \ldots, \SLEcurve_{N-1}   )  \to \EXSLEcurves{(N-1)}_{\UnitD \setminus \DetCurve  } [ g( \SLEcurve_{1},  \ldots, \SLEcurve_{N-1}   )  ],
\end{align}
where the multiple SLE in $\UnitD \setminus \DetCurve$ makes sense as we restricted the boundary visits of $\DetCurve$ by our choice of $K_\eps$.

Now, recall that the curves $\gamma^{(n)}_2, \ldots, \gamma^{(n)}_N  $ originate in a DDMP lattice model where the one-curve model satisfies the conformally invariant condition (C). DDMP and condition (C) clearly also hold for the conformal images $\gamma^{(n)}_{\UnitD; 2}, \ldots, \gamma^{(n)}_{\UnitD; N} $. Also the assumptions imposed on the discrete domains in Theorem~\ref{thm: precompactness thm multiple curves} hold for the curves $\gamma^{(n)}_{\UnitD; 2}, \ldots, \gamma^{(n)}_{\UnitD; N} $, and thus we can deduce precompactness by that theorem. Now, by inductive assumption~(ii), the conformal images of the curves $\gamma^{(n)}_2, \ldots, \gamma^{(n)}_N  $, when the two connected components of $\confmap_n^{-1 } ( \UnitD \setminus \DetCurve^{(n)} )$ are both mapped to $\UnitD$, tend weakly to a local-to-global multiple SLE. But these are also the conformal images of $\gamma^{(n)}_{\UnitD; 2}, \ldots, \gamma^{(n)}_{\UnitD; N} $ on $\UnitD$, so by Theorem~\ref{thm: precompactness thm multiple curves}(B), also 
the curves $(\gamma^{(n)}_{\UnitD; 2}, \ldots, \gamma^{(n)}_{\UnitD; N} )$ converge weakly to the local-to-global multiple SLE in the two connected components of $\UnitD \setminus \DetCurve$. Thus,
\begin{align}
\label{eq: conv for contradiction 2}
\EX^{(n)}_{\UnitD \setminus \DetCurve^{(n)}  } [  g( \gamma^{(n)}_{\UnitD; 2}, \ldots, \gamma^{(n)}_{\UnitD; N} )]  \to \EXSLEcurves{(N-1)}_{\UnitD \setminus \DetCurve  } [ g( \SLEcurve_{1},  \ldots, \SLEcurve_{N-1}   )  ].
\end{align}

The two convergences~\eqref{eq: conv for contradiction 1} and~\eqref{eq: conv for contradiction 2} together contradict~\eqref{eq: yet another counterassumption}, finishing the proof.
\end{proof}

Let us now finish the induction step in the proof of Proposition~\ref{thm: k le 4 local-to-global NSLE}. Notice that Lemma~\ref{lem: domain Markov of scaling limit curves} implies that for any bounded continuous function $g: X^{N} \to \R$, we have
\begin{align*}
\EX[ g( {\gamma}_{\UnitD; 2}, \ldots, {\gamma}_{\UnitD; N}   ) \; \vert \; \sigma ( {\gamma}_{\UnitD; 1} ) ]
= \EXSLEcurves{(N-1)}_{\UnitD \setminus {\gamma}_{\UnitD; 1}  } [ g( \SLEcurve_{1},  \ldots, \SLEcurve_{N-1}  ) ].
\end{align*}
(By inductive assumption~(iii), the right-hand side above is a continuous function of ${\gamma}_{\UnitD; 1}$.)
 By Proposition~\ref{prop: conditional expectation determine conditional law}, this shows that the regular conditional law of $({\gamma}_{\UnitD; 2}, \ldots, {\gamma}_{\UnitD; N}   )$ given ${\gamma}_{\UnitD; 1}$ is the local-to-global multiple SLE of $(N - 1)$ curves. Since the marginal law of ${\gamma}_{\UnitD; 1}$ is by Corollary~\ref{cor: scaling limits of marginal laws identified} the one-curve marginal of the local-to-global multiple SLE of $N$ curves, this identifies the law of the curves $( {\gamma}_{\UnitD; 1}, \ldots, {\gamma}_{\UnitD; N}   )$ as the local-to-global multiple SLE. For domains with non-relaxed regularity at marked boundary points, the weak convergence of $(\gamma_1^{(n)}, \ldots, \gamma_N^{(n)})$ is guaranteed by Theorem~\ref{thm: precompactness thm multiple curves}(B). This proves the induction step for property~(ii).
 
 For property~(i), the local-to-global multiple SLE exists as the scaling limit in the proof of property~(ii). The independence on sampling order follows from that in the discrete case. The proof of property~(iii) is identical to the base case $N=1$.
\end{proof}

\subsection{Proof of Theorem~\ref{thm: loc-2-glob multiple SLE convergence, kappa le 4} under Assumption~\ref{ass: cond C'}}

We now prove Theorem~\ref{thm: loc-2-glob multiple SLE convergence, kappa le 4} under Assumption~\ref{ass: cond C'}.

\subsubsection{\textbf{Identifying the scaling limit of up-to-swallowing initial segments}}
\label{subsubsec: up-to-swalliwing segments}

Let us start an analogue of Section~\ref{subsubsec: identify one curve marginals}. Continue in the notation introduced there. Denote by $\InitSegmDelta{0}$ the initial segment of ${\gamma}_{\UnitD; 1}$ up to hitting the closed boundary arc $( \Unitp_2 \Unitp_{2N} ) \subset \bdry \UnitD$. We call $\InitSegmDelta{0}$ the \emph{up-to-swallowing initial segment} of ${\gamma}_{\UnitD; 1}$.  Denote by $\FinalSegmDelta{\delta}$ and $\FinalSegmDelta{0}$ the remainder of the curve ${\gamma}_{\UnitD; 1}$ after the initial segments $\InitSegmDelta{\delta}$ and $\InitSegmDelta{0}$, respectively. Repeating the arguments of Lemma~\ref{lem: from localization to one curve} and Corollary~\ref{cor: scaling limits of marginal laws identified}, one readily obtains:

\begin{lem}
\label{lem: SLE up-to-swallowing segments exist}
For any subsequential scaling limit $({\gamma}_{\UnitD; 1}, \ldots, {\gamma}_{\UnitD; N})$, we have $\InitSegmDelta{\delta} \to \InitSegmDelta{0}$ and  $\FinalSegmDelta{\delta} \to \FinalSegmDelta{0}$ almost surely as $\delta \shrinkto 0$. In particular, the marginal law of $\InitSegmDelta{0}$ is the local multiple SLE for all subsequential limits.
\end{lem}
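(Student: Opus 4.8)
The plan is to mimic, almost verbatim, the proofs of Lemma~\ref{lem: from localization to one curve} and Corollary~\ref{cor: scaling limits of marginal laws identified}, the only change being that the role of the time $1$ there is now played by the (a priori random) first time $T$ at which ${\gamma}_{\UnitD;1}$ hits the closed arc $(\Unitp_2\Unitp_{2N})\subset\bdry\UnitD$. Fix a subsequential scaling limit $({\gamma}_{\UnitD; 1}, \ldots, {\gamma}_{\UnitD; N})$ and consider ${\gamma}_{\UnitD;1}$, which almost surely runs from $\Unitp_1$ to one of the even-index marked points; all even-index points lie on the closed arc $(\Unitp_2\Unitp_{2N})$, while $\Unitp_1$ lies at positive distance from it, so $0<T\le 1$ almost surely (and $T$ is the first, not necessarily only, hitting time of that arc). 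With ${\gamma}_{\UnitD;1}$ parametrized on $[0,1]$, $\InitSegmDelta{0}$ is its initial segment up to time $T$ and $\FinalSegmDelta{0}$ the remainder, while $\InitSegmDelta{\delta}$ is the initial segment up to $\tau(\delta)$ and $\FinalSegmDelta{\delta}$ the remainder, where $\tau(\delta)$ denotes the continuous modification of the exit time of the large localization neighbourhood $U(\delta)$, i.e., of the first hitting time of the $\delta$-neighbourhood of $(\Unitp_2\Unitp_{2N})$.

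First I would establish that $\tau(\delta)\uparrow T$ almost surely as $\delta\shrinkto 0$. For a fixed realization this is the elementary monotone-limit argument already used in Lemma~\ref{lem: from localization to one curve}: $\tau(\delta)$ is nondecreasing as $\delta\shrinkto 0$, and by the construction of continuous stopping times in Appendix~\ref{app: continuous stopping times} it converges to the first hitting time of $(\Unitp_2\Unitp_{2N})$, which is $T$. The only difference with Lemma~\ref{lem: from localization to one curve} is that there this first hitting time equalled $1$ because ${\gamma}_{\UnitD;1}$ visited $\bdry\UnitD$ only at times $0$ and $1$ --- a feature specific to $\kappa\le 4$ (resp.\ to Assumption~\ref{ass: quantitative no boundary visits assumption}) that we neither have nor need here. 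Since the maps sending a continuous curve to its restriction to $[0,s]$ and to $[s,1]$ are continuous into $X(\overline{\UnitD})$ and continuous in $s$, the convergence $\tau(\delta)\to T$ promotes to $\InitSegmDelta{\delta}\to\InitSegmDelta{0}$ and $\FinalSegmDelta{\delta}\to\FinalSegmDelta{0}$ almost surely.

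For the marginal-law claim I would argue as in Corollary~\ref{cor: scaling limits of marginal laws identified}. For each fixed $\delta>0$, Assumption~\ref{ass: dr fcns converge to loc mult SLE} together with the commutative diagram~\eqref{dia: N-curve commutation with Loewner transform} identifies the weak limit of the discrete initial segments as the local multiple SLE initial segment in the localization neighbourhood $U(\delta)$; in particular the law of $\InitSegmDelta{\delta}$ is one and the same for every subsequential scaling limit. Letting $\delta\shrinkto 0$ and using the almost sure (hence weak) convergence $\InitSegmDelta{\delta}\to\InitSegmDelta{0}$ just obtained, the law of $\InitSegmDelta{0}$ is likewise common to all subsequential limits; by construction it is the local multiple SLE grown through the increasing neighbourhoods $U(\delta)$, which is thereby shown to yield a bona fide closed initial segment. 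The step I expect to need the most care is the continuous-stopping-time bookkeeping $\tau(\delta)\to T$, in particular verifying that the continuous modification does not overshoot past $T$ as $\delta\shrinkto 0$ and that the exceptional null set can be taken independent of the chosen subsequence; this, however, is handled exactly as in Lemma~\ref{lem: from localization to one curve}. (The sharper statement that for $\kappa\in(4,8)$ the segment $\InitSegmDelta{0}$ almost surely hits $(\Unitp_2\Unitp_{2N})$ strictly before its endpoint and away from the marked points --- so that $\FinalSegmDelta{0}$ lives in a strictly smaller domain with the marked points of Figure~\ref{fig: kappa > 4 sampling} --- is not part of this lemma, and is where condition (C') of Assumption~\ref{ass: cond C'} enters later.)
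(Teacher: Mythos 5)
Your proposal is correct and is exactly what the paper intends: the paper gives no separate argument for this lemma, stating only that one obtains it by repeating the proofs of Lemma~\ref{lem: from localization to one curve} and Corollary~\ref{cor: scaling limits of marginal laws identified} with the terminal time $1$ replaced by the first hitting time of the arc $(\Unitp_2\Unitp_{2N})$, which is precisely your argument. The only cosmetic nit is that monotonicity in $\delta$ holds for the raw hitting times rather than their continuous modifications, but the sandwich $T(\delta)<\tau(\delta)<S(\delta)$ with both bounds tending to $T$ gives the convergence regardless, as you in effect note.
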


By the almost sure convergence above, the local multiple SLE initial segment from $\Unitp_1$ to $(\Unitp_2 \Unitp_{2N})$ exists as a closed curve and yields $\InitSegmDelta{0}$, analogously to Corollary~\ref{cor: scaling limits of marginal laws identified}. 


One would expect that $\InitSegmDelta{0}$ terminates at an even-index marked boundary point if and only if the SLE has parameter $\kappa \in (0, 4]$. This indeed holds true, by an \textit{a posteriori} proof, relying on the properties of the chordal $\SLE (\kappa)$. In order not to mix \textit{a posteriori} and \textit{a priori} properties of the scaling limits, we will first finish the proof of Theorem~\ref{thm: loc-2-glob multiple SLE convergence, kappa le 4} without knowledge of where $\InitSegmDelta{0}$ terminates, and return to this discussion in Proposition~\ref{prop: initial segment end points} in Section~\ref{subsec: termination points of init segments}.

\subsubsection{\textbf{Distance from the initial segment to marked boundary arcs}}

With the marginal law of $\InitSegmDelta{0}$ determined, note that by definition, $\InitSegmDelta{0}$ only visits the boundary arc $( \Unitp_2 \Unitp_{2N} ) \subset \bdry \UnitD$ at its end point. We now explicate two simple but important consequences of this trivial observation.

First, let $B_1$ and $B_2$ be closed marked boundary arcs of $ \UnitD$, between some neighbouring marked boundary points $\Unitp_1, \ldots, \Unitp_{2N}$. Assume that $B_1$ and $B_2$ are not adjacent to the boundary point $\Unitp_1$ and not adjacent to each other. As $\delta' \shrinkto 0$, we have the following approximation of events:
\begin{align}
\label{eq: bdary visit event 1}
 & \{ \InitSegmDelta{0} \text{ visits } \delta' \text{-close to } B_1 \} \cap \{ \InitSegmDelta{0} \text{ visits } \delta' \text{-close to } B_2 \} \\
 \nonumber
& \shrinkto \{ \InitSegmDelta{0} \text{ visits } B_1 \} \cap \{ \InitSegmDelta{0} \text{ visits } B_2 \} = \emptyset.
\end{align}
In particular, taking $\delta'$ small enough, we can make the probability of the event~\eqref{eq: bdary visit event 1} arbitrarily small. 

Second, let $B_1$ and $B_2$ now be closed marked boundary arcs of $\bdry \UnitD$, not adjacent to the boundary point $\Unitp_1$ but possibly adjacent to each other. Fix a (small) $\delta' > 0$. This time, as $\delta'' \shrinkto 0$, we have
\begin{align}
\label{eq: bdary visit event 2}
 & \{ \InitSegmDelta{0} \text{ visits } \delta'' \text{-close to the $\delta'$-interior of } B_1 \} \cap \{ \InitSegmDelta{0} \text{ visits } \delta'' \text{-close to } B_2 \} \\
 \nonumber
& \shrinkto \{ \InitSegmDelta{0} \text{ visits the $\delta'$-interior of } B_1 \} \cap \{ \InitSegmDelta{0} \text{ visits } B_2 \} = \emptyset.
\end{align}

Useful interpretations of these computations play a role analogous to Assumption~\ref{ass: quantitative no boundary visits assumption} in the proof of Theorem~\ref{thm: loc-2-glob multiple SLE convergence, kappa le 4}, see illustration in Figure~\ref{fig: bdary visit events}.

\begin{figure}
\includegraphics[width=0.5\textwidth]{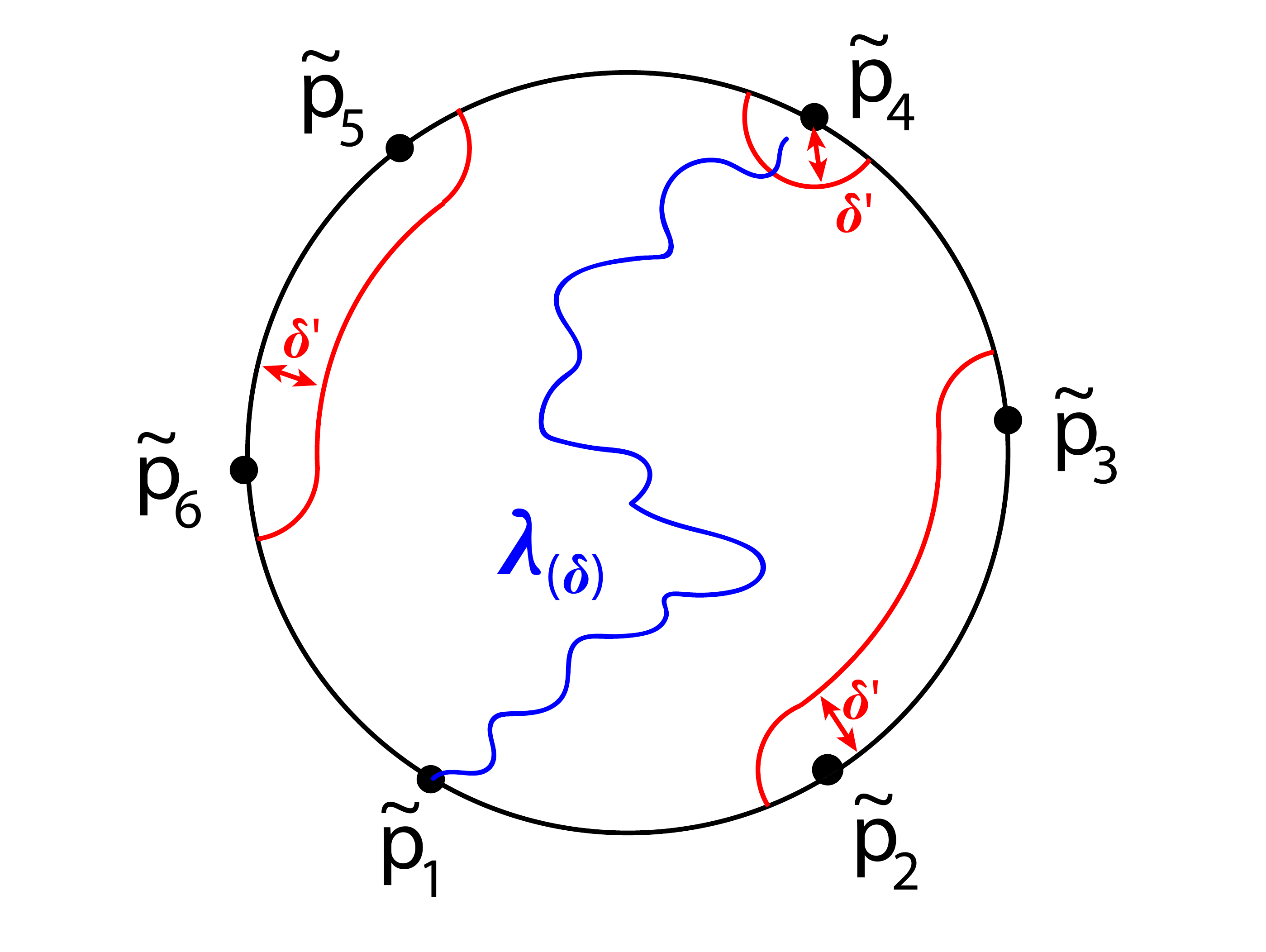}%
\includegraphics[width=0.5\textwidth]{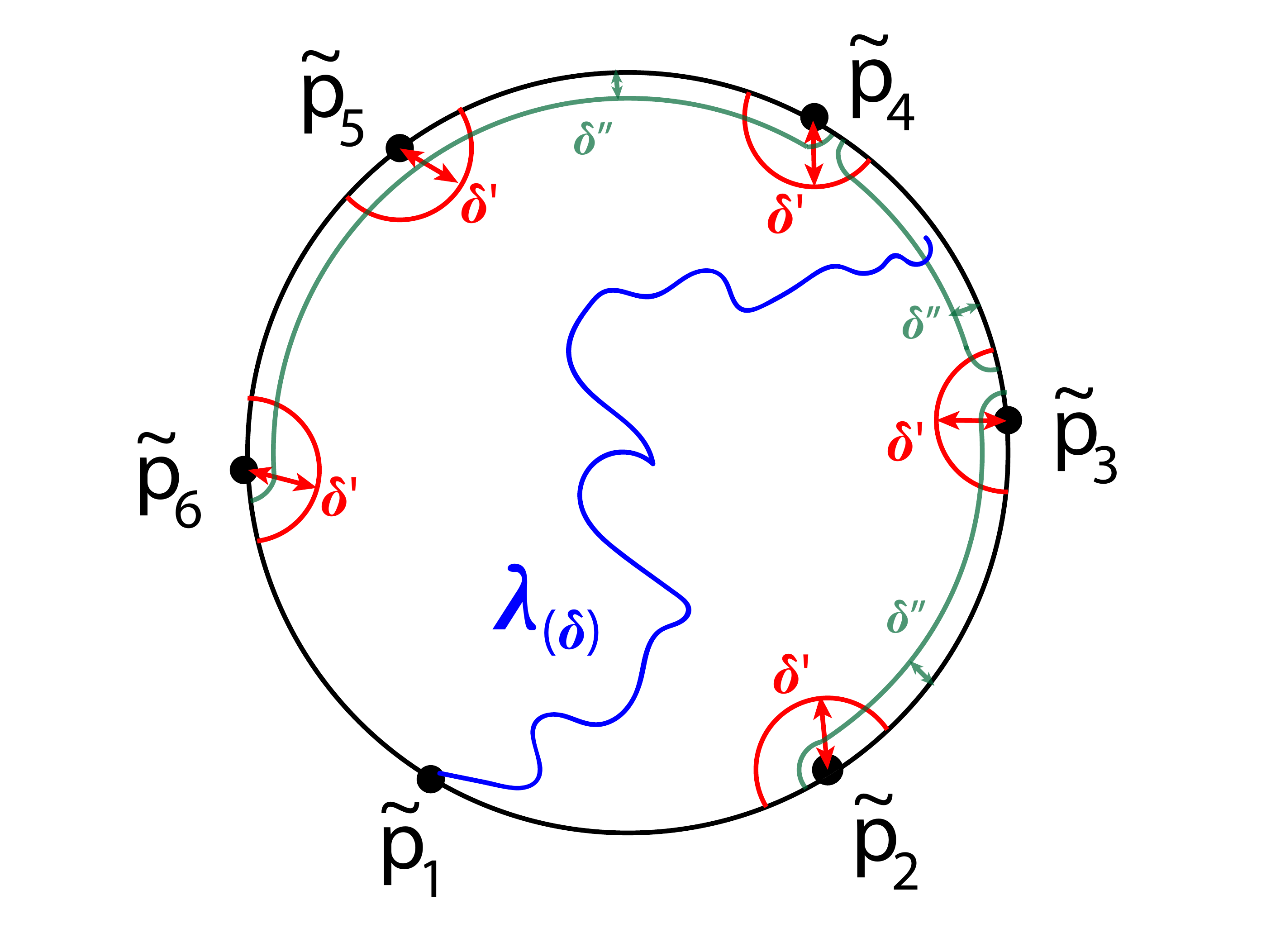}
\caption{ \label{fig: bdary visit events} Interpretation of the event approximations~\eqref{eq: bdary visit event 1} and~\eqref{eq: bdary visit event 2}.
\textbf{Left:} Fix any $\eps > 0$. Take $\delta'$ small enough so that the event~\eqref{eq: bdary visit event 1} has probability $\le \eps$, and take any $\delta < \delta'$. If the tip of the initial segment $\InitSegmDelta{\delta}$ is $\delta'$-close to a marked boundary point, the initial segment $\InitSegmDelta{\delta}$ will leave free a $\delta'$-neighbourhood of the boundary arcs not close to its end points, with probability $\ge 1 - \eps$. 
\textbf{Right:} Given the $\delta'$ above, take $\delta''$ small enough so that the event~\eqref{eq: bdary visit event 2} has probability $\le \eps$, and assume that $\delta$ is small enough so that also $\delta < \delta''$. If the tip of the initial segment $\InitSegmDelta{\delta}$ is not $\delta'$-close to any marked boundary point, it has to be $\delta''$-close to the $\delta'$-interior of some marked boundary arc. Then, the initial segment will leave free a $\delta''$-neighbourhood of the boundary arcs not close to its end points, with probability $\ge 1 - \eps$. }
\end{figure}

\subsubsection{\textbf{Identifying the full scaling limit}}

To complete the proof of Theorem~\ref{thm: loc-2-glob multiple SLE convergence, kappa le 4} under Assumption~\ref{ass: cond C'}, one proves Proposition~\ref{thm: k le 4 local-to-global NSLE} with that assumption in stead of Assumption~\ref{ass: quantitative no boundary visits assumption}. The proof of the Proposition remains identical, except for Lemma~\ref{lem: domain Markov of scaling limit curves}, which now has to be replaced by the following.

\begin{lem}
\label{lem: domain Markov of scaling limit curves 2}
Under Assumption~\ref{ass: cond C'} in the setup of Theorem~\ref{thm: loc-2-glob multiple SLE convergence, kappa le 4}, any subsequential limit $ ( {\gamma}_{\UnitD; 1}, \ldots, {\gamma}_{\UnitD; N} )$ of the curves $( {\gamma}^{(n)}_{\UnitD; 1}, \ldots, {\gamma}^{(n)}_{\UnitD; N} ) $ satisfies the following: the tip $\InitSegmDelta{0}(1)$ is almost surely not an odd-index marked boundary point, and for any bounded, non-negative, Lipschitz continuous test functions $g: X(\overline{\UnitD})^{N} \to \R$, and $f: X(\overline{\UnitD}) \to \R$,
\begin{align*}
\EX &[f ( \InitSegmDelta{0} )  g( \FinalSegmDelta{0}, {\gamma}_{\UnitD; 2}, \ldots, {\gamma}_{\UnitD; N}   ) ]  \\
= & 
\EX [ \mathbb{I} \{ \InitSegmDelta{0}(1) \text{ is an even-index marked boundary point } w \}  f ( \InitSegmDelta{0} )  \EXSLEcurves{(N-1)}_{\UnitD \setminus \InitSegmDelta{0} } [ g( w, \SLEcurve_{1},  \ldots, \SLEcurve_{N-1}   ) ] \\
& +
\EX [  \mathbb{I}\{ \InitSegmDelta{0}(1) \text{ is not a marked boundary point}\}   f ( \InitSegmDelta{0} ) \EXSLEcurves{\mathbf{e}}_{\UnitD \setminus \InitSegmDelta{0} }   [  \EXSLEcurves{\mathbf{o}}_{\UnitD \setminus \InitSegmDelta{0} } [  g( \mathbf{o}, \mathbf{e} ) ],
\end{align*}
where the notation $\EXSLEcurves{(N-1)}_{\UnitD \setminus \InitSegmDelta{0} }$ in the first expectation, i.e., when $\InitSegmDelta{0}$ traverses between two marked boundary points, is interpreted as the similar notation in Lemma~\ref{lem: domain Markov of scaling limit curves}; the notations $\EXSLEcurves{\mathbf{e}}_{\UnitD \setminus \InitSegmDelta{0} } $ and $\EXSLEcurves{\mathbf{o}}_{\UnitD \setminus \InitSegmDelta{0} } $ in turn are multiple SLE expectations in the components of $( \UnitD \setminus \InitSegmDelta{0} )$ with an even and odd number of marked boundary points, respectively, between the remaining marked boundary points and, in the odd component, the tip of the initial segment $\InitSegmDelta{0} $ (the notation $ g( \mathbf{o}, \mathbf{e} )$ is a slightly abusive shorthand since the original argument $\FinalSegmDelta{0}, {\gamma}_{\UnitD; 2}, \ldots, {\gamma}_{\UnitD; N} $ of $g$ should actually be replaced by the obvious re-labelling of the curves $\mathbf{o}$ and $\mathbf{e} $).
\end{lem}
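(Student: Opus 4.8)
The plan is to mimic, essentially line for line, the proof of Lemma~\ref{lem: domain Markov of scaling limit curves}, but replacing the single curve $\gamma_{\UnitD;1}$ with its up-to-swallowing initial segment $\InitSegmDelta{0}$ and using the event approximations~\eqref{eq: bdary visit event 1} and~\eqref{eq: bdary visit event 2} in place of Assumption~\ref{ass: quantitative no boundary visits assumption}. First I would assume (by extracting a weakly converging subsequence, suppressed in notation) that $(\gamma^{(n)}_{\UnitD;1},\ldots,\gamma^{(n)}_{\UnitD;N}) \to (\gamma_{\UnitD;1},\ldots,\gamma_{\UnitD;N})$, and recall from Lemma~\ref{lem: SLE up-to-swallowing segments exist} that $\InitSegmDelta{\delta} \to \InitSegmDelta{0}$ and $\FinalSegmDelta{\delta} \to \FinalSegmDelta{0}$ almost surely, with $\InitSegmDelta{0}$ a local multiple SLE initial segment in the sense of Assumption~\ref{ass: dr fcns converge to loc mult SLE}. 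The statement that $\InitSegmDelta{0}(1)$ is almost surely not an odd-index marked boundary point: this I would deduce using condition (C') (or (G')), since the parity obstruction means both a clockwise and a counterclockwise marked arc of the same parity would be touched by $\InitSegmDelta{0}$ at its tip, which an (G')-type annulus estimate centered at that marked boundary point forbids with probability one in the limit (Portmanteau).

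Next I would set up the triangle inequality exactly parallel to~\eqref{eq: another three eps proof}: bound the difference of the LHS and the claimed RHS by three terms --- (a) $\lvert \EX[\cdots] - \EX^{(n)}[\cdots]\rvert$ for the discrete curves, vanishing by weak convergence of $(\gamma^{(n)}_{\UnitD;1},\ldots)$ together with a.s.\ convergence of initial/final segments (here one uses continuity of the functionals composed with the continuous-stopping-time truncation, cf.\ Appendix~\ref{app: continuous stopping times}); (b) the discrete domain-Markov term, where the DDMP (property (ii) of the DDMP, splitting $\UnitD \setminus \InitSegmDelta{0}$ into its two components, one with an even and one with an odd number of marked boundary points, and possibly declaring the tip as a new marked point in the odd one) rewrites $\EX^{(n)}[f(\InitSegmLatt{n}{})\,g(\cdots)]$ as $\EX^{(n)}[f(\InitSegmLatt{n}{})\,\EX^{(n)}_{\UnitD\setminus\InitSegmLatt{n}{}}[g(\cdots)]]$; and (c) the difference between the discrete conditional expectation $\EX^{(n)}_{\UnitD\setminus\InitSegmLatt{n}{}}[g]$ and the continuum $(N-1)$-curve local-to-global SLE expectation, to be controlled by a uniform-convergence argument over a compact set of admissible initial segments.

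The uniform convergence in step (c) is where the event approximations enter, and it is the step most different from Lemma~\ref{lem: domain Markov of scaling limit curves}: I would argue by contradiction exactly as there, extracting deterministic initial segments $\DetCurve^{(n)} \to \DetCurve$ in a compact set $K_\eps$ and deriving $\EX^{(n)}_{\UnitD \setminus \DetCurve^{(n)}}[g] \to \EXSLEcurves{(N-1)}_{\UnitD \setminus \DetCurve}[g]$ from the inductive hypotheses (ii) and (iii) together with Theorem~\ref{thm: precompactness thm multiple curves}(B) applied to the conformal images; the novelty is that $K_\eps$ must now be chosen, using~\eqref{eq: bdary visit event 1} and~\eqref{eq: bdary visit event 2} (the two cases: tip near a marked boundary point, and tip near the $\delta'$-interior of a marked arc), so that the relevant components of $\UnitD \setminus \DetCurve$ are uniformly bounded-geometry domains with marked points bounded away from the curve --- precisely so the continuum $(N-1)$-curve SLE in them makes sense and is Carath\'eodory stable. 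I expect the main obstacle to be handling the two regimes of the tip's location simultaneously and transparently: when $\InitSegmDelta{0}(1)$ is an even marked point we get the first term of the RHS (a single $(N-1)$-curve SLE with the tip re-labelled as the even point $w$), whereas when it is not a marked point we get the second term (two independent SLEs, $\mathbf{o}$ and $\mathbf{e}$, in the odd and even components), and the bookkeeping of which marked arcs are "free" at scale $\delta'$ versus $\delta''$ has to be uniform enough that both cases are caught by a single choice of compact set $K_\eps$; once this is arranged, the contradiction argument and the final assembly via Proposition~\ref{prop: conditional expectation determine conditional law} go through verbatim as in the $N=1$-relaxed and Assumption~\ref{ass: quantitative no boundary visits assumption} proofs.
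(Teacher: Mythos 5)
Your high-level skeleton (two regimes according to where the tip $\InitSegmDelta{0}(1)$ lands, a triangle inequality against the discrete model, DDMP, and a uniform-convergence-by-contradiction step stabilized by Carath\'eodory continuity) does match the paper's strategy, and your use of the event approximations~\eqref{eq: bdary visit event 1}--\eqref{eq: bdary visit event 2} in place of Assumption~\ref{ass: quantitative no boundary visits assumption} is exactly right. However, there are two substantive gaps where "mimic Lemma~\ref{lem: domain Markov of scaling limit curves} verbatim" does not work.

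First, in the regime where the tip is $\delta'$-close to an even marked point $w$, the first term of the claimed right-hand side has $g(w,\eta_1,\ldots,\eta_{N-1})$: the final segment $\FinalSegmDelta{\delta}^{(n)}$, which is an argument of $g$, must be shown to collapse to the single point $w$. The event approximations~\eqref{eq: bdary visit event 1}--\eqref{eq: bdary visit event 2} only constrain the \emph{initial} segment; they say nothing about $\FinalSegmDelta{\delta}^{(n)}$. The paper handles this with a separate argument: the two large-modulus topological quadrilaterals between $\bdry B(w,2\delta')$ and $\bdry B(w,2\sqrt{\delta'})$ (Figure~\ref{fig: thick quads}), Condition (C') to force the discrete curve to actually terminate at $w$, and condition (multi-G) to confine the remainder to $B(w,2\sqrt{\delta'})$, uniformly in $n$ and $\delta$. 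This ingredient is absent from your proposal, and without it you cannot replace $\FinalSegmDelta{\delta}^{(n)}$ by $w$ in the argument of $g$.

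Second, and more seriously, in the regime where the tip is in the interior of an arc you invoke "property (ii) of the DDMP, splitting $\UnitD\setminus\InitSegmDelta{0}$ into its two components". Property (ii) applies only after conditioning on a \emph{full} curve; conditioning on an initial segment is property (i), and it yields the random curve model on the reduced graph with \emph{$N$ curves} (the tip becomes a new odd marked edge). The inductive hypothesis is on the number of curves, so it cannot be applied directly to this reduced $N$-curve model -- that is precisely the statement being proven. The paper resolves this by introducing the event $P(\delta)$ that the even-arc curves stay in the even component, and then performing a three-stage replacement: (i) replace the odd-component curves (which are fewer than $N$) by SLEs conditionally on the even curves, via a uniform-convergence claim; (ii) remove the even curves from the domain of those SLEs by a delicate Carath\'eodory-stability argument in which the tip of the initial segment is a marked prime end (this requires the Loewner-regularity choice of the compact sets $K_\eps$ and an argument that radial limits exist and the approximations are close); (iii) replace the even curves by SLEs symmetrically. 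Your single contradiction argument over deterministic initial segments $\DetCurve^{(n)}$ has no limiting object to converge to at that stage, so the heart of the proof is missing. (Minor: Proposition~\ref{prop: conditional expectation determine conditional law} is used in the subsequent assembly of Proposition~\ref{thm: k le 4 local-to-global NSLE}, not inside this lemma.)
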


\begin{proof}
First, the annulus crossing condition (G) guarantees that $ {\gamma}_{\UnitD; 1}$ will almost surely not hit an odd boundary point. Next, note that it then suffices to prove the claim for nonnegative functions $f$ such that for some $\tilde{\delta} > 0$, $f(\InitSegmDelta{0})$ takes the value $0$ if $\InitSegmDelta{0}$ visits the $\tilde{\delta}$-neighbourhood of some odd-index boundary point other than the first one. Indeed, the case of a general $f$ then follows by taking increasing approximations of $f$ and using Monotone convergence. We will assume this property of $f$, keeping $\tilde{\delta} > 0$ fixed throughout the proof.

Next, take a small auxiliary radius $\delta' < \tilde{\delta}$ and denote by $\InitSegmDelta{0} (1)$ the tip of the curve $\InitSegmDelta{0} $.
By our assumption on $f$, either $B(\InitSegmDelta{0} (1), \delta')$ contains an even-index marked boundary point, or it contains no marked boundary points, or $f( \InitSegmDelta{0} )=0$. We will treat the two first nontrivial cases separately, and in the end combine the results again in the limit $\delta' \shrinkto 0$. To formalize this, write $f= 1\cdot f$ and decompose $1$ into a sum continuous cutoff functions of $\lambda_0$; to be explicit, for instance
\begin{align*}
c_1^{(\delta')} (  \InitSegmDelta{0}   ) = \min \left\{ \; \frac{d( \; \InitSegmDelta{0}  (1), B(\Unitp_2, \delta' ) \cup \ldots \cup B(\Unitp_{2N} , \delta' ) \;)}{\delta' } \; ; 1 \; \right\}, \qquad
c_2^{(\delta')} ( \InitSegmDelta{0}  ) = 1 - c_1^{(\delta')} ( \InitSegmDelta{0}  ).
\end{align*}
We observe that both $c_1^{(\delta')} f$ and $c_2^{(\delta')} f$ are bounded, Lipschitz continuous, nonnegative functions, for each $\delta' > 0$. The function $c_1^{(\delta')} f$ takes nonzero values only when the tip $\InitSegmDelta{0}  (1)$ is at a distance $\ge \delta'$ from all the marked boundary points $\Unitp_2, \ldots, \Unitp_{2N}$, whereas the function $c_2^{(\delta')} f$ takes nonzero values only when the tip $\InitSegmDelta{0} (1)$ is at a distance $\le 2 \delta'$ from some even-index boundary point. 

\textbf{The term $c_2^{(\delta')} f$:} Consider first the term $c_2^{(\delta')} f$ and start by computing
\begin{align}
\label{eq: weak conv 1}
\EX &[c_2^{(\delta')} ( \InitSegmDelta{0}  )  f ( \InitSegmDelta{0}  )  g( \FinalSegmDelta{0} , {\gamma}_{\UnitD; 2}, \ldots, {\gamma}_{\UnitD; N}   ) ] \\
\label{eq: init segm conv}
&= \EX [c_2^{(\delta')} ( \InitSegmDelta{\delta} )  f ( \InitSegmDelta{\delta}  )  g( \FinalSegmDelta{\delta} , {\gamma}_{\UnitD; 2}, \ldots, {\gamma}_{\UnitD; N}   ) ] + o^{(\delta')}_\delta (1) \\
\label{eq: weak conv}
&= \EX^{(n)} [c_2^{(\delta')} ( \InitSegmDelta{\delta}^{(n)} )  f ( \InitSegmDelta{\delta}^{(n)} )  g( \FinalSegmDelta{\delta}^{(n)}, \gamma_{\UnitD; 2}^{(n)}, \ldots, \gamma_{\UnitD; N}^{(n)}   ) ] + o^{(\delta, \delta')}_n (1) + o^{(\delta')}_\delta (1)
\end{align}
where the equality~\eqref{eq: init segm conv} holds by the almost sure convergences in Lemma~\ref{lem: SLE up-to-swallowing segments exist} and $o^{(\delta')}_\delta (1)$ stands for ``$o(1)$ as $\delta \shrinkto 0$ for any fixed $\delta'$'', while~\eqref{eq: weak conv} holds by the weak convergence of the discrete models and $o^{(\delta, \delta')}_n (1)$ stands for ``$o(1)$ as $n \to \infty$ for any fixed $\delta$ and $\delta'$''. 

Recall that the function in the expectation~\eqref{eq: weak conv} takes values $\ne 0$ only when the tip $\InitSegmDelta{\delta}^{(n)} (1)$ is at a distance $\le 2 \delta'$  from some even-index boundary point $\Unitp_2, \ldots, \Unitp_{2N}$; call it $w$. Assume that $\delta<\delta'$ so that this is possible and condition on such a $\InitSegmDelta{\delta}^{(n)} $. Now, when having grown the initial segment $\InitSegmDelta{\delta}^{(n)} $ only up to the first hitting of $B(w, 2\delta')$, the circle arcs $\bdry B(w, 2\delta')$ and $\bdry B(w, 2\sqrt{\delta'} )$ allow one to define two topological quadrilaterals of modulus $1/o_{\delta'}(1)$ that separate the tip of that smaller segment and the boundary point $w$ from all other marked boundary points; see Figure~\ref{fig: thick quads}. By Condition (C'), it thus occurs with probability $\ge 1 - o_{\delta'}(1)$ that $\gamma_{\UnitD; 1}$ actually connects to the boundary point $w$, where the term $o_{\delta'}(1)$ is independent of $n$, $\delta$, or the shape of our initial segment of $\InitSegmDelta{\delta}^{(n)} $. Recalling from Corollary~\ref{cor: marginal precompactness} that the curves $\gamma_{\UnitD; 1}^{(n)}$ satisfies the condition (multi-G), we deduce that also with probability $\ge 1 - o_{\delta'}(1)$, the remainder of the curve $\gamma_{\UnitD; 1}^{(n)}$ after the first hitting of $B(w, 2\delta')$ never exits $B(w,  2 \sqrt{\delta'})$. In particular, this holds for the smaller final segment $\FinalSegmDelta{\delta}^{(n)} $. 

\begin{figure}
\includegraphics[width=0.45\textwidth]{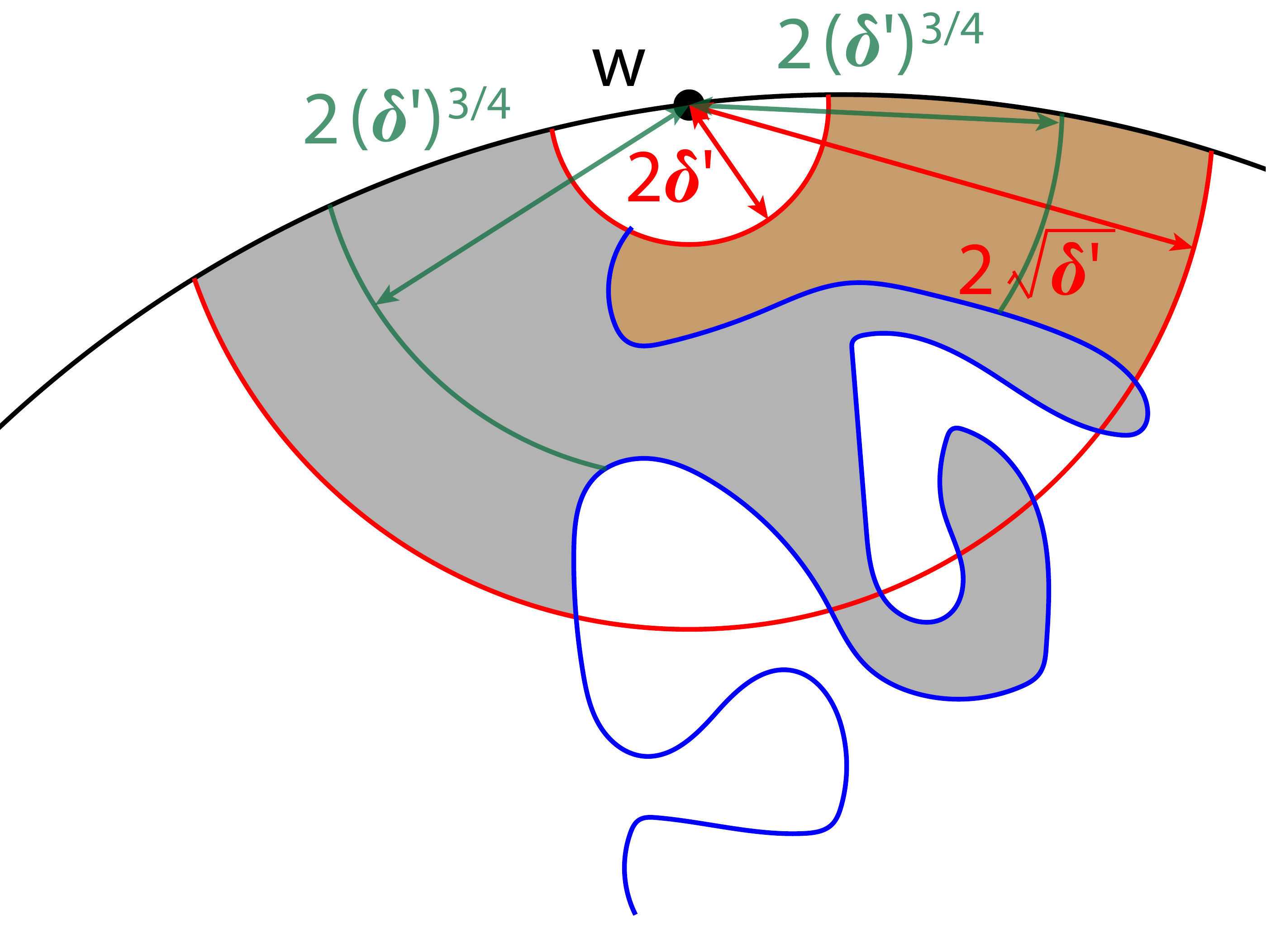}%
\includegraphics[width=0.45\textwidth]{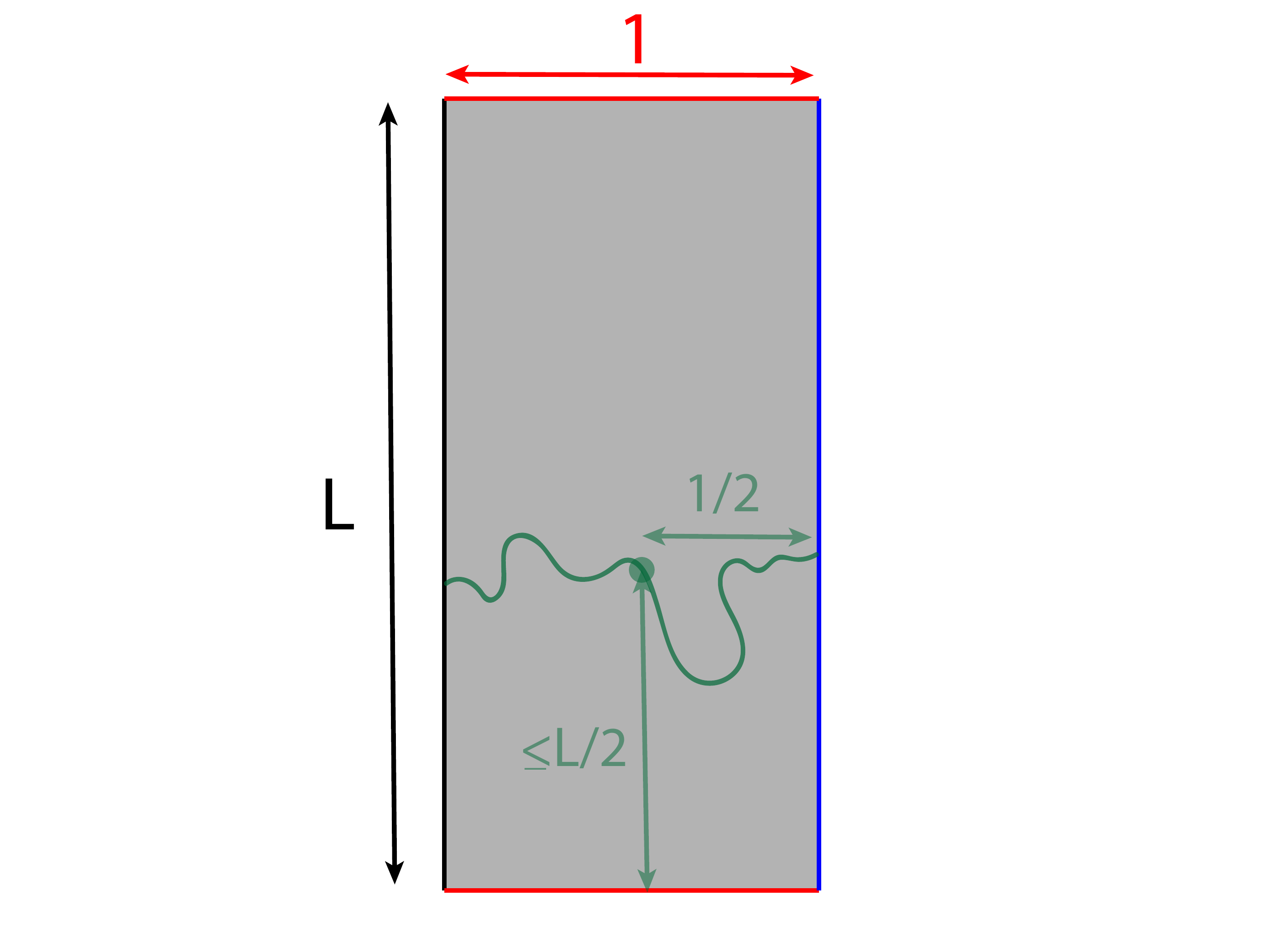}
\caption{\label{fig: thick quads}
Left: In blue the initial segment of $\InitSegmDelta{\delta}^{(n)} $ up to the first hitting of $B(w, 2\delta')$. In gray and brown the two topological quadrilaterals with large conformal moduli (both with two sides drawn in red, one in blue, and one in black). Left and Right: a sketch of proof for a lower bound $L= 1/o_{\delta'} (1)$ for the conformal modulus $L$ of the gray quadrilateral, independent of the shape of the blue initial segment: an arc of the circle $B(w, 2(\delta')^{3/4})$ (in green) crosses it from black side to blue side. Beurling's estimate shows that a positive power of $\delta'$ upper bounds the harmonic measure of the red sides as seen form \emph{any} point on the green arc. Mapping everything conformally to the rectangle $(0, 1)\times (0, L)$, the image of the green arc contains a point at distance $1/2$ from the black and blue sides and $\le L/2$ from one red side. Since the harmonic measure of the red side is upper bounded by a power of $\delta'$, we obtain a lower bound $L= 1/o_{\delta'} (1)$ for $L$.
}
\end{figure}

Let $c^{( \delta')} ( \FinalSegmDelta{\delta}^{(n)}  )$ now be a continuous cutoff function taking value $1$ if $\FinalSegmDelta{\delta}^{(n)} $ never exits the ball $B(w, 2 \sqrt{\delta'})$ and $0$ if it exits the ball $A(w, 3 \sqrt{\delta'})$. By the above paragraph, making an error $o_{\delta'}(1)$ uniform over $n$ and $\delta$ with $\delta < \delta'$, we can add a factor $c^{( \delta')} ( \FinalSegmDelta{\delta}^{(n)}  )$ to~\eqref{eq: weak conv}. Likewise, since the function $g$ is bounded and Lipschitz continuous, we can replace $\FinalSegmDelta{\delta}^{(n)}$ in its argument by the curve consisting of the single point $w$ in~\eqref{eq: init segm conv} within the same error:
\begin{align*}
\eqref{eq: weak conv 1} &= 
\EX^{(n)} [  c^{( \delta')} ( \FinalSegmDelta{\delta}^{(n)}  ) c_2^{(\delta')} ( \InitSegmDelta{\delta}^{(n)}  )  f ( \InitSegmDelta{\delta}^{(n)} )  g( w, \gamma_{\UnitD; 2}^{(n)}, \ldots, \gamma_{\UnitD; N}^{(n)}   ) ] ]
+ o_{\delta'}(1)  + o^{(\delta, \delta')}_n (1) + o^{(\delta')}_\delta (1).
\end{align*}
The limit of the above expectation as $n \to \infty$ can now be treated as in the proof of Lemma~\ref{lem: domain Markov of scaling limit curves}, by using the inductively assumed Proposition~\ref{thm: k le 4 local-to-global NSLE}(ii) for the curves $\gamma_{\UnitD; 2}^{(n)}, \ldots, \gamma_{\UnitD; N}^{(n)} $ (the estimate in Figure~\ref{fig: bdary visit events}(left) replaces Assumption~\ref{ass: quantitative no boundary visits assumption}).
This yields
\begin{align*}
\EX^{(n)} & [ c^{( \delta')} ( \FinalSegmDelta{\delta}^{(n)}  )  c_2^{(\delta')} ( \InitSegmDelta{\delta}^{(n)}  )  f ( \InitSegmDelta{\delta}^{(n)} )  g( w, \gamma_{\UnitD; 2}^{(n)}, \ldots, \gamma_{\UnitD; N}^{(n)}   ) ] ] \\
&=
\EX [ c^{( \delta')} ( \FinalSegmDelta{\delta}^{(n)}  )  c_2^{(\delta')} ( \InitSegmDelta{\delta}^{(n)}  )  f ( \InitSegmDelta{\delta}^{(n)} )  \EXSLEcurves{(N-1)}_{\UnitD \setminus \gamma_{\UnitD; 1}^{(n)}  } [ g( w, \SLEcurve_{1},  \ldots, \SLEcurve_{N-1}   ) ] + o^{(\delta, \delta')}_n (1) \\
&=
\EX [ c^{( \delta')} ( \FinalSegmDelta{\delta}  ) c_2^{(\delta')} ( \InitSegmDelta{\delta} )  f ( \InitSegmDelta{\delta} )  \EXSLEcurves{(N-1)}_{\UnitD \setminus \gamma_{\UnitD; 1}  } [ g( w, \SLEcurve_{1},  \ldots, \SLEcurve_{N-1}  ) ] + o^{(\delta, \delta')}_n (1),
\end{align*}
where the last step used the weak convergence $\gamma_{\UnitD; 1}^{(n)} \to \gamma_{\UnitD; 1}$. (This is possible since when $ c^{( \delta')} ( \FinalSegmDelta{\delta}^{(n)}  )  c_2^{(\delta')} ( \InitSegmDelta{\delta}^{(n)}  ) \ne 0$, the expectation $\EXSLEcurves{(N-1)}_{\UnitD \setminus \gamma_{\UnitD; 1}  } $ makes sense by the estimate in Figure~\ref{fig: bdary visit events}(left) and is a continuous function of $\gamma_{\UnitD; 1}$ by inductive assumption~(iii).) Finally, substituting this back, we have
\begin{align*}
\eqref{eq: weak conv 1} &= \EX [ c^{( \delta')} ( \FinalSegmDelta{\delta} )  c_2^{(\delta')} ( \InitSegmDelta{\delta} )  f ( \InitSegmDelta{\delta} )  \EXSLEcurves{(N-1)}_{\UnitD \setminus \gamma_{\UnitD; 1}  } [ g( w, \SLEcurve_{1},  \ldots, \SLEcurve_{N-1}  ) ] + o_{\delta'}(1)  + o^{(\delta, \delta')}_n (1) + o^{(\delta')}_\delta (1).
\end{align*}

Note that in the above equation, $n$ only appears in the Landau o-terms. We would like to achieve this for $\delta$ and $\delta'$, too.
Thus, study the expectation above first in the limit $\delta \shrinkto 0$ using the almost sure convergence of Lemma~\ref{lem: SLE up-to-swallowing segments exist}, and then in the limit $\delta' \shrinkto 0$. Using Bounded convergence theorem in these limits yields
\begin{align*}
\eqref{eq: weak conv 1} =& \EX [ c^{( \delta')} ( \FinalSegmDelta{0} )  c_2^{(\delta')} ( \InitSegmDelta{0} )  f ( \InitSegmDelta{0} )  \EXSLEcurves{(N-1)}_{\UnitD \setminus \gamma_{\UnitD; 1}  } [ g( w, \SLEcurve_{1},  \ldots, \SLEcurve_{N-1}    ) ] + o_{\delta'}(1)  + o^{(\delta, \delta')}_n (1) + o^{(\delta')}_\delta (1) \\
=& \EX [  \mathbb{I} \{ \FinalSegmDelta{0} \text{ is a point}  \}  \mathbb{I} \{ \InitSegmDelta{0} \text{ ends at even-index boundary point } w \}  f ( \InitSegmDelta{0} )  \EXSLEcurves{(N-1)}_{\UnitD \setminus \gamma_{\UnitD; 1}  } [ g( w, \SLEcurve_{1},  \ldots, \SLEcurve_{N-1}   ) ]\\ 
& + o_{\delta'}(1)  + o^{(\delta, \delta')}_n (1) + o^{(\delta')}_\delta (1)
\end{align*}

Now, the three Landau o-terms above can be all made simultaneously arbitrarily small by choosing first $\delta'$ small enough, then $\delta$ small enough, and then $n$ large enough. In addition, Condition (G) for the curves $\gamma_{\UnitD; 1}^{(n)}$ guarantees that $\gamma_{\UnitD; 1}$ almost sure only visits an even-index marked boundary point once, namely, its target point in the end of the curve. Thus, we obtain
\begin{align*}
\lim_{\delta' \shrinkto 0} \eqref{eq: weak conv 1} 
&= \EX [ \mathbb{I} \{ \InitSegmDelta{0}(1) \text{ is an even-index boundary point } w \}  f ( \InitSegmDelta{0} )  \EXSLEcurves{(N-1)}_{\UnitD \setminus \InitSegmDelta{0}   } [ g( w, \SLEcurve_{1},  \ldots, \SLEcurve_{N-1}  ) ].
\end{align*}
This finishes our treatment of the term $c_2^{(\delta')} f$.


\textbf{The term $c_1^{(\delta')} f$:} Consider next the term $c_1^{(\delta')} f$ and start by computing
\begin{align}
\label{eq: DMP 1}
\EX &[c_1^{(\delta')} ( \lambda_0 )  f ( \lambda_0 )  g( \eta_0, \gamma_{\UnitD; 2}, \ldots, \gamma_{\UnitD; N}   ) ] \\
\nonumber
&= \EX [c_1^{(\delta')} ( \InitSegmDelta{\delta} )  f ( \InitSegmDelta{\delta} )  g( \FinalSegmDelta{\delta}, \gamma_{\UnitD; 2}, \ldots, \gamma_{\UnitD; N}   ) ] + o^{(\delta')}_\delta (1) \\
\nonumber
&= \EX^{(n)} [c_1^{(\delta')} ( \InitSegmDelta{\delta}^{(n)} )  f ( \InitSegmDelta{\delta}^{(n)} )  g( \FinalSegmDelta{\delta}^{(n)}, \gamma_{\UnitD; 2}^{(n)},  \ldots, \gamma_{\UnitD; N}^{(n)}   ) ] + o^{(\delta, \delta')}_n (1) + o^{(\delta')}_\delta (1) \\
\label{eq: DMP}
&= \EX^{(n)} [c_1^{(\delta')} (\InitSegmDelta{\delta}^{(n)} ) f ( \InitSegmDelta{\delta}^{(n)} )  \EX^{(n)}_{\UnitD \setminus \InitSegmDelta{\delta}^{(n)} } [ g( \FinalSegmDelta{\delta}^{(n)}, \gamma_{\UnitD; 2}^{(n)}, \ldots, \gamma_{\UnitD; N}^{(n)}   ) ] ] + o^{(\delta, \delta')}_n (1) + o^{(\delta')}_\delta (1),
\end{align}
where the first two steps are similar to~\eqref{eq: init segm conv} and~\eqref{eq: weak conv}, and step~\eqref{eq: DMP} is an application of the discrete domain Markov property.

Recall that $c_1^{(\delta')} (\InitSegmDelta{\delta}^{(n)} )  f ( \InitSegmDelta{\delta}^{(n)} )  $ takes nonzero values only when the tip $\InitSegmDelta{\delta}^{(n)} (1)$ is at a distance $\ge \delta'$ from all the marked boundary points. Condition on such a $\InitSegmDelta{\delta}^{(n)} $, and assume that $\sqrt{\delta} < \delta'$ so that the tip of $\InitSegmDelta{\delta}^{(n)} $ is very close to $\bdry \UnitD$.
In that case, removing the curve $\InitSegmDelta{\delta}^{(n)} $ and the ball $B(\InitSegmDelta{\delta}^{(n)} (1), \sqrt{\delta} )$ at its tip, the remaining boundary $\bdry \UnitD \setminus \InitSegmDelta{\delta}^{(n)} \setminus B(\InitSegmDelta{\delta}^{(n)} (1), \sqrt{\delta} ) $ will consist of two connected components, containing altogether $N-1$ marked boundary points. Both of these boundary arcs contain a nonzero amount of boundary points, and it is natural to call \emph{even} the arc with an even number of them, and the other one \emph{odd}.
We call the connected components of $ \UnitD \setminus \InitSegmDelta{\delta}^{(n)} \setminus B(\InitSegmDelta{\delta}^{(n)} (1), \sqrt{\delta} ) $ adjacent to these arcs even and odd, respectively.
Note also that given $\InitSegmDelta{\delta}^{(n)} $, we know the indices of the curves $\gamma_{\UnitD; 1}^{(n)}, \gamma_{\UnitD; 2}^{(n)}, \ldots$ that start from the even and odd arc, as each curve starts from an odd-index boundary point. We denote by $\mathbf{e}^{(n)}$ the collection of curves starting from the even arc and by $\mathbf{o}^{(n)}$ the rest, i.e., the remainder curve $\FinalSegmDelta{\delta}^{(n)}$ and the curves starting in the odd components. We will also denote, with a slight abuse of notation in the curve indexing
\begin{align*}
 g( \FinalSegmDelta{\delta}^{(n)}, \gamma_{\UnitD; 2}^{(n)}, \ldots, \gamma_{\UnitD; N}^{(n)}   ) = g( \mathbf{o}^{(n)},  \mathbf{e}^{(n)} ) .
\end{align*}

First, identically to the case of the term $c_2^{(\delta')} f $, observe that given $\InitSegmDelta{\delta}^{(n)} $ there is a conditional probability $1-o_{\delta}(1)$ that the even curves $\mathbf{e}^{(n)}$ only intersect one connected component of $\UnitD \setminus \InitSegmDelta{\delta}^{(n)} \setminus B( \InitSegmDelta{\delta}^{(n)} (1),  \sqrt{\delta })$. Denote this event by $P(\delta)$. Note that on this event, the curves $\mathbf{e}^{(n)}$ are only adjacent to the even-component boundary points.
Note also that $P(\delta)$ only depends on $\InitSegmDelta{\delta}^{(n)} $ and the even curves $\mathbf{e}^{(n)}$. With a slight abuse of notation, we will denote the corresponding Borel set on the space of collections of curves $X(\overline{\UnitD})^N$ also by $P(\delta)$.


Next, take $\delta'' > 0$ with $\delta < \delta'' < \delta'$. We denote $\InitSegmDelta{\delta} \in E( \delta', \delta'')$ if the tip $\InitSegmDelta{\delta}^{(n)} (1)$ is at a distance $\ge \delta'$ from all the marked boundary points (i.e., $c_1^{(\delta')} (\InitSegmDelta{\delta}^{(n)} )  f (\InitSegmDelta{\delta}^{(n)} ) \ne 0$), and the curve $\InitSegmDelta{\delta}^{(n)}$ visits at distance $\le \delta''$ from some marked boundary arc not closest to the tip of $\InitSegmDelta{\delta}^{(n)} $ or adjacent to $\Unitp_1$. It follows from the deduction in Figure~\ref{fig: bdary visit events} that for the measure $\PR$ of the weak limit $\InitSegmDelta{\delta}$
\begin{align*}
\PR[\InitSegmDelta{\delta} \in E(\delta', \delta'')] = o_{\delta''}^{(\delta')}(1),
\end{align*}
where the $o_{\delta''}^{(\delta')}(1)$-term is independent of $\delta$ apart from the requirement $\delta < \delta'$. Note that  $E(\delta', \delta'')$ is  a closed set in $X(\overline{\UnitD})$.
 By Portmanteau's theorem on weak convergence, and the weak convergence $\InitSegmDelta{\delta}^{(n)} \to \InitSegmDelta{\delta}$, it follows that
\begin{align*}
\PR^{(n)} [\InitSegmDelta{\delta}^{(n)} \in E( \delta', \delta'')] = o_{\delta''}^{(\delta')}(1),
\end{align*}
where $o_{\delta''}^{(\delta')}(1)$ is small uniformly over all $n$ large enough, $n > n_0 (\delta, \delta', \delta'') $. 

Using the boundedness of the involved functions, we can add the indicator functions $\mathbb{I}_{ P(\delta) } (\InitSegmDelta{\delta}^{(n)}, \mathbf{e}^{(n)})$ and $\mathbb{I}_{E( \delta', \delta'')^C} ( \InitSegmDelta{\delta}^{(n)} )$ to~\eqref{eq: DMP} within errors of $o_{\delta}(1)$ and $o_{\delta''}^{(\delta')}(1)$, respectively:
\begin{align*}
\eqref{eq: DMP 1} =& \EX^{(n)} [  \mathbb{I}_{E( \delta', \delta'')^C} ( \InitSegmDelta{\delta}^{(n)} )  c_1^{(\delta')} (\InitSegmDelta{\delta}^{(n)} )   f ( \InitSegmDelta{\delta}^{(n)} ) \EX^{(n)}_{\UnitD \setminus \InitSegmDelta{\delta}^{(n)} } [ \mathbb{I}_{ P(\delta) } (\InitSegmDelta{\delta}^{(n)}, \mathbf{e}^{(n)}) g( \mathbf{o}^{(n)},  \mathbf{e}^{(n)}  ) ] ]+ o_{\delta''}^{(\delta')}(1) + o^{(\delta, \delta')}_n (1) + o^{(\delta')}_\delta (1) \\
=& \EX^{(n)} [ \mathbb{I}_{ E( \delta', \delta'')^C } (\InitSegmDelta{\delta}^{(n)}) c_1^{(\delta')} (\InitSegmDelta{\delta}^{(n)} )   f ( \InitSegmDelta{\delta}^{(n)} ) \EX^{(n)}_{\UnitD \setminus \InitSegmDelta{\delta}^{(n)} } [ \mathbb{I}_{ P(\delta) } (\InitSegmDelta{\delta}^{(n)}, \mathbf{e}^{(n)}) \EX^{(n)}_{\UnitD \setminus \InitSegmDelta{\delta}^{(n)} \setminus \mathbf{e}^{(n)} } [ g( \mathbf{o}^{(n)},  \mathbf{e}^{(n)}  ) ] ] ]\\
& + o_{\delta''}^{(\delta')}(1) + o^{(\delta, \delta')}_n (1) + o^{(\delta')}_\delta (1),
\end{align*}
where the latter step used the DDMP

Next, we replace the odd curves by SLEs: By tightness we may assume that $\mathbf{e}^{(n)}$ and $\InitSegmDelta{\delta}^{(n)} $ belong to a compact set $K_\eps$ carrying a large probability mass. The intersection of $K_\eps$ with the closed sets $\overline{E(\delta, \delta', \delta'')^C}$ and $ \overline{P(\delta) }$ is compact. \\
\textbf{Claim:} We can choose the former compact sets $K_\eps$ suitably, so that uniformly over all $\mathbf{e}^{(n)}$ and $\InitSegmDelta{\delta}^{(n)} $ in the latter compact sets, we have the convergence as $n \to \infty$
 \begin{align}
 \label{eq: another uniform convergence}
\vert  \EX^{\mathbf{o}^{(n)} }_{\UnitD \setminus \InitSegmDelta{\delta}^{(n)} \setminus \mathbf{e}^{(n)} }  [  g( \mathbf{o}^{(n)},  \cdot  ) ] 
- \EXSLEcurves{\mathbf{o}}_{\UnitD \setminus \InitSegmDelta{\delta}^{(n)} \setminus \mathbf{e}^{(n)} }   [  g( \mathbf{o}, \cdot ) ] \vert  = o^{(\delta, \delta', \delta'')}_n (1),
 \end{align}
 also uniformly over any arguments\footnote{We will later wish to repeat almost identical computations for the even curves, and for that purpose, it is more transparent to leave this argument unspecified.} $\cdot$ of $g$; here the $m$-SLE curves $\mathbf{o}$ now run in $\UnitD \setminus \InitSegmDelta{\delta}^{(n)} \setminus \mathbf{e}^{(n)} $ between the tip of $\InitSegmDelta{\delta}^{(n)}$ and the limiting boundary points $\Unitp_1, \ldots, \Unitp_{2N}$ on $\bdry \UnitD$. The proof of this claim is based on the inductively assumed SLE convergence in Proposition~\ref{thm: k le 4 local-to-global NSLE}(ii). We have chosen to leave this proof to the reader, since we present a very similar but perhaps more difficult proof in the next paragraph.
 Substituting~\eqref{eq: another uniform convergence} into~\eqref{eq: DMP 1} yields
 \begin{align*}
\eqref{eq: DMP 1}
=& \EX^{(n)} [ \mathbb{I}_{ E( \delta', \delta'')^C } (\InitSegmDelta{\delta}^{(n)}) c_1^{(\delta')} (\InitSegmDelta{\delta}^{(n)} )   f ( \InitSegmDelta{\delta}^{(n)} ) \EX^{(n)}_{\UnitD \setminus \InitSegmDelta{\delta}^{(n)} } [ \mathbb{I}_{ P(\delta) } (\InitSegmDelta{\delta}^{(n)}, \mathbf{e}^{(n)}) \EXSLEcurves{\mathbf{o}}_{\UnitD \setminus \InitSegmDelta{\delta}^{(n)} \setminus \mathbf{e}^{(n)} }   [  g( \mathbf{o}, \mathbf{e}^{(n)} ) ] ] ]\\
& + o_{\delta''}^{(\delta')}(1) + o^{(\delta, \delta', \delta'')}_n (1) + o^{(\delta')}_\delta (1),
\end{align*}

Now, we remove the even curves from the domain of the SLEs: Equip both $\UnitD \setminus \InitSegmDelta{\delta}^{(n)} \setminus \mathbf{e}^{(n)}$ and $\UnitD \setminus \InitSegmDelta{\delta}^{(n)}$ with the odd amount of marked boundary points $\Unitp_1, \ldots, \Unitp_{2N}$ on the odd side and one at the tip of the curve $\InitSegmDelta{\delta}^{(n)}$; we keep these marked points implicit in the notation. Now, by tightness, $\gamma^{(n)}_{\UnitD; 1}$ lie on a compact set $K_\eps$ with probability $1-\eps$. \\
\textbf{Claim:} the sets $K_\eps$ may be chosen so that uniformly over $\gamma^{(n)}_{\UnitD; 1} \in K_\eps$ with $\InitSegmDelta{\delta}^{(n)} \in  \overline{E( \delta', \delta'')^C}$ and
$\mathbf{e}^{(n)}$ such that $ (\InitSegmDelta{\delta}^{(n)}, \mathbf{e}^{(n)}) \in \overline{P(\delta)}$, we have the convergence
 \begin{align}
 \label{eq: yet another uniform convergence}
\vert  \EXSLEcurves{\mathbf{o}}_{\UnitD \setminus \InitSegmDelta{\delta}^{(n)} \setminus \mathbf{e}^{(n)} }   [  g( \mathbf{o}, \cdot ) ] 
- \EXSLEcurves{\mathbf{o}}_{\UnitD \setminus \InitSegmDelta{\delta}^{(n)} }   [  g( \mathbf{o}, \cdot ) ] 
\vert  = o^{( \delta', \delta'')}_\delta (1)
 \end{align}
 for any arguments $\cdot$ of $g$ that are same in both expectations; here the $o^{( \delta', \delta'')}_\delta (1)$ term is independent of $n$.
 
\begin{proof}[Proof of claim]
Fix $\delta'$ and $\delta''$, and take a sequence of $\delta$:s converging to zero. We suppress the sequence notation, as well as all subsequence notations to come.
Assume for a contradiction that the claim does not hold, i.e., we can choose a (sub)sequence of $n = n(\delta)$:s (not necessarily growing to infinity!) and deterministic curves $\DetCurve^{(\delta)} \in K_\eps$, with initial segments $\DetCurve^{(\delta)}_\delta$ and arguments $a_\delta$ of $g$, satisfying
\begin{align}
\label{eq: SLE stability}
\vert  \EXSLEcurves{\mathbf{o}}_{\UnitD \setminus \DetCurve^{(\delta)}_\delta \setminus \mathbf{e}^{(n)} }   [  g( \mathbf{o}, a_\delta ) ] 
- \EXSLEcurves{\mathbf{o}}_{\UnitD \setminus \DetCurve^{(\delta)}_\delta }   [  g( \mathbf{o}, a_\delta ) ] 
\vert  > \ell
\end{align}
for some $\ell > 0$.

We would now like to use the Carath\'{e}odory stability of multiple SLEs, i.e., inductive assumption~(iii). First, recall that domains that are bounded from inside and outside are sequentially compact with respect to Carath\'{e}odory convergence (with respect to a reference point in the domain bounding from inside). Here, the domains $\UnitD \setminus \DetCurve^{(\delta)}_\delta \setminus \mathbf{e}^{(n)}$ and $\UnitD \setminus \DetCurve^{(\delta)}_\delta $ are bounded from inside due to the event of $E( \delta', \delta'')$, and we may thus assume that they converge in the Carath\'{e}odory sense (with a suitable reference point at a distance $< \delta''$ from the odd boundary arc of $\bdry \UnitD$). It is easily deduced that both sequences of domains converge to the same limit. By Schwarz reflection of conformal maps over $\bdry \UnitD$, this Carath\'{e}odory convergence can be extended to domains with the marked boundary points on $\bdry \UnitD$.  Also closeness of these boundary approximations is trivial. To apply inductive assumption~(iii), we have to reach these conclusions for the marked boundary points at the tip of $\DetCurve^{(\delta)}_\delta$.


First, by the compactness of $K_\eps$, we may  assume that the curves $\DetCurve^{(\delta)}$ converge, $\DetCurve^{(\delta)} \to \DetCurve$ in $X(\overline{\UnitD})$. Assume that the curves $\DetCurve^{(\delta)}$ and $ \DetCurve$ come as parametrized representatives such that this uniform convergence takes place as functions, too. Next, recall that the curves $\gamma^{(n)}_{\UnitD; 1}$ satisfy condition (G) by Corollary~\ref{cor: marginal precompactness}. In the proof of its consequence \cite[Theorems~1.5]{KS} (stated as Theorem~\ref{thm: one curve precompactness} in this paper), the compact sets $K_\eps$ of $X(\overline{\UnitD})$ are chosen so that the curves in them can be described by a Loewner equation. Choosing our $K_\eps$ in this manner, we thus know that $\DetCurve$ has a Loewner description (when mapped to $\bH$ so that its end point is at infinity).

Now, by compactness of the interval $[0, 1]$, we may assume that the times $T_\delta$ at which $\DetCurve^{(\delta)}$ is stopped to obtain $\DetCurve^{(\delta)}_\delta$ converge, $T_\delta \to T$. It follows that $\DetCurve^{(\delta)}_\delta = \DetCurve^{(\delta)} ([0, T_\delta]) \to \DetCurve ([0, T])$ in $X(\overline{\UnitD})$. It also follows that $\DetCurve ([0, T])$ hits $\bdry \UnitD$ at time $T$. Also, since $\DetCurve$ has a Loewner description and connects to the odd component, the tip $\DetCurve (T)$ of its initial segment $\DetCurve ([0, T])$ is on the boundary of the odd component of $\UnitD \setminus \DetCurve ([0, T])$. Now, the Carath\'{e}odory convergence $(\UnitD \setminus \DetCurve^{(\delta)}_\delta;  \DetCurve^{(\delta)} ( T_\delta) ) \to (\UnitD \setminus \DetCurve ([0, T]);  \DetCurve ( T) )$ can be deduced, e.g., by showing that the harmonic measures in $\UnitD \setminus \DetCurve^{(\delta)}_\delta$ of the boundary segment from $\Unitp_2$ clockwise to the tips of the curves $\DetCurve^{(\delta)}_\delta$ converge to those with curves $\DetCurve$. (The same holds for the Carath\'{e}odory convergence $(\UnitD \setminus \DetCurve^{(\delta)}_\delta \setminus \mathbf{e}^{(n)};  \DetCurve^{(\delta)} ( T_\delta) ) \to (\UnitD \setminus \DetCurve ([0, T]);  \DetCurve ( T) )$  with any $\mathbf{e}^{(n)}$ such that $\overline{P(\delta)}$ occurs.)

We yet need to show that the existence of radial limits and closeness in these Carath\'{e}odory approximations. For the first property, it is easy to that the boundary of $\UnitD \setminus \DetCurve ([0, T])$ is locally connected, using the continuity of $\DetCurve$. This implies (among stronger consequences) that radial limits exist at the prime end $\DetCurve ( T)$ of $\UnitD \setminus \DetCurve ([0, T])$, see~\cite[Theorem~2.1]{Pommerenke-boundary_behaviour_of_conformal_maps}.
Closeness of the approximations follows by comparing how $\DetCurve^{(\delta)} ([T_\delta, 1])$ and $\DetCurve ([T, 1])$ exit small neighbourhoods of $\DetCurve ( T)$.

Finishing the proof is now easy. Due to the inductive assumption~(iii) and the multiple SLEs $\mathbf{o}$ in the domains $\UnitD \setminus \DetCurve^{(\delta)}_\delta$ and $\UnitD \setminus \DetCurve^{(\delta)}_\delta \setminus \mathbf{e}^{(n)}$ converge weakly to the same limit $\mathbf{O}$, the multiple SLEs on the odd component of $\UnitD \setminus \DetCurve ([0, T])$. In particular, both are hence tight. On the other hand, by the Arzel\`{a}--Ascoli theorem and $g$ being bounded and Lipschitz, we may pick a further subsequence so that $g(\mathbf{o}, a_\delta)$ converges as functions of $\mathbf{o}$, uniformly over compacts, to some function $h(\mathbf{o})$. Combining this with the tightness, it follows that both expectations in~\eqref{eq: SLE stability} tend to 
$\EX [  h( \mathbf{O}) ]$,
a contradiction.
\end{proof} 
 
Let us continue the proof we were working on. Substituting~\eqref{eq: yet another uniform convergence} into~\eqref{eq: DMP 1}, we get
 \begin{align*}
\eqref{eq: DMP 1}
=& \EX^{(n)} [ \mathbb{I}_{ E( \delta', \delta'')^C } (\InitSegmDelta{\delta}^{(n)}) c_1^{(\delta')} (\InitSegmDelta{\delta}^{(n)} )   f ( \InitSegmDelta{\delta}^{(n)} ) \EX^{(n)}_{\UnitD \setminus \InitSegmDelta{\delta}^{(n)} } [ \mathbb{I}_{ P(\delta) } (\InitSegmDelta{\delta}^{(n)}, \mathbf{e}^{(n)}) \EXSLEcurves{\mathbf{o}}_{\UnitD \setminus \InitSegmDelta{\delta}^{(n)}  }   [  g( \mathbf{o}, \mathbf{e}^{(n)} ) ]  ] ]\\
& + o_{\delta''}^{(\delta')}(1) + o^{(\delta, \delta', \delta'')}_n (1) + o^{(\delta', \delta'')}_\delta (1).
\end{align*}
Removing the indicator functions by identical arguments as they were introduced with, and using Fubini's theorem, we get
\begin{align*}
\eqref{eq: DMP 1} = & \EX^{(n)} [  c_1^{(\delta')} (\InitSegmDelta{\delta}^{(n)} )   f ( \InitSegmDelta{\delta}^{(n)} ) \EXSLEcurves{\mathbf{o}}_{\UnitD \setminus \InitSegmDelta{\delta}^{(n)} }   [  \EX^{\mathbf{e}^{(n)} }_{\UnitD \setminus \InitSegmDelta{\delta}^{(n)} } [  g(\mathbf{o}, \mathbf{e}^{(n)}  ) ] 
  ]  ] + o_{\delta''}^{(\delta')}(1) + o^{(\delta, \delta', \delta'')}_n (1) + o^{(\delta', \delta'')}_\delta (1).
\end{align*}

Now, the next steps are to introduce similar indicator functions as before, but switching the roles of odd and even curves in the definition of $P(\delta)$. Then, repeating the uniform convergence arguments~\eqref{eq: SLE stability} and~\eqref{eq: yet another uniform convergence} for the even curves, 
%
%
%
%
one obtains 
\begin{align*}
\eqref{eq: DMP 1} = & \EX^{(n)} [  c_1^{(\delta')} (\InitSegmDelta{\delta}^{(n)} )   f ( \InitSegmDelta{\delta}^{(n)} ) \EXSLEcurves{\mathbf{e}}_{\UnitD \setminus \InitSegmDelta{\delta}^{(n)} }   [  \EXSLEcurves{\mathbf{o}}_{\UnitD \setminus \InitSegmDelta{\delta}^{(n)} } [  g( \mathbf{o}, \mathbf{e} ) ] 
  ]  ] + o_{\delta''}^{(\delta')}(1) + o^{(\delta, \delta', \delta'')}_n (1) + o^{(\delta', \delta'')}_\delta (1).
\end{align*}


Note that the only discrete curve left above is $\InitSegmDelta{\delta}^{(n)}$.
Next, we would like to use the weak convergence $\InitSegmDelta{\delta}^{(n)} \to \InitSegmDelta{\delta}$.
For this purpose, we will again have to restrict our consideration on the compact sets $K_\eps$ that guarantee the Loewner regularity of $\gamma_{\UnitD; 1}$.
As in the proof of the claim above, we can choose the compact sets $K_\eps$ so that for any converging sequence of curves $\DetCurve^{(n)} \to \DetCurve$ in $K_\eps$, on the additional events of $\overline{P(\delta)}$ and $\overline{E(\delta', \delta'')^C}$, it holds that the tip of the initial segment $\DetCurve_\delta$ is a prime end with radial limits in $\UnitD \setminus \DetCurve_\delta$, and $( \UnitD \setminus \DetCurve_\delta^{(n)}; \DetCurve_\delta^{(n)} (1) )$ are close Carath\'{e}odory approximations of $( \UnitD \setminus \DetCurve_\delta; \DetCurve_\delta (1) )$. Thus, by the inductive assumption~(iii), the expectation
\begin{align}
\label{eq: cont fcn}
\EXSLEcurves{\mathbf{e}}_{\UnitD \setminus \InitSegmDelta{\delta}^{(n)} }   [  \EXSLEcurves{\mathbf{o}}_{\UnitD \setminus \InitSegmDelta{\delta}^{(n)} } [  g( \mathbf{o}, \mathbf{e} ) ]
\end{align}
with respect to two independent multiple SLEs in $\UnitD \setminus \InitSegmDelta{\delta}^{(n)}$, is a  continuous function of $\gamma_{\UnitD; 1}^{(n)}$ on the intersection of $\gamma_{\UnitD; 1}^{(n)} \in K_\eps$ with $\overline{P(\delta)}$ and $\overline{E(\delta', \delta'')^C}$. Now, by tightness, take a compact set in $X(\overline{\UnitD})^N$ carrying a large probability mass and intersect it with $\gamma_{\UnitD; 1}^{(n)} \in K_\eps$ and $\overline{P(\delta)}$ and $\overline{E(\delta', \delta'')^C}$. The latter set is also compact with a large probability mass. Now, by Tietze's extension theorem, the continuous function~\eqref{eq: cont fcn} on the latter compact set can be continued to yield a bounded continuous function the whole space $X(\overline{\UnitD})^N$.
Using now the weak convergence $(\gamma_{\UnitD; 1}^{(n)}, \ldots, \gamma_{\UnitD; N}^{(n)}) \to ( \gamma_{\UnitD; 1}, \ldots, \gamma_{\UnitD; N})$, 
we obtain
\begin{align*}
\eqref{eq: DMP 1} = & \EX [  c_1^{(\delta')} (\InitSegmDelta{\delta} )   f ( \InitSegmDelta{\delta} ) \EXSLEcurves{\mathbf{e}}_{\UnitD \setminus \InitSegmDelta{\delta} }   [  \EXSLEcurves{\mathbf{o}}_{\UnitD \setminus \InitSegmDelta{\delta} } [  g( \mathbf{o}, \mathbf{e} ) ] 
  ]  ] + o_{\delta''}^{(\delta')}(1) + o^{(\delta, \delta', \delta'')}_n (1) + o^{(\delta', \delta'')}_\delta (1).
\end{align*}

Finally, fix a realization of $ \gamma_{\UnitD; 1}$ that can be described by the Loewner equation; it is then easy to argue that $\UnitD \setminus \InitSegmDelta{\delta} $, with the marked boundary points on $\bdry \UnitD$ and at the tip of $\InitSegmDelta{\delta}$, are close Carath\'{e}odory approximations of $\UnitD \setminus \InitSegmDelta{0} $. Thus, the almost sure convergence $\InitSegmDelta{\delta} \to  \InitSegmDelta{0}$ and the Bounded convergence theorem, we have
\begin{align*}
\eqref{eq: DMP 1} = & \EX [  c_1^{(\delta')} (\InitSegmDelta{0} )   f ( \InitSegmDelta{0} ) \EXSLEcurves{\mathbf{e}}_{\UnitD \setminus \InitSegmDelta{0} }   [  \EXSLEcurves{\mathbf{o}}_{\UnitD \setminus \InitSegmDelta{0} } [  g( \mathbf{o}, \mathbf{e} ) ] 
  ]  ] + o_{\delta''}^{(\delta')}(1) + o^{(\delta, \delta', \delta'')}_n (1) + o^{(\delta', \delta'')}_\delta (1).
\end{align*}
We would like to find the limit of~\eqref{eq: DMP} as $\delta' \shrinkto 0$. The expectation above can be treated using the Bounded convergence theorem, yielding
\begin{align*}
 \EX & [  c_1^{(\delta')} (\InitSegmDelta{0} )   f ( \InitSegmDelta{0} ) \EXSLEcurves{\mathbf{e}}_{\UnitD \setminus \InitSegmDelta{0} }   [  \EXSLEcurves{\mathbf{o}}_{\UnitD \setminus \InitSegmDelta{0} } [  g( \mathbf{o}, \mathbf{e} ) ] 
  ]  ] \\
  &= \EX [  \mathbb{I}\{ \InitSegmDelta{0}(1) \text{ is not a marked boundary point}\}   f ( \InitSegmDelta{0} ) \EXSLEcurves{\mathbf{e}}_{\UnitD \setminus \InitSegmDelta{0} }   [  \EXSLEcurves{\mathbf{o}}_{\UnitD \setminus \InitSegmDelta{0} } [  g( \mathbf{o}, \mathbf{e} ) ] + o_{\delta'} (1).
\end{align*}
 Finally, by taking first $\delta'$ small enough, and then $\delta''$, and then $\delta$, and then $n$ large enough, all the Landau o-terms above can all be made arbitrarily small. Thus, as $\delta' \shrinkto 0$, we have
\begin{align*}
\lim_{\delta' \shrinkto 0} \eqref{eq: DMP 1} = & \EX [  \mathbb{I}\{ \InitSegmDelta{0}(1) \text{ is not a marked boundary point}\}   f ( \InitSegmDelta{0} ) \EXSLEcurves{\mathbf{e}}_{\UnitD \setminus \InitSegmDelta{0} }   [  \EXSLEcurves{\mathbf{o}}_{\UnitD \setminus \InitSegmDelta{0} } [  g( \mathbf{o}, \mathbf{e} ) ].
\end{align*}
This finishes our discussion on the second term.

%

\textbf{Conclusion:}
Finally, combining the analyses of the two terms above, we observe that 
\begin{align*}
\EX &[f ( \InitSegmDelta{0} )  g( \FinalSegmDelta{0}, {\gamma}_{\UnitD; 2}, \ldots, {\gamma}_{\UnitD; N}   ) ]  \\
\text{(any $\delta' \in (0, \tilde{\delta})$) } = & \EX [ c_2^{(\delta')} (\lambda_0 )  f ( \InitSegmDelta{0} )  g( \FinalSegmDelta{0}, {\gamma}_{\UnitD; 2}, \ldots, {\gamma}_{\UnitD; N}   ) ] + \EX [c_1^{(\delta')} (\lambda_0 )  f ( \InitSegmDelta{0} )  g( \FinalSegmDelta{0}, {\gamma}_{\UnitD; 2}, \ldots, {\gamma}_{\UnitD; N}   ) ] \\
\text{(limit $\delta' \shrinkto 0$) } = & 
\EX [ \mathbb{I} \{ \InitSegmDelta{0}(1) \text{ is an even-index marked boundary point } w \}  f ( \InitSegmDelta{0} )  \EXSLEcurves{(N-1)}_{\UnitD \setminus \InitSegmDelta{0} } [ g( w, \SLEcurve_{1},  \ldots, \SLEcurve_{N-1}   ) ] \\
& +
\EX [  \mathbb{I}\{ \InitSegmDelta{0}(1) \text{ is not a marked boundary point}\}   f ( \InitSegmDelta{0} ) \EXSLEcurves{\mathbf{e}}_{\UnitD \setminus \InitSegmDelta{0} }   [  \EXSLEcurves{\mathbf{o}}_{\UnitD \setminus \InitSegmDelta{0} } [  g( \mathbf{o}, \mathbf{e} ) ],
\end{align*}
and thus the claim holds
\end{proof}

The proof of Proposition~\ref{thm: k le 4 local-to-global NSLE} with Assumption~\ref{ass: cond C'} can now be finished identically to the case with Assumption~\ref{ass: quantitative no boundary visits assumption}.

\subsubsection{\textbf{Termination points of initial segment}s}
\label{subsec: termination points of init segments}

Let us return to the question where the initial segments $\InitSegmDelta{0}$ terminate, left open in Section~\ref{subsubsec: up-to-swalliwing segments}.

\begin{prop}
\label{prop: initial segment end points}
For scaling limits with SLE parameter $\kappa \in (0, 4]$, the initial segment $\InitSegmDelta{0}$ almost surely terminates at an even-index marked boundary point.
For scaling limits with $\kappa \in (4, 8)$, $\InitSegmDelta{0}$ almost surely does not terminate at an even-index marked boundary point.
\end{prop}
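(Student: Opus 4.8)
The plan is to prove this \emph{a posteriori} using the classification of boundary behaviour of chordal $\SLE(\kappa)$, exactly as announced in the footnote at the start of Section~\ref{sec: SLE}. The key input is that chordal $\SLE(\kappa)$ in $\bH$ from $0$ to $\infty$ almost surely visits $\R \setminus \{0\}$ if and only if $\kappa > 4$, by~\cite{RS-basic_properties_of_SLE}. By absolute continuity of the local multiple SLE one-curve marginal with respect to chordal $\SLE(\kappa)$ (on the initial segment up to the exit time of a localization neighbourhood), the same dichotomy holds for the local multiple SLE initial segments: for $\kappa \le 4$ they do not touch the boundary except at the starting point, while for $\kappa \in (4,8)$ they almost surely do touch.

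First I would recall from Lemma~\ref{lem: SLE up-to-swallowing segments exist} and the surrounding discussion that the marginal law of $\InitSegmDelta{0}$ is the local multiple SLE initial segment in $\UnitD$, grown from $\Unitp_1$ in the increasing neighbourhoods $U(\delta)$ as $\delta \shrinkto 0$, which yields a closed curve from $\Unitp_1$ to the arc $(\Unitp_2 \Unitp_{2N})$. I would then transfer this to $\bH$ via a conformal map sending $\Unitp_1 \mapsto 0$ and $\Unitp_\infty \mapsto \infty$, under which the local multiple SLE is described by the growth process~\eqref{eq: SDE definition of N-SLE} with partition function $\PartF_N$; the other marked points map to the reals $x_2 < \cdots < x_{2N}$, with the arc $(\Unitp_2 \Unitp_{2N})$ mapping to the interval $[x_2, x_{2N}]$. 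The curve $\InitSegmDelta{0}$ corresponds to the curve in $\bH$ grown until it hits $[x_2, x_{2N}]$.

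For $\kappa \le 4$: by absolute continuity with respect to chordal $\SLE(\kappa)$ (valid up to the exit time of any bounded localization neighbourhood, hence in particular for each $\InitSegmDelta{\delta}$ with $\delta > 0$), the initial segment a.s. does not touch $\R$ except at its starting point $0$. Letting $\delta \shrinkto 0$ along the a.s. convergence $\InitSegmDelta{\delta} \to \InitSegmDelta{0}$, the only way the closed curve $\InitSegmDelta{0}$ can reach $[x_2, x_{2N}]$ without touching $\R$ earlier is to hit $[x_2, x_{2N}]$ precisely at its terminal point; moreover, since the growth process~\eqref{eq: SDE definition of N-SLE} keeps the conformal images $W^{(i)}_t$ of the boundary points separated from the tip $W^{(j)}_t$ for all $t$ strictly before the terminal time, and these images collide exactly when the tip reaches the corresponding marked point, the terminal point must be one of the $x_i$, $i = 2, \ldots, 2N$. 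Parity (every link of a link pattern joins one odd and one even index, forced here by the topology of how the full curve $\gamma_{\UnitD; 1}$ separates the remaining points) then forces this to be an even index. For $\kappa \in (4,8)$: absolute continuity with respect to chordal $\SLE(\kappa)$ now gives that the initial segment $\InitSegmDelta{\delta}$ a.s. \emph{does} touch the boundary; equivalently, with positive probability bounded below uniformly, $\InitSegmDelta{\delta}$ touches $\R \setminus \{0\}$ before exiting a small neighbourhood. Combined with the almost sure convergence $\InitSegmDelta{\delta} \to \InitSegmDelta{0}$ and condition (G) (which prevents the curve from returning infinitely often near a marked point, so that the terminal hitting point of $[x_2,x_{2N}]$ is a single well-defined point that is a continuity point of the curve), one deduces that $\InitSegmDelta{0}$ a.s. reaches the arc $(\Unitp_2 \Unitp_{2N})$ at a point that is \emph{not} any of $\Unitp_2, \ldots, \Unitp_{2N}$: if it did reach such a point $\Unitp_i$, then by absolute continuity applied in a small localization neighbourhood of $\Unitp_1$, the curve would have touched $\bdry\UnitD$ strictly earlier with probability one, contradicting that $\InitSegmDelta{0}$ terminates exactly at the first hitting of $(\Unitp_2 \Unitp_{2N})$ — unless the first touching of the boundary by the curve is also a touching of the arc, but the arc has the marked points only at its endpoints, and touching an interior point of the arc is exactly the asserted conclusion.

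The main obstacle is the second case, $\kappa \in (4,8)$: absolute continuity with respect to chordal SLE is cleanly available only up to the exit time of a \emph{bounded} localization neighbourhood, whereas $U(\delta)$ swallows almost all of $\UnitD$ as $\delta \shrinkto 0$, so one cannot directly conclude anything about $\InitSegmDelta{0} = \lim_\delta \InitSegmDelta{\delta}$ from a single absolute-continuity statement. The fix is to work locally: fix a genuine bounded localization neighbourhood $V$ of $\Unitp_1$ disjoint from the other marked points; on the event (of probability bounded below) that $\InitSegmDelta{\delta}$ stays inside $V$ for a macroscopic capacity, absolute continuity applies and forces a boundary touch of $\bdry\UnitD \setminus \{\Unitp_1\}$ inside $V$ for $\kappa > 4$; this touch is inherited by $\InitSegmDelta{0}$; and since this touch must be \emph{before} the first hitting of the arc $(\Unitp_2\Unitp_{2N})$ (as $V$ avoids that arc) it cannot be a marked point — but then the curve continues, and iterating along the curve (using the strong Markov property of the limiting growth process, itself a consequence of~\eqref{eq: SDE definition of N-SLE} being an SDE) one shows the terminal hitting point of the arc is a.s. not a marked point. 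Making the ``inherited by $\InitSegmDelta{0}$'' step precise — i.e. that a.s.\ boundary touches of $\InitSegmDelta{\delta}$ in the bulk persist in the limit, which uses that $\InitSegmDelta{\delta} \to \InitSegmDelta{0}$ is an increasing family of initial segments of the \emph{same} curve rather than merely a weak limit — is the delicate point, but it is essentially bookkeeping once the setup is fixed.
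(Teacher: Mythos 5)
Both halves of your argument have a gap, and in both cases the missing ingredient is control of the curve near its \emph{terminal} point rather than near $\Unitp_1$. For $\kappa\le 4$: even granting that each $\InitSegmDelta{\delta}$ avoids $\bdry\UnitD\setminus\{\Unitp_1\}$, this only shows that $\InitSegmDelta{0}$ touches the boundary at $\Unitp_1$ and at its terminal point $w$ on the arc; it does not show that $w$ is a marked point. Your justification --- that the images $W^{(i)}_t$ collide with the tip exactly when the tip reaches $x_i$ --- is false: in the half-plane picture $g_t(x_i)$ collides with the driving function whenever $x_i$ is \emph{swallowed}, which happens as soon as the tip lands anywhere on $[x_i,x_{2N}]$, not only at $x_i$ itself; conversely the process can perfectly well terminate at an interior point of some $(x_i,x_{i+1})$ with no collision pattern forcing a marked point. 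A priori the curve could first touch the arc at a non-marked point and then continue on to its even-index target, and nothing in your argument excludes this. The paper rules it out by first proving Proposition~\ref{prop: relation to global multiple SLE} (the resampling property): conditionally on the other curves, $\gamma_{\UnitD;1}$ is a chordal $\SLE(\kappa)$ in the slit domain, hence for $\kappa\le4$ visits $\bdry\UnitD$ only at its two endpoints, so the first hitting of the arc \emph{is} the terminal even-index marked point. (Alternatively, one could close your gap by running your initial-segment argument also for the reversed curve, whose initial segments are again local multiple SLEs by the alternating boundary conditions.)

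For $\kappa\in(4,8)$ the approach is aimed at the wrong end of the curve. Your local argument near $\Unitp_1$ produces boundary touches on the arcs $(\Unitp_{2N}\Unitp_1)$ and $(\Unitp_1\Unitp_2)$ adjacent to the starting point, which says nothing about where the curve first hits the target arc $(\Unitp_2\Unitp_{2N})$; the ``iteration along the curve'' that is supposed to propagate this to the terminal hitting point is precisely the hard part and is not carried out (after the first touch the curve is no longer the original growth process from $\Unitp_1$, and one would need a positive-probability touching estimate, uniform at every scale, all the way to the far arc). The paper instead goes directly to the other end: by the alternating boundary conditions the reversal of $\gamma_{\UnitD;1}$ is also one of the curves of the collection, so by Theorem~\ref{thm: local multiple SLE convergence} its initial segment --- the final segment of $\gamma_{\UnitD;1}$ --- is itself a local multiple SLE initial segment started from the even-index endpoint, and for $\kappa>4$ absolute continuity with chordal $\SLE(\kappa)$ forces it to touch $\bdry\UnitD$ in every neighbourhood of that endpoint. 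Hence the forward curve touches the arc strictly before reaching its target, and $\InitSegmDelta{0}$ terminates at that earlier, non-marked point. This reversal step is the key idea missing from your proposal.
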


\begin{proof}[Proof of Proposition~\ref{prop: initial segment end points} for $4 < \kappa < 8$]
Consider first the case $\kappa \in (4, 8)$. Take any subsequential limit ${\gamma}_{\UnitD; 1}$. By condition (G) for ${\gamma}_{\UnitD; 1}^{(n)}$, the end point of ${\gamma}_{\UnitD; 1}$ is almost surely not a double point of that curve, so we can study the final segment of ${\gamma}_{\UnitD; 1}$ (the initial segment of the reversed curve ${\gamma}_{\UnitD; 1}$) to answer whether ${\gamma}_{\UnitD; 1}$ hits $\bdry \UnitD$ somewhere else before hitting an even-index marked boundary point. By Theorem~\ref{thm: local multiple SLE convergence}, the initial segments converge to a local multiple SLE initial segment. Now, a chordal $\SLE(\kappa)$ initial segment with $\kappa \in (4, 8)$ almost surely hits the boundary outside of its starting point in any small neighbourhood of the end points. By absolute continuity, so does the local multiple SLE initial segment. Thus, (irrespective of which local multiple SLE initial segment turns out to be the final segment of  ${\gamma}_{\UnitD; 1}$) we can conclude that $\InitSegmDelta{0}$ almost surely does not terminate at an even-index marked boundary point. 
\end{proof}

In order to prove Proposition~\ref{prop: initial segment end points} for $\kappa \in (0,4]$, we will first need to prove Proposition~\ref{prop: relation to global multiple SLE}.

\begin{proof}[Proof of Proposition~\ref{prop: relation to global multiple SLE}]
By Proposition~\ref{thm: k le 4 local-to-global NSLE}(i), we can freely choose the order in which we inductively sample the different up-to-swallowing initial segments to obtain the collection of curves $({\gamma}_{\UnitD; 1}, \ldots,{\gamma}_{\UnitD; N})$. Sampling in an order that leaves ${\gamma}_{\UnitD; 1}$ last, it follows that ${\gamma}_{\UnitD; 1}$ is a chordal SLE in the domain left for it.
\end{proof}

\begin{proof}[Proof of Proposition~\ref{prop: initial segment end points} for $0 < \kappa \le 4$]
 By Proposition\ref{prop: relation to global multiple SLE}, ${\gamma}_{\UnitD; 1}$ is a chordal $\SLE(\kappa)$ in the domain left for it. It follows that ${\gamma}_{\UnitD; 1}$ almost surely only visits $\bdry \UnitD$ at its end points.
\end{proof}

\subsection{Proofs of Theorems~\ref{thm: relation to global multiple SLE 1} and~\ref{thm: relation to global multiple SLE 2}}

\begin{proof}[Proof of Theorem~\ref{thm: relation to global multiple SLE 1}]
%

We will show by induction over $N$ that all link patterns $\alpha \in \LP_N$ have a probability $\ge p$ to occur in the scaling limit $(\gamma_{\UnitD; 1}, \ldots, \gamma_{\UnitD; N})$, given that the distances between the marked boundary points $\Unitp_1, \ldots, \Unitp_{2N}$ are bounded from below by some number ($p$ of course depends on this number).  The base case $N=1$ is obvious, since there is only one link pattern.

Let us sketch the induction step with $N \ge 1$. Fix $\alpha \in \LP_N$, and a small tubular neighbourhood of the straight line segment connecting $\Unitp_1$ in $\UnitD$ to its pair boundary point given by $\alpha$. There is a positive probability that the usual chordal $\SLE(\kappa)$ from $\Unitp_1$ to $\Unitp_\infty$ has an initial segment in $U(\delta)$ (fixed but small $\delta$) that stays inside this tubular neighbourhood. (This follows from an analogous property of the Brownian motion: there is a positive probability that the driving function of the chordal SLE stays close to that of the straight line.) By absolute continuity (see, e.g.,~\cite{KP-pure_partition_functions_of_multiple_SLEs} for the expicit Radon-Nikodym derivatives) the initial segment $\InitSegmDelta{\delta}$ of the local multiple SLE also has a positive probability to stay in this tube. By weak convergence, this also holds for the discrete initial segments $\InitSegmDelta{\delta}^{(n)}$. Now, Assumption~\ref{ass: cond C'} (see especially Figure~\ref{fig: thick quads}) guarantees that the curve $\gamma_{\UnitD; 1}^{(n)}$ is then likely to pair the boundary point $\Unitp_1^{(n)}$ to its pair given by $\alpha$, and so that its remainder $\FinalSegmDelta{\delta}^{(n)}$ after $\InitSegmDelta{\delta}^{(n)}$ stays close to the tip of $\InitSegmDelta{\delta}^{(n)}$. The same conclusion holds for the weak limit $\gamma_{\UnitD; 1}$. Now, we have obtained a positive probability that $\gamma_{\UnitD; 1}$ connects $\Unitp_1$ to its pair in $\alpha$ and stays close to the corresponsing straight line. By the conditional law definition of the local-to-global multiple SLE and the inductive assumption, the remaning curves also have a positive probability to pair the marked boundary as given by $\alpha$.
\end{proof}

\begin{proof}[Proof of Theorem~\ref{thm: relation to global multiple SLE 2}]
The driving function of a global multiple $\SLE(\kappa)$ one-curve marginal has been identified in~\cite{PW}. Together with the precompactness theorem~\ref{thm: precompactness thm multiple curves}, this guarantees that Assumption~\ref{ass: dr fcns converge to loc mult SLE} holds in its conditional form. Assumption~\ref{ass: quantitative no boundary visits assumption} holds in the $\kappa \le 4$ case \emph{a posteriori}, relying on chordal $\SLE(\kappa)$:s having no boundary visits, and the weak convergence to global multiple SLEs.
\end{proof}

\bigskip{}

\section{Application examples}
\label{sec: application examples}

In this section, we show how our main results can be applied to deduce the convergence of multiple simultaneous random curves in various random models. Also relation to prior literature is discussed.

We will in this section deduce multiple SLE convergence for various discrete curve models using Theorem~\ref{thm: loc-2-glob multiple SLE convergence, kappa le 4}. For simplicity, we have chosen to state the convergence results in the topology of curves, $X(\C)$. Analogous convergences to local or local-to-global multiple SLEs naturally also hold in the other topologies of Theorems~\ref{thm: local multiple SLE convergence} and~\ref{thm: loc-2-glob multiple SLE convergence, kappa le 4} and Corollary~\ref{cor: local multiple SLE convergence -  strong topology}, and in these topologies also under the relaxed boundary regularity assumptions. Also the connection to global multiple SLEs, given in Theorem~\ref{thm: relation to global multiple SLE 1} holds.

\subsection{Three priorly known examples: Ising, FK-Ising and Percolation}

We start by discussing three models for which convergence results for multiple curves have appeared in prior literature. The purpose of this discussion is to demonstrate the applicability and practical application of our results.

\subsubsection{\textbf{The Ising model}}

Consider first the Ising model on the faces of the square lattice $\Z^2$ at critical temperature. We consider this model in simply-connected subgraphs $(\Gr; e_1, \ldots, e_{2N})$ of $ \Z^2$. We set boundary conditions that fix the spins on faces (edge-)adjacent to the boundary $\bdry \domain_\Gr$ of the corresponding planar domain, and the spin signs of these boundary conditions alter precisely at the edges $e_1, \ldots, e_{2N}$. (This of course puts some limitations on the subgraph $(\Gr; e_1, \ldots, e_{2N})$.)
The random curves $(\gamma_{\Gr; 1}, \ldots, \gamma_{\Gr; N})$ in $(\Gr; e_1, \ldots, e_{2N})$ are the magnetization cluster interfaces that surround the clusters adjacent to the boundary, with the convention of turning left when there are multiple ways to choose the interface. See, e.g.,~\cite{BPW} for a precise definition of the model, boundary conditions and the random curves.

Consider now the lattices $\delta_n \Z^{2} = \InfiniteGr_n$, where $\delta_n \shrinkto 0$ as $n \to \infty$, and their simply-connected subgraphs $(\Gr^{(n)}; e_1^{(n)}, \ldots, e^{(n)}_{2N})$ converging to some domain $(\domain; p_1, \ldots, p_{2N})$ in the Carath\'{e}odory sense. Study these discretizations under the assumptions and notation of Section~\ref{subsec: setup and notation}.

\begin{prop}
\label{prop: Ising convergence}
In the setup described above, the Ising interfaces $(\gamma^{(n)}_1, \ldots, \gamma^{(n)}_N)$ converge weakly in $X(\C)$ to the local-to-global multiple $\SLE(3)$ in $(\domain; p_1, \ldots, p_{2N})$ with the partition functions
\begin{align}
\label{eq: Ising part fcns}
\PartF_N (x_1, \ldots, x_{2N}) = \Pf \bigg( \big( \frac{1}{x_i- x_j} \big)_{i, j = 1}^{2N} \bigg),
\end{align}
where $\Pf(\cdot)$ denotes the Pfaffian of a matrix.
\end{prop}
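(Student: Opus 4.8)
The plan is to verify the hypotheses of Theorem~\ref{thm: loc-2-glob multiple SLE convergence, kappa le 4}, since the convergence statement will then be an immediate consequence of that theorem once the partition function has been identified. Concretely, I would first recall the well-established facts about the critical Ising model on subgraphs of $\delta_n\Z^2$ with alternating $\pm$ boundary conditions: the magnetization cluster interfaces $(\gamma_{\Gr;1},\ldots,\gamma_{\Gr;N})$ form a discrete random curve model with alternating boundary conditions (the two kinds of boundary arcs being the $+$ arcs and the $-$ arcs), and they satisfy the DDMP --- this is the standard spatial Markov property of the Ising model, where conditioning on an interface reveals the spins along it and leaves an Ising model with the appropriate boundary conditions in the complementary domain (see, e.g.,~\cite{BPW, CDHKS-convergence_of_Ising_interfaces_to_SLE}). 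Next I would invoke the crossing estimates: the one-curve Ising interfaces satisfy conditions (C) and (G) of Kemppainen--Smirnov~\cite{KS}, uniformly over all simply-connected subgraphs of $\delta_n\Z^2$ that can arise as domains with alternating boundary conditions; this is precisely the input established in~\cite{CDHKS-convergence_of_Ising_interfaces_to_SLE} (or~\cite{CHI-conformal_invariance_of_spin_correlations}, \cite{BPW}). This gives the assumptions of Section~\ref{subsubsec: setup and assumptions}.

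The remaining hypotheses are Assumptions~\ref{ass: approximability}, \ref{ass: dr fcns converge to loc mult SLE}, and (since $\kappa = 3 \le 4$) Assumption~\ref{ass: quantitative no boundary visits assumption}. Assumption~\ref{ass: approximability} is routine for $\Z^2$: any bounded simply-connected domain with $2N$ marked prime ends with radial limits can be approximated from inside by discretized domains on $\delta_n\Z^2$ with suitably placed marked boundary edges (one constructs close Carath\'{e}odory approximations lying inside $\domain$ using, e.g., the union of closed lattice faces contained in $\domain$ and a polygonal approximation near the marked points). Assumption~\ref{ass: quantitative no boundary visits assumption}, the \emph{a priori} exclusion of boundary visits away from the endpoints, follows for $\kappa=3$ again from the Kemppainen--Smirnov annulus crossing estimates combined with condition (G): a boundary visit of some $\gamma^{(n)}_{\UnitD;i}$ at macroscopic distance $\delta'$ from its two endpoints would force an unforced crossing of a thin annulus of large modulus straddling that boundary point, which has probability $o_\delta(1)$ uniformly in $n$ --- this is exactly the kind of estimate used in~\cite[Theorem~1.5]{KS} and recorded for multiple curves via condition (multi-G) in Lemma~\ref{lem: multi-G}.

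The main substantive step is Assumption~\ref{ass: dr fcns converge to loc mult SLE}: identifying the scaling limit of one curve's initial segment as a local multiple $\SLE(3)$ with the Pfaffian partition function~\eqref{eq: Ising part fcns}. Here I would follow the classical martingale strategy of~\cite{Smirnov-ICM}. One uses the fermionic (discrete holomorphic spinor) observable of Chelkak--Smirnov, now in the multiply-marked setting: in a domain with $2N$ boundary changes, the appropriate correlator of order-disorder variables (or a suitable combination of fermionic observables) is, by the DDMP, a martingale under growing the interface from $e_j$. The scaling limit of this observable is a conformally covariant object whose dependence on the $2N$ marked points is governed by the Pfaffian $\Pf\big((x_i-x_j)^{-1}\big)$; this is essentially the content of~\cite{Izyurov-multiply_connected, PW18} on Ising fermionic observables and crossing probabilities, or can be read off from~\cite{CHI-conformal_invariance_of_spin_correlations}. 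Passing the martingale property to the (precompact) subsequential limit and applying It\^o's formula to the limiting observable composed with the Loewner flow forces the driving function to solve the SDE system~\eqref{eq: SDE definition of N-SLE} with $\PartF_N$ given by~\eqref{eq: Ising part fcns}; the required PDE~\eqref{eq: PDE for multiple SLEs} and covariance~\eqref{eq: COV for multiple SLEs} for this Pfaffian at $\kappa=3$ (equivalently $h=1/2$) are a direct computation, or quoted from~\cite{KP-pure_partition_functions_of_multiple_SLEs}. The hardest and most model-specific point is thus the convergence of the multiple-point fermionic observable with the precise error control (uniform over the relevant discretizations and over the initial segments grown so far) needed to promote the discrete martingale to a genuine continuous martingale; everything else is an application of the general machinery of Theorem~\ref{thm: loc-2-glob multiple SLE convergence, kappa le 4}. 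Once Assumption~\ref{ass: dr fcns converge to loc mult SLE} is in place, Theorem~\ref{thm: loc-2-glob multiple SLE convergence, kappa le 4} directly yields the weak convergence of $(\gamma^{(n)}_1,\ldots,\gamma^{(n)}_N)$ in $X(\C)$ to the local-to-global multiple $\SLE(3)$ with partition functions~\eqref{eq: Ising part fcns}, completing the proof.
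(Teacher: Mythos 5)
Your overall strategy --- verifying the hypotheses of Theorem~\ref{thm: loc-2-glob multiple SLE convergence, kappa le 4} and quoting the identification of one initial segment (Izyurov's martingale observable argument) for Assumption~\ref{ass: dr fcns converge to loc mult SLE} --- is exactly the paper's route, and your treatment of the alternating boundary conditions, the DDMP, condition (C) (which the paper takes from~\cite[Corollary~1.7]{CDCH}), and Assumption~\ref{ass: approximability} all match the paper's proof.

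The gap is in your verification of Assumption~\ref{ass: quantitative no boundary visits assumption}. You claim the \emph{a priori} exclusion of boundary visits ``follows from the Kemppainen--Smirnov annulus crossing estimates combined with condition (G)''. It does not: condition (G) is also satisfied by critical percolation, whose interfaces converge to $\SLE(6)$ and do touch the boundary, so no argument using only (G) or (multi-G) can yield Assumption~\ref{ass: quantitative no boundary visits assumption}. Concretely, condition (G) gives, for a \emph{fixed} boundary annulus $A(z,r,R)$, a bound $\eps(R/r)$ with no quantitative rate; to rule out a visit to the $\delta$-neighbourhood of an entire macroscopic boundary arc you must beat an entropy factor of order $1/\delta$ coming from a net of centers $z$, which requires a polynomial crossing bound with exponent strictly larger than one (a boundary arm-exponent estimate) --- a genuinely model-specific input that (G) does not provide. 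The paper sidesteps this entirely by verifying the \emph{alternative} hypothesis, Assumption~\ref{ass: cond C'} (condition (C')), which concerns crossings of quadrilaterals whose boundary sides lie in same-parity marked arcs and which follows directly from the strong RSW estimates of~\cite[Corollary~1.7]{CDCH}; Theorem~\ref{thm: loc-2-glob multiple SLE convergence, kappa le 4} then applies in its second form. If you insist on the route through Assumption~\ref{ass: quantitative no boundary visits assumption}, you must supply an Ising-specific estimate on boundary visits rather than deduce it from (G).
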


Note that it is not immediate, but verified in~\cite[Proposition~4.6]{KP-pure_partition_functions_of_multiple_SLEs}, that~\eqref{eq: Ising part fcns} actually is a local multiple SLE partition function, as defined in Section~\ref{subsubsec: local multiple SLE partition functions}. 

\begin{proof}[Proof of Proposition~\ref{prop: Ising convergence}]
We wish to apply Theorem~\ref{thm: precompactness thm multiple curves} to deduce precompactness and Theorem~\ref{thm: loc-2-glob multiple SLE convergence, kappa le 4} to identify the scaling limit. In order to apply these results, we have to check that the discrete curve model satisfies their assumptions.

Assumptions of  Theorem~\ref{thm: precompactness thm multiple curves}:
\begin{itemize}
\item Alternating boundary conditions and DDMP are trivially satisfied.
\item Condition~(C) for the one-curve model is non-trivial, but has been verified in~\cite[Corollary~1.7]{CDCH}.
\end{itemize}
In addition to the above, applying Theorem~\ref{thm: loc-2-glob multiple SLE convergence, kappa le 4} requires the following assumptions to be satisfied:
\begin{itemize}
\item Assumption~\ref{ass: dr fcns converge to loc mult SLE}, i.e., convergence of driving functions to local multiple SLE with the partition function~\eqref{eq: Ising part fcns}, holds
by~\cite[Theorem~1.1]{Izyurov-critical_Ising_interfaces_in_multiply_connected_domains}\footnote{
\cite[Theorem~1.1]{Izyurov-critical_Ising_interfaces_in_multiply_connected_domains} is stated under some boundary regularity assumptions that need to be removed in Assumption~\ref{ass: dr fcns converge to loc mult SLE}. This assumption is made there in order to shorten the discussion on the convergence of the martingale observable needed in the scaling limit identification. However, this boundary regularity assumption can be relaxed, as discussed in~\cite[Section~1.1]{Izyurov-critical_Ising_interfaces_in_multiply_connected_domains}.
}. In the case of $N=1$ curve, Assumption~\ref{ass: dr fcns converge to loc mult SLE}, i.e., convergence to usual chordal SLE, was verified in~\cite[Theorem~1]{CDHKS-convergence_of_Ising_interfaces_to_SLE}.
\item Assumption~\ref{ass: approximability} holds trivially, and~\ref{ass: cond C'}, i.e., Condition (C'), is also a direct consequence of~\cite[Corollary~1.7]{CDCH}.
\end{itemize}
We have now verified all the assumptions of Theorems~\ref{thm: precompactness thm multiple curves} and Theorem~\ref{thm: loc-2-glob multiple SLE convergence, kappa le 4}, and the conclusions of the latter thus hold.
\end{proof}

\subsubsection*{\textbf{Prior results on the Ising model}}

The convergence of multiple Ising interfaces is by now understood rather completely, and the above proposition hence only provides a new proof for a known result, and a slightly different characterization of the weak limit. Convergence of one initial segment to that of a local multiple SLE was established in~\cite[Theorem~1.1]{Izyurov-critical_Ising_interfaces_in_multiply_connected_domains} via martingale observables. The weak convergence of full curves under the conditional measures $\PR^{(n)}[\; \cdot \; \vert \; \alpha]$ to global multiple SLEs, for any link pattern $\alpha$, was established in~\cite[Proposition~1.3]{BPW}. Later on,~\cite[Theorem~1.1]{PW18} established the convergence of the connection probabilities $\PR^{(n)}[ \alpha]$. Combining this with the convergence of the conditional measures $\PR^{(n)}[\; \cdot \; \vert \; \alpha]$, the weak convergence of the full curves under the unconditional measures
\begin{align*}
\PR^{(n)} [\cdot] = \sum_{\alpha \in \LP_N} \PR^{(n)}[ \alpha] \PR^{(n)}[\cdot \; \vert \; \alpha]
\end{align*}
follows. Interestingly, the results of~\cite{PW18} rely on the local convergence of~\cite{Izyurov-critical_Ising_interfaces_in_multiply_connected_domains}. This manifests the principle from Section~\ref{sec: intro} proving the convergence of $\PR^{(n)}[ \alpha]$ is roughly equivalent to finding a converging matringale observale. The two-interface case is discussed in~\cite{Wu17}.

\subsubsection{\textbf{Percolation}}

\begin{figure}
\includegraphics[width=0.5\textwidth]{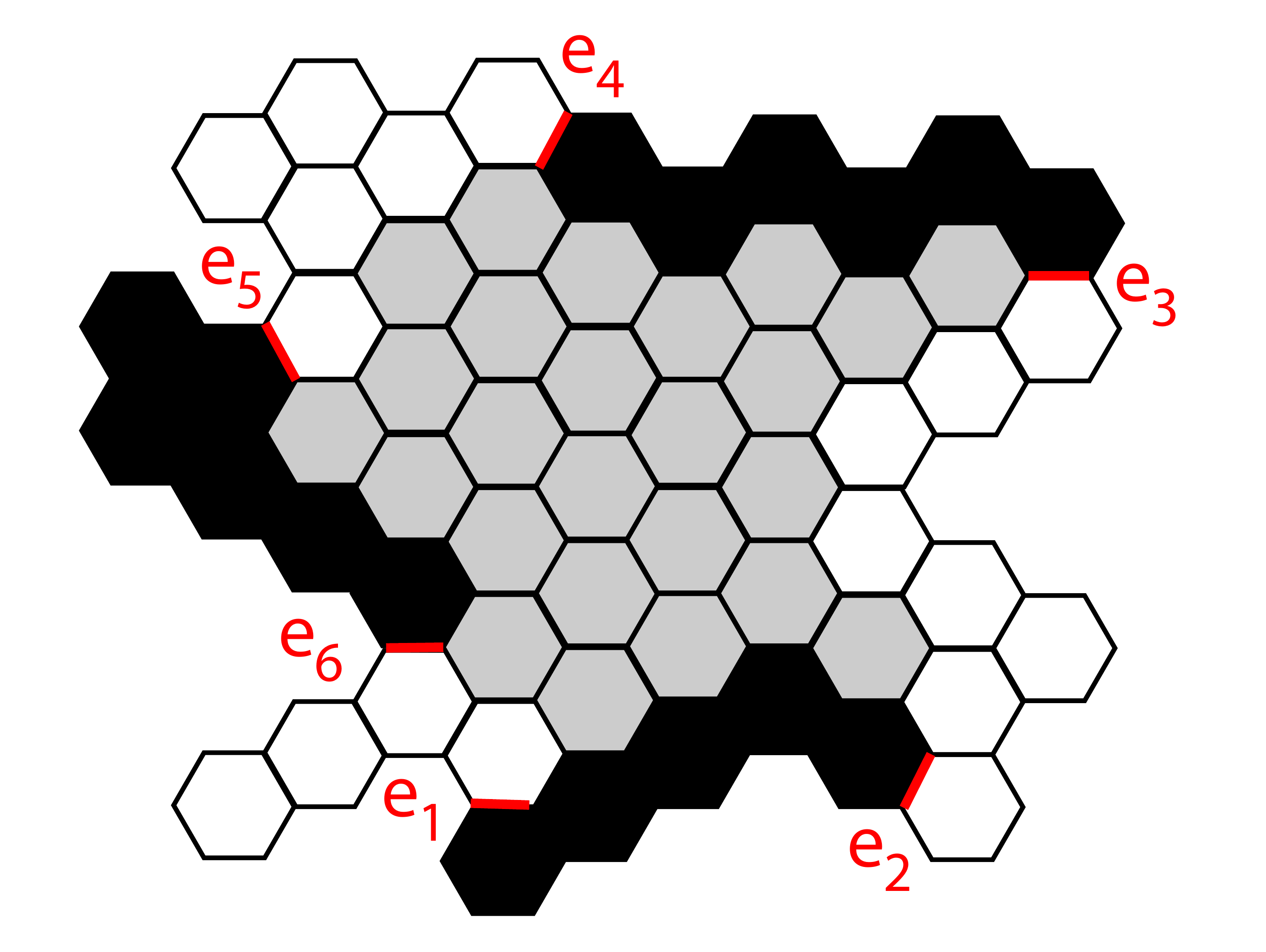}%
\includegraphics[width=0.5\textwidth]{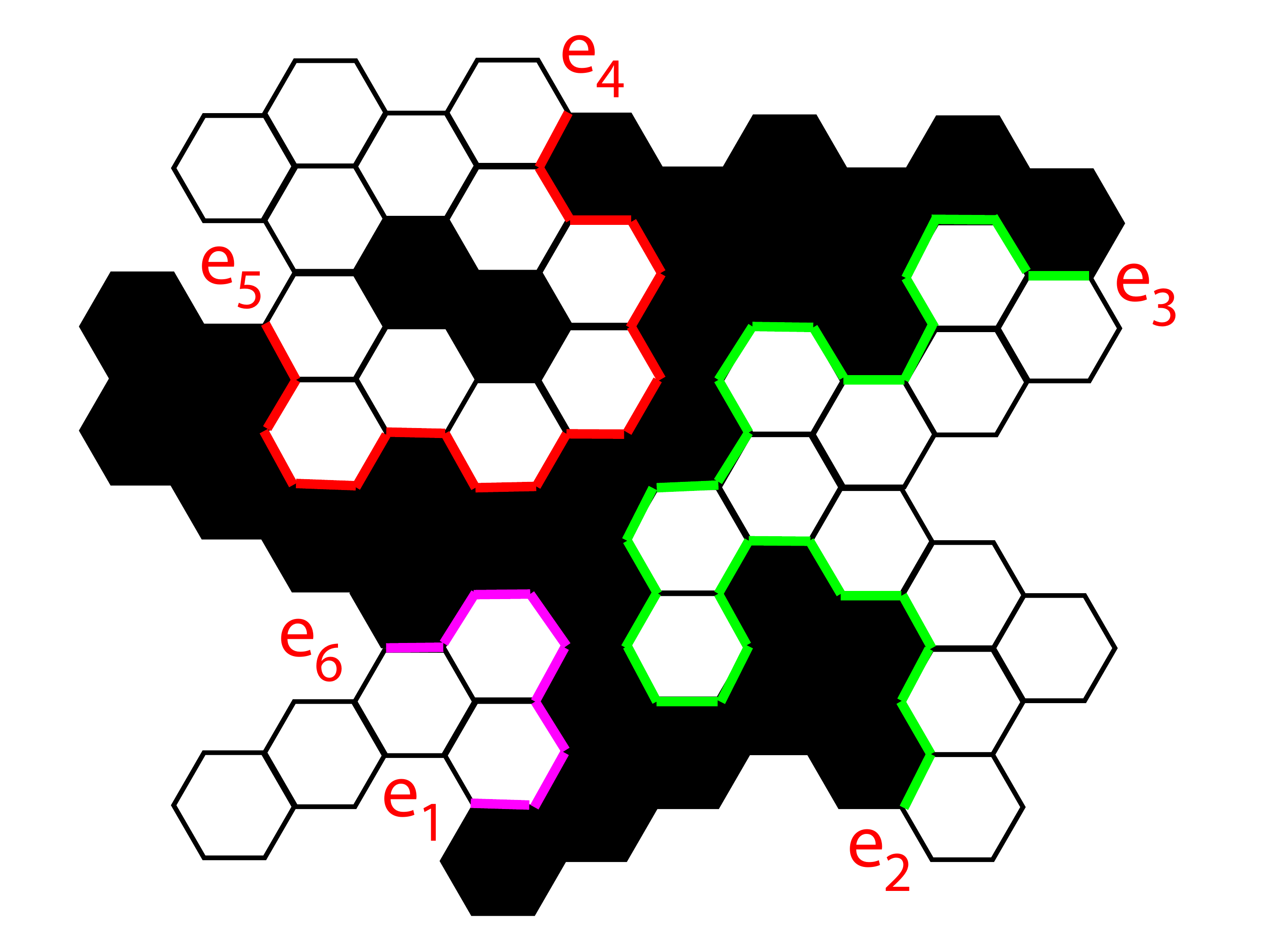}
\caption{
\label{fig: perco}
Left: A simply-connected subgraph $\Gr$ of $H$, with boundary faces altering colour between black and white over the marked boundary edges $e_1, \ldots, e_{6}$. The remaining faces are coloured gray. Right: Percolation colouring of the remaining faces, and the obtained random curves on $H$ bounding the black and white clusters adjacent to the boundary of $\Gr$.
}
\end{figure}

Consider now the critical percolation on the faces of the honeycomb lattice $H$, i.e., colouring each face independently either black or white, both with probability $1/2$. We consider this model in simply-connected subgraphs $(\Gr; e_1, \ldots, e_{2N})$ of $H$, fixing the colours of the faces adjacent to a boundary vertex, so that these boundary conditions alter colour precisely at the edges $e_1, \ldots, e_{2N}$.
The random curves  $(\gamma_{\Gr; 1}, \ldots, \gamma_{\Gr; N})$  in $(\Gr; e_1, \ldots, e_{2N})$ are the outer boundaries of the black or white clusters adjacent to the boundary, see Figure~\ref{fig: perco}.

Consider now the lattices $\delta_n H = \InfiniteGr_n$, where $\delta_n \shrinkto 0$ as $n \to \infty$, and their simply-connected subgraphs $(\Gr^{(n)}; e_1^{(n)}, \ldots, e^{(n)}_{2N})$ converging to some domain $(\domain; p_1, \ldots, p_{2N})$ in the Carath\'{e}odory sense. Study these discretizations under the assumptions and notation of Section~\ref{subsec: setup and notation}.

\begin{prop}
\label{prop: percolation convergence}
In the setup described above, the percolation interfaces $(\gamma^{(n)}_1, \ldots, \gamma^{(n)}_N)$ converge weakly in $X(\C)$ to the local-to-global multiple $\SLE(6)$ in $(\domain; p_1, \ldots, p_{2N})$ with the partition functions
\begin{align}
\label{eq: percolation part fcns}
\PartF_N (x_1, \ldots, x_{2N}) = 1.
\end{align}
\end{prop}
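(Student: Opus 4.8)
The plan is to mimic exactly the proof of Proposition~\ref{prop: Ising convergence}: apply Theorem~\ref{thm: precompactness thm multiple curves} for precompactness and Theorem~\ref{thm: loc-2-glob multiple SLE convergence, kappa le 4} (under Assumption~\ref{ass: cond C'}) for the identification, which means the whole task reduces to verifying the hypotheses of those two theorems for the critical site percolation model on the triangular/honeycomb lattice. First I would record the structural properties: alternating boundary conditions and the DDMP are immediate from the fact that the colouring is i.i.d.\ and the interfaces are deterministic functions of the colouring, so conditioning on an initial segment or a full interface is literally the same as re-running independent percolation on the complementary domain with the induced boundary colours. Next, the crossing estimate: condition~(C) (equivalently~(G)) for the one-curve model is exactly the Russo--Seymour--Welsh/Aizenman--Burchard type annulus-crossing bound for critical percolation, which is classical (and also follows from Smirnov's convergence result together with the RSW technology); I would cite the standard reference. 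This gives all hypotheses of Theorem~\ref{thm: precompactness thm multiple curves}, hence precompactness in all the relevant topologies.

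For Theorem~\ref{thm: loc-2-glob multiple SLE convergence, kappa le 4} I would then check the remaining assumptions. Assumption~\ref{ass: approximability} is trivial for a fixed lattice since any domain with marked prime ends can be approximated from inside by honeycomb subgraphs with boundary colour changes at edges landing near the prime ends. Assumption~\ref{ass: cond C'} (condition~(C')) for the full collection of curves follows, via the lemma in Section~\ref{subsubsec: approximability} relating~(G') to~(C'), from a \emph{collection-wise} RSW estimate: an unforced crossing of a fat annulus by \emph{some} interface forces a monochromatic crossing of that annulus in a region whose inner boundary touches only same-parity marked arcs, and such crossings have probability tending to zero as the modulus grows, uniformly in the domain --- again a direct consequence of RSW. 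Finally, Assumption~\ref{ass: dr fcns converge to loc mult SLE}: the driving function of one initial segment, started from any marked boundary point inside any localization neighbourhood, converges to that of the local multiple $\SLE(6)$ with partition function $\PartF_N\equiv 1$. For $N=1$ this is Smirnov's theorem (Cardy's formula $+$ the martingale argument), and for general $N$ the point is that with locally monochromatic boundary conditions the extra marked points are invisible to the exploration until it reaches them: by the DDMP and the locality of percolation, the law of the initial segment in the localization neighbourhood coincides with that of the single-interface model, whose scaling limit is chordal $\SLE(6)$; since the constant function $1$ trivially solves~\eqref{eq: PDE for multiple SLEs} and~\eqref{eq: COV for multiple SLEs} for $\kappa=6$ (where $h(6)=0$) and the corresponding growth process~\eqref{eq: SDE definition of N-SLE} has drift $\kappa\,\partial_j\log\PartF_N=0$ and hence is just chordal $\SLE(6)$, Assumption~\ref{ass: dr fcns converge to loc mult SLE} holds with $\PartF_N\equiv 1$.

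With all assumptions verified, Theorem~\ref{thm: loc-2-glob multiple SLE convergence, kappa le 4} gives weak convergence in $X(\C)$ (the regularity at marked boundary points being non-relaxed here, since the $p_i$ are distinct boundary points of nice approximating domains) of $(\gamma_1^{(n)},\ldots,\gamma_N^{(n)})$ to the local-to-global multiple $\SLE(6)$ on $(\domain;p_1,\ldots,p_{2N})$ with partition functions $\PartF_N\equiv 1$, which is the claim. I would also remark, as in the Ising case, that Theorem~\ref{thm: relation to global multiple SLE 1} then applies (Assumption~\ref{ass: cond C'} being in force), so all link patterns occur with positive probability and the conditional models converge to measures satisfying the Markov stationarity.

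The main obstacle I expect is the careful justification of Assumption~\ref{ass: dr fcns converge to loc mult SLE} for $N\ge 2$: one must argue that growing a single interface inside a localization neighbourhood $U_j$ really does not feel the other marked boundary edges, i.e.\ that the relevant conditional law (given the previously-explored initial segments) is genuinely that of a chordal exploration in the reduced domain with two marked edges, so that Smirnov's single-curve convergence applies verbatim and produces the zero-drift (constant partition function) Loewner description. This is where the combination of the DDMP and the exact independence structure of percolation, rather than any new martingale observable, does the work --- and it is exactly the phenomenon that makes the percolation partition function trivial, in contrast to the Ising case where $\PartF_N$ is a nontrivial Pfaffian. The crossing conditions~(C) and~(C'), while technically the heart of the argument, are by now standard RSW input and should be invoked rather than reproven.
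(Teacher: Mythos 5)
Your proposal is correct and follows essentially the same route as the paper's proof: both reduce the claim to checking the hypotheses of Theorems~\ref{thm: precompactness thm multiple curves} and~\ref{thm: loc-2-glob multiple SLE convergence, kappa le 4}, verify the crossing conditions (G) and (G')/(C') via Russo--Seymour--Welsh, and handle Assumption~\ref{ass: dr fcns converge to loc mult SLE} for general $N$ by the same locality observation --- that both the discrete exploration (by DDMP and independence) and the local multiple $\SLE(6)$ with $\PartF_N\equiv 1$ (since $h(6)=0$ kills the drift) are blind to the other marked points, so the $N=1$ convergence of Smirnov suffices. No gaps.
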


\begin{proof}
It is trivial to check that~\eqref{eq: percolation part fcns} are local multiple SLE partition functions with $\kappa = 6$ (this was observed, e.g., in~\cite[Proposition~4.9]{KP-pure_partition_functions_of_multiple_SLEs}). Observe also that the local multiple SLE initial segment from $p_1$ is then equal in distribution to the initial segment of a chordal $\SLE(6)$ from $p_1$ targeting at, say, $p_2$ (the precise choice of target is irrelevant due to the locality of the chordal $\SLE(6)$).

We now check the assumptions of Theorem~\ref{thm: precompactness thm multiple curves}:
\begin{itemize}
\item Alternating boundary conditions and DDMP are trivially satisfied.
\item Condition~(G) for the one-curve model follows from the Russo--Seymour--Welsh estimates.
\end{itemize}
The additional assumptions for Theorem~\ref{thm: loc-2-glob multiple SLE convergence, kappa le 4}:
\begin{itemize}
\item Assumption~\ref{ass: dr fcns converge to loc mult SLE} holds since the initial segment both in the percolation model and in the local multiple SLE~\eqref{eq: percolation part fcns} are independent of the number and locations of the other marked boundary points. Thus, the proof of convergence to chordal SLE for $N=1$ interface suffices. The latter has been addressed by various authors, see, e.g.~\cite{Smirnov-critical_percolation, CN07, Beffara-easy}.
\item Assumption~\ref{ass: approximability} holds trivially, and~\ref{ass: cond C'}, i.e., Condition (C'), is verified via condition (G'), which in turn is also a direct consequence of the Russo--Seymour--Welsh estimates.
\end{itemize}
We can now apply Theorem~\ref{thm: loc-2-glob multiple SLE convergence, kappa le 4} to complete the proof.
\end{proof}

\subsubsection*{\textbf{Prior results on percolation}}

Percolation interfaces are very well understood. (Indeed, the main reason for our discussion on it is the warning example of Section~\ref{subsec: warning example}.) Convergence results to multiple SLE type curves have been addressed in~\cite[Section~3]{KS18} and~\cite[Remark~1.5]{BPW}. Also the scaling limit of the full collection of percolation interfaces has been identified~\cite{CN-perco_full_limit}.

\subsubsection{\textbf{The FK cluster and FK-Ising models}}

\subsubsection*{\textbf{Definition of the models}}

Let us discuss the FK cluster model on the square lattice --- the FK-Ising model is later addressed as an important special case. We follow the conventions of the literature, and refer the reader to, e.g.,~\cite{DC-conf_inv_in_latt_models} for a good introduction. 

\begin{figure}
\includegraphics[width=0.5\textwidth]{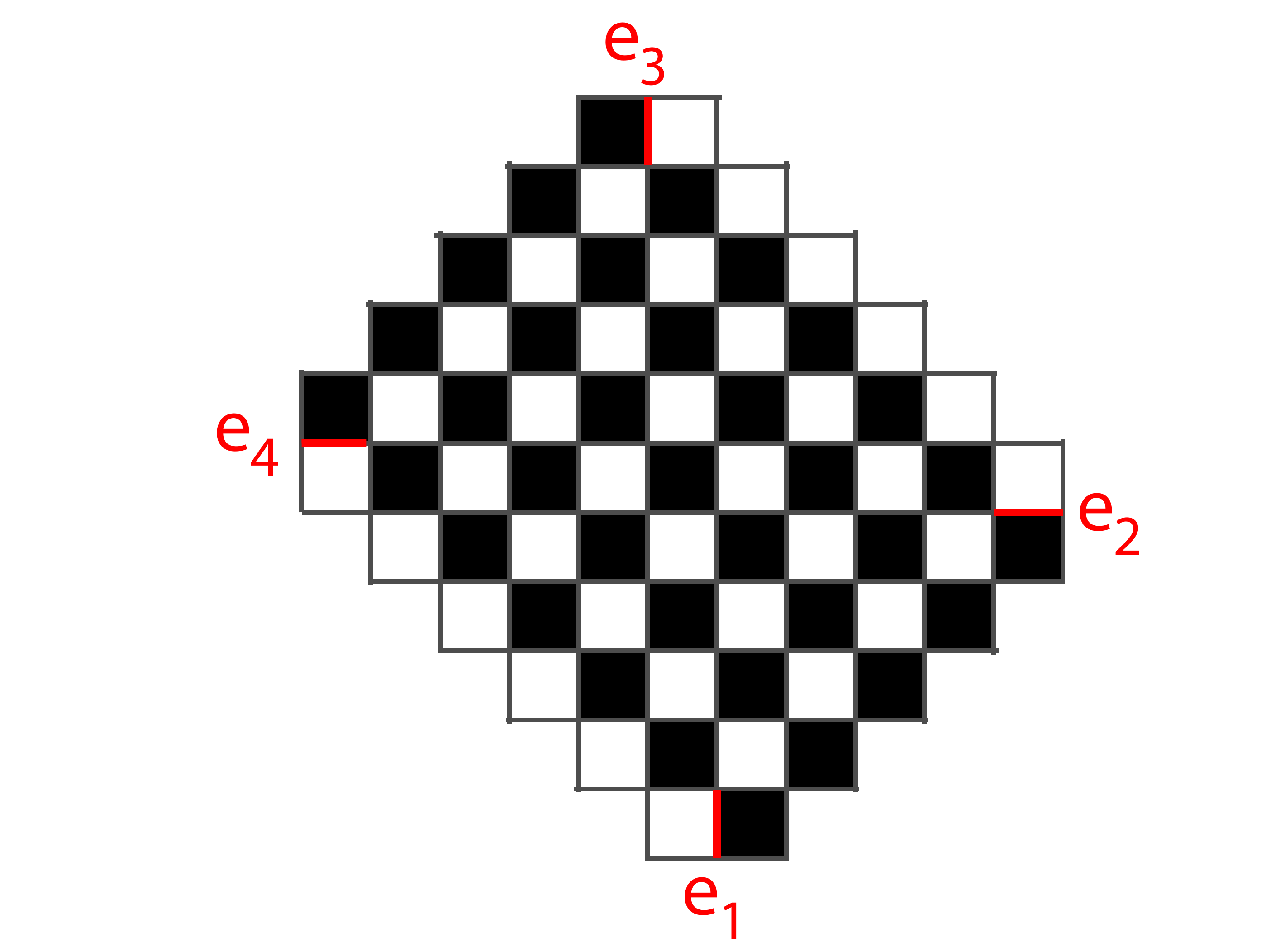}%
\includegraphics[width=0.5\textwidth]{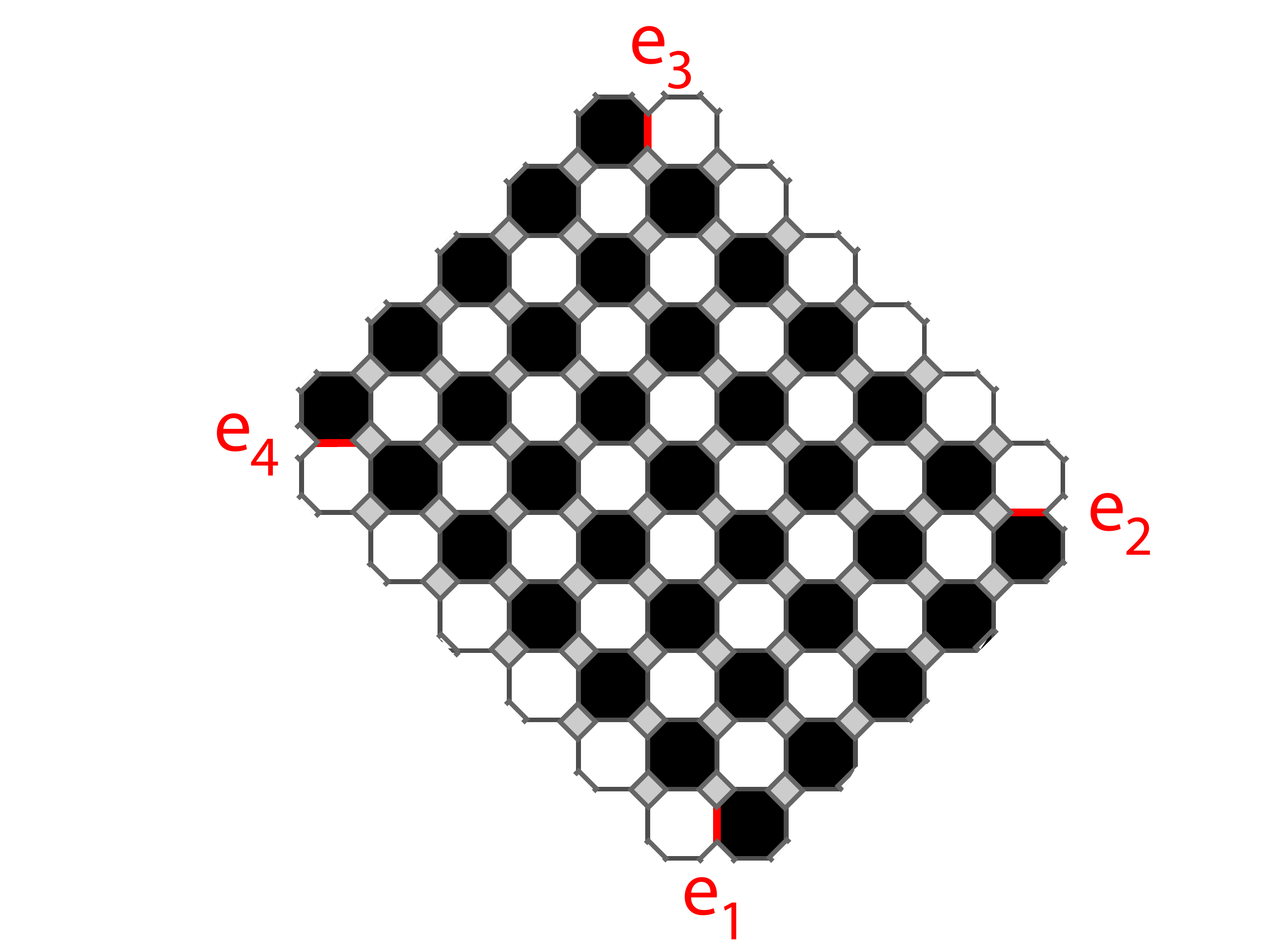} \\
\includegraphics[width=0.5\textwidth]{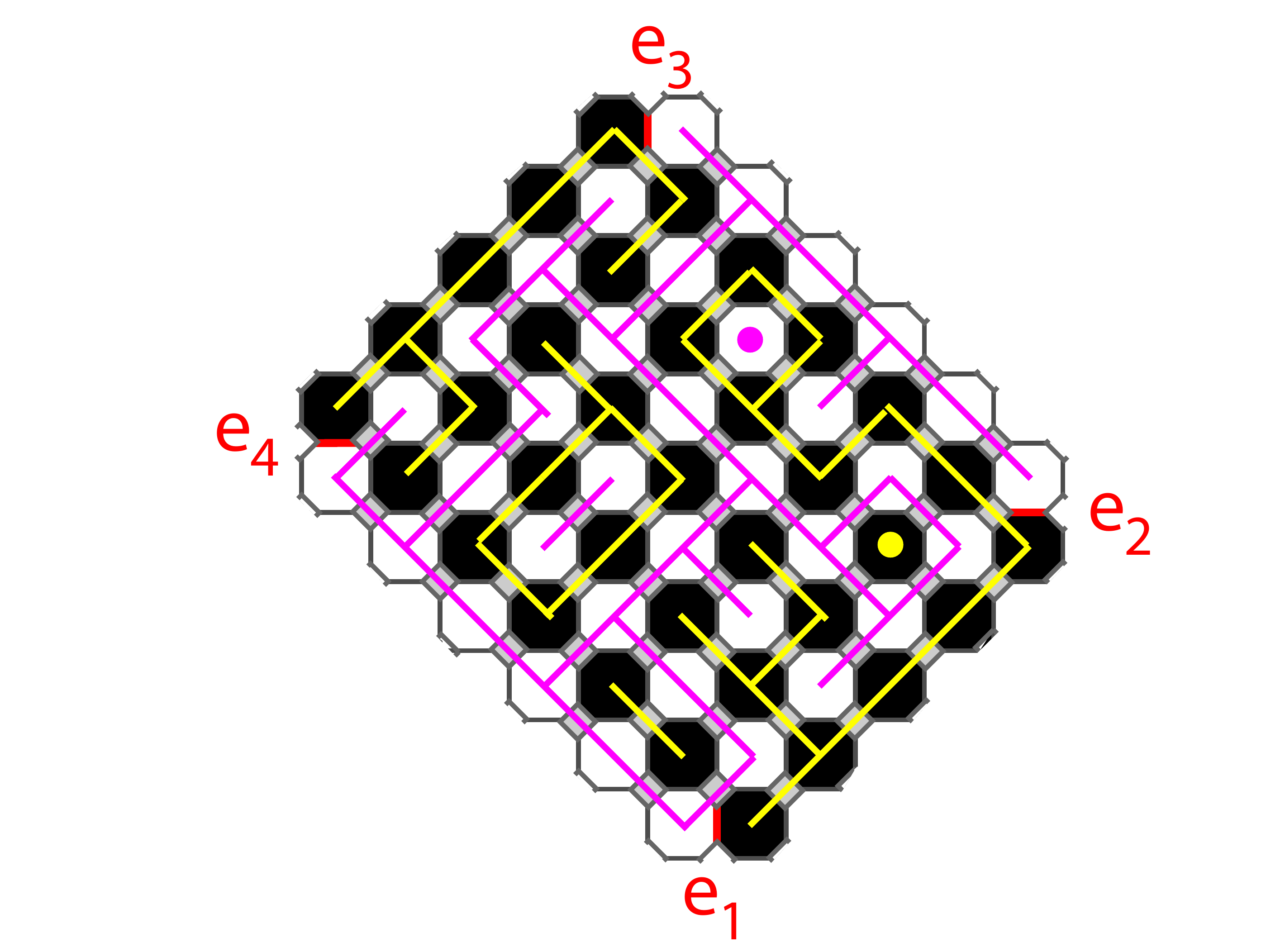}%
\includegraphics[width=0.5\textwidth]{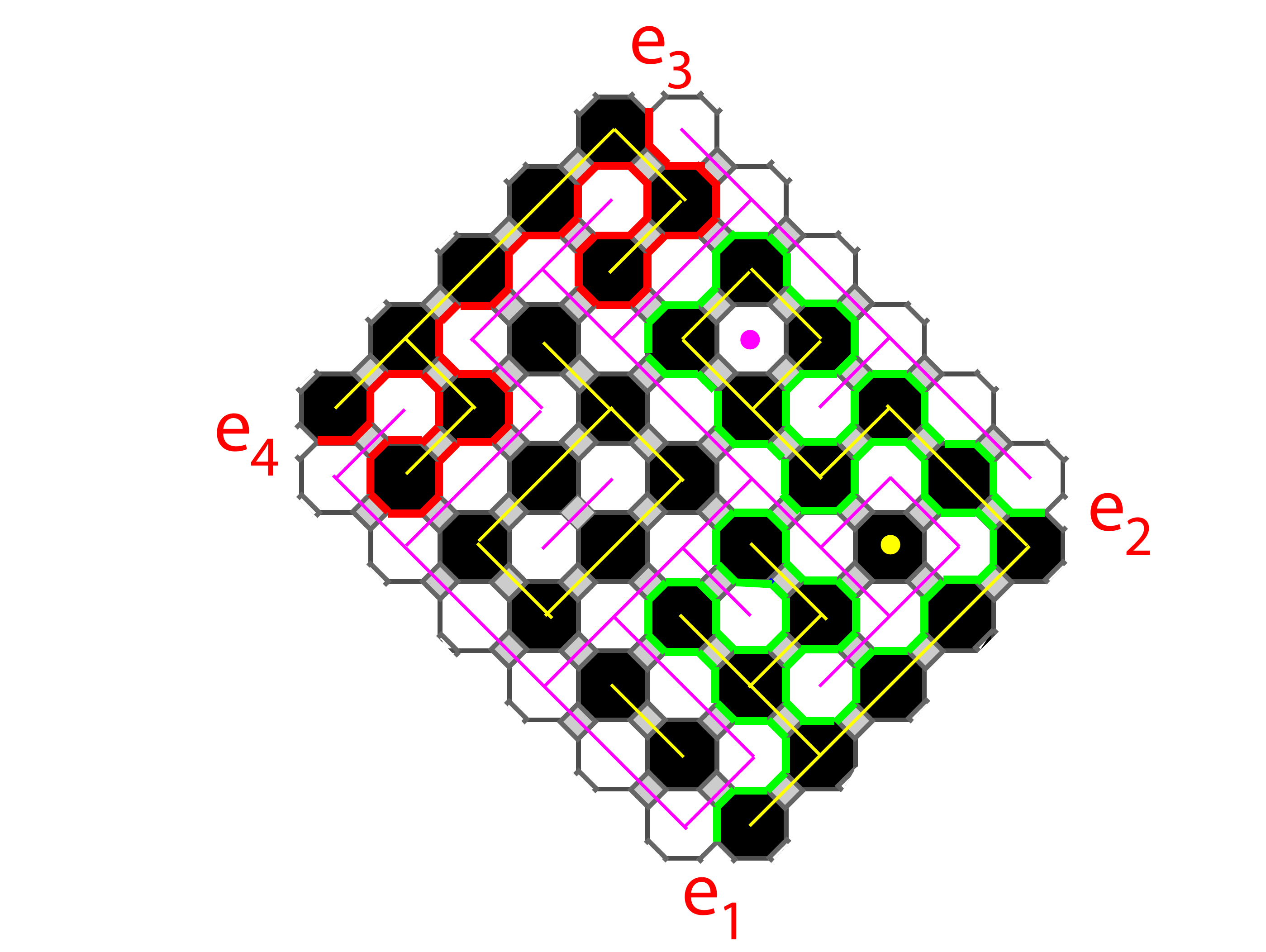}
\caption{
\label{fig: FK} The FK cluster model.
Top left: A simply-connected subgraph of $\Z^2$, with the faces chessboard coloured, and the colour of the faces edge-adjacent to the boundary changing at the marked boundary edges. Top right: the corresponding simply-connected subgraph of the square-octagon lattice $L$. Bottom left: the cluster-model subgraphs $\FKsub$ of $B_\Gr$ (in yellow) and and its dual subgraph $\FKsub^*$ of $W_\Gr$ (in magenta). Bottom right: the corresponding chordal random curves on the simply-connected subgraph of the square-octagon lattice, in red and green.
}
\end{figure}

First, colour the squares of $\Z^2$ black and white in a chessboard manner. The black (resp. white) squares form a scaled and rotated $\Z^2$ lattice, which we call the black (resp. white) lattice. These lattices are mutual duals. In the original $\Z^2$ lattice, take a simply-connected subgraph $\Gr$ whose boundary consists of $N$ black and $N$ white segments; by a black (resp. white) segment mean here that the $\Z^2$ squares inside $\Gr$ edge-adjacent to that boundary segment are all black (resp. white). The $2N$ marked boundary edges $e_1, \ldots, e_{2N}$ of $\Gr$ separate black and white boundary-neighbouring squares (top left in Figure~\ref{fig: FK}).

Next, on  subgraph of the black lattice inside $\Gr$, we impose wired boundary conditions, i.e., the black squares adjacent to the each of black boundary segment are identified, producing $N$ black boundary segment vertices. Call this graph $B_\Gr$. On the white squares inside $\Gr$, we impose slightly different boundary conditions: the white squares adjacent to the white boundary segments are \emph{all} identified, producing a \emph{single one} white boundary segment vertex. Denote this graph by $W_\Gr$. The graphs $B_\Gr$ and $W_\Gr$ are mutual planar duals.

Then, we run the \emph{FK cluster model} on $B_\Gr$ with parameters $p$ and $q$: choose a random subgraph $\FKsub$ of $B_\Gr$, whose vertices are all vertices in $B_\Gr$ but whose edges are a subset of the edges of $B_\Gr$, so that the probability of each different such subgraph $\FKsub$ is proportional to
\begin{align*}
p^{\# \{ \text{edges of } B_\Gr \text{ present in } \FKsub \} } (1 - p)^{ \# \{ \text{edges of } B_\Gr \text{ not present in } \FKsub \} } q^{ \# \{ \text{connected components of } \FKsub \} }.
\end{align*}
We will only consider the self-dual parameters satisfying $p=\sqrt{q}/(1 + \sqrt{q})$; this means that the dual subgraph $\FKsub^*$ of $W_\Gr$, consisting of all the vertices of $W_\Gr$ and the edges of $W_\Gr$ not crossed by $\FKsub$, is in distribution equal to the FK clusted model $W_\Gr$ with the same parameters $q$ and $p=\sqrt{q}/(1 + \sqrt{q})$. We will identify $\FKsub$ (resp. $\FKsub^*$) with the subgraph of the black (resp. white) lattice obtained from the edges of $\FKsub$ (resp. $\FKsub^*$) and the edges connecting black (resp. white) vertices of same black (resp. white) boundary segment (bottom left in Figure~\ref{fig: FK}).

Finally, the related random curve model is obtained from the loop representation of the FK clusters, which we describe next. First, we modify $\Gr$ slighty: every corner of the $\Z^2$ lattice is rounded by putting there a \emph{small square}, making the lattice into a square-octagon lattice, which we denote by $L$. Round the corners of the graph $\Gr$ to obtain a simply-connected subgraph of $L$, i.e., include the small squares at concave corners of $\Gr$ and exclude the ones at the convex or $180^\circ$ corners (top right in Figure~\ref{fig: FK}). Slightly abusively, let us in continuation refer by $\Gr$ to this subgraph of $L$. Now, with our convention of regarding $\FKsub$ (resp. $\FKsub^*$) as a subgraph of the black (resp. white) lattice, two opposite sides of each small square of $\Gr$ are crossed by exactly one edge from either $\FKsub$ or $\FKsub^*$; this is visible in Figure~\ref{fig: FK}(bottom left). In particular, $\FKsub$ can thus be bijectively encoded into the pairs of opposite non-crossed small-square edges of $\Gr$. Let us add to this collection of edges of $\Gr$ all the black-boundary edges of $L \cap \bdry \domain_\Gr$ and all the edges of $\Gr$ originating from $\Z^2$ and not on $\bdry \domain_\Gr$. The bijection with $\FKsub$ of course pertains. However, in the new collection of edges of $\Gr$, each vertex of $\Gr$ has either $0$ or $2$ edges adjacent to it: the edges form a collection of disjoint simple loops on $\Gr$.
This is the \emph{loop representation} of the FK cluster model. Each loop is adjacent to black (resp. white) squares of $\Z^2$ from exactly one connected component of $\FKsub$ (resp. $\FKsub^*$). We can sample $\FKsub$ via sampling its loop representation, in which case the probability of a loop configuration in proportional to
\begin{align}
\label{FK loop ptt fcn}
\sqrt{q}^{\# \{ \text{loops} \}}.
\end{align}

Consider now those loops that contain the black boundary segments of $\bdry \domain_\Gr$. In addition to the boundary segments, this collection of loops contains $N$ chordal paths inside $\Gr$, pairing the marked boundary edges $e_1, \ldots, e_{2N}$. The measures with random curves $(\PR^{( \Gr; e_1, \ldots, e_{2N})}, (\gamma_{\Gr; 1}, \ldots, \gamma_{\Gr; N}))$ are the FK cluster loop representations and these chordal paths on $\Gr$ (bottom right in Figure~\ref{fig: FK}).

The FK-Ising model is the FK cluster model with parameters $q=2$ and $p=\sqrt{q}/(1 + \sqrt{q}) =\sqrt{2}/(1 + \sqrt{2})$.

\subsubsection*{\textbf{Precompactness of the FK cluster models}}

It has been conjectured that the $N$ random curves in the self-dual FK cluster model introduced above, with parameter $q \in [0, 4)$, converges to SLE type scaling limits, with the SLE parameter $\kappa$ depending on the cluster model parameter via
\begin{align*}
\kappa = \frac{4 \pi}{\arccos (- \sqrt{q}/2)}.
\end{align*}
For such predictions, see, e.g.,~\cite{Schramm-ICM, Smirnov-ICM} for $N=1$ curve and chordal SLEs,~\cite{BPW} for general $N$ and global multiple SLEs. Regarding such convergence proofs, the precompactness part has been established~\cite{DC-conf_inv_in_latt_models,DST-q-Potts_phase_tr,  BPW}, but the limit identification step is missing, except in the FK-Ising case $q=2$. We now check that also when following the convergence proof strategy and of this paper, only the limit identification step, i.e., Assumption~\ref{ass: dr fcns converge to loc mult SLE} is missing.

\begin{prop}
\label{prop: FK cluster precompactness}
The discrete curve models obtained from the loop representation of the FK cluster model with $q \ge 1$ satisfy the assumptions of Theorem~\ref{thm: precompactness thm multiple curves}. Also the assumptions of Theorem~\ref{thm: loc-2-glob multiple SLE convergence, kappa le 4} except for possibly Assumption~\ref{ass: dr fcns converge to loc mult SLE} hold.
\end{prop}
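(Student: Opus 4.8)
The plan is to verify, one by one, the hypotheses of Theorems~\ref{thm: precompactness thm multiple curves} and~\ref{thm: loc-2-glob multiple SLE convergence, kappa le 4} for the loop-representation curves of the self-dual FK cluster model with $q \ge 1$, just as was done for the Ising and percolation models above. First I would check the assumptions of Theorem~\ref{thm: precompactness thm multiple curves}: \emph{alternating boundary conditions} and the \emph{DDMP} are structural properties of the FK loop model that follow directly from the spatial Markov property of the FK measure together with the duality between $B_\Gr$ and $W_\Gr$ --- peeling off an initial segment of an interface is tantamount to conditioning on the edges it crosses, which by the domain Markov property of FK leaves an FK cluster model (with the same self-dual $p,q$) on the reduced graph; the two boundary-arc types (wired-black versus wired-white) account for the ``alternating'' structure. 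The only nontrivial input is Condition~(C) (equivalently~(G)) for the \emph{one-curve} FK interfaces, and this is exactly the uniform crossing estimate established for self-dual FK percolation with $q \ge 1$; I would cite the relevant RSW-type results (\cite{DC-conf_inv_in_latt_models, DST-q-Potts_phase_tr}, as already used in~\cite{BPW}), noting that $q\ge 1$ is precisely where positive association (FKG) and the box-crossing estimates are available.

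Next I would address the extra hypotheses of Theorem~\ref{thm: loc-2-glob multiple SLE convergence, kappa le 4}. Assumption~\ref{ass: approximability} (existence of dense-mesh discretizations of any prescribed limiting domain with marked prime ends) is the usual trivial geometric statement --- one builds the square-octagon subgraph approximating $\domain$ from inside and chooses marked boundary edges converging to the $p_i$; there is nothing model-specific here. For the remaining assumption I would invoke the ``sufficient alternative condition in terms of annuli'', namely Condition~(G'): by the lemma immediately following its statement, (G') with ratio $R/r \ge M$ implies (C') with $m(Q) \ge 4(M+1)^2$, so it suffices to prove (G'). But (G') is again a crossing estimate --- it says an interface does not make an \emph{unforced} crossing of a long boundary annulus, ``unforced'' now being with respect to the full collection and the parities of the marked arcs --- and this follows from the same RSW box-crossing technology for self-dual FK with $q\ge 1$, uniformly over all simply-connected subgraphs arising as reductions of the lattices $\InfiniteGr_n$. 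I would spell out that the self-dual point is what makes the dual interface estimates symmetric, so both colours of clusters obey the same bounds; this is what lets one control crossings by \emph{any} of the $N$ curves simultaneously.

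The one hypothesis I would explicitly flag as \emph{not} verified, and the main obstacle, is Assumption~\ref{ass: dr fcns converge to loc mult SLE}: the convergence of the driving function of an initial segment to that of a local multiple $\SLE(\kappa)$ with a specific partition function. This is precisely the model-specific martingale-identification step discussed in the introduction, and for general $q \in [0,4)$ no martingale observable converging strongly enough is known --- whereas for $q=2$ the fermionic (holomorphic spinor) observable of Smirnov, Chelkak--Smirnov, and its multiple-interface refinement do the job. So the statement to prove is exactly that \emph{all the a priori (non-model-specific) assumptions hold}, leaving only this identification open; I would conclude the proof of Proposition~\ref{prop: FK cluster precompactness} by this enumeration, and then (as the excerpt's running structure suggests) immediately treat the FK-Ising case $q=2$ separately, where Assumption~\ref{ass: dr fcns converge to loc mult SLE} is supplied by the known convergence results and one obtains an actual convergence theorem to local-to-global multiple $\SLE(16/3)$ with the FK-Ising partition functions.
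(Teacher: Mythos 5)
Your proposal follows essentially the same route as the paper's proof: alternating boundary conditions and the DDMP are noted as structural consequences of the FK spatial Markov property, Condition~(G) for one curve is imported from the RSW-type crossing estimates of~\cite{DST-q-Potts_phase_tr}, Assumption~\ref{ass: approximability} is dismissed as trivial, and Condition~(C') is obtained via Condition~(G') by the same crossing technology, with Assumption~\ref{ass: dr fcns converge to loc mult SLE} correctly flagged as the one missing, model-specific ingredient. Your write-up is if anything slightly more detailed than the paper's (which simply asserts the DDMP ``clearly'' holds), but the logical content and the citations coincide.
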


\begin{proof}
For the assumptions of Theorem~\ref{thm: precompactness thm multiple curves}, the discrete models clearly have alternating boundary conditions and satisfy the DDMP. Condition (G) for the one-curve model has been verified in~\cite[Theorem~6]{DST-q-Potts_phase_tr}.
As regards the assumptions of Theorem~\ref{thm: loc-2-glob multiple SLE convergence, kappa le 4}, Assumption~\ref{ass: approximability} holds trivially. Assumption~\ref{ass: cond C'}, i.e., Condition (C'), is verified via condition (G'), which in turn is proven identically to condition (G) in~\cite{DST-q-Potts_phase_tr}.
\end{proof}

\subsubsection*{\textbf{Convergence of two FK-Ising interfaces}}

Let us now discuss the weak convergence in the FK-Ising model, i.e., the FK cluster model with $q = 2$ with two curves. We keep the discussion here largely informal, referring to the more complete account in~\cite{KS-bdary_loops_FK, KS18} for those parts. Multiple interfaces in FK cluster and FK Ising models have been studied priorly in~\cite{KS-bdary_loops_FK, KS18, BPW}, and the scaling limits in the setups considered below could be identified (with slightly different characterizations) by combining results from those papers. Following~\cite{KS18}, we consider a slightly modified FK model, so that in the loop representation probabilities~\ref{FK loop ptt fcn}, boundary-touching loops are not counted.

Note first that Proposition~\ref{prop: FK cluster precompactness} applies for the FK-Ising model. (Conditions (C) and (C') can also be verified directly then~\cite{CDCH}.) Thus, in order to apply the main theorem~\ref{thm: loc-2-glob multiple SLE convergence, kappa le 4} of this paper, it remains to verify Assumption~\ref{ass: dr fcns converge to loc mult SLE}. To that end, first, the scaling limit of $N=1$ curve has been identified in~\cite[Theorem~2]{CDHKS-convergence_of_Ising_interfaces_to_SLE} as a chordal $\SLE(16/3)$. For $N=2$ curves, the driving process of the initial segment of one curve has been identified in~\cite[Equation~(94)]{KS-bdary_loops_FK}.
Recalling that the proof of Theorem~\ref{thm: loc-2-glob multiple SLE convergence, kappa le 4} is based on an induction over $N$, we can thus apply it for the FK-Ising model with $N=2$ curves. We conclude the following.

\begin{prop}
\label{prop: FK Ising convergence}
The curves $(\gamma_{\UnitD; 1}^{(n)}; \gamma_{\UnitD; 2}^{(n)})$ under the FK-Ising model with $N=2$ curves converges weakly to the following limit: the up-to-swallowing initial segment $\InitSegmDelta{0}$ is described by the Loewner growth in~\cite[Equation~(94)]{KS-bdary_loops_FK}. Given $\InitSegmDelta{0}$, the regular conditional laws of the remainder of the curves are two independent chordal $\SLE(16/3)$ curves in the respective domains of $\UnitD \setminus \InitSegmDelta{0}$, with the three remaining marked boundary points and one at the tip of $\InitSegmDelta{0}$.
\end{prop}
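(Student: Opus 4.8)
\textbf{Proof strategy for Proposition~\ref{prop: FK Ising convergence}.}
The plan is to verify the hypotheses of Theorem~\ref{thm: loc-2-glob multiple SLE convergence, kappa le 4} for the (slightly modified) FK-Ising model with $N=2$ curves and then simply apply that theorem, spelling out what its conclusion says in this concrete case. First I would invoke Proposition~\ref{prop: FK cluster precompactness}, which already establishes the assumptions of Theorem~\ref{thm: precompactness thm multiple curves} (alternating boundary conditions, DDMP, and condition (C)$\Leftrightarrow$(G) for the one-curve model), together with Assumption~\ref{ass: approximability} and Assumption~\ref{ass: cond C'} (condition (C'), obtained via (G') exactly as (G) was obtained in~\cite{DST-q-Potts_phase_tr}); in the FK-Ising special case $q=2$ one may alternatively cite~\cite{CDCH} directly for (C) and (C'). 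Thus the only remaining hypothesis to check is Assumption~\ref{ass: dr fcns converge to loc mult SLE}: the convergence of the driving function of one initial segment to that of a local multiple SLE.

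Next I would assemble Assumption~\ref{ass: dr fcns converge to loc mult SLE} from the literature, exploiting that the proof of Theorem~\ref{thm: loc-2-glob multiple SLE convergence, kappa le 4} proceeds by induction on $N$. The base case $N=1$ is the convergence of a single FK-Ising interface to chordal $\SLE(16/3)$, established in~\cite[Theorem~2]{CDHKS-convergence_of_Ising_interfaces_to_SLE}; this verifies Assumption~\ref{ass: dr fcns converge to loc mult SLE} with $\PartF_1$ the (essentially unique) one-curve partition function. For $N=2$, the driving process of the initial segment of one of the two curves, in an arbitrary localization neighbourhood, has been identified in~\cite[Equation~(94)]{KS-bdary_loops_FK} as the driving function of a local multiple $\SLE(16/3)$; combined with the precompactness of driving functions from Theorem~\ref{thm: precompactness thm multiple curves}(A)(iv) and Remark~\ref{rem: precompactness for irregular boundary}, and the standard argument that a precompact sequence whose driving process is characterized by a martingale problem converges to the unique solution, this gives Assumption~\ref{ass: dr fcns converge to loc mult SLE} for $N=2$. (One should remark that the boundary-regularity conditions under which~\cite{KS-bdary_loops_FK} is stated can be relaxed to the Carath\'{e}odory setting exactly as in the Ising case, cf.\ the footnote in the proof of Proposition~\ref{prop: Ising convergence}.)

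Having verified all hypotheses, I would then apply Theorem~\ref{thm: loc-2-glob multiple SLE convergence, kappa le 4} under Assumption~\ref{ass: cond C'}: the curves $(\gamma_{\UnitD;1}^{(n)}, \gamma_{\UnitD;2}^{(n)})$ converge weakly to the local-to-global multiple $\SLE(16/3)$ with partition function $\PartF_2$ on $(\UnitD; \Unitp_1,\ldots,\Unitp_4)$. It remains only to unwind the definition of that limiting object for $\kappa = 16/3 \in (4,8)$: by step (3) of the definition in Section~\ref{subsec: loc-2-glob}, together with Lemma~\ref{lem: SLE up-to-swallowing segments exist}, the up-to-swallowing initial segment $\InitSegmDelta{0}$ of $\gamma_{\UnitD;1}$ is exactly the closed curve traced by the local multiple SLE growth, i.e.\ the Loewner growth of~\cite[Equation~(94)]{KS-bdary_loops_FK}; by Proposition~\ref{prop: initial segment end points} for $4<\kappa<8$, $\InitSegmDelta{0}$ a.s.\ does not terminate at a marked boundary point, so $\UnitD\setminus\InitSegmDelta{0}$ has an ``odd'' and an ``even'' component; and by step (4b) and Proposition~\ref{prop: relation to global multiple SLE}, conditionally on $\InitSegmDelta{0}$ the remainder of the curves are two independent chordal $\SLE(16/3)$'s in those components (one-curve local-to-global multiple SLE being just chordal SLE), with the three surviving marked points $\Unitp_2,\Unitp_3,\Unitp_4$ and the new marked point at $\InitSegmDelta{0}(1)$ distributed so that each component carries an even number of marked points. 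This is precisely the statement of the Proposition.

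\textbf{Main obstacle.} The genuinely non-routine point is Assumption~\ref{ass: dr fcns converge to loc mult SLE} for $N=2$: one must check that the local martingale observable and driving-function computation of~\cite{KS-bdary_loops_FK} indeed identify the scaling limit of the initial segment with a \emph{local multiple} $\SLE(16/3)$ (in the sense of Section~\ref{subsubsec: local multiple SLE partition functions}, i.e.\ with a bona fide partition function $\PartF_2$ satisfying (PDE) and (COV)), and that this identification is robust under relaxing the boundary regularity to Carath\'{e}odory convergence. Everything else is bookkeeping: citing Proposition~\ref{prop: FK cluster precompactness} for precompactness and conditions (C), (C'), and then reading off the $\kappa\in(4,8)$ branch of the local-to-global definition.
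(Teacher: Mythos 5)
Your proposal follows essentially the same route as the paper: cite Proposition~\ref{prop: FK cluster precompactness} (plus \cite{CDCH}) for the precompactness hypotheses and conditions (C), (C'), obtain Assumption~\ref{ass: dr fcns converge to loc mult SLE} for $N=1$ from \cite[Theorem~2]{CDHKS-convergence_of_Ising_interfaces_to_SLE} and for $N=2$ from \cite[Equation~(94)]{KS-bdary_loops_FK}, and then apply the induction in Theorem~\ref{thm: loc-2-glob multiple SLE convergence, kappa le 4} and read off the $\kappa\in(4,8)$ branch of the local-to-global definition. Your unwinding of the conclusion (via Lemma~\ref{lem: SLE up-to-swallowing segments exist}, Proposition~\ref{prop: initial segment end points}, and step (4b)) and your flagged obstacle about the partition-function identification in \cite{KS-bdary_loops_FK} are consistent with the paper's admittedly informal treatment.
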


The curves $(\gamma_{\UnitD; 1}^{(n)}; \gamma_{\UnitD; 2}^{(n)})$ under the FK-Ising model conditional on a link pattern were studied in~\cite[Theorem~1.1]{KS18}. The initial segment $\InitSegmDelta{0}$ is then described by the hypergeometric $\SLE(16/3)$. The following convergence of a pair of curves was stated there without explicit proof.

\begin{prop}
\label{prop: conditional FK Ising convergence}
Proposition~\ref{prop: FK Ising convergence} holds for the curves $(\gamma_{\UnitD; 1}^{(n)}; \gamma_{\UnitD; 2}^{(n)})$ under the FK-Ising model conditional on a link pattern, with $\InitSegmDelta{0}$ changed to the hypergeometric SLE of~\cite[Equation~(2)]{KS18}.
\end{prop}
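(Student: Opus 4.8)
The plan is to derive Proposition~\ref{prop: conditional FK Ising convergence} from the conditional version of Theorem~\ref{thm: loc-2-glob multiple SLE convergence, kappa le 4}, exactly paralleling the proof of Proposition~\ref{prop: FK Ising convergence} but replacing the unconditional inputs by their conditional analogues. First I would recall, as already noted in the discussion preceding Proposition~\ref{prop: FK Ising convergence}, that the conditional FK-Ising model satisfies all assumptions of Theorem~\ref{thm: precompactness thm multiple curves}: alternating boundary conditions and the DDMP are evident, while condition (C) (equivalently (G)) for the one-curve model is the statement of~\cite[Theorem~6]{DST-q-Potts_phase_tr} (or directly from~\cite{CDCH} in the FK-Ising case). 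One then observes, as in Remark~\ref{rem: crossing conditions and DDMP}, that these properties and the crossing estimate are inherited by the \emph{conditional} measures $\PR^{(n)}_\alpha[\cdot] = \PR^{(n)}[\cdot \mid \alpha]$, which is the version of the precompactness theorem we need. Thus precompactness of $(\gamma^{(n)}_{\UnitD;1}, \gamma^{(n)}_{\UnitD;2})$ under the conditional model is in place, and by Corollary~\ref{cor: marginal precompactness} and Theorem~\ref{thm: one curve precompactness} the stopped driving functions of the initial segments are precompact as well.

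Next I would verify the remaining assumptions needed to invoke Theorem~\ref{thm: loc-2-glob multiple SLE convergence, kappa le 4} in its conditional form (using Assumption~\ref{ass: quantitative no boundary visits assumption}, which is legitimate since $\kappa = 16/3 \le 4$ — wait, $16/3 > 4$, so this branch is unavailable; one must instead use Assumption~\ref{ass: cond C'}). Here the key point is that condition (C'), via condition (G'), holds for the (unconditional) FK-Ising curve collection by the same argument as condition (G) in~\cite{DST-q-Potts_phase_tr} — this is exactly what Proposition~\ref{prop: FK cluster precompactness} asserts, and it is not affected by the conditioning since Assumption~\ref{ass: cond C'} is imposed on the unconditional collection of measures on all relevant subgraphs. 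Assumption~\ref{ass: approximability} is trivial for the square-octagon lattice discretizations. The only genuinely new input over Proposition~\ref{prop: FK Ising convergence} is Assumption~\ref{ass: dr fcns converge to loc mult SLE} in its conditional form: the driving function of the initial segment of one curve, under the conditional measure $\PR^{(n)}_\alpha$, must converge to that of the hypergeometric $\SLE(16/3)$ growth process, i.e.\ to a local multiple SLE with the conditional (pure) partition function corresponding to $\alpha$. This is precisely the content of~\cite[Theorem~1.1]{KS18}, whose martingale-observable identification yields the hypergeometric SLE driving process of~\cite[Equation~(2)]{KS18}. As remarked in Section~\ref{subsubsec: setup and assumptions}, the generalization of Theorem~\ref{thm: loc-2-glob multiple SLE convergence, kappa le 4} (under Assumption~\ref{ass: cond C'}) to conditional discrete curve models — hold on, the text explicitly says Assumption~\ref{ass: cond C'} is \emph{not} compatible with conditional models. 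So the correct route is to instead appeal to Theorem~\ref{thm: relation to global multiple SLE 2}-style reasoning, or more directly to run the induction in the proof of Proposition~\ref{thm: k le 4 local-to-global NSLE} using Assumption~\ref{ass: quantitative no boundary visits assumption} in its conditional form, which \emph{is} explicitly allowed for that branch.

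Thus the cleaner plan is: use the conditional-model version of Theorem~\ref{thm: loc-2-glob multiple SLE convergence, kappa le 4} under the conditional form of Assumption~\ref{ass: quantitative no boundary visits assumption}. The latter assumption — that boundary visits of the curves away from endpoints can be excluded uniformly — is available \emph{a posteriori} for the FK-Ising model: once the one-curve marginal is identified as chordal $\SLE(16/3)$ by~\cite[Theorem~2]{CDHKS-convergence_of_Ising_interfaces_to_SLE}, and the initial segment as hypergeometric $\SLE(16/3)$ by~\cite[Theorem~1.1]{KS18}, one can feed this into the machinery. Actually the honest statement is that Assumption~\ref{ass: quantitative no boundary visits assumption} fails for $\kappa > 4$ since such curves do visit the boundary; so one must genuinely work with the Assumption~\ref{ass: cond C'} branch applied to the \emph{unconditional} model and then extract the conditional statement. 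Concretely: apply Proposition~\ref{prop: FK Ising convergence} (unconditional convergence to the local-to-global multiple SLE, which by Proposition~\ref{prop: relation to global multiple SLE} is a convex combination over link patterns $\alpha$ of measures satisfying the Markov stationarity), and by Theorem~\ref{thm: relation to global multiple SLE 1} all link patterns appear with positive probability. Since $\PR^{(n)}[\cdot] = \sum_\alpha \PR^{(n)}[\alpha]\,\PR^{(n)}_\alpha[\cdot]$ converges and each $\PR^{(n)}_\alpha$ is precompact, along a subsequence each conditional measure converges; its limit satisfies the Markov stationarity with link pattern $\alpha$ (this follows by taking conditional expectations in the domain-Markov identity of Lemma~\ref{lem: domain Markov of scaling limit curves 2}, restricted to the event that $\InitSegmDelta{0}$ realizes $\alpha$). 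One then identifies that limit: its initial segment has the hypergeometric-SLE law of~\cite[Equation~(2)]{KS18} by the conditional form of Assumption~\ref{ass: dr fcns converge to loc mult SLE} (i.e.~\cite[Theorem~1.1]{KS18}), and given $\InitSegmDelta{0}$, since $\InitSegmDelta{0}$ a.s.\ terminates at the $\alpha$-partner of $\Unitp_1$ (by Proposition~\ref{prop: initial segment end points}, wait — $\kappa = 16/3 > 4$, so by Proposition~\ref{prop: initial segment end points} the initial segment does \emph{not} terminate at a marked boundary point; so the curve structure is the $\kappa>4$ one, with a tip becoming a new marked point). Given $\InitSegmDelta{0}$, the remaining curves are two independent chordal $\SLE(16/3)$ in the two components of $\UnitD\setminus\InitSegmDelta{0}$ — one with three of the remaining marked points plus the tip, one with the tip plus one remaining point, which for $N=2$ means each component carries exactly one chordal curve — by Proposition~\ref{prop: relation to global multiple SLE}. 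The main obstacle I expect is the bookkeeping of this last step: carefully matching the link-pattern conditioning with the combinatorics of which component the tip falls into and confirming that the resulting object coincides literally with the statement of~\cite[Theorem~1.1]{KS18}; the analytic content is entirely borrowed, so no new estimates are needed, only a careful verification that the hypotheses of the already-proven theorems are met by the conditional FK-Ising model and an honest treatment of the subsequence extraction in the decomposition $\PR^{(n)} = \sum_\alpha \PR^{(n)}[\alpha]\PR^{(n)}_\alpha$.
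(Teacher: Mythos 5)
After discarding the inapplicable branches (Assumption~\ref{ass: quantitative no boundary visits assumption} failing for $\kappa=16/3>4$, and condition (C') being incompatible with conditioning), you settle on exactly the paper's route: start from the unconditional limit of Proposition~\ref{prop: FK Ising convergence}, note that the link pattern is revealed by which arc the tip of $\InitSegmDelta{0}$ lands on (each with positive probability), condition on that event, and invoke~\cite[Theorem~1.1]{KS18} to identify the conditional law of $\InitSegmDelta{0}$ as the hypergeometric SLE while the conditional law of the remaining curves given $\InitSegmDelta{0}$ is unchanged. Your explicit handling of the decomposition $\PR^{(n)}=\sum_\alpha \PR^{(n)}[\alpha]\PR^{(n)}_\alpha$ is if anything slightly more careful than the paper's terse conditioning of the continuum object, so the proposal is correct and essentially the paper's proof.
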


\begin{proof}
Consider first the unconditional scaling limit of Proposition~\ref{prop: FK Ising convergence}. The boundary point $\Unitp_1$ connects to $\Unitp_2$ (resp. $\Unitp_4$) if and only if the tip of $\InitSegmDelta{0}$ is on the arc $(\Unitp_2, \Unitp_3)$ (resp. $(\Unitp_3, \Unitp_4)$) of $\bdry \UnitD$, by Proposition~\ref{prop: initial segment end points}. Both of these occur with positive probability, given explicitly in~\cite[Equation~(4)]{KS18}. Condition now the unconditional scaling limit of Proposition~\ref{prop: FK Ising convergence} on the tip of $\InitSegmDelta{0}$ lying one of these arcs, say $(\Unitp_2, \Unitp_3)$.  On the one hand, Proposition~\ref{prop: FK Ising convergence} gives the regular conditional distributions of the remaining curves as chordal SLEs. On the other hand, this conditioning reveals the link pattern and thus~\cite[Theorem~1.1]{KS18} tells that the law of $\InitSegmDelta{0}$ under this condition is the hypergeometric SLE.
\end{proof}

\subsection{A warning example: not all local multiple SLEs are global}
\label{subsec: warning example}

The identification of the scaling limit of percolation in the previous subsection was particularly interesting due to the following consequence.

\begin{prop}
Let $N \ge 3$ and $\Unitp_1, \ldots, \Unitp_{2N} \in \bdry \UnitD$ be any $2N$ distinct points. There is a collection of $N$ chordal random curves in $X(\overline{\UnitD})$, pairing the boundary points $\Unitp_1, \ldots, \Unitp_{2N}$, such that the initial and final segments of these curves in any localization neighbourhoods are those of the local multiple $\SLE(6)$ in Proposition~\ref{prop: percolation convergence}, but the full curves are not the local-to-global multiple $\SLE(6)$ in Proposition~\ref{prop: percolation convergence},
\end{prop}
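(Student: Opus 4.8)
The plan is to produce the desired collection of curves as a convex combination of global multiple $\SLE(6)$ measures with \emph{different} link patterns, and then argue that (i) such a mixture has the right local behaviour --- its initial/final segments in any localization neighbourhood are those of the local multiple $\SLE(6)$ --- while (ii) the mixture is not itself a local-to-global multiple $\SLE(6)$, because the latter has a specific, canonical distribution over link patterns that a generic mixture does not reproduce. The point is that percolation gives us, via Proposition~\ref{prop: percolation convergence}, a concrete local-to-global multiple $\SLE(6)$; but by conditioning the percolation interfaces on different link patterns $\alpha\in\LP_N$ we also obtain, for each $\alpha$, a global multiple $\SLE(6)$ with pairing $\alpha$ (this is exactly the content of Theorem~\ref{thm: relation to global multiple SLE 1}, once one checks Assumption~\ref{ass: cond C'} for percolation, which was done in the proof of Proposition~\ref{prop: percolation convergence} via (G')). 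These conditional scaling limits satisfy the Markov stationarity property of Proposition~\ref{prop: relation to global multiple SLE}, and each is supported on curves pairing the marked points according to the single link pattern $\alpha$.

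First I would recall, using the locality of chordal $\SLE(6)$, that for the percolation partition functions $\PartF_N\equiv 1$ the local multiple $\SLE(6)$ initial segment from $\Unitp_j$ in any localization neighbourhood $U_j$ coincides in law with the initial segment of a chordal $\SLE(6)$ from $\Unitp_j$ to \emph{any} target point (the choice being irrelevant by locality); this was already noted in the proof of Proposition~\ref{prop: percolation convergence}. Hence the law of the initial (or final) segment of $\gamma_{\UnitD;j}$ in $U_j$ is the \emph{same} for every global multiple $\SLE(6)$ regardless of the link pattern $\alpha$: it only depends on $\kappa=6$ and on $U_j$, not on which other marked point $\Unitp_j$ is ``supposed'' to connect to. Therefore \emph{any} convex combination $\sum_{\alpha} c_\alpha \,\PR_\alpha$ of global multiple $\SLE(6)$ measures, with weights $c_\alpha\ge 0$, $\sum_\alpha c_\alpha=1$, has the property that the regular conditional law of the initial/final segment of the $j$-th curve in $U_j$, given the segments indexed $1,\dots,j-1$, is exactly the local multiple $\SLE(6)$ growth process of~\eqref{eq: SDE definition of N-SLE} with $\PartF_N\equiv 1$ --- because this holds termwise for each $\PR_\alpha$ and the conditional law does not see $\alpha$. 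This establishes property (i) for the mixture.

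Next I would fix the weights $c_\alpha$ to be, say, the \emph{uniform} distribution on $\LP_N$ (or any distribution differing from the one induced by the genuine local-to-global multiple $\SLE(6)$, i.e.\ from the percolation connection probabilities $\PR^{(n)}[\alpha]\to\PR[\alpha]$, which exist along subsequences and are in fact known to converge for percolation). The local-to-global multiple $\SLE(6)$ of Proposition~\ref{prop: percolation convergence} is, by construction, the specific convex combination $\sum_\alpha \PR[\alpha]\,\PR_\alpha$ with $\PR[\alpha]$ the percolation connection probabilities; so to finish it suffices to exhibit \emph{one} link pattern whose limiting percolation connection probability is not $1/|\LP_N|$, or more robustly, to note that since $N\ge 3$ there are at least two link patterns and the percolation connection probabilities are generically all distinct and depend on the positions $\Unitp_1,\dots,\Unitp_{2N}$ in a non-constant way, whereas the uniform weights do not. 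Thus for the uniform mixture the induced law on link patterns differs from that of the local-to-global multiple $\SLE(6)$, and since the law on link patterns is a measurable functional of the collection of curves, the two curve laws are different. (One small technical remark: for $N\le 2$ all link patterns are related by cyclic relabelling and the construction would collapse; this is why the hypothesis $N\ge 3$ is needed, matching the statement.)

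The main obstacle is the genuinely structural point behind property (i): one must be certain that the conditional law of each initial/final segment of a global multiple $\SLE(6)$, in a given localization neighbourhood and given the previously sampled segments, really is insensitive to the link pattern $\alpha$ and equals the $\PartF_N\equiv 1$ local multiple $\SLE(6)$ process. The cleanest justification is the one already in the paper: by Theorem~\ref{thm: relation to global multiple SLE 1} (or Theorem~\ref{thm: relation to global multiple SLE 2}), the conditional percolation models converge to global multiple $\SLE(6)$ with partition function $\PartF_\alpha$ given by~\cite[Eq.~(3.7)]{PW}, and the corresponding one-curve driving process up to the exit time of $U_j$ is that of a local multiple $\SLE(6)$ with partition function $\PartF_\alpha$; but the locality of $\SLE(6)$ forces all these local multiple $\SLE(6)$ initial-segment laws (for the various $\PartF_\alpha$ and for $\PartF_N\equiv 1$) to coincide on any compact localization neighbourhood, since they are all absolutely continuous perturbations of the same chordal $\SLE(6)$ that agree by the locality property. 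Once this is pinned down, the rest of the argument is the bookkeeping above: assemble the uniform mixture of the $\PR_\alpha$'s, observe it has the prescribed local behaviour, and separate it from the genuine local-to-global $\SLE(6)$ by the non-uniform limiting connection probabilities of percolation, which exist and are non-constant for $N\ge 3$.
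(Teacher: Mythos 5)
Your construction does not work: the central claim of your step (i) is false. You assert that the initial-segment law (in a localization neighbourhood $U_j$) of the global multiple $\SLE(6)$ conditioned on a link pattern $\alpha$ is independent of $\alpha$ and coincides with the unconditioned chordal $\SLE(6)$ initial segment, invoking locality. But locality is a property of the \emph{unconditioned} chordal $\SLE(6)$; it does not survive conditioning on the link pattern. The initial segment of the $\alpha$-conditional model is a local multiple $\SLE(6)$ with the \emph{pure} partition function $\PartF_\alpha$, whose drift term $\kappa\,\partial_j\log\PartF_\alpha$ genuinely depends on $\alpha$ (if $\partial_j\log\PartF_\alpha=\partial_j\log\PartF_\beta$ for all $j$ then $\PartF_\alpha\propto\PartF_\beta$, contradicting the linear independence of the pure partition functions; more concretely, an initial segment approaching $\Unitp_{2k}$ drastically changes the conditional probability that $\Unitp_1$ links to $\Unitp_{2k}$, so the link pattern cannot be independent of the initial segment). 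Consequently $\sum_\alpha c_\alpha\,\PR_\alpha$ reproduces the $\PartF_N\equiv 1$ localizations \emph{only} for the specific weights $c_\alpha=\PR[\alpha]$ given by the percolation connection probabilities, i.e.\ only for the local-to-global multiple $\SLE(6)$ itself. The degrees of freedom you are trying to exploit (the link-pattern weights) are already pinned down by the localization data, so your mixture with uniform weights fails requirement (i), and the whole approach collapses. Your stated reason for the hypothesis $N\ge 3$ is also not the operative one.

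The paper's counterexample exploits a different, genuinely free degree of freedom: the conditional law of the \emph{middle} portion of a curve given its two end portions. One conditions the local-to-global multiple $\SLE(6)$ on a positive-probability event $E$ on which the curve from $\Unitp_1$ to $\Unitp_4$ decomposes into the up-to-swallowing initial segment from $\Unitp_1$ (landing in the interior of $(\Unitp_4,\Unitp_5)$), the reversed up-to-swallowing initial segment from $\Unitp_4$ (landing in the interior of $(\Unitp_1,\Unitp_2)$), and a chordal $\SLE(6)$ joining their tips in the remaining domain. Replacing that middle chordal $\SLE(6)$ by the hyperbolic geodesic ($\SLE(0)$) in the same domain changes the law of the full collection but leaves every localization untouched, since localizations only record initial segments of curves \emph{started at the marked prime ends}, and the two end portions are unchanged. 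This is why $N\ge 3$ is needed: one requires enough marked points for both tips to lie strictly inside boundary arcs so that a nondegenerate middle piece, invisible to all localizations, exists with positive probability (guaranteed by Russo--Seymour--Welsh). If you want to salvage an argument along your lines, you would have to perturb the conditional structure of the curves away from the marked points, not the weights on link patterns.
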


\begin{figure}
\includegraphics[width=0.4\textwidth]{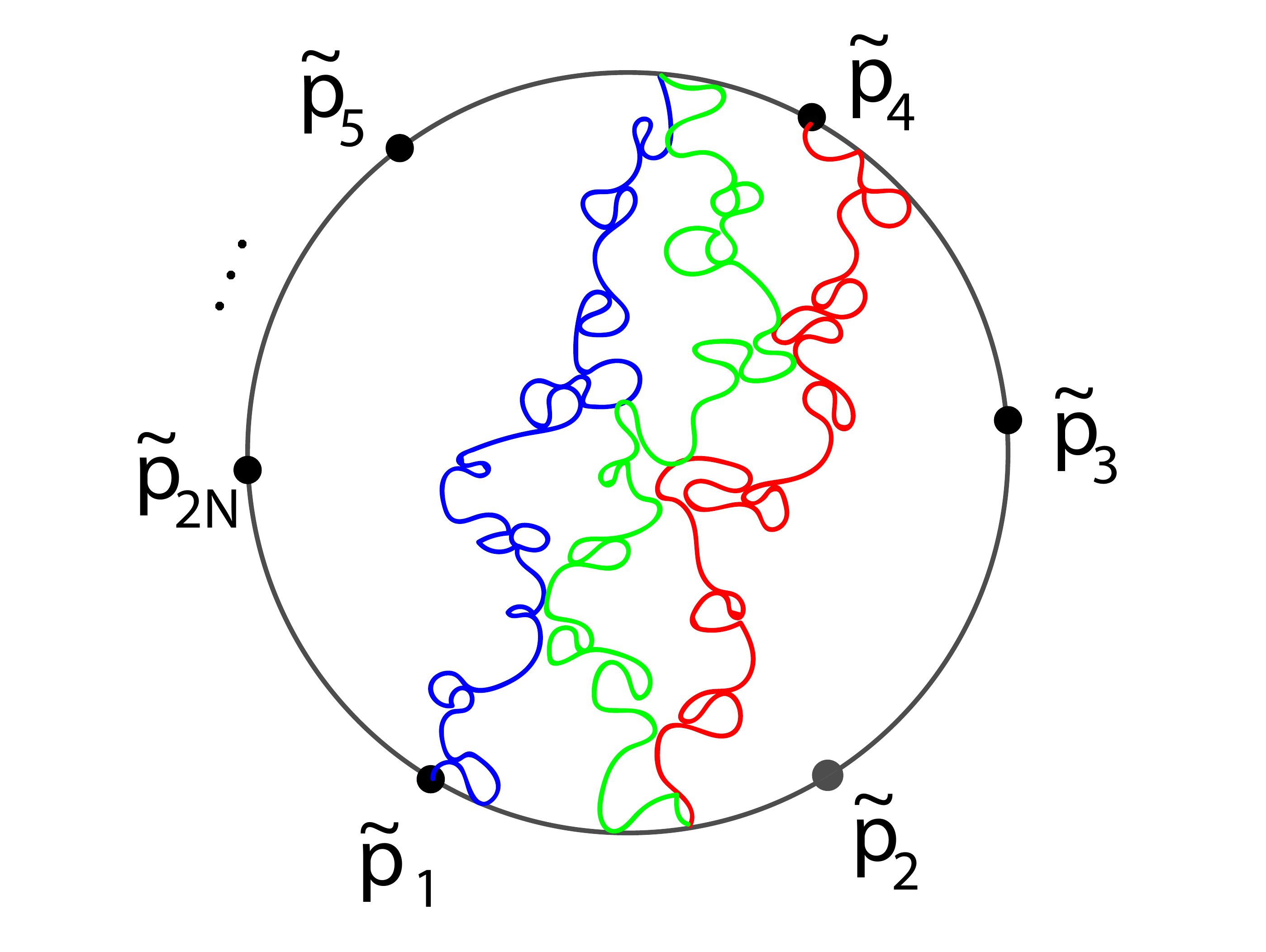}%
\includegraphics[width=0.4\textwidth]{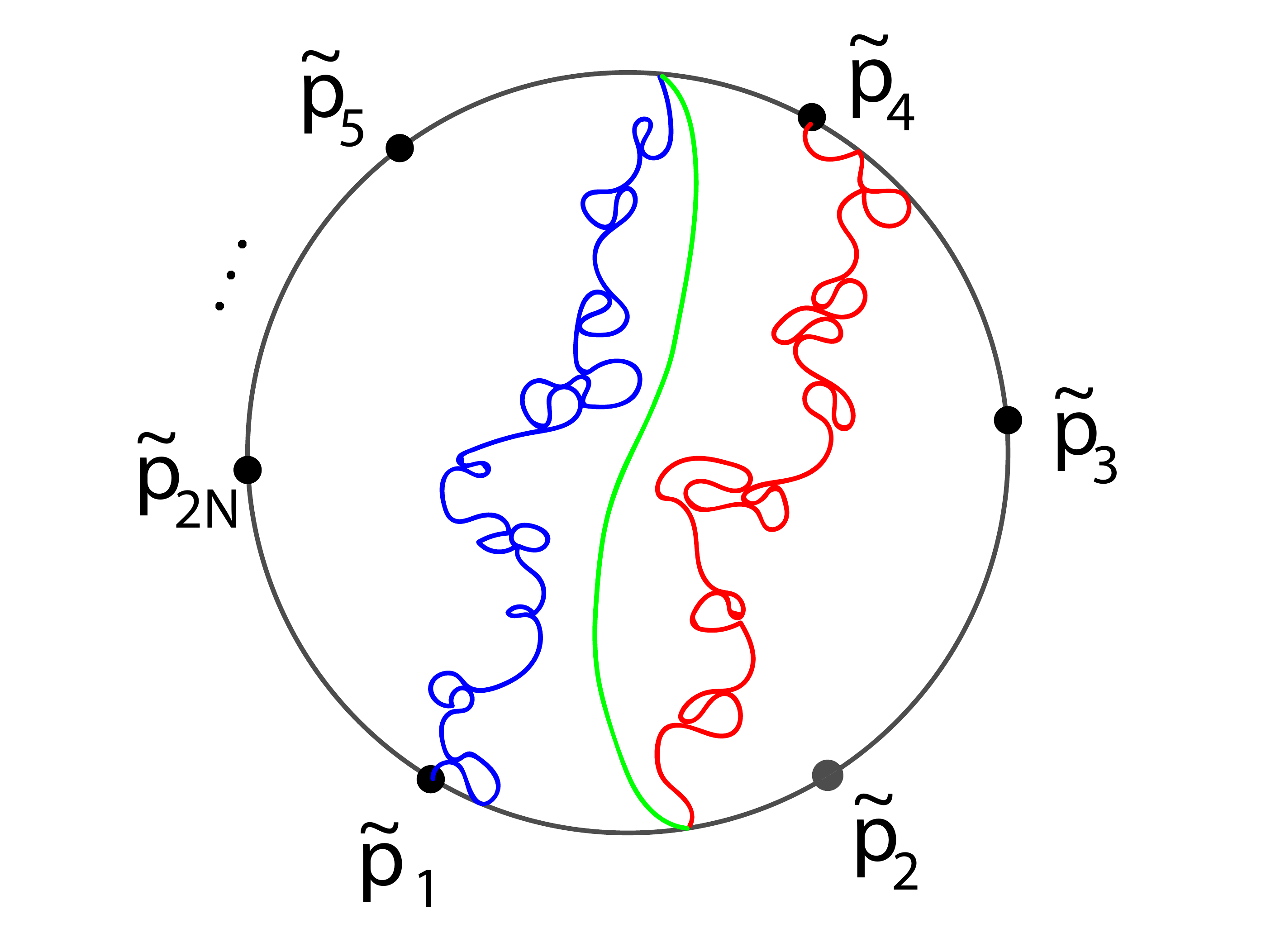}
\caption{
\label{fig: warning ex}
Schematic illustrations. Left: one curve of the local-to-global multiple $\SLE(6)$ in Proposition~\ref{prop: percolation convergence}, divided into three segments. Right: the corresponding three curve segments in a different collection of random curves, whose localizations in any localization neighbourhoods are the same as those of the curves on the left.
}
\end{figure}

\begin{proof}
Start from the local-to-global multiple $\SLE(6)$ in Proposition~\ref{prop: percolation convergence}. By Proposition~\ref{thm: k le 4 local-to-global NSLE}, it has the following property: if the up-to-swallowing initial segment from $\Unitp_1$~(in blue in Figure~\ref{fig: warning ex}(left)) hits the arc $(\Unitp_2, \Unitp_{2N})$ in $(\Unitp_4, \Unitp_5)$ and is disjoint from the up-to-swallowing initial segment from $\Unitp_4$~(in red in Figure~\ref{fig: warning ex}(left)), which hits the arc $(\Unitp_5, \Unitp_3)$ in $(\Unitp_1, \Unitp_2)$, then $\Unitp_1$ and $\Unitp_4$ are connected by a random curve. Denote this event by $E$. By the Russo--Seymour--Welsh estimates, $E$ has a positive probability. On the event $E$, the random curve from $\Unitp_1$ to $\Unitp_4$ is a concatenation of three curves (in order 1--2--3, in blue, green, and red in Figure~\ref{fig: warning ex}(left), respectively): 1) the up-to-swallowing initial segment from $\Unitp_1$ to $(\Unitp_4, \Unitp_5)$; 3) the reversal of the up-to-swallowing initial segment from $\Unitp_4$ to $(\Unitp_5, \Unitp_3)$; and 2) a chordal $\SLE(6)$ between the tips of these the two up-to-swallowing initial segments, in the domain restricted by them.

Now, on the event $E$, let us replace the curve (2) above by a hyperbolic geodesic, i.e., the chordal $\SLE(0)$, in the same domain; see Figure~\ref{fig: warning ex}(right). It is elementary to verify that after this replacement, we obtain a different family of random curves, whose all localizations are nevertheless the same as before this replacement operation. This proves the claim.
\end{proof}

\begin{rem}
The counterexample in the proof above is conformally invariant, and may be defined via conformal maps in any domain $(\domain; p_1, \ldots, p_{2N})$ with marked prime end that possess radial limits.
\end{rem}

\subsection{Outline of a new example: UST branches}
\label{subsec: LERW outline}

Let us return to direct applications of our main theorems~\ref{thm: precompactness thm multiple curves} and~\ref{thm: loc-2-glob multiple SLE convergence, kappa le 4}. The next discrete model that we will study is the uniform spanning tree (UST). Verifying the assumptions of these theorems in that model would take up some space, and thus we only outline the proofs in this subsection. Also, we will for simplicity restrict our consideration in this paper to the lattice $\Z^2$, even if all the results could be derived on any isoradial lattice, as defined in~\cite{CS-discrete_complex_analysis_on_isoradial}.

Let $(\Gr; e_1, \ldots, e_{2N})$ be a simply-connected subgraph of $\Z^2$ with marked boundary edges. 
Consider the uniform spanning tree on the graph $\Gr/\bdry$ obtained by identifying all the boundary vertices of $\Gr$. Each interior vertex $v \in \Vert^\circ$ thus connects to the boundary vertices $\bdry \Vert$ by a unique path on such a tree. Condition the UST on $\Gr/\bdry$ on the event that that such boundary paths from the interior vertices of the odd edges $e_1, e_3, \ldots, e_{2N - 1}$ reach $\bdry \Vert$ via the even edges $e_2, e_4, \ldots, e_{2N}$, each using a different even edge; see Figure~\ref{fig: UST} for illustration. (This conditioning making sense puts some very mild limitations on the subgraph $(\Gr; e_1, \ldots, e_{2N})$.) The probability measures $\PR^{(\Gr; e_1, \ldots, e_{2N})}$ that we are interested in are these conditional USTs, and the random chordal curves $\gamma_{\Gr; 1}, \ldots, \gamma_{\Gr; N}$ are the chordal graph paths consisting of the odd edges $e_1, e_3, \ldots, e_{2N - 1}$ and the boundary paths from their interior vertices. We call $\gamma_{\Gr; 1}, \ldots, \gamma_{\Gr; N}$ \emph{UST boundary branches}. This model is also sometimes called multiple loop-erased random walks (LERWs), due to the connection of the discrete models~\cite{Wilson-generating_random_spanning_trees}.

\begin{figure}
\centering
\begin{overpic}[width=0.35\textwidth]{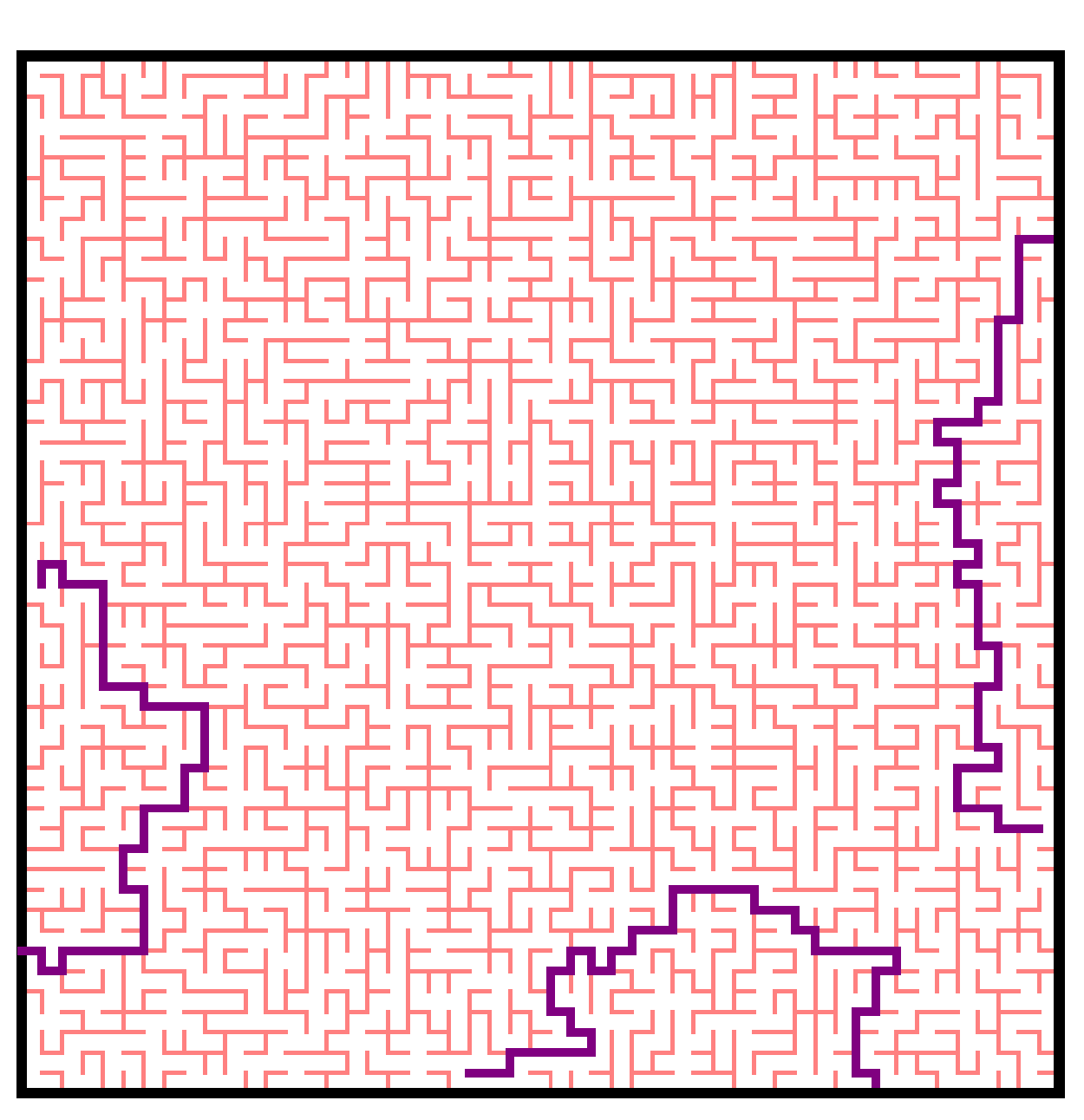}
 \put (98,77) {\Large $e_6$}
 \put (-7,45) {\Large $e_1$}
 \put (-7,14) {\Large $e_2$}
 \put (40,-5) {\Large $e_3$}
  \put (76,-5) {\Large $e_4$}
 \put (98,24) {\Large $e_5$}
\end{overpic}
\bigskip
\caption{\label{fig: UST} A uniform spanning tree, with the paths to the boundary from the interior vertices of the odd edges $e_1$, $e_3$, and $e_5$ reaching the boundary each via a different even edge $e_2$, $e_4$, or $e_6$.}
\end{figure}

Consider now the lattices $\delta_n \Z^{2} = \InfiniteGr_n$, where $\delta_n \shrinkto 0$ as $n \to \infty$, and their simply-connected subgraphs $(\Gr^{(n)}; e_1^{(n)}, \ldots, e^{(n)}_{2N})$ as above, converging to some domain $(\domain; p_1, \ldots, p_{2N})$ in the Carath\'{e}odory sense. Study these discretizations under the assumptions and notation of Section~\ref{subsec: setup and notation}.

\begin{thm}
\label{thm: UST convergence}
In the setup described above, the UST boundary branches $(\gamma^{(n)}_1, \ldots, \gamma^{(n)}_N)$, both unconditional and conditional on a link pattern $\alpha \in \LP_N$, converge weakly in $X(\C)$ to the local-to-global multiple $\SLE(2)$ in $(\domain; p_1, \ldots, p_{2N})$. The scaling limit in the conditional case is described by local multiple SLEs with the partition functions
\begin{align}
\label{eq: UST conditional part fcns}
\PartF_\alpha (x_1, \ldots, x_{2N})
\end{align}
given in~\cite[Equation~(3.14)]{KKP}, and in the unconditional case by
\begin{align}
\label{eq: UST unconditional part fcns}
\PartF_N (x_1, \ldots, x_{2N}) =  \sum_{\alpha \in \LP_N} \PartF_\alpha (x_1, \ldots, x_{2N}).
\end{align}
An alternative expression for $\PartF_N$ is given in~\cite[Lemma~4.12]{PW}.
\end{thm}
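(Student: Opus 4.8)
The plan is to apply Theorem~\ref{thm: loc-2-glob multiple SLE convergence, kappa le 4} (in its conditional form for the statement about link-pattern-conditioned curves, and directly for the unconditional statement), exactly as was done for the Ising, percolation, and FK-Ising models. Thus the proof reduces to verifying the hypotheses of Theorems~\ref{thm: precompactness thm multiple curves} and~\ref{thm: loc-2-glob multiple SLE convergence, kappa le 4} for the UST boundary branch model, plus a short bookkeeping argument relating the conditional and unconditional partition functions.

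First I would dispose of the assumptions of Theorem~\ref{thm: precompactness thm multiple curves}. Alternating boundary conditions and the DDMP hold for UST boundary branches: the DDMP is a standard consequence of Wilson's algorithm (one can grow a branch by a loop-erased random walk step, and the remaining tree conditioned on the grown segment is again a UST on the reduced graph with the appropriate wiring), and the alternating structure is built into the odd/even labelling. Condition~(C) (equivalently~(G)) for the one-curve model is the loop-erased random walk crossing estimate; on $\Z^2$ this follows from the corresponding random walk estimates, and is essentially contained in the works establishing LERW convergence to $\SLE(2)$ (Lawler--Schramm--Werner, Zhan), so I would cite those. Here a mild care is needed: condition~(C) must hold for the one-curve measures on \emph{all} simply-connected subgraphs that can arise as DDMP reductions, but this uniformity is automatic from the random walk estimates since the constants in the crossing bounds depend only on the annulus modulus, not on the ambient domain.

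Next the extra assumptions of Theorem~\ref{thm: loc-2-glob multiple SLE convergence, kappa le 4}: Assumption~\ref{ass: approximability} is trivial on $\Z^2$, and Assumption~\ref{ass: cond C'} (condition~(C')) follows from condition~(G') for the collection of curves, which in turn follows from the same LERW/random-walk crossing estimates applied simultaneously to all branches (finitely many, so a union bound suffices, cf. the proof of Lemma~\ref{lem: multi-G}). The one genuinely model-specific input is Assumption~\ref{ass: dr fcns converge to loc mult SLE}: the driving function of one initial segment of one branch, conditional on a link pattern $\alpha$, converges to that of the local multiple SLE with partition function $\PartF_\alpha$ of~\cite[Equation~(3.14)]{KKP}. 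This is the content of Conjecture~4.3 of~\cite{KKP}, and the martingale observable realizing it is the one identified there (a discrete-holomorphic observable whose scaling limit produces the drift term $\kappa\,\partial_j\log\PartF_\alpha$ with $\kappa=2$); I would sketch this identification step, noting that the relevant boundary-regularity hypotheses in the cited convergence statement can be relaxed exactly as in the Ising case. Then Theorem~\ref{thm: loc-2-glob multiple SLE convergence, kappa le 4} applies and gives convergence of the conditional curves to the conditional local-to-global multiple $\SLE(2)$ with partition functions $\PartF_\alpha$.

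Finally, for the unconditional statement, I would use Equation~\eqref{eq: curve measure as a convex combi of LP conditional measures}: the unconditional measure is the convex combination $\sum_\alpha \PR^{(n)}[\alpha]\,\PR^{(n)}_\alpha$, the weights $\PR^{(n)}[\alpha]$ converge (by the convergence of the discrete-holomorphic observables evaluated at the boundary, or simply by extracting a subsequence and using Theorem~\ref{thm: relation to global multiple SLE 1}), and the conditional limits are local-to-global multiple $\SLE(2)$'s; hence the limit is a convex combination, which is precisely the unconditional local-to-global multiple $\SLE(2)$ whose initial-segment partition function is $\sum_\alpha \PartF_\alpha$ by linearity of the drift term in the numerator $\PartF$ — more carefully, one checks that the one-curve marginal driving function of the convex combination is driven by $\kappa\,\partial_j\log\big(\sum_\alpha\PartF_\alpha\big)$, which is the statement that $\PartF_N=\sum_\alpha\PartF_\alpha$ is the relevant symmetric partition function; the alternative closed form is~\cite[Lemma~4.12]{PW}. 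I expect the main obstacle to be the honest verification of Assumption~\ref{ass: dr fcns converge to loc mult SLE}, i.e., pushing through the martingale-observable identification of the conditional initial-segment driving function with the $\PartF_\alpha$-drift, since this is exactly the model-specific step that Theorem~\ref{thm: loc-2-glob multiple SLE convergence, kappa le 4} cannot provide; everything else is routine crossing-estimate bookkeeping. Since a full treatment would be lengthy, in this paper I only outline these steps.
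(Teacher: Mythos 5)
Your overall strategy --- reduce everything to the hypotheses of Theorems~\ref{thm: precompactness thm multiple curves} and~\ref{thm: loc-2-glob multiple SLE convergence, kappa le 4} --- is the paper's, and several ingredients match its outline: alternating boundary conditions and the DDMP via Wilson's algorithm and subtree contraction (the paper additionally invokes the re-labelling bijection of~\cite[Lemma~3.1]{KKP} to handle conditioning on a segment grown from an \emph{odd} edge, a point you skip), condition~(C) for one branch (the paper cites~\cite[Theorem~4.18]{KS}), Assumption~\ref{ass: approximability}, and the martingale observable for Assumption~\ref{ass: dr fcns converge to loc mult SLE} (the paper's choice is the ratio $Z_\beta/Z_\alpha$ of~\cite[Theorem~3.12]{KKP}, essentially what you describe).

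The genuine gap is your choice of Assumption~\ref{ass: cond C'} as the second model-specific input. Condition~(C$'$) is, by the paper's own discussion, \emph{incompatible with conditional discrete curve models} (conditioning on $\alpha$ can force a crossing that is unforced for the unconditioned collection), and the conditional form of Theorem~\ref{thm: loc-2-glob multiple SLE convergence, kappa le 4} --- which you need for the link-pattern-conditioned half of the statement --- is only stated under Assumption~\ref{ass: quantitative no boundary visits assumption}. Since $\kappa=2\le 4$, that assumption is available and is the route the paper takes; but verifying it for the UST is a nontrivial step your outline omits entirely. The paper does it via Wilson's algorithm: the branches are loop-erasures of random-walk excursions conditioned to be non-crossing; the no-boundary-visit estimate is first proved for the excursion traces, is inherited by their loop-erasures, and survives the conditioning because vertex-disjointness has uniformly positive probability. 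Moreover, your claim that (C$'$) follows ``by a union bound, cf.\ Lemma~\ref{lem: multi-G}'' conflates two different notions: that lemma yields condition (multi-G), i.e.\ crossings unforced \emph{relative to a fixed link pattern}, whereas (G$'$)/(C$'$) concern crossings unforced \emph{for the unconditioned collection}, and the one does not give the other. Finally, in the unconditional bookkeeping, note that the drift $\kappa\,\partial_j\log\PartF$ is not linear in $\PartF$; the identity $\PartF_N=\sum_{\alpha\in\LP_N}\PartF_\alpha$ works because the limiting weights $\PR[\alpha]$ are proportional to $\PartF_\alpha$ at the initial configuration, and establishing that convergence of connection probabilities is itself part of the observable analysis rather than a consequence of it.
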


\begin{proof}[Proof precompactness]
Let us first verify the assumptions of Theorem~\ref{thm: precompactness thm multiple curves} on the UST discrete curve model. First, the model clearly has alternating boundary conditions. Actually, by the bijection argument in~\cite[Lemma~3.1]{KKP}, we can re-label the edges $e_1, \ldots, e_{2N}$ to $\hat{e}_1, \ldots, \hat{e}_{2N}$ counterclockwise starting from any edge, and the UST models on $(\Gr; e_1, \ldots, e_{2N})$ and $( \Gr; \hat{e}_1, \ldots, \hat{e}_{2N} )$ yield the same distribution of random curves.

The DDMP follows from the fact that for the UST on any graph $\Gr$, the UST conditional on a subtree is in distribution a UST on the graph obtained by identifying the vertices of that subtree. Now, property~(ii) in the definition of the DDMP follows from this property of the UST, likewise property~(i) when conditioning on a branch initial segment from an even boundary edge $e_2, e_4, \ldots, e_{2N}$. For property~(i) with a branch initial segment from an odd-index boundary edge, use the re-labelling argument of the previous paragraph.

Finally, Condition~(C) for the one-curve model has been verified in~\cite[Theorem~4.18]{KS}. The assumptions of Theorem~\ref{thm: precompactness thm multiple curves} for discrete curve model have thus been verified.
\end{proof}

\begin{proof}[Outline of identification]
The verification of the additional assumptions imposed in Theorem~\ref{thm: loc-2-glob multiple SLE convergence, kappa le 4} is postponed to a follow-up paper. The outline is the following:

1) Assumption~\ref{ass: dr fcns converge to loc mult SLE}, i.e., convergence of driving functions to local multiple SLE with the partition function~\eqref{eq: UST conditional part fcns} or~\eqref{eq: UST unconditional part fcns}: the precompactness part guarantees the existence of subsequential limits of the driving functions $\DrFcnNoInd_j$. Any subsequential limit is identified as a local multiple $\SLE(2)$ initial segment via a martingale observable, as in classical SLE convergence proofs. There are several alternative martingale observables, one (in the $\alpha$-conditional model) being the ratio of partition functions $Z_\beta / Z_\alpha$, with any $\beta \in \LP_N$, in the notation of~\cite[Theorem~3.12]{KKP}. With the expression given there for this observable and some discrete harmonic analysis, one can prove the convergence of the observable.

2) Assumption~\ref{ass: approximability} holds trivially.

3) We verify Assumption~\ref{ass: quantitative no boundary visits assumption} for the conditional models --- if it holds for the conditional models with any link pattern $\alpha$, it clearly holds for the unconditional model. In the conditional case, construct the uniform spanning tree by Wilson's algorithm \cite{Wilson-generating_random_spanning_trees}; the paths $\gamma_{\Gr; 1}, \ldots, \gamma_{\Gr; N}$ are then loop-erasures of suitable random walk excursions (see, e.g.,~\cite[Corollary~3.5(c)]{KKP}), conditional on the loop-erasure paths not crossing. Assumption~\ref{ass: quantitative no boundary visits assumption} is now first verified for the (traces of) the underlying random walk excursions; it is thus also satisfied by their loop-erasures. Finally, one shows that these loop-erasures are vertex-disjoint with a uniformly positive probability, and hence Assumption~\ref{ass: quantitative no boundary visits assumption} also holds for the loop-erasures conditional on this vertex-disjointness.

This finishes the outline of the identification step in Theorem~\ref{thm: UST convergence}. We conclude by remarking that step~(3), which took a lengthy outline above, relies on tools required for the martingale argument in step~(1); in other words, here as in the case of all other models, it is the verification of Assumption~\ref{ass: dr fcns converge to loc mult SLE} that is the core of the convergence proof.
\end{proof}

Convergence results for a single UST branch have been given at least in~\cite{LSW-LERW_and_UST, Zhan-scaling_limits_of_planar_LERW, YY-Loop-erased_random_walk_and_Poisson_kernel_on_planar_graphs, LV-natural_parametrization_for_SLE, CW-mLERW}. Convergence results for multiple branches have been predicted in various sources, e.g.,~\cite{KW-boundary_partitions_in_trees_and_dimers, KKP, Wu17}.

\subsection{A complete new example: the harmonic explorer}

In this subsection we give a new and complete example of multiple SLE convergence given by our main theorems. The model we consider is the natural multiple-curve generalization of the harmonic explorer on the honeycomb lattice. The harmonic explorer was introduced as a toy model for the study of the level lines of the discrete Gaussian free field. One very useful simplification in moving to the harmonic explorer was the arrival of the DDMP. This is also the reason why we consider the harmonic explorer but not the discrete Gaussian free field.

In the one-curve case, the convergence of the harmonic explorer to the chordal $\SLE(4)$ was proven in~\cite{SS05}, and the proof here employs similar ideas. The convergence of discrete Gaussian free field level lines to chordal $\SLE(4)$ was later proven in~\cite{SS09}. Multiple $\SLE(4)$:s have been studied via the (continuous) Gaussian free field in~\cite{PW}, but we are not aware of a prior lattice model convergence result addressing multiple $\SLE(4)$.

\subsubsection{\textbf{A first definition}}

The multiple harmonic explorer has to our knowledge not appeared anywhere previously, but it is a straightforward generalization of the harmonic explorer. We give a first definition here, and will soon find a useful equivalent definition.

Consider the honeycomb lattice $H$, and its simply-connected subgraph $(\Gr; e_1, \ldots, e_{2N})$ with distinct marked boundary edges. Colour of the faces adjacent to a boundary vertex black or white, so that colour of the boundary faces changes precisely at the edges $e_1, \ldots, e_{2N}$, say for definiteness so that boundary arcs counterclockwise from odd to even are black and even to odd white.

\begin{figure}
\includegraphics[width=0.6\textwidth]{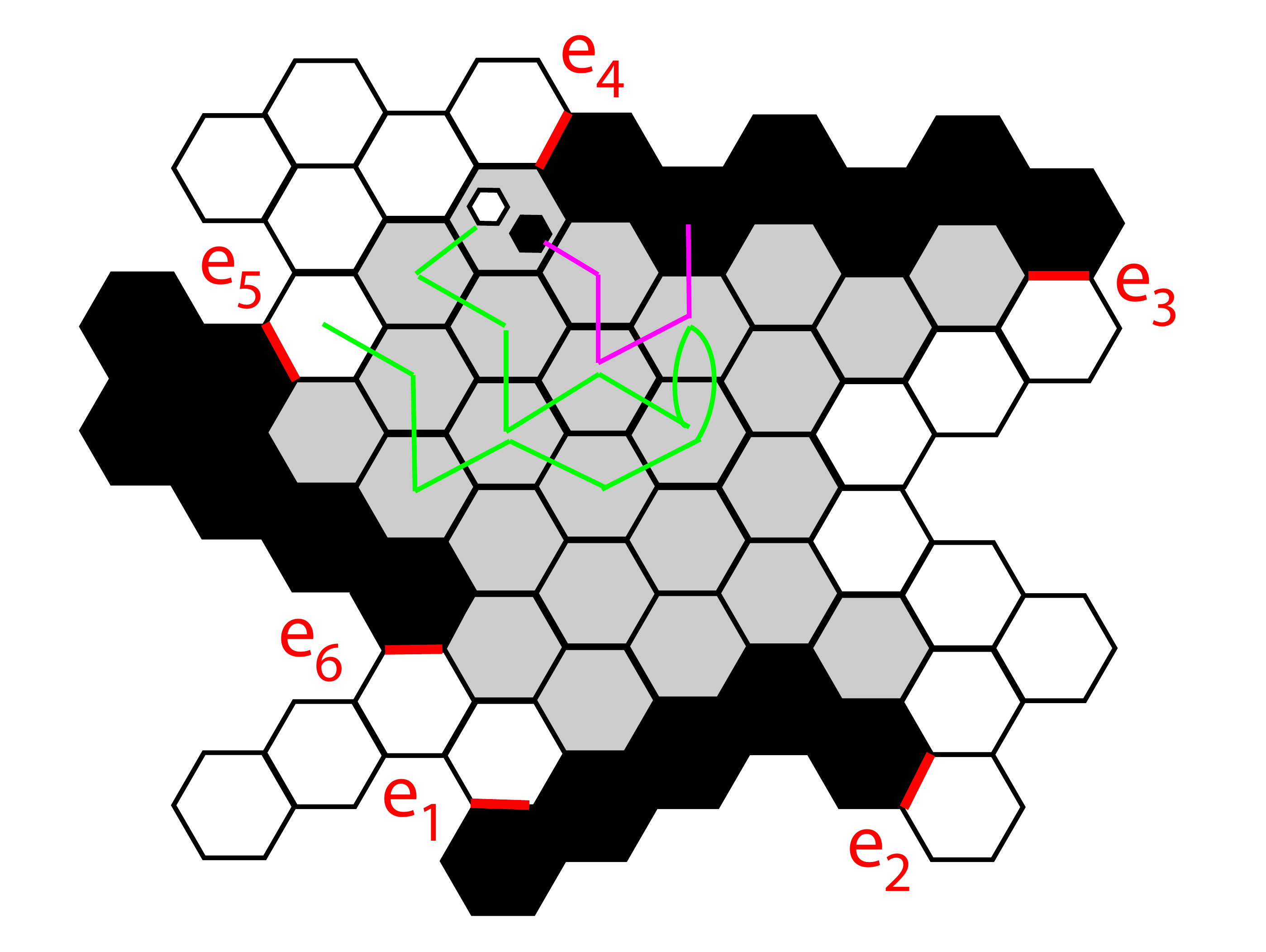}
\caption{
\label{fig: HE grow edge}
A simply-connected subgraph $\Gr$ of $H$, and its boundary faces, coloured black and white. In order to grow an edge at index $4$, one launches a random walk on the faces of $H$ from the face $F$ right in front of the edge $e_4$. If the random walk hits the boundary faces of $\Gr$ on a white face (green trajectory), one colours $F$ white. If the random walk first hits a black face (purple trajectory), one colours $F$ black.
}
\end{figure}

Let $\Gr_t$ be a graph with marked boundary edges as above (we suppress the edges in the notation; note also that the marked boundary edges determine boundary colouring and vice versa). Given such $\Gr_t$, we now define a procedure that yields $\Gr_{t+1}$, also of the above type. We call this procedure \emph{growing an edge at $i$}, where $1 \le i \le 2N$. Call the interior vertex of $e_i$ its tip vertex. There are three cases.
\begin{itemize}
\item[1)] If the tip vertex also adjacent to some marked boundary edge of $\Gr_t$ other than $e_i$, set $\Gr_{t+1} = \Gr_t$.
\end{itemize}
Otherwise, observe that the tip vertex of $e_i$ is adjacent to three faces of $\Gr_t$. Two of these faces are the boundary faces of $\Gr_t$ on either side of $e_i$. Call the third one $F$. We then determine a colour to $F$ as follows.
\begin{itemize}
\item[2)]  If $F$ is a boundary face of $\Gr_t$ it is already coloured in $\Gr_t$.
\item[3)] If $F$ is not a boundary face of $\Gr_t$, lauch a simple random walk on the faces of $\Gr_t$ from $F$. If it first hits the boundary faces of $\Gr_t$ at a black face, colour $F$ black, and otherwise colour $F$ white.
\end{itemize}
Note that cases (2) and (3) can be summarized as
\begin{align*}
\PR [F \text{ is black}] = \HarmMeas_{\Gr_t^*} ( F; \text{black boundary of }\Gr_t),
\end{align*}
where $\HarmMeas_{\Gr_t^*} ( F; \text{black})$ denotes the harmonic measure on the faces of $\Gr_t$ of the black boundary, as seen from $F$.

In cases (2) and (3) above, $\Gr_{t+1}$ is obtained from $\Gr_t$ by declaring the tip vertex of $e_i$ a boundary vertex. The $i$:th marked boundary edge of $\Gr_{t+1}$ then starts from this vertex, and goes  either clockwise or counterclockwise along the boundary of $F$, the direction chosen so that the boundary colourings of $F$ in $\Gr_{t+1}$ is as determined above. All other marked boundary edges of $\Gr_{t+1}$ are the same as in $\Gr_t$.

Finally, we define the discrete random curves $(\gamma_{\Gr; 1}, \ldots, \gamma_{\Gr; N})$ given by the \emph{multiple harmonic explorer} on $(\Gr; e_1, \ldots, e_{2N})$. These are the curves obtained by growing edges in the following order: start from $\Gr_1 = (\Gr; e_1, \ldots, e_{2N})$ as above. Then, inductively, $\Gr_{t+1}$ is obtained from $\Gr_t$ by growing the edge at $i$, where $i \equiv t$ modulo $2N$. I.e., we grow edges at $1, 2, \ldots, 2N, 1, 2, \ldots, 2N, 1, 2, \ldots$. Each growing of an edge is independent from the previous ones. We continue this until the graphs $\Gr_t$ stabilize, i.e., growing any edge leads to case (1) above.

\subsubsection{\textbf{An equivalent definition}}

Let $\Gr_1 = (\Gr; e_1, \ldots, e_{2N}), \Gr_2, \ldots$ be as above. Let $F_1$ be the face at the tip of the boundary edge $e_1$ of $\Gr$, and $F_2$ the at the tip of $e_2$. Suppose that we grow an edge of $\Gr_1$ at $2$, not at $1$ as in the first definition of the harmonic explorer, or equivalently colour $F_2$. We have
\begin{align*}
\PR [F_2 \text{ coloured first is black}] = \HarmMeas_{\Gr_1^*} ( F_2; \text{black boundary of }\Gr_1).
\end{align*}
Suppose now that we first grow an edge at $1$ and then at $2$. Then, we observe that
\begin{align}
\label{eq: swapping growth order in HE}
\PR [F_2 \text{ coloured second is black}] &= \HarmMeas_{\Gr_2^*} ( F_2; \text{black boundary of }\Gr_1)  + \HarmMeas_{\Gr_2^*} ( F_2; F_1 )  \PR [F_1 \text{ is black}] \\
\nonumber
&= \HarmMeas_{\Gr_2^*} ( F_2; \text{black boundary of }\Gr_1)  + \HarmMeas_{\Gr_2^*} ( F_2; F_1 ) \HarmMeas_{\Gr_1^*} ( F_1; \text{black boundary of }\Gr_1) \\
\nonumber
&= \HarmMeas_{\Gr_1^*} ( F_2; \text{black boundary of }\Gr_1) ,
\end{align}
where the last step follows by the strong Markov property of the simple random walk.
From the two computations above, we observe that the probability of colouring $F_2$ black does not depend on whether it was coloured first or second. Switching the roles of $F_1$ and $F_2$ in the computation above, we observe that \emph{growing and edge (i) first at $1$ and then at $2$, or (ii) first at $2$ and then at $1$, the pairs of edges grown in cases (i) and (ii) are equal in distribution}.

Using the the argument above one can deduce that the edges of the multiple harmonic explorer can be grown in an order chosen freely.


\begin{prop} \emph{(Equivalent definition of the multiple harmonic explorer.)}
\label{prop: equivalent def of HE}
Let $\Gr_1 = (\Gr; e_1, \ldots, e_{2N})$ be as above. Inductively, let $\Gr_{t+1}$ be obtained from $\Gr_t$ by growing the edge at $i= i(t)$, where the index $i(t)$ only depends on the graphs $\Gr_1, \ldots, \Gr_t$ up to time $t$. Assume that the indices $i(t)$ are chosen so that $\Gr_{t+1} \ne \Gr_t$ if it is possible to choose such an $i(t)$. The curves obtained by growing edges in this manner are in distribution equal to the multiple harmonic explorer on $\Gr_1$. Furthermore, conditional on any sequence of first the graphs $\Gr_1, \ldots, \Gr_t$, the remainder of the curves is distributed as a multiple harmonic explorer on $\Gr_t$.
\end{prop}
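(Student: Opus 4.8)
The plan is to reduce everything to the single swapping identity~\eqref{eq: swapping growth order in HE}, which says that the joint distribution of the colours of $F_1$ and $F_2$ (equivalently, the two grown edges) does not depend on whether we grow at $1$ then $2$ or at $2$ then $1$. Since any two growth orders differ by a sequence of adjacent transpositions of consecutive growth steps, and since each such transposition is an instance of the swap identity applied to the relevant pair of faces in the current graph $\Gr_t$ (with the random walk excursions used at later steps being conditionally independent of the colouring decisions already made), one concludes that the law of the sequence $\Gr_1, \Gr_2, \ldots$ of graphs --- and hence of the curves, which are a deterministic function of the terminal stabilised graph together with the sequence of grown edges --- is unchanged by any reordering. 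Here one must be slightly careful: the first definition fixes the cyclic order $1, 2, \ldots, 2N, 1, 2, \ldots$, whereas the equivalent definition allows $i(t)$ to depend on $\Gr_1, \ldots, \Gr_t$; so the reordering is adapted rather than deterministic. This is still fine because the swap identity is local in time and the ``do not stay put if you can move'' condition guarantees that both growth schemes eventually reach the same set of grown edges (the process terminates exactly when every remaining edge falls into case (1)), so the two terminal graphs agree.

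Concretely, the steps I would carry out are: (1) Observe that growing an edge is a local operation: it only changes the graph near the tip vertex of $e_i$, and the random walk (harmonic measure) used depends only on the current graph $\Gr_t$. Hence the sequence $(\Gr_t)_t$ is a Markov chain whose one-step transition, given which edge index is grown, is specified by the colouring rule. (2) Prove the ``commutation of adjacent growth steps'' lemma: if at some graph $\Gr_t$ we may grow either edge $i$ or edge $j$ (with $i \neq j$), then growing $i$-then-$j$ produces the same distribution on $\Gr_{t+2}$ as growing $j$-then-$i$. When the two tip faces $F_i, F_j$ are distinct and the growth of one does not immediately affect the other, this is exactly~\eqref{eq: swapping growth order in HE} applied with $F_1 \rightsquigarrow F_i$, $F_2 \rightsquigarrow F_j$; the strong Markov property of the simple random walk on the faces handles the decomposition of the harmonic measure into ``hits old black boundary directly'' and ``hits the newly coloured face first''. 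The degenerate cases (one of $F_i, F_j$ is already a boundary face, or $F_i = F_j$, or growing $e_i$ makes the tip of $e_j$ fall into case (1)) are checked by hand and are all consistent. (3) Deduce by induction on the number of growth steps, using (2) to move transpositions past each other, that for any adapted admissible growth scheme the law of $(\Gr_t)_{t\ge 1}$, read off at the stabilisation time, agrees with that of the cyclic scheme of the first definition. Since the curves $(\gamma_{\Gr;1}, \ldots, \gamma_{\Gr;N})$ are a measurable function of the stabilised graph together with the record of which edge was grown at each step --- and this record, up to reordering, is the same --- the two constructions yield the same curve distribution.

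For the ``Furthermore'' clause (the domain-Markov-type statement), I would argue as follows. Conditioning on the sequence $\Gr_1, \ldots, \Gr_t$ leaves, by the Markov property from step (1), a process that from time $t$ onward grows edges of $\Gr_t$ by exactly the same local rule; the only thing to check is that ``the multiple harmonic explorer on $\Gr_t$'' as defined by the first (cyclic) definition agrees with the continuation of our process --- but by the reordering invariance just established, the cyclic definition on $\Gr_t$ and the continuation (which may use a different, adapted order) give the same curve law. Thus conditionally on $\Gr_1, \ldots, \Gr_t$ the remaining curves are a multiple harmonic explorer on $\Gr_t$.

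The main obstacle I anticipate is bookkeeping the degenerate cases of the commutation lemma in step (2): specifically, ensuring that when growing $e_i$ turns the tip of $e_j$ into case (1) (or vice versa), or when $F_i$ and $F_j$ coincide or are adjacent in a way that makes the colour of one depend on the other, the swap identity still holds --- here one cannot directly quote~\eqref{eq: swapping growth order in HE} but must redo the short harmonic-measure computation in the modified graph. A secondary subtlety is checking that the stabilisation time is almost surely finite and that the set of edges ultimately grown is the same for all admissible orders (this follows because an edge is grown iff its tip vertex is separated from the other marked edges, a property that only depends on the connectivity of the final configuration, not the order); once these are in place, the rest is a routine induction.
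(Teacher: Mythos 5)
Your proposal is correct and follows essentially the same route as the paper: the swap identity~\eqref{eq: swapping growth order in HE} for adjacent growth steps, decomposition of an arbitrary permutation into adjacent transpositions, and an induction that lets the first grown index be chosen freely and then conditions on the resulting graph to recurse. The paper's proof is considerably terser — it does not spell out the degenerate cases of the commutation step nor the adaptedness of $i(t)$, which it absorbs into the step-by-step induction — so your extra bookkeeping is a welcome elaboration rather than a divergence.
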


A rule to determine $i(t)$ given $\Gr_1, \ldots, \Gr_t$ such that $\Gr_{t+1} \ne \Gr_t$ if possible is called a \emph{valid growth rule}. The obtained process is called \emph{harmonic explorer under a valid growth rule}. Note also that by the above proposition, the one-curve harmonic explorer studied, e.g., in~\cite{SS05} is the special case $N=1$, namely under the growth rule of always growing an edge at $1$, which is valid if $N=1$.

\begin{proof}[Proof of Proposition~\ref{prop: equivalent def of HE}]
Start from the first definition of the multiple harmonic explorer, i.e., growing edges at indices  $1, 2, 3, \ldots, 2N, 1, 2, 3, \ldots$ By the argument above, we can swap the growth order of egdes at $1$ and $2$, growing edges at indices $2, 1, 3, \ldots, 2N, 2, 1, 3, \ldots$, and obtain same distribution of curves. Similarly, we can swap the indices of any two subsequent growth steps. Any permutation $\sigma$ of $2N$ is a composition of such swaps, so we can grow the edges in order $\sigma(1), \sigma(2),  \ldots, \sigma(2N), \sigma(1), \sigma(2), \ldots$ and still obtain same distribution of curves. Thus, the first index $i(1)= \sigma(1)$ of the edge to be grown can be chosen freely, and conditional on the obtained graph $\Gr_2$, the remainder of the curves (obtained by growing edges at indices $\sigma(2),  \ldots, \sigma(2N), \sigma(1), \sigma(2), \ldots$) is a harmonic explorer in $\Gr_2$. Both claims now follow inductively.
\end{proof}

\subsubsection{\textbf{The convergence theorem}}

Consider now the scaled honeycomb lattices $\delta_n H = \InfiniteGr_n$, where $\delta_n \shrinkto 0$ as $n \to \infty$, and their simply-connected subgraphs $(\Gr^{(n)}; e_1^{(n)}, \ldots, e^{(n)}_{2N})$ as above, converging to some domain $(\domain; p_1, \ldots, p_{2N})$ in the Carath\'{e}odory sense. Study these discretizations under the assumptions and notation of Section~\ref{subsec: setup and notation}.

\begin{thm}
\label{thm: HE convergence}
In the setup described above, the multiple harmonic explorer interfaces $(\gamma^{(n)}_1, \ldots, \gamma^{(n)}_N)$ converge weakly in $X(\C)$ to the local-to-global multiple $\SLE(4)$ in $(\domain; p_1, \ldots, p_{2N})$ with the partition functions
\begin{align}
\label{eq: HE part fcns}
\PartF_N (x_1, \ldots, x_{2N}) = \prod_{1 \le k < \ell \le 2N} (x_\ell - x_k)^{1/2 (-1)^{(\ell - k)}}.
\end{align}
\end{thm}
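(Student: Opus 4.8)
The plan is to apply the two main results of the paper: Theorem~\ref{thm: precompactness thm multiple curves} for precompactness and Theorem~\ref{thm: loc-2-glob multiple SLE convergence, kappa le 4} for the identification, in its form using Assumption~\ref{ass: quantitative no boundary visits assumption} (available since $\kappa=4\le 4$). Several hypotheses are immediate. Alternating boundary conditions and the DDMP are essentially the content of Proposition~\ref{prop: equivalent def of HE}: a harmonic explorer under a valid growth rule has the required domain-reduction property, and relabelling the marked edges cyclically with preserved parities amounts to a valid change of growth order. Assumption~\ref{ass: approximability} is trivial for $\delta_n H$. The remaining hypothesis of Theorem~\ref{thm: precompactness thm multiple curves}, namely condition~(C)/(G) for the one-curve marginal, follows from standard crossing estimates for the single harmonic explorer, obtainable from harmonic-measure and Beurling-type bounds together with the explicit colouring rule (cf.~\cite{SS05}). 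This already yields precompactness of the interfaces, of their conformal images in $\UnitD$, of the initial segments, and of the driving functions $\DrFcnLattNotime{n}{j}$.

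The core of the proof is Assumption~\ref{ass: dr fcns converge to loc mult SLE}: that $\DrFcnLattNotime{n}{j}$ converges to the driving function of the local multiple $\SLE(4)$ with partition function~\eqref{eq: HE part fcns}. Following the single-curve argument of~\cite{SS05}, I would introduce the martingale observable: having grown the $j$-th initial segment for $t$ steps and revealed the resulting colours, producing a graph $\Gr_t^{(n)}$ with marked edges, let $M^{(n)}_t(z)=\PR[\,z\text{ is coloured black}\,]$ for an uncoloured face $z$. By Proposition~\ref{prop: equivalent def of HE} and the strong Markov property of the simple random walk (exactly the computation~\eqref{eq: swapping growth order in HE}), $M^{(n)}_t(z)$ equals a fixed affine combination of the harmonic measures of the black boundary arcs of $\Gr_t^{(n)}$ and is a martingale in $t$. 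The next step is a discrete-complex-analysis input: as $n\to\infty$, $M^{(n)}_t(z)$ converges, uniformly over $z$ in compact subsets of the slit domain, to the continuous harmonic function on $\UnitD\setminus\InitSegm{j}([0,t])$ (transported to $\bH$) with boundary values $1$ on the black arcs and $0$ on the white ones --- an explicit linear combination of harmonic measures, hence an explicit function of $\DrFcn{j}{t}$ and of the Loewner images $g_t(x_i)$, $i\ne j$, of the marked points. For any subsequential limit of $\DrFcnLattNotime{n}{j}$ this limiting function is then a bounded continuous martingale; applying It\^o's formula along the Loewner flow, the requirement that the drift vanish for every $z$ forces, on one hand, $\kappa=4$ (the familiar feature that harmonic functions are $\SLE(\kappa)$-martingales precisely for $\kappa=4$) and, on the other hand, that the drift of $\DrFcn{j}{t}$ equals $\kappa\,\partial_j\log\PartF_N$ with $\PartF_N$ as in~\eqref{eq: HE part fcns}. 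This identifies $\DrFcnLattNotime{n}{j}$ as the local multiple $\SLE(4)$ driving function, verifying Assumption~\ref{ass: dr fcns converge to loc mult SLE}; note $\PartF_N$ then automatically satisfies~\eqref{eq: PDE for multiple SLEs} and~\eqref{eq: COV for multiple SLEs}, so no separate PDE analysis is needed.

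It then remains to verify Assumption~\ref{ass: quantitative no boundary visits assumption}: with high probability none of the discrete curves comes macroscopically close to $\bdry\UnitD$ away from a small neighbourhood of its endpoints. Since the continuum limit $\SLE(4)$ is a simple curve disjoint from the boundary except at its endpoints, this is the expected behaviour, and I would establish the \emph{a priori} estimate directly for the harmonic explorer: a macroscopic near-approach of an interface to a marked boundary arc not containing its endpoints would, via the colouring rule and the random walk representation, force a harmonic-measure configuration excluded by Beurling-type bounds, and the one-curve crossing estimates (condition~(G)) control the remainder of each curve after such a near-approach. With Assumptions~\ref{ass: approximability}, \ref{ass: dr fcns converge to loc mult SLE} and~\ref{ass: quantitative no boundary visits assumption} verified, Theorem~\ref{thm: loc-2-glob multiple SLE convergence, kappa le 4} gives weak convergence of $(\gamma^{(n)}_1,\ldots,\gamma^{(n)}_N)$ in $X(\C)$ to the local-to-global multiple $\SLE(4)$ with partition function~\eqref{eq: HE part fcns}. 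The main obstacle is the pair of model-specific estimates carrying the real work: (i) the mesh-uniform convergence of the discrete harmonic observable near the growing tip, where the domain carries a rough slit and the discretisation of $\bdry\Gr^{(n)}$ must also be controlled, and (ii) the no-boundary-visits estimate; by contrast the passage from one identified initial segment to the full collection of full curves is entirely taken care of by Theorem~\ref{thm: loc-2-glob multiple SLE convergence, kappa le 4}.
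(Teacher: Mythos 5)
Your outline matches the paper's proof in all essentials: precompactness via alternating boundary conditions, the DDMP (both consequences of Proposition~\ref{prop: equivalent def of HE}), and condition~(G) for the one-curve model (which the paper simply cites from~\cite[Proposition~6.3]{SS05}); and identification via the discrete harmonic-measure martingale $M_t^{(n)}(z)$, its convergence to the continuum harmonic measure of the black arcs, and the It\^o computation showing that vanishing of the drift for a continuum of $\zeta$ forces $\langle W,W\rangle_t=4t$ and the drift $4\,\partial_j\log\PartF_N$ with $\PartF_N$ as in~\eqref{eq: HE part fcns}. This is exactly Proposition~\ref{prop: dr fcn convergence for HE} and Lemma~\ref{lem: HE martingale observable}. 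One small point you gloss over: before applying It\^o you must know that $\DrFcn{j}{t}$ is a semimartingale at all; the paper gets this from the implicit function theorem, expressing $\DrFcn{j}{t}$ as a smooth function of the bounded continuous martingale $\Mart_t(\zeta)$ and the $C^1$ processes $\DrFcn{i}{t}$, $i\ne j$.

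The one genuine divergence is your choice of Assumption~\ref{ass: quantitative no boundary visits assumption} over Assumption~\ref{ass: cond C'} in Theorem~\ref{thm: loc-2-glob multiple SLE convergence, kappa le 4}. Both branches are admissible for $\kappa=4$, but the paper deliberately takes the (C') route: condition~(G') is verified by running the \emph{same} discrete-martingale computation as~\cite[Lemma~6.3]{SS05} with general $N$, so no new estimate is needed. Your route instead requires the \emph{a priori} no-boundary-visit bound, and your sketch of it (``a harmonic-measure configuration excluded by Beurling-type bounds'') is the weakest link in the proposal: a near-approach of $\gamma^{(n)}_j$ to an arc away from its endpoints is not directly excluded by Beurling estimates, and the natural way to rule it out is precisely an unforced-for-the-collection annulus-crossing estimate, i.e.\ condition~(G') again. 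So either you would end up proving (G') anyway --- in which case you should just invoke the (C') branch of Theorem~\ref{thm: loc-2-glob multiple SLE convergence, kappa le 4} as the paper does --- or you are left with a harder, unproven estimate. The conclusion is unaffected, but the (C') branch is the economical one here.
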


The fact that~\eqref{eq: HE part fcns} actually is a local multiple SLE partition function, as defined in Section~\ref{subsubsec: local multiple SLE partition functions},
is verified in~\cite[Proposition~4.8]{KP-pure_partition_functions_of_multiple_SLEs}. Note that combining this theorem with Theorems~\ref{thm: relation to global multiple SLE 1}, we also obtain the convergence of the conditional discrete curve models to the global multiple SLEs of~\cite{PW, BPW}. Furthermore, combining this with Theorem~\ref{thm: relation to global multiple SLE 2} we obtain their convergence of the conditional local-to-global multiple SLEs.

\begin{proof}[Proof of Precompactness in Theorem~\ref{thm: HE convergence}]
Let us first prove the precompactness part of Theorem~\ref{thm: HE convergence} by verifying the assumptions of Theorem~\ref{thm: precompactness thm multiple curves} for the discrete curve model. Using  Proposition~\ref{prop: equivalent def of HE}, one can easily find valid growth rules that show that the harmonic explorer has alternating boundary conditions and satisfies the DDMP. Condition (G) for the one-curve model is verified in~\cite[Proposition~6.3]{SS05}.
\end{proof}

The identification step will be done in Section~\ref{subsubsec: finishing HE convergence}.

\subsubsection{\textbf{A discrete martingale}}

Let us return to the discrete model:
consider the harmonic explorer with a valid growth rule on a simply-connected subgraph $\Gr_1 = (\Gr; e_1, \ldots, e_{2N})$ of $H$. Let $z$ be a face of $\Gr_1$. Note that it is also a face of $\Gr_t$, for any $t$. Define
\begin{align*}
M_t(z) = \HarmMeas_{\Gr_t^*} ( z; \text{black boundary of }\Gr_t).
\end{align*}

\begin{prop} 
\label{prop: HE disc mg}
$M_t(z)$ is for any face $z$ an $\mathcal{F}_t$-martingale, where $\mathcal{F}_t$ is the sigma algebra of the graphs $\Gr_1, \ldots, \Gr_t$.
\end{prop}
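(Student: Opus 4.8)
The plan is to prove the martingale property by a one-step computation, reducing it to the swapping identity~\eqref{eq: swapping growth order in HE} that was already established in the equivalent-definition discussion. Fix a face $z$ of $\Gr_1$ and hence of every $\Gr_t$. It suffices to show that $\EX[M_{t+1}(z) \mid \mathcal{F}_t] = M_t(z)$, i.e.\ that conditional on $\Gr_t$ (which determines the black boundary of $\Gr_t$ and the growth index $i = i(t)$), the expected value of the black-boundary harmonic measure from $z$ in $\Gr_{t+1}$ equals that in $\Gr_t$. So I would fix a realization of $\Gr_t$ and condition on it throughout the computation.

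First I would dispose of the trivial case: if growing the edge at index $i$ falls into case (1) of the growth procedure, then $\Gr_{t+1} = \Gr_t$ deterministically and there is nothing to prove. Otherwise, let $F$ be the third face adjacent to the tip vertex of $e_i$ in $\Gr_t$; growing the edge amounts to colouring $F$, and $\PR[F \text{ black} \mid \mathcal{F}_t] = \HarmMeas_{\Gr_t^*}(F; \text{black boundary of }\Gr_t)$. The graph $\Gr_{t+1}$ is obtained from $\Gr_t$ by declaring the tip vertex of $e_i$ a boundary vertex, so that $F$ becomes a boundary face whose colour is the one just assigned, while the black boundary of $\Gr_t$ remains boundary in $\Gr_{t+1}$. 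By decomposing a random walk from $z$ in $\Gr_t$ according to whether it hits $F$ before the rest of the boundary (exactly as in~\eqref{eq: swapping growth order in HE}, now with the roles of $F_1,F_2$ played by $F,z$), the strong Markov property of simple random walk gives
\begin{align*}
\HarmMeas_{\Gr_t^*}(z; \text{black boundary of }\Gr_t)
= \HarmMeas_{\Gr_{t+1}^*}(z; \text{black boundary of }\Gr_t)
+ \HarmMeas_{\Gr_{t+1}^*}(z; F)\, \HarmMeas_{\Gr_t^*}(F; \text{black boundary of }\Gr_t).
\end{align*}
(Here $\Gr_{t+1}^*$ refers to the dual graph where $F$ has been turned into a boundary face; the first term on the right is the harmonic measure in $\Gr_{t+1}$ of the part of the black boundary coming from $\Gr_t$.)

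Now take conditional expectation of $M_{t+1}(z)$ given $\mathcal{F}_t$. Given $\Gr_t$, the face $F$ is black with probability $\HarmMeas_{\Gr_t^*}(F; \text{black boundary of }\Gr_t)$, and in that event the black boundary of $\Gr_{t+1}$ is that of $\Gr_t$ together with $F$; otherwise it is just that of $\Gr_t$. Hence
\begin{align*}
\EX[M_{t+1}(z) \mid \mathcal{F}_t]
&= \HarmMeas_{\Gr_t^*}(F; \text{black})\Big( \HarmMeas_{\Gr_{t+1}^*}(z; \text{black boundary of }\Gr_t) + \HarmMeas_{\Gr_{t+1}^*}(z; F) \Big) \\
&\quad + \big(1 - \HarmMeas_{\Gr_t^*}(F; \text{black})\big)\, \HarmMeas_{\Gr_{t+1}^*}(z; \text{black boundary of }\Gr_t) \\
&= \HarmMeas_{\Gr_{t+1}^*}(z; \text{black boundary of }\Gr_t) + \HarmMeas_{\Gr_{t+1}^*}(z; F)\, \HarmMeas_{\Gr_t^*}(F; \text{black}) \\
&= \HarmMeas_{\Gr_t^*}(z; \text{black boundary of }\Gr_t) = M_t(z),
\end{align*}
where the penultimate equality is the random-walk decomposition above. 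Since $i(t)$ is $\mathcal{F}_t$-measurable and the computation is carried out conditionally on $\Gr_t$, this holds regardless of the valid growth rule, completing the proof. I do not expect a genuine obstacle here; the only point requiring a little care is bookkeeping which faces count as "black boundary" in $\Gr_{t+1}$ versus $\Gr_t$ — essentially verifying that turning the tip vertex of $e_i$ into a boundary vertex does not alter the harmonic measure of $z$ with respect to the old boundary except through the newly exposed face $F$, which is exactly the strong Markov property already invoked in~\eqref{eq: swapping growth order in HE}.
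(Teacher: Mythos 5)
Your proposal is correct and follows essentially the same route as the paper: the paper's proof is exactly the one-step conditional expectation computation, decomposing the black boundary of $\Gr_{t+1}$ into the black boundary of $\Gr_t$ plus the newly coloured face $F_t$, and then invoking the strong Markov property identity of~\eqref{eq: swapping growth order in HE} to collapse the sum back to $\HarmMeas_{\Gr_t^*}(z;\text{black boundary of }\Gr_t)$. Your extra care with case (1) and with the bookkeeping of which faces are boundary in $\Gr_{t+1}$ versus $\Gr_t$ is harmless and consistent with the paper's argument.
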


\begin{proof}
$M_t(z)$ is clearly bounded and $\mathcal{F}_t$-adapted, so it remains to show the conditional expectation property of discrete matringales.
Let $F_t$ be the face coloured to obtain $\Gr_{t+1}$ from $\Gr_t$. Note that given $\Gr_t$, we know the face $F_t$, since the growth rule is valid. The computation is now identical to~\eqref{eq: swapping growth order in HE}:
\begin{align*}
\EX [ M_{t+1} (z) \; \vert \; \mathcal{F}_t]
&= \HarmMeas_{\Gr_{t+1}^*} ( z ; \text{black boundary of }\Gr_t)  + \HarmMeas_{\Gr_{t+1}^*} ( z; F_t )  \PR [F_t \text{ is black} \; \vert \; \mathcal{F}_t] \\
&= \HarmMeas_{\Gr_{t+1}^*} ( z ; \text{black boundary of }\Gr_t)  + \HarmMeas_{\Gr_{t+1}^*} ( z; F_t )  \HarmMeas_{\Gr_t^*} ( F_t; \text{black boundary of }\Gr_t) \\
&= \HarmMeas_{\Gr_t^*} ( z; \text{black boundary of }\Gr_t)  = M_t(z).
\end{align*}
This concludes the proof.
\end{proof}

\subsubsection{\textbf{Verifying Assumption~\ref{ass: dr fcns converge to loc mult SLE}}}
\label{subsubsec: HE local identification}

Let us now 
verify Assumption~\ref{ass: dr fcns converge to loc mult SLE} for the multiple harmonic explorer. This is the core of the limit identification step in the proof of Theorem~\ref{thm: HE convergence}.

Note that in Assumption~\ref{ass: dr fcns converge to loc mult SLE}, we worked with Carath\'{e}odory converging graphs $(\Gr^{(n)}; e_1^{(n)}, \ldots, e^{(n)}_{2N})$, but with relaxed regularity at marked boundary points, see Section~\ref{subsubsec: relaxed assumptions}. Nevertheless, by Remark~\ref{rem: precompactness for irregular boundary}, since the assumptions of Theorem~\ref{thm: precompactness thm multiple curves} for the discrete curve model were verified in the precompactness part of Theorem~\ref{thm: HE convergence}, we know that the stopped driving functions $\DrFcnLattNotime{n}{j}$ are precompact also under this relaxed boundary regularity. Let us fix $j$, assume that a convergent subsequence has been extracted, and suppress it in notation, so that
\begin{align*}
\DrFcnLattNotime{n}{j} \stackrel{n \to \infty}{\longrightarrow} \DrFcnNoInd_j \qquad \text{weakly in } \ctsfcns.
\end{align*}
We now claim that Assumption~\ref{ass: dr fcns converge to loc mult SLE} is now satisfied:

\begin{prop}
\label{prop: dr fcn convergence for HE}
Any weak limit $\DrFcnNoInd_j$ as above is the stopped driving function of the local multiple $\SLE(4)$ with partition function~\eqref{eq: HE part fcns}.
\end{prop}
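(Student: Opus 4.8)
The plan is to run the classical martingale argument of \cite{SS05}, adapted to the multiple-curve setting via the discrete martingale of Proposition~\ref{prop: HE disc mg}. First I would set up the standard framework: fix the localization neighbourhood $U_j$, map $(\UnitD; \Unitp_\infty)$ to $(\bH, \infty)$ with $\confmapDH$, and work with the half-plane picture. Denote by $g_t$ the Loewner maps generated by $\DrFcnLattNotime{n}{j}$, so that the real points $\DrFcnLatt{n}{i}{t}$ for $i \ne j$ track the images of the other marked boundary points. Choose a test face $z$ in the bulk of $\domain_n$, and consider the discrete harmonic functions $M_t(z) = \HarmMeas_{\Gr_t^*}(z; \text{black boundary})$, which by Proposition~\ref{prop: HE disc mg} are bounded $\mathcal{F}_t$-martingales. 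The first substantive step is to identify the continuum limit of $M_t(z)$: by the convergence of discrete harmonic measure to its continuum counterpart (as in \cite{SS05}, using the precompactness and the standard discrete-to-continuum harmonic analysis on $H$), for a fixed time in the Loewner parametrization this converges to a deterministic conformally covariant function of the domain slit by the initial segment and of the positions of the marked points — explicitly, a linear combination of harmonic measures of the boundary arcs, i.e.\ a function expressible through $\arg$ of cross-ratio-type quantities of $g_t(z)$ and the driving points $W^{(i)}_t$.

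Next I would pass the martingale property to the limit. Since $M_t(z)$ is a bounded martingale in the discrete model and the discrete curves converge weakly (along the extracted subsequence) to the Loewner chain driven by $\DrFcnNoInd_j$, the limiting functional $\mathcal{M}_t(z) := \lim_n M_{t}^{(n)}(z)$ is a bounded martingale with respect to the filtration generated by $\DrFcnNoInd_j$, up to the continuous exit time of $U_j$. This is the usual ``promote the discrete martingale'' step and is routine given the uniform boundedness and the established convergence of the observable. Then I would write down $\mathcal{M}_t(z)$ as an explicit function $\Phi(g_t(z), W^{(1)}_t, \ldots, W^{(2N)}_t)$ and apply It\^o's formula. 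Writing $W^{(j)}_t = W^{(j)}_0 + \int_0^t b_s\,\ud s + (\text{mart.})$, where $b_s$ is an a priori unknown (adapted) drift — here I use that by Theorem~\ref{thm: precompactness thm multiple curves}(C) and general Loewner theory the limiting driving function is continuous, and a Girsanov/absolute-continuity argument together with condition (G) gives that it is a continuous semimartingale with a Brownian part of rate $\kappa = 4$ (this uses the $N=1$ input of \cite{SS05} for the quadratic variation, transported via the commutative diagram \eqref{dia: N-curve commutation with Loewner transform}) — the martingale condition forces the $\ud t$-coefficient in the It\^o expansion to vanish for every choice of $z$. The other points evolve deterministically by $\ud W^{(i)}_t = 2\,\ud t/(W^{(i)}_t - W^{(j)}_t)$.

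The key computation is then: the vanishing of the drift term, for all $z \in \bH$, is exactly the statement that $\Phi$ is annihilated by the second-order operator $\tfrac{\kappa}{2}\partial_{W^{(j)}}^2 + b_s\,\partial_{W^{(j)}} + \sum_{i \ne j} \tfrac{2}{W^{(i)}-W^{(j)}}\partial_{W^{(i)}}$; since $\Phi$ (the harmonic measure combination) is an explicit elementary function, this is a finite linear algebra / calculus identity that pins down $b_s = 4\,\partial_{W^{(j)}} \log \PartF_N(W^{(1)}_s, \ldots, W^{(2N)}_s)$ with $\PartF_N$ as in \eqref{eq: HE part fcns} and $\kappa=4$, $h = h(4) = 1/4$. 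Concretely one checks that $\PartF_N = \prod_{k<\ell}(x_\ell-x_k)^{(-1)^{\ell-k}/2}$ solves \eqref{eq: PDE for multiple SLEs} at $\kappa=4$, and that its logarithmic derivative is precisely the drift making $\Phi$ a local martingale — this is where I would lean on \cite[Proposition~4.8]{KP-pure_partition_functions_of_multiple_SLEs} for the fact that \eqref{eq: HE part fcns} is a genuine partition function, so that the resulting SDE is exactly \eqref{eq: SDE definition of N-SLE}. Matching initial conditions ($W^{(i)}_0 = \confmapDH(p_i)$) and using uniqueness for the SDE system then identifies $\DrFcnNoInd_j$ as the stopped local multiple $\SLE(4)$ driving function, which is what Assumption~\ref{ass: dr fcns converge to loc mult SLE} requires; since every subsequential limit is identified with the same object, the full convergence follows.

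I expect the main obstacle to be the first step — proving that the discrete martingale observable $M_t^{(n)}(z)$ converges (with enough uniformity in $t$ up to the continuous stopping time, and locally uniformly in $z$) to the claimed explicit conformally covariant continuum functional, in the present generality of Carath\'eodory-converging domains with relaxed boundary regularity. For $N=1$ this is the content of \cite{SS05}, but here one must control the harmonic measure of $2N$ boundary arcs simultaneously near possibly irregular marked prime ends; the remedy is that the observable is evaluated at an interior face $z$ and the relevant estimates are interior/Beurling-type bounds that are insensitive to boundary roughness away from $z$, combined with the a priori crossing estimates (condition (G)) already available from the precompactness part. The It\^o/PDE bookkeeping, while somewhat lengthy, is routine once the observable limit and the $\kappa=4$ identification of the quadratic variation are in hand.
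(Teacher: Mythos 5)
Your overall architecture — discrete harmonic-measure martingale, promotion to a continuum martingale $\Mart_t(\zeta)$ for the limiting Loewner chain, then an It\^o/drift computation that pins down the SDE — is exactly the paper's route, and your first and last paragraphs (the observable convergence via uniform discrete-to-continuum harmonic analysis, and the identification of the drift as $4\,\partial_j \log \PartF_N$) match the paper's Lemma on the martingale observable and the concluding computation. However, there is a genuine gap at the step where you claim that the limit driving function is ``a continuous semimartingale with a Brownian part of rate $\kappa=4$'' obtained by ``a Girsanov/absolute-continuity argument together with condition (G)\ldots transported via the commutative diagram''. No such absolute continuity is available a priori: absolute continuity of the $j$:th marginal with respect to chordal $\SLE(4)$ is a \emph{consequence} of being a local multiple SLE, which is what you are trying to prove, and condition (G) only yields crossing estimates, not regularity of the law of the driving process. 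Likewise, the commutative diagram~\eqref{dia: N-curve commutation with Loewner transform} relates curves and driving functions \emph{within} the multiple-curve model; it does not let you import the quadratic variation from the $N=1$ result of~\cite{SS05}, since the $N$-curve driving function is not the $N=1$ driving function.

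The paper closes both halves of this gap using the observable itself, with no external input. First, the semimartingale property: since $\partial_{W^{(j)}}$ of the explicit expression~\eqref{eq: defn of continuous mgale} never vanishes, the Implicit Function Theorem expresses $\DrFcn{j}{t}$ as a smooth function of the bounded continuous martingale $\Mart_t(\zeta)$ and the finite-variation processes $\DrFcn{i}{t}$, $i\neq j$; hence $\DrFcn{j}{t}$ is a semimartingale. Second, the quadratic variation: applying It\^o to the complex process $A_t=\sum_i(-1)^i\log(g_t(\zeta)-\DrFcn{i}{t})$ and demanding that the real part of its finite-variation part vanish for a \emph{continuum} of $\zeta$, one observes that this f.v.\ part is a degree-two polynomial in $1/(g_t(\zeta)-\DrFcn{j}{t})$ with real coefficients; the vanishing of the quadratic coefficient forces $\langle \DrFcnNoInd_j,\DrFcnNoInd_j\rangle_t=4t$, and the linear coefficient then yields the drift $4(\partial_j\log\PartF_N)$. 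So the fix is not to look for an external source for the semimartingale structure and the rate $\kappa=4$, but to extract both from the martingale observable evaluated at varying $\zeta$ — the same device that, in your write-up, you only use to determine the drift.
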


For $i \ne j$, let $\DrFcn{i}{t}$ describe the locations of the other marked boundary points under the Loewner equation driven by $\DrFcn{j}{t}$:
\begin{align*}
d \DrFcn{i}{t} = \frac{2 dt}{\DrFcn{i}{t} - \DrFcn{j}{t}}.
\end{align*}
Let $g_t$ be the mapping-out functions of this Loewner equation, so $\DrFcn{i}{t} = g_t(\DrFcn{i}{0})$. For any point $\zeta \in \bH$ outside of the localization neighbourhood $U_j$ of the $j$:th boundary point, define for times $t \le \tau_j$
\begin{align}
\label{eq: defn of continuous mgale}
\Mart_t ( \zeta ) &= \frac{1}{\pi} \Re \left( \log (g_t( \zeta ) - \DrFcn{2N}{t}) - \log (g_t( \zeta ) - \DrFcn{2N -1}{t}) + \ldots - \log (g_t( \zeta ) - \DrFcn{1}{t})   \right) \\
\nonumber
&= \frac{1}{\pi} \Re \left( \sum_{i = 1}^{2N} (-1)^i \log (g_t( \zeta ) - \DrFcn{i}{t}) \right),
\end{align}
and for $t \ge \tau_j$ set $\Mart_t ( \zeta ) = \Mart_{\tau_j} ( \zeta )$. (Here $\log$ denotes the natural complex logarithm; we choose the branch cut on the negative imaginary axis.)
Note that $\Mart_t (\zeta )$ is the continuum harmonic measure in $\bH$ of the counterclockwise odd-to-even marked boundary arcs between the marked boundary points $\DrFcn{1}{t}, \ldots, \DrFcn{2N}{t}$ as seen from $g_t ( \zeta )$. In other words, $\Mart_t ( \zeta )$ is the direct continuum analogue of the discrete martingale $M_t ( z )$ in the previous subsection.

\begin{lem}
\label{lem: HE martingale observable}
$\Mart_t (\zeta )$ is for any $\zeta$ as above a continuous bounded martingale with respect to the filtration $\mathscr{F}_t$ of $\DrFcn{j}{t}$.
\end{lem}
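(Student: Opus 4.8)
The plan is to prove Lemma~\ref{lem: HE martingale observable} by a direct stochastic calculus computation: apply It\^o's formula to the complex-valued process $F_t := \sum_{i=1}^{2N}(-1)^i \log\big(g_t(\zeta) - \DrFcn{i}{t}\big)$ for $t\le\tau_j$, and show that its drift vanishes identically because $\kappa = 4$ and $\PartF_N$ is the partition function in \eqref{eq: HE part fcns}. Then $\Mart_t(\zeta) = \tfrac1\pi\Re F_t$ (set constant after $\tau_j$) is a continuous local martingale with respect to $\mathscr F_t$, and since it equals a harmonic measure it is bounded by $1$, hence a true martingale.

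First I would record the It\^o ingredients. By the definition of the local multiple SLE, $\ud\DrFcn{j}{t} = \sqrt{\kappa}\,\ud B_t + \kappa\,\partial_j\log\PartF_N(\DrFcn{1}{t},\ldots,\DrFcn{2N}{t})\,\ud t$, while $\ud\DrFcn{i}{t} = \tfrac{2\,\ud t}{\DrFcn{i}{t}-\DrFcn{j}{t}}$ for $i\ne j$ and $\partial_t g_t(\zeta) = \tfrac{2}{g_t(\zeta)-\DrFcn{j}{t}}$. Since $\zeta$ lies outside $U_j$, the growing hull stays inside $U_j$ up to $\tau_j$, so for $t\le\tau_j$ the point $g_t(\zeta)$ is well-defined with positive imaginary part, is smooth in $t$, and $g_t(\zeta)-\DrFcn{i}{t}$ never vanishes, so all logarithms (with the branch fixed in the statement) are smooth functions of the process. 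Applying It\^o's formula, using that only $\DrFcn{j}{t}$ contributes a quadratic variation term $\ud\langle \DrFcn{j}{}\rangle_t = \kappa\,\ud t$, one gets that the local-martingale part of $\ud F_t$ is $\tfrac{-(-1)^j\sqrt{\kappa}}{g_t(\zeta)-\DrFcn{j}{t}}\,\ud B_t$ and the drift is the sum of a ``$z$-flow'' term $\tfrac{2}{g_t(\zeta)-\DrFcn{j}{t}}\sum_i \tfrac{(-1)^i}{g_t(\zeta)-\DrFcn{i}{t}}$, the terms $-2\sum_{i\ne j}\tfrac{(-1)^i}{(g_t(\zeta)-\DrFcn{i}{t})(\DrFcn{i}{t}-\DrFcn{j}{t})}$ from the flow of the other points, the second-order term $-\tfrac{\kappa(-1)^j}{2(g_t(\zeta)-\DrFcn{j}{t})^2}$, and the first-order drift term $-\tfrac{\kappa(-1)^j\,\partial_j\log\PartF_N}{g_t(\zeta)-\DrFcn{j}{t}}$, all multiplied by $\ud t$.

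Next I would simplify the drift. Separating the $i=j$ summand in the first term produces a double pole $\tfrac{2(-1)^j}{(g_t(\zeta)-\DrFcn{j}{t})^2}$, which for $\kappa=4$ exactly cancels $-\tfrac{\kappa(-1)^j}{2(g_t(\zeta)-\DrFcn{j}{t})^2}$; and pairing each remaining $i\ne j$ summand of the first term with the corresponding term of the second term, a one-line partial-fraction identity in the variables $g_t(\zeta)-\DrFcn{j}{t}$ and $\DrFcn{i}{t}-\DrFcn{j}{t}$ collapses them to $\tfrac{-2(-1)^i}{(g_t(\zeta)-\DrFcn{j}{t})(\DrFcn{i}{t}-\DrFcn{j}{t})}$. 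Hence the drift reduces to
\[
\frac{1}{g_t(\zeta)-\DrFcn{j}{t}}\Big[\,2\sum_{i\ne j}\frac{(-1)^i}{\DrFcn{j}{t}-\DrFcn{i}{t}} \;-\; 4(-1)^j\,\partial_j\log\PartF_N(\DrFcn{1}{t},\ldots,\DrFcn{2N}{t})\,\Big]\,\ud t,
\]
so it remains to check that the bracket vanishes, i.e. $\partial_j\log\PartF_N = \tfrac12\sum_{i\ne j}\tfrac{(-1)^{i-j}}{x_j-x_i}$. This is an elementary check from \eqref{eq: HE part fcns}: differentiating $\log\PartF_N=\tfrac12\sum_{k<\ell}(-1)^{\ell-k}\log(x_\ell-x_k)$ in $x_j$ gives exactly $\tfrac12\sum_{k<j}\tfrac{(-1)^{j-k}}{x_j-x_k}+\tfrac12\sum_{\ell>j}\tfrac{(-1)^{\ell-j}}{x_j-x_\ell}=\tfrac12\sum_{i\ne j}\tfrac{(-1)^{i-j}}{x_j-x_i}$. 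Therefore $F_t$ has zero drift on $[0,\tau_j]$, so $\Mart_t(\zeta)=\tfrac1\pi\Re F_t$ is a continuous local martingale on $[0,\tau_j]$; being constant after $\tau_j$, it is a continuous local martingale on $[0,\infty)$. Continuity is immediate from the continuity of $t\mapsto g_t(\zeta)$ and $t\mapsto\DrFcn{i}{t}$, and boundedness follows since $\Mart_t(\zeta)$ is a harmonic measure, as noted just before the statement of the lemma; a bounded continuous local martingale is a true martingale, which finishes the proof.

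I do not expect any deep difficulty here: the main obstacle is purely bookkeeping in the It\^o computation, and the two points to get right are (i) the partial-fraction collapse that makes the $(g_t(\zeta)-\DrFcn{j}{t})^{-2}$ contributions cancel precisely when $\kappa=4$, and (ii) recognizing that the surviving first-order pole vanishes if and only if $\PartF_N$ satisfies the first-order relation $\partial_j\log\PartF_N = \tfrac12\sum_{i\ne j}(-1)^{i-j}/(x_j-x_i)$, which the explicit product in \eqref{eq: HE part fcns} does. The only technical care needed is to work with $\zeta\notin U_j$ and the continuous stopping time $\tau_j$, exactly as fixed in the statement, so that the logarithms stay smooth and $\Mart_t(\zeta)$ is genuinely well-defined.
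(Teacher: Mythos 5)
Your It\^o computation is correct as far as it goes, but it proves the wrong implication and is circular in the context where Lemma~\ref{lem: HE martingale observable} sits. In Section~\ref{subsubsec: HE local identification} the process $\DrFcn{j}{t}$ is \emph{not} yet known to be a local multiple $\SLE(4)$ driving function: it is an arbitrary subsequential weak limit of the discrete driving functions $\DrFcnLattNotime{n}{j}$, and the entire purpose of the lemma is to supply the martingale property that Proposition~\ref{prop: dr fcn convergence for HE} then uses (via the implicit function theorem and It\^o calculus) to \emph{identify} $\DrFcn{j}{t}$ as the local multiple $\SLE(4)$ with partition function~\eqref{eq: HE part fcns}. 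Your first step, ``by the definition of the local multiple SLE, $\ud\DrFcn{j}{t}=\sqrt{\kappa}\,\ud B_t+\kappa\,\partial_j\log\PartF_N\,\ud t$,'' assumes exactly the conclusion that the lemma is a stepping stone toward. Indeed, at this stage it is not even known that $\DrFcn{j}{t}$ is a semimartingale, so It\^o's formula cannot be applied to it at all; the paper deduces semimartingality \emph{from} the lemma, by expressing $\DrFcn{j}{t}$ as a smooth function of the bounded martingale $\Mart_t(\zeta)$ and the finite-variation processes $\DrFcn{i}{t}$, $i\ne j$.

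The correct route is the one the paper takes: transfer the discrete martingale property of $M_t(z)$ (Proposition~\ref{prop: HE disc mg}, which comes from the harmonic-explorer growth rule) to the continuum. Concretely, one shows $\EX[\Mart_t(\zeta)f_t(\DrFcnNoInd_j)]=\EX[\Mart_{\tau_j}(\zeta)f_t(\DrFcnNoInd_j)]$ for all bounded continuous $\mathscr F_t$-measurable $f_t$ by approximating both sides with the discrete quantities $\Mart^{(n)}$, relating $\Mart^{(n)}$ to $M^{(n)}$ via the uniform convergence of discrete harmonic measure to continuous harmonic measure (\cite[Theorem~3.12]{CS-discrete_complex_analysis_on_isoradial}), and invoking the discrete optional stopping identity. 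Your drift cancellation (the partial-fraction collapse, the $\kappa=4$ cancellation of the double pole, and the identity $\partial_j\log\PartF_N=\tfrac12\sum_{i\ne j}(-1)^{i-j}/(x_j-x_i)$) is essentially the computation in the paper's proof of Proposition~\ref{prop: dr fcn convergence for HE}, run in the converse direction; it would be a fine verification that the limit object, once identified, does make $\Mart_t(\zeta)$ a martingale, but it cannot serve as a proof of the lemma.
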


\begin{proof}
$\Mart_t (\zeta )$ is clearly continuous, bounded, and $\mathscr{F}_t$-adapted. It remains to verify the conditional expectation property of martingales. For this, we will show that
\begin{align*}
\Mart_t ( \zeta ) = \EX [ \Mart_{\tau_j} ( \zeta ) \; \vert \; \mathscr{F}_t ],
\end{align*}
since any conditional expectation is a martingale. The above holds if and only if for all continuous bounded functions $f_t : \ctsfcns \to \R$ of $\DrFcnNoInd_j$, measurable with respect to $\mathscr{F}_t$, i.e., only depending on $\DrFcnNoInd_j$ up to time $t$, we have
\begin{align}
\label{eq: cond exp alt def}
\EX [ \Mart_{t} ( \zeta ) f_t (\DrFcnNoInd_j) ] = \EX [ \Mart_{\tau_j} ( \zeta ) f_t (\DrFcnNoInd_j) ].
\end{align}

Let us verify~\eqref{eq: cond exp alt def}. Consider the analogue of $\Mart_t ( \zeta )$ with the discrete driving function $\IterDrFcnLatt{n}{j}{t}$, i.e., define  for times $t \le \tau_j^{(n)}$
\begin{align*}
\Mart^{(n)}_t ( \zeta ) = \frac{1}{\pi} \Re \left( \log (g^{(n)}_t( \zeta ) - \DrFcnLatt{n}{2N}{t}) - \log (g^{(n)}_t( \zeta ) - \DrFcnLatt{n}{2N - 1}{t}) + \ldots - \log (g^{(n)}_t( \zeta ) - \DrFcnLatt{n}{1}{t})   \right),
\end{align*}
and for $t \ge \tau_j^{(n)}$ set $\Mart^{(n)}_t ( \zeta ) = \Mart^{(n)}_{\tau_j^{(n)}} ( \zeta )$. Note again that $\DrFcnLatt{n}{i}{t}$ describe the locations of the other discrete marked boundary points under the Loewner equation driven by $\DrFcnLatt{n}{j}{t}$. Now, if $\DrFcnLatt{n}{i}{t}$ were lauched from the same locations as $\DrFcn{i}{t}$, $\Mart^{(n)}_t ( \zeta )$ and $\Mart_t ( \zeta )$ would just be continuous functions of $\DrFcnLatt{n}{j}{t}$ and $\DrFcn{i}{t}$, respectively. It takes a standard harmonic measure argument to show that a small change in the launching location does not play a role, and hence by the weak convergence $\DrFcnLattNotime{n}{j} \to  \DrFcnNoInd_j$ (using also the continuity of the stopping times), we get
\begin{align} 
\label{eq: Mart_tau}
\EX [ \Mart_{\tau_j}( \zeta ) f_t (\DrFcnNoInd_j) ] = \lim_{n} \EX^{(n)} [ \Mart^{(n)}_{\tau_j^{(n)}} ( \zeta ) f_t ( \DrFcnLattNotime{n}{j} ) ].
\end{align}
SImilarly, for $\Mart_{t} = \Mart_{t \wedge \tau_j}$ and $\Mart^{(n)}_{t}$, we get
\begin{align}
\label{eq: Mart_t}
\EX [ \Mart_{t} ( \zeta ) f_t(\DrFcnNoInd_j) ] = \lim_{n} \EX^{(n)} [ \Mart^{(n)}_{t} ( \zeta ) f_t( \DrFcnLattNotime{n}{j} ) ].
\end{align}

Let us next relate $\Mart^{(n)}$ to the discrete martingales  $M^{(n)}$  under the harmonic explorer in $\Gr^{(n)}$, given by  Proposition~\ref{prop: HE disc mg}. Consider first $\Mart^{(n)}_{\tau_j^{(n)}}$. Due to the convergence of discrete harmonic measures to the continuous ones, which is uniform over the family of discrete domains bounded from inside and outside~\cite[Theorem~3.12]{CS-discrete_complex_analysis_on_isoradial}, we have
\begin{align}
\label{eq: unif conv of mg observable 1}
\EX^{(n)} [ \Mart^{(n)}_{\tau_j^{(n)}} ( \zeta ) f_t ( \DrFcnLattNotime{n}{j} ) ] = \EX^{(n)} [ M^{(n)}_{ \lceil \tau_j^{(n)} \rceil } ( z^{(n)} ) f_t ( \DrFcnLattNotime{n}{j} ) + o_n (1)] ;
\end{align}
here $z^{(n)}$ is the face of $\Gr^{(n)}$ whose conformal image in $\bH$ contains $\zeta$; $\lceil \tau_j^{(n)} \rceil$ is the first time after $ \tau_j^{(n)} $ when the lattice initial segment reaches a vertex; and $o_n (1)$ denotes $o(1)$ as $n \to \infty$, and is uniform over $t$ and the possible initial segments up to time $\lceil \tau_j^{(n)} \rceil$, or, equivalently, over the  driving functions $\DrFcnLattNotime{n}{j}$.
Arguing identically, we also have
\begin{align}
\label{eq: unif conv of mg observable 2}
\EX [ \Mart^{(n)}_{t} ( \zeta ) f_t ( \DrFcnLattNotime{n}{j} ) ] = \EX [ M^{(n)}_{ \lceil t \wedge \tau_j^{(n)} \rceil } ( z^{(n)} ) f_t ( \DrFcnLattNotime{n}{j} ) + o_n (1)],
\end{align}
where the notations are defined analogously to the above.

Now, using the discrete martingale property of $M^{(n)} $ and the uniformity of the $o(1)$ terms in~\eqref{eq: unif conv of mg observable 1} and \eqref{eq: unif conv of mg observable 2}, we deduce
\begin{align*}
\EX [ \Mart^{(n)}_{t} ( \zeta ) f_t ( \DrFcnLattNotime{n}{j} ) ] = \EX [ \Mart^{(n)}_{\tau_j^{(n)}} ( \zeta ) f_t ( \DrFcnLattNotime{n}{j} ) ] + o_n (1).
\end{align*}
Substituting this into~\eqref{eq: Mart_t} and~\eqref{eq: Mart_tau}, we observe that
\begin{align*}
\EX [ \Mart_{\tau_j}( \zeta ) f_t (\DrFcnNoInd_j) ]  - \EX [ \Mart_{t}( \zeta ) f_t (\DrFcnNoInd_j) ] = \lim_n (0 +  o_n (1)) = 0.
\end{align*}
This shows that~\eqref{eq: cond exp alt def} holds and finishes the proof of the lemma.
\end{proof}

\begin{proof}[Proof of Proposition~\ref{prop: dr fcn convergence for HE}]
Observe first that the derivative of the expression~\eqref{eq: defn of continuous mgale} defining $\Mart_{t} ( \zeta )$ with respect to $\DrFcn{j}{t}$ is never zero. Thus, by the Implicit function theorem,  $\DrFcn{j}{t}$ can be expressed as a smooth function of $\Mart_{t} ( \zeta )$ and $\DrFcn{i}{t}$, $i \ne j$. In particular, since $\Mart_{t} ( \zeta )$ is a continuous bounded martingale and $\DrFcn{i}{t}$ are continuously differentiable in time, it follows that $\DrFcn{j}{t}$ is a semimartingale. 

Let us now apply It\^{o} calculus to the (complex) process
\begin{align*}
A_t = \sum_{i = 1}^{2N} (-1)^i \log (g_t( \zeta ) - \DrFcn{i}{t}),
\end{align*}
whose real part is a martingale by Lemma~\ref{lem: HE martingale observable}. We obtain
\begin{align*}
\ud A_t = & \sum_{\substack{i = 1 \\ i \ne j }}^{2N} (-1)^i \frac{1}{g_t( \zeta ) - \DrFcn{i}{t } } \left( \frac{2 \ud t}{ g_t( \zeta ) - \DrFcn{j}{t } } - \frac{2 \ud t}{ \DrFcn{i}{t } - \DrFcn{j}{t } } \right) \\
& + (-1)^j \frac{1}{g_t( \zeta ) - \DrFcn{j}{t } } \left( \frac{2 \ud t}{ g_t( \zeta ) - \DrFcn{j}{t } }  
- \ud \DrFcn{j}{t } \right)
+ 1/2 (-1)^j \left( - \frac{1}{ (g_t( \zeta ) - \DrFcn{j}{t } )^2 } \right) \ud \langle \DrFcnNoInd_j , \DrFcnNoInd_j \rangle_t.
\end{align*}
Now, $A_t$ is a semimartingale, and consist thus of a local martingale part and a finite variation (f.v.) part.
For the real part of $A_t$ to be a martingale, the real part of the f.v. part of $A_t$ must vanish. After some simplifications, we express the f.v. part as
\begin{align*}
\ud [ \text{f.v. part of }  A_t ] = & \sum_{\substack{i = 1 \\ i \ne j }}^{2N} (-1)^i \frac{2 \ud t}{ g_t( \zeta ) - \DrFcn{j}{t } } \left(  - \frac{1}{ \DrFcn{i}{t } - \DrFcn{j}{t } } \right) 
- (-1)^j \frac{1}{g_t( \zeta ) - \DrFcn{j}{t } } \ud [ \text{f.v. part of }  \DrFcn{j}{t } ]\\
& + (-1)^j \frac{1}{ (g_t( \zeta ) - \DrFcn{j}{t } )^2 } \left( 2 \ud t - \ud \langle \DrFcnNoInd_j , \DrFcnNoInd_j \rangle_t / 2 \right).
\end{align*}
Furthermore, the real part of the above must vanish for a \emph{continuum} of $\zeta$:s. On the other hand, the above is a second degree polynomial of the complex variable $1/( g_t( \zeta ) - \DrFcn{j}{t } )$ with real coefficients. Now, the real part of such a polynomial vanishes on an open set of $\zeta$:s if and only if the coefficients of $1/( g_t( \zeta ) - \DrFcn{j}{t } )$ and $1/( g_t( \zeta ) - \DrFcn{j}{t } )^2 $ both vanish. The latter gives 
\begin{align}
\label{eq: indentifying mg part of semimg}
 2 \ud t - \ud \langle \DrFcnNoInd_j , \DrFcnNoInd_j \rangle_t / 2 = 0 \quad \Rightarrow \quad  \langle \DrFcnNoInd_j , \DrFcnNoInd_j \rangle_t  = 4t,
\end{align}
and the former gives
\begin{align}
\label{eq: identifying fv part of semimg}
\ud [ \text{f.v. part of } \DrFcn{j}{t } ] = \sum_{\substack{i = 1 \\ i \ne j }}^{2N} (-1)^{i-j}  \left(  - \frac{2 \ud t}{ \DrFcn{i}{t } - \DrFcn{j}{t } } \right)  = 4 ( \partial_j \log \PartF_N )(\DrFcn{1}{t }, \ldots, \DrFcn{2N}{t }) \ud t,
\end{align}
where $\PartF_N$ is given by~\eqref{eq: HE part fcns}. Equations~\eqref{eq: indentifying mg part of semimg} and~\eqref{eq: identifying fv part of semimg} and the initial value $\DrFcn{j}{0}$ together identify the local semimartingale $\DrFcn{j}{t }$, giving the stochastic integral representation
\begin{align*}
\ud \DrFcn{j}{t } = \sqrt{4} \ud B_t + 4 ( \partial_j \log \PartF_N )(\DrFcn{1}{t }, \ldots, \DrFcn{2N}{t }) \ud t,
\end{align*}
where $B_t$ is a standard Brownian motion. By definition, this means that $\DrFcn{j}{t }$ is a local multiple $\SLE(4)$ driving function with partition function~\eqref{eq: HE part fcns}. This concludes the proof.
\end{proof}

\begin{rem}
A sophisticated guess for the partition function $\PartF_N$ from~\cite[Proposition~4.8]{KP-pure_partition_functions_of_multiple_SLEs} streamlined the proof above. However, this is not an inevitable logical input: $\PartF_N$ is determined (up to a multiplicative constant) by requiring that~\eqref{eq: identifying fv part of semimg} holds for all $1 \le j \le 2N$. 
\end{rem}

\subsubsection{\textbf{Finishing the proof of Theorem~\ref{thm: HE convergence}}}
\label{subsubsec: finishing HE convergence}

\begin{proof}[Identification part of Theorem~\ref{thm: HE convergence}]
Let us now finish the proof of Theorem~\ref{thm: HE convergence} by verifying that the assumptions needed for applying Theorem~\ref{thm: loc-2-glob multiple SLE convergence, kappa le 4} are satisfied. Assumption~\ref{ass: dr fcns converge to loc mult SLE} was just verified in Proposition~\ref{prop: dr fcn convergence for HE}. Assumption~\ref{ass: approximability} holds trivially, and~\ref{ass: cond C'} is verified via condition (G'): condition (G) is verified in~\cite[Lemma~6.3]{SS05} based on the discrete martingale of Proposition~\ref{prop: HE disc mg} for $N=1$ curves. The identical computation with general $N$ proves condition (G'). We can now apply Theorem~\ref{thm: loc-2-glob multiple SLE convergence, kappa le 4} to conclude the proof of Theorem~\ref{thm: HE convergence}.
\end{proof}

%
%
%
%
%
%
%
%
%
%
%
%
%
%

\bigskip{}

\appendix

%

\section{On regular conditional laws}
\label{sec: abstract nonsense}

\addtocontents{toc}{\setcounter{tocdepth}{1}}

In this appendix, we present for completeness some basic facts about regular conditional laws and their relation to conditional expectations. The notations in this appendix are independent of the notations in the rest of the article.


\subsection{\textbf{Regular conditional law given a sigma algebras}}

Let $(\Omega, \sF, \PR)$ be a probability space and $Y:(\Omega, \sF) \to (G, \sG)$ a measurable random variable. The regular conditional law of $Y$ given a sub-sigma algebra $\sH$ of $\sF$ 
is a map $\mu: \Omega \times \sG \to \R$ such that
\begin{itemize}
\item[i)] $\mu_\omega$ is a probability measure on $(G, \sG)$ for all $\omega \in \Omega$;
\item[ii)] $\omega \mapsto \mu_\omega [B] $ is, for all $B \in \sG$,  measurable $(\Omega, \sH) \to (\R, \sB)$, where $\sB$ denotes the standard Borel sigma algebra of $\R$; and
\item[iii)] $\PR [\omega \in A, \;  Y  \in B ] = \EX [\mathbb{I}_A (\omega) \mu_\omega [B]  ] $ for all $A \in \sH$ and $B \in \sG$.
\end{itemize}

\subsection{\textbf{Regular conditional law given a random variable}}

Let $(\Omega, \sF, \PR)$ be a probability space. Let $X: (\Omega, \sF) \to (E, \sE)$ and $Y:(\Omega, \sF) \to (G, \sG)$ be random variables taking values in compete separable metric spaces $E$ and $G$, respectively, with the Borel sigma algebras $\sE$ and $\sG$. We say that a map $\lambda$ from $E \times \sG $ to $ \R$, denoted $(x, B) \mapsto \lambda_x [B]$, is the \textit{(regular) conditional law of $Y $ given $X$} if
\begin{itemize}
\item[i)] $\lambda_x$ is a probability measure on $(G, \sG)$ for all $x \in E$;
\item[ii)] $x \mapsto \lambda_x [B] $ is, for all $B \in \sG$, a measurable map from $(E, \sE) \to (\R, \sB)$; and
\item[iii)] for all $A \in \sE$ and $B \in \sG$,
\begin{align*}
\PR [ X \in A, \;  Y  \in B ] = \EX [\mathbb{I}_A (X) \lambda_X [B]  ] = \int_{x \in E} \mathbb{I}_A (x) \lambda_x [B] dP_X (x),
\end{align*}
where $P_X$ denotes the law of $X$ on $(E, \sE)$.
\end{itemize}


We observe that the conditional law $\lambda$ of $Y $ given $X$, and the conditional law $\mu$ of $Y $ given the sigma algebra $\sigma(X) \subset \sF$ generated by $X$, are related by $\mu_\omega [\cdot] = \lambda_{X(\omega)} [\cdot]$ in the following precise sense. First, given a conditional law $\lambda$ of $Y $ given $X$, and taking $\mu_\omega [B] = \lambda_{X(\omega)} [B]$ one readily observes that $\mu$ is the conditional law of $Y$ given $\sigma(X)$. Conversely, assume that we are given the conditional law $\mu$ of $Y$ given $\sigma(X)$.
By the Doob--Dynkin Lemma for random variables in complete separable metric spaces~\cite[Lemma~5]{Taraldsen}, a measurable random variable $(\Omega, \sigma(X) ) \to (\R, \sB)$ can be expressed as a measurable function of $X$, so $\mu_\cdot [B] : (\Omega, \sigma(X) ) \to (\R, \sB)$ generates a function $ \lambda_\cdot [B] : (E, \sE ) \to (\R, \sB) $ such that $\mu_\omega [B] = \lambda_{X(\omega)} [B]$. One then readily observes that such a $\lambda$ is the conditional law of $Y$ given $X$.

\subsection{\textbf{Existence and almost sure uniqueness}}

The regular conditional law of $Y$ given $X$, as defined above, exists and is unique in an almost sure sense:
for the existence, a regular conditional law of $Y$ taking values in a complete separble metric space, given any sub-sigma algebra of $\sF$, exists~\cite{Durrett}. As observed in the previous paragraph, the law of $Y$ given $X$ (or $\sigma(X)$) thus also exists. For the uniqueness, if $\lambda$ and $\tilde{\lambda}$ are two regular conditional laws of $Y$ given $X$, then for any fixed $B \in \sG$, we have that  $\lambda_X [B] = \tilde{\lambda}_X [B]$ almost surely. Since the sigma algebra $\sG$ can be generated by a countable collection of sets $B$, it follows that also $\lambda_X  = \tilde{\lambda}_X $ as measures on $(G, \sG)$, almost surely.


\subsection{\textbf{Conditional expectations determine the conditional law}}

It is well known that the expectations $\EX [f(X) ]$ of all continuous bounded functions $f(X)$ determine the law of a random variable $X$ on a metric space. For instance, this result implies that the weak limit of a weakly converging sequence of random variables is unique. We will next prove an analogous result for conditional laws, stating  that all conditional expectations of continuous bounded functions determine the conditional law of a random variable. This characterization, labelled (a) below, will be of key importance in this paper when identifying regular conditional laws in weak limits.

\begin{prop}
\label{prop: conditional expectation determine conditional law}
Let $(\Omega, \sF, \PR)$ be a probability space and $X: (\Omega, \sF) \to (E, \sE)$ and $Y:(\Omega, \sF) \to (G, \sG)$ be random variables taking values in compete separable metric spaces $E$ and $G$, respectively, with the Borel sigma algebras $\sE$ and $\sG$. Let $\lambda_x$ be probability measures on $(G, \sG)$ for all $x \in E$.
Then, the following are equivalent:
\begin{itemize}
\item[a)] for all bounded, non-negative, Lipschitz continuous functions $f: (G, \sG) \to (R, \sB)$, the function
\begin{align*}
F: x \mapsto \int_{y \in G} f( y) d \lambda_x (y)
\end{align*}
is measurable $(E, \sE ) \to (\R, \sB)$ and we have
\begin{align*}
\EX [f(Y) \; \vert \; \sigma(X)] = F(X)
\end{align*}
\item[b)]  $\lambda$ is the conditional law of $Y $ given $X$
\item[c)] for all $\PR$-integrable measurable functions $h: (E \times G, \sE \otimes \sG) \to (\R, \sB)$,  the function
\begin{align*}
H: x \mapsto \int_{y \in G} h(x, y) d \lambda_x (y)
\end{align*}
is measurable $(E, \sE ) \to (\R, \sB)$ and  we have
\begin{align*}
\EX [h(X,Y) \; \vert \; \sigma(X)] = H(X). 
\end{align*}
\end{itemize}
\end{prop}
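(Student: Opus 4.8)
The strategy is to prove the cycle of implications $(c) \Rightarrow (b) \Rightarrow (a) \Rightarrow (c)$. The implication $(c) \Rightarrow (b)$ is immediate: taking $h(x,y) = \mathbb{I}_A(x) \mathbb{I}_B(y)$ for $A \in \sE$, $B \in \sG$, the measurability of $H$ reduces to that of $x \mapsto \mathbb{I}_A(x) \lambda_x[B]$, and the conditional-expectation identity $\EX[h(X,Y) \mid \sigma(X)] = H(X)$, upon taking expectations, gives exactly property~(iii) in the definition of the conditional law of $Y$ given $X$; properties~(i)--(ii) are part of the hypothesis and the special case $B \in \sG$ of the measurability in~(c). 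Likewise $(b) \Rightarrow (a)$ is a routine monotone-class / approximation argument: for $f = \mathbb{I}_B$ one has $F(x) = \lambda_x[B]$, measurable by~(ii) in the definition, and $\EX[\mathbb{I}_A(X) F(X)] = \PR[X \in A, Y \in B] = \EX[\mathbb{I}_A(X)\mathbb{I}_B(Y)]$ by~(iii), which says $F(X) = \EX[\mathbb{I}_B(Y) \mid \sigma(X)]$; one then extends from indicators to non-negative Lipschitz $f$ by taking increasing limits of simple functions and invoking the conditional monotone convergence theorem, with the measurability of $F$ preserved under pointwise limits.

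The substantive implication is $(a) \Rightarrow (c)$, and this is where the main work lies. First I would upgrade~(a) from non-negative Lipschitz $f$ to all bounded continuous $f$ (write $f = f^+ - f^-$, approximate each by bounded Lipschitz functions via $f \wedge (n \cdot \dist(\cdot, \cdot))$-type mollification on the metric space $G$, using dominated convergence in both the $\lambda_x$-integral and the conditional expectation), and then to all bounded measurable $f$: for fixed $x$, the collection of bounded measurable $f$ for which $F(x) = \int f \, d\lambda_x$ is well-defined and the identity $\EX[f(Y)\mid\sigma(X)] = F(X)$ holds is closed under bounded monotone limits and contains the bounded continuous functions, hence contains all bounded Borel functions by the functional monotone class theorem (noting $G$ is metric so open sets are approximable by continuous functions). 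In particular, taking $f = \mathbb{I}_B$ shows $\lambda$ satisfies~(i)--(iii), i.e.\ $\lambda$ is \emph{the} conditional law of $Y$ given $X$ --- so really $(a) \Rightarrow (b)$ --- and then I reduce~(c) to the already-established equivalence by the standard bootstrap from product-form integrands $h(x,y) = \mathbb{I}_A(x) g(y)$ to general $h \in \sE \otimes \sG$-measurable integrable functions. The key point enabling this last step is the Doob--Dynkin lemma for random variables valued in complete separable metric spaces (cited in the excerpt as~\cite[Lemma~5]{Taraldsen}): it guarantees that a $\sigma(X)$-measurable function is a Borel function of $X$, which is what lets us write $\EX[h(X,Y)\mid\sigma(X)]$ as $H(X)$ for a genuine measurable $H: E \to \R$ rather than merely as an abstract conditional expectation.

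I expect the main obstacle to be bookkeeping the measurability of $x \mapsto \int_G h(x,y)\,d\lambda_x(y)$ jointly in the two arguments as $h$ ranges over a monotone class: one must check that if $h_n \uparrow h$ pointwise with each $H_n$ measurable, then $H = \lim H_n$ is measurable and $\int h_n \, d\lambda_x \uparrow \int h \, d\lambda_x$ for every $x$ (pointwise monotone convergence in each fixed measure $\lambda_x$), while simultaneously $\EX[h_n(X,Y)\mid\sigma(X)] \uparrow \EX[h(X,Y)\mid\sigma(X)]$ a.s.\ by conditional monotone convergence; aligning the a.s.-defined objects with the everywhere-defined ones requires that we have already pinned down $\lambda$ as the conditional law (so that there is a canonical everywhere-defined representative). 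Since the whole argument is a sequence of such approximation steps --- Lipschitz to continuous to Borel to $\sE \otimes \sG$-measurable --- the proof is entirely routine measure theory once organized in the order above; no genuinely hard estimate intervenes, and everything is self-contained given the cited Doob--Dynkin and monotone class theorems.
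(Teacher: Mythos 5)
Your proposal is correct and follows essentially the same route as the paper: the cycle $(c)\Rightarrow(b)\Rightarrow(a)\Rightarrow(c)$, with the hard direction handled by approximating indicators of open sets by increasing Lipschitz functions, passing to the product $\sigma$-algebra via the monotone class theorem applied to the $\pi$-system of product sets, and finishing with the standard machine. The only (harmless) organizational difference is that you factor $(a)\Rightarrow(c)$ through an intermediate $(a)\Rightarrow(b)$, whereas the paper runs the same approximations in a single sweep.
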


\begin{proof}
We prove the implications (c) $\Rightarrow$ (b) $\Rightarrow$ (a) $\Rightarrow$ (c).


To see the implication (c) $\Rightarrow$ (b), by assumption, property (i) of regular conditional distributions holds. Properties (ii) and (iii) follow by taking $h (x, y) = \mathbb{I}_B (y)$ in (c), where $B \in \sG$, in which case $H(x) = \lambda_x [B]$.


For the implication (b) $\Rightarrow$ (a), study the class of bounded measurable functions $f: (G, \sG) \to (\R, \sB)$ for which the corresponding function $F: (E , \sE ) \to (\R, \sB)$ 
\begin{align*}
F: x \mapsto \int_{y \in G} f(y) d \lambda_x (y)
\end{align*}
is measurable and
\begin{align*}
\EX [\mathbb{I}_A (X) f (Y)] = \EX [\mathbb{I}_A (X) F (X)]
\end{align*}
for all $A \in \sE$.
By assumption,  for all $B \in \sG$, we have $ \mathbb{I}_B (y)$ belongs to this class of functions. By the ``standard machine'' of integration theory, one then readily finds that all bounded, non-negative, continuous functions of $y$ belong to this class.

For the implication (a) $\Rightarrow$ (c), take first an open set $B \subset G$ and a sequence of functions $f_n: (G, \sG) \to (\R, \sB)$ given by
$f_n (y) = 1- \max \{1 - n \cdot d(y, G \setminus B), 0 \}$, 
so that $f_n$ are non-negative, bounded, and Lipschitz continuous each, and they increase to $ \mathbb{I}_B $ pointwise, i.e., $f_n(y) \incrto \mathbb{I}_B (y) $ for all $y \in G$. Denote
\begin{align*}
F_n (x) = \int_{y \in G} f_n( y) d \lambda_x (y). 
\end{align*}
By monotone convergence, $F_n(x) \incrto \lambda_x [B]$ for all $x \in E$. As an increasing limit of measurable functions, $x \mapsto \lambda_x [B]$ is thus measurable. Finally, one deduces that $\EX [\mathbb{I}_B (Y) \; \vert \; \sigma (X)] = \lambda_X [B] $ by starting from
\begin{align*}
\EX [f_n (Y) \mathbb{I}_A (X)] =  \EX [F_n (X) \mathbb{I}_A (X)]
\end{align*}
which holds by assumption for all $A \in \sE$, and the using monotone convergence on both sides.


Next, study the class $\mathcal{H}$ of bounded measurable functions $h: (E \times G, \sE \otimes \sG) \to (\R, \sB)$ for which
\begin{align*}
H: x \mapsto \int_{y \in G} h(x, y) d \lambda_x (y)
\end{align*}
is measurable $(E , \sE ) \to (\R, \sB)$ and
\begin{align*}
\EX [h (X, Y) \; \vert \; \sigma (X)]= H(X). 
\end{align*}
By the previous paragraph, for all open sets $B \subset G$, we have $ \mathbb{I}_B (y) \in \mathcal{H}$. The same then also holds for all closed $B \subset G$. Thus, taking $A \subset E$ closed and $B \subset G$ closed, we also have $\mathbb{I}_A (x) \mathbb{I}_B (y) \in \mathcal{H}$. Next, it is a standard exercise in integration theory to show that $\mathcal{H}$ is a monotone class of functions, and that products of closed sets $A \times B \subset E \times G$ are a $\pi$-system that generate the sigma algebra $\sE \otimes \sG$. Thus, by the monotone class theorem, $\mathcal{H}$ contains all bounded measurable functions $f: (E \times G, \sE \otimes \sG) \to (\R, \sB)$. This can be extended to all integrable functions by the ``standard machine''. This shows that (a) $\Rightarrow$ (c)
\end{proof}

\bigskip{}

\section{Continuous modifications of exit times}
\label{app: continuous stopping times}

\begin{figure}
\includegraphics[width=0.4\textwidth]{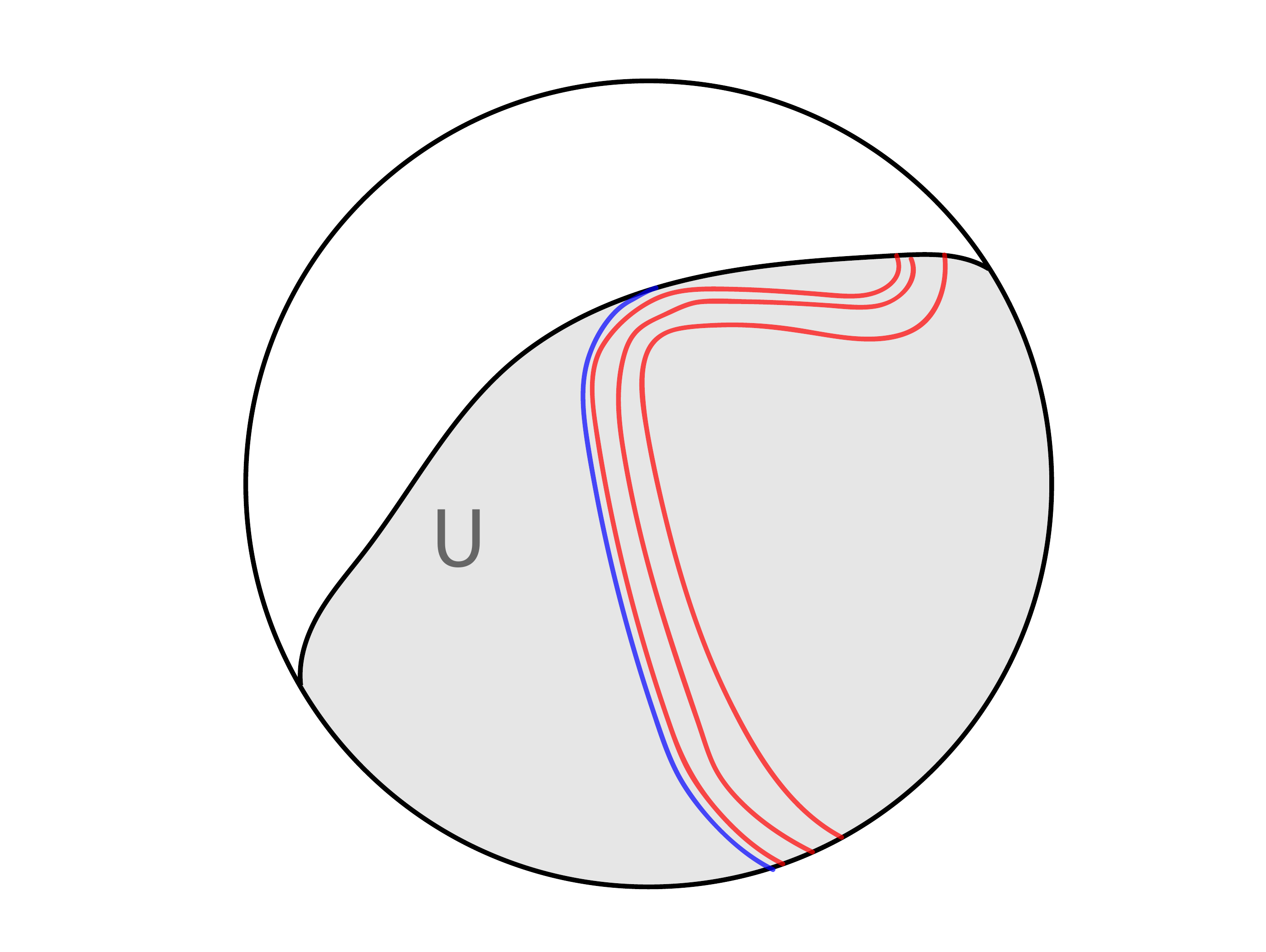}
\caption{\label{fig: discont st}
The exit time of a localization neighbourhood $U$ is discontinuous both in the topology of curves and driving functions. A schemitic illustration of a sequence of curves in $\overline{\UnitD}$, converging in both topologies, for which the exit times do not converge.
}
\end{figure}

The exit time of a localization neighbourhood is not a continuous function of the curves in the topologies that we consider in this paper, see Figure~\ref{fig: discont st}.
We define here the continuous modifications of the exit times of localization neighbourhoods. These stopping times will also be conformally invariant as precised shortly. Let us thus consider them in $\bH$. Let $U_1, \ldots, U_{2N}$ be bounded localization neighbourhoods of the local multiple SLE in $\bH$. Let $\tilde{U}_1, \ldots, \tilde{U}_{2N}$ be ``strictly larger'' localization neighbourhoods in the sense that $d(U_i, \bH \setminus \tilde{U}_i) > 0$ for all $i$, but so that the latter are also valid localization neighbourhoods for the local multiple SLE. Such $\tilde{U}_1, \ldots, \tilde{U}_{2N}$ always exist, and the conformal invariance mentioned above holds assuming that localization neighbourhoods actually come as such pairs $U_1, \ldots, U_{2N},\tilde{U}_1, \ldots, \tilde{U}_{2N} $. The continuous modification $\tau_i$ of the exit time of $U_i$ will then be between the exit times $T_i$ and $S_i$ of $U_i$ and $\tilde{U}_i$, respectively, i.e., $T_i < \tau_i < S_i$.

Now, the topological quadrilateral $\tilde{U}_i \setminus U_i$ (real-line segments being two opposite sides) can be conformally mapped to a rectangle $(0, 1) \times (0, L)$, with a unique $L>0$. Fix the reference point $w_i \in \tilde{U}_i \setminus U_i$ corresponding to the center point of this rectangle.

Denote by $h^{(i)}_t$ the value at $w_i$ of the following harmonic function on $\bH \setminus K^{(i)}_t$ (where $K^{(i)}_t$ is the hull growing from $U_i$): it takes boundary values $0$ on $\R \setminus K^{(i)}_t$ and inside $U_i$; for boundary points $z \in K^{(i)}_t $ with $z \in \tilde{U_i} \setminus U_i$, it takes the boundary values given by the corresponding $x$-coordinate in the rectangle $(0, 1) \times (0, L)$; and for boundary points  $z \in K^{(i)}_t \setminus \tilde{U}_i$ it takes the boundary value one.
Clearly, $h^{(i)}_t = 0$ if $K^{(i)}_t \subset U_i$, and $h^{(i)}_t$ is increasing in $t$. On the other hand, an easy Brownian motion argument gives a lower bound $h_{low}^{(i)}$ to the value $h^{(i)}_{S_i}$ at the exit time $S_i$ of $\tilde{U}_i$. Let $H^{(i)}_t$ be the harmonic measure of $K^{(i)}_t$ in $\bH$, as seen from $w_i$, so $H^{(i)}_t$ is strictly increasing in $t$. We define the continuous modification $\tau_i$ of the exit time $T_i$ to be the first time when the product $h^{(i)}_t H^{(i)}_t$ reaches the level $h^{(i)}_{low} H^{(i)}_0$.

The following result guarantees the continuity of the modified exit times. Let $\DetDrFcn$ be the driving function of a Loewner chain starting inside $U_i$.
Let $\DetDrFcn^{(n)} $ be a seuqence of driving functions, such that $\hat{\DetDrFcn}^{(n)}_s \to \hat{\DetDrFcn}_s$ uniformly over $s \in [0, S]$. Denote the continuous exit times of $U_i$ by the respective hulls by  $\tau^{(n)}$ and $\tau$.

\begin{lem}
\label{lem: cts stopping}
In the setup described above, the stopping times and stopped driving functions converge, $\tau^{(n)} \to \tau$ in $\R$ and $\hat{\DetDrFcn}^{(n)}_{t \wedge \tau_n} \to \hat{\DetDrFcn}_{t \wedge \tau}$ in $\ctsfcns$.
\end{lem}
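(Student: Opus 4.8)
Looking at Lemma~\ref{lem: cts stopping}, the plan is to prove two convergences, $\tau^{(n)} \to \tau$ and $\hat{\DetDrFcn}^{(n)}_{t \wedge \tau_n} \to \hat{\DetDrFcn}_{t \wedge \tau}$, with the first being the real content and the second following essentially for free.

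\textbf{Setting up the convergence of stopping times.} Recall that $\tau_i$ is defined as the first time the product $h^{(i)}_t H^{(i)}_t$ reaches the level $h^{(i)}_{low} H^{(i)}_0$, where $h^{(i)}_t$ is the value at the reference point $w_i$ of a certain harmonic function on $\bH \setminus K^{(i)}_t$ with prescribed boundary data, and $H^{(i)}_t$ is the harmonic measure of $K^{(i)}_t$ seen from $w_i$. The first step is to observe that both $t \mapsto h^{(i)}_t$ and $t \mapsto H^{(i)}_t$ depend continuously on the driving function in the topology of $\ctsfcns$: indeed, if $\hat{\DetDrFcn}^{(n)} \to \hat{\DetDrFcn}$ uniformly on $[0,S]$, then the mapping-out functions $g^{(n)}_s$ converge uniformly on compact subsets of $\bH$ (by Loewner-equation stability, cf.\ the arguments around diagram~\eqref{dia: dr fcns and conf maps commutative diagram} and \cite[Lemma~5.1]{Kemppainen-SLE_book}), hence the hulls $K^{(i),(n)}_t$ converge in the Carath\'eodory sense uniformly in $t$, and therefore the harmonic measures and the solutions of the relevant Dirichlet problems converge, uniformly in $t \in [0, S]$. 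Thus $t \mapsto h^{(i),(n)}_t H^{(i),(n)}_t$ converges uniformly on $[0,S]$ to $t \mapsto h^{(i)}_t H^{(i)}_t$.

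\textbf{From uniform convergence of the level functions to convergence of hitting times.} The key analytic point is that $t \mapsto H^{(i)}_t$ is \emph{strictly} increasing, and $t \mapsto h^{(i)}_t$ is non-decreasing and strictly positive for $t > T_i$, so the product $\Phi(t) := h^{(i)}_t H^{(i)}_t$ is strictly increasing past the first time $K^{(i)}_t$ leaves $U_i$; moreover the lower bound $h^{(i)}_{low}$ is constructed precisely so that the target level $h^{(i)}_{low} H^{(i)}_0$ is strictly below $\Phi(S_i)$ and strictly above $\Phi(T_i)=0$, hence $\tau_i \in (T_i, S_i)$ is the unique solution of $\Phi(t) = h^{(i)}_{low}H^{(i)}_0$. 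For a strictly increasing continuous function, the hitting time of a fixed level in the interior of its range is a continuous functional with respect to uniform convergence: if $\Phi^{(n)} \to \Phi$ uniformly and $\Phi$ is strictly increasing through the level $c$ at the point $\tau$, then for any $\eps>0$ we have $\Phi(\tau - \eps) < c < \Phi(\tau + \eps)$, so by uniform convergence the same strict inequalities hold for $\Phi^{(n)}$ for large $n$, forcing $\tau^{(n)} \in (\tau - \eps, \tau + \eps)$. This gives $\tau^{(n)} \to \tau$. Here one should be a little careful that the target level $h^{(i),(n)}_{low} H^{(i),(n)}_0$ also converges to $h^{(i)}_{low} H^{(i)}_0$ --- this follows since $h^{(i)}_{low}$ is an explicit Brownian-motion lower bound depending only on the fixed geometry of $U_i \subset \tilde U_i$, not on the curve, and $H^{(i),(n)}_0 \to H^{(i)}_0$ by Carath\'eodory stability; so one is really hitting a slightly moving level with a uniformly converging strictly increasing function, which is handled by the same $\eps$-argument.

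\textbf{Convergence of the stopped driving functions.} Given $\tau^{(n)} \to \tau$ and $\hat{\DetDrFcn}^{(n)}_s \to \hat{\DetDrFcn}_s$ uniformly on $[0,S] \supset [0, \max(\tau, \sup_n \tau^{(n)})]$ (the latter supremum being finite since all $\tau^{(n)} < S_i \le S$), the convergence $\hat{\DetDrFcn}^{(n)}_{t \wedge \tau_n} \to \hat{\DetDrFcn}_{t \wedge \tau}$ in $\ctsfcns$ is routine: for $t$ in any compact interval, $|\hat{\DetDrFcn}^{(n)}_{t \wedge \tau_n} - \hat{\DetDrFcn}_{t\wedge\tau}| \le |\hat{\DetDrFcn}^{(n)}_{t\wedge\tau_n} - \hat{\DetDrFcn}_{t\wedge\tau_n}| + |\hat{\DetDrFcn}_{t\wedge\tau_n} - \hat{\DetDrFcn}_{t\wedge\tau}|$, where the first term is bounded by $\sup_{s\le S}|\hat{\DetDrFcn}^{(n)}_s - \hat{\DetDrFcn}_s| \to 0$ and the second tends to $0$ by continuity of the single function $\hat{\DetDrFcn}$ together with $t\wedge\tau_n \to t\wedge\tau$ uniformly in $t$; summing the geometric weights in the metric~\eqref{eq: metric of uniform convergence over compact subsets for functions of reals} concludes.

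\textbf{Main obstacle.} The delicate point is the strict monotonicity of $\Phi(t) = h^{(i)}_t H^{(i)}_t$ on $(T_i, S_i)$ and the verification that the prescribed level lies strictly inside $(\Phi(T_i), \Phi(S_i))$, so that $\tau_i$ is genuinely a \emph{crossing} (not merely a touching) of the level --- this is what makes the hitting time a continuous functional. This is exactly why the definition uses the product with the strictly increasing harmonic measure $H^{(i)}_t$ rather than $h^{(i)}_t$ alone (which could be constant for a while), and why $\tilde U_i$ is taken strictly larger than $U_i$ with $d(U_i, \bH \setminus \tilde U_i) > 0$; I would spell this monotonicity and strict-interiority argument out carefully, as the rest is a standard stability-of-Loewner-chains plus $\eps$-argument package.
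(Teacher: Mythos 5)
Your proof is correct and follows essentially the same route as the paper's: Carath\'eodory stability of the harmonic quantities $h^{(i)}_t$ and $H^{(i)}_t$, strict monotonicity of the product past $T_i$ to get convergence of the hitting times, and then a uniform-continuity/triangle-inequality argument for the stopped driving functions. You spell out a few points the paper leaves implicit (the convergence of the target level and the strict-interiority of the level in the range of $\Phi$), which is a welcome addition but not a different argument.
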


\begin{proof}
Denote by $K^{(n)}_t$ and $K_t$ the growing hulls corresponding to the driving functions $\DetDrFcn^{(n)}_t $ and $\DetDrFcn_t $, respectively.
Combining \cite[Lemma~5.1]{Kemppainen-SLE_book} and Lemma~\ref{lem: Cara iff conv of confmaps}, we see that for any fixed time $s \in [0, S]$, $\bH \setminus K^{(n)}_t$ converge to $\bH \setminus K_t$ in the Carath\'{e}odory sense. This convergence also holds, if we equip the planar sets with two marked prime ends, the leftmost and rightmost real points of the hulls $ K^{(n)}_t$ or $ K_t$. We are considering harmonic functions $h^{(i)}_t$ and $H^{(i)}_t$ in these domains, whose the boundary data has discontinuities at two points, the leftmost and rightmost real points, and is otherwise bounded and continuous. Such harmonic functions in domains with two marked boundary points are Carath\'{e}odory stable --- a proof is similar to~\cite[Theorem~3.12]{CS-discrete_complex_analysis_on_isoradial} addressing harmonic measure. Thus, for any fixed $t$, the values of the harmonic functions $h^{(i)}_t$ and $H^{(i)}_t$ corresponding to $\bH \setminus K^{(n)}_t$ converge to those of $\bH \setminus K_t$. The convergence of hitting times of the product $h^{(i)}_t H^{(i)}_t$ follows as it is strictly increasing in $t$ for each growth process. This suffices to show that $\tau_n \to \tau$ in $\R$.

To deduce that $\hat{\DetDrFcn}^{(n)}_{t \wedge \tau_n} \to \hat{\DetDrFcn}_{t \wedge \tau}$ in $\ctsfcns$, note that  we at least have $\hat{\hat{\DetDrFcn}}^{(n)}_{s \wedge \tau} \to \hat{\DetDrFcn}_{s \wedge \tau}$ in $\ctsfcns$. The rest follows by combining the convergence $\tau_n \to \tau$ in $\R$ and the uniform continuity of the continuous function $\hat{\DetDrFcn}_s$ on the compact interval $[0, S] \supset [0, \tau_n]$.
\end{proof}

We can naturally also define the stopping time $\tau$ for chordal curves in $\UnitD$. The argument to prove the continuity is the identical.

\bigskip{}

\section{Deterministic Loewner equation and conformal maps}
\label{app: deterministic LE}

\addtocontents{toc}{\setcounter{tocdepth}{1}}

In this appendix, we provide some analysis of the deterministic Loewner equation under conformal maps.

Consider the following setup. Let $\DetDrFcn^{(n)}$ be a sequence of continuous (deterministic) driving functions, and assume that they converge in the space $\ctsfcns$ of continuous functions, $\DetDrFcn^{(n)} \to \DetDrFcn$. These functions generate by Loewner's equation some growing hulls $K^{(n)}_t$ (resp. $K_t$), starting their growth from $\DetDrFcn^{(n)}_{t = 0}$ (resp. $\DetDrFcn_{t = 0}$). Let $\LoeNbhd$ be a localization neighbourhood (a hull) of these starting points, and let $\tilde{\LoeNbhd}$ be a larger one, so that $d(\LoeNbhd, \bH \setminus \tilde{\LoeNbhd} ) > 0$. Let $\confmapSH_n$ be conformal maps from $\tilde{\LoeNbhd}$ to some subset of $\bH$, such that the real line segment of  $\bdry \tilde{\LoeNbhd}$ maps to real line under $\confmapSH_n$. (For instance, $\confmapSH_n$ could be conformal (M\"{o}bius) maps $ \bH \to \bH$ or mapping-out functions of some hull disjoint from $\tilde{\LoeNbhd}$.) Assume that $\confmapSH_n$ converge to another conformal map, $\confmapSH_n \to \confmapSH$, uniformly over the compact set $\tilde{\LoeNbhd}$. The main task of this appendix is to show that the Loewner driving functions of the hulls $\confmapSH_n (K_t^{(n)})$ converge to that of $\confmapSH (K_t)$.

Recall first that some growing family of hulls can be described by a Loewner equation driven by a continuous function if and only if satisfies the local growth property~(see~\cite{Kemppainen-SLE_book} for details). This characterization readily implies that the conformal images $\confmapSH_n (K^{(n)}_t)$ (resp. $\confmapSH (K_t)$) of interest here can be described by a Loewner driving function $\hat{\DetDrFcn}^{(n)}_s$ (resp. $\hat{\DetDrFcn}_s$), at least up to the time $\tilde{S}_n$ of exiting $\confmapSH_n (\tilde{\LoeNbhd})$ (resp. $\tilde{S}$ of exiting $\confmapSH (\tilde{\LoeNbhd})$). (We denote by $s$ the capacity parametrization after the conformal maps.) Note that these times depend on $n$, so it is way more convenient to observe that at the exit time $S$ of the smaller localization neighbourhood $\confmapSH ({\LoeNbhd})$ by $\confmapSH (K_t)$, also the hulls $\confmapSH_n (K^{(n)}_t)$ stay inside their larger neighbourhoods $\confmapSH_n (\tilde{\LoeNbhd})$, for all $n$ large enough. (This follows since $\bH \setminus K^{(n)}_t$ converge to $\bH \setminus K_t$ in the Carath\'{e}odory sense, for all fixed $t$.) Thus, we will study the driving processes up to the time $S$.

The main result of this appendix is the following. 

\begin{prop}
\label{lem: ctty of driving functions - pain lemma}
In the setup and notation above, if $\DetDrFcn^{(n)} \to \DetDrFcn$ in $\ctsfcns$, i.e., uniformly over compacts, then $\hat{\DetDrFcn}^{(n)}_s \to \hat{\DetDrFcn}_s$ uniformly over $s \in [0, S]$.
\end{prop}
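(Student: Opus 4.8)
The plan is to reduce the statement to a continuity property of the Loewner driving function under perturbations of the \emph{conformal map} applied to a growing hull, combined with the known continuity of the Loewner correspondence under uniform perturbations of the driving function itself (as in~\cite[Lemma~5.1]{Kemppainen-SLE_book} together with Lemma~\ref{lem: Cara iff conv of confmaps}). First I would fix the time horizon $S$, the exit time of the small localization neighbourhood $\confmapSH(\LoeNbhd)$ by $\confmapSH(K_t)$, and recall from the discussion preceding the proposition that for all large enough $n$ the hull $\confmapSH_n(K^{(n)}_t)$ remains inside $\confmapSH_n(\tilde{\LoeNbhd})$ up to (the capacity time corresponding to) $S$; this uses Carath\'eodory convergence $\bH\setminus K^{(n)}_t \to \bH\setminus K_t$ for each fixed $t$, itself a consequence of $\DetDrFcn^{(n)}\to\DetDrFcn$. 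So everything below takes place in a fixed bounded region of $\bH$.

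\textbf{Key steps.} The argument splits into two halves via a triangle inequality of the form $\hat\DetDrFcn^{(n)} \longleftrightarrow \confmapSH_n(K^{(n)}_t) \approx \confmapSH(K^{(n)}_t) \longleftrightarrow \confmapSH(K_t)$, reading off the Loewner driving functions at each stage. For the second half I would use that $K^{(n)}_t \to K_t$ in the sense that $\bH\setminus K^{(n)}_t \to \bH\setminus K_t$ Carath\'eodory-uniformly for $t$ in compacts (equivalently, the mapping-out functions converge uniformly on compacts), push this through the fixed conformal map $\confmapSH$, and invoke the continuity of the Loewner transform once more to conclude that the driving function of $\confmapSH(K^{(n)}_t)$ converges uniformly on $[0,S]$ to that of $\confmapSH(K_t)$. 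For the first half --- replacing $\confmapSH_n$ by $\confmapSH$ while keeping the hull $K^{(n)}_t$ --- I would work directly with the mapping-out functions: let $g^{(n)}_t$ map out $K^{(n)}_t$ from $\bH$; then the mapping-out function of $\confmapSH_n(K^{(n)}_t)$ is $g^{(n)}_t \circ \confmapSH_n^{-1}$ post-composed with the map sending the complement of $\confmapSH_n(K^{(n)}_t)$ back to $\bH$, and the driving function is recovered as the image of the tip, i.e.\ as $\lim_{z\to \DetDrFcn^{(n)}_s} (\text{that composition})(z)$. Since $\confmapSH_n \to \confmapSH$ uniformly on the compact set $\tilde\LoeNbhd$ together with derivatives (by Cauchy estimates), and the relevant hulls stay in a fixed compact set, the composed mapping-out functions converge uniformly on compacts, hence so do their driving functions by the standard criterion. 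Putting the two halves together and using that all the implied convergences are uniform on $[0,S]$ yields the claim.

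\textbf{Main obstacle.} The delicate point is handling the time reparametrization: $\confmapSH_n$ and $\confmapSH$ distort half-plane capacity differently, so the curves $\confmapSH_n(K^{(n)}_t)$ and $\confmapSH(K_t)$ are naturally indexed by different capacity clocks $s=s^{(n)}(t)$. I would control this by showing that $\hcap(\confmapSH_n(K^{(n)}_t)) \to \hcap(\confmapSH(K_t))$ uniformly for $t\in[0, T(S)]$ --- this follows from the Carath\'eodory convergence of the hulls and the uniform convergence $\confmapSH_n\to\confmapSH$ with derivatives via the integral/expansion formula for $\hcap$ --- and that $t\mapsto \hcap(\confmapSH(K_t))$ is continuous and strictly increasing, so the inverse time changes converge uniformly. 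Once the clocks are synchronized, the uniform convergence of the mapping-out functions transfers to uniform convergence of driving functions on the common interval $[0,S]$; the remaining bookkeeping (e.g.\ that the exit times $\tilde S_n$ of the larger neighbourhoods exceed $S$ for large $n$, already noted above) is routine.
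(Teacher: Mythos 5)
The paper does not actually contain a proof of this proposition: it explicitly states that the proof is omitted from this version, offering only the hint that the argument is a deterministic, $n$-dependent version of the coordinate-change computation in Lawler's book (Section~4.6). So there is nothing to compare against line by line. Your overall architecture --- splitting the comparison into a map-perturbation step ($\confmapSH_n$ versus $\confmapSH$ on the same hulls) and a hull-perturbation step ($K^{(n)}_t$ versus $K_t$ under the fixed map), recovering the image driving function as the boundary value of the composed mapping-out functions at the tip, and synchronizing the capacity clocks via convergence of $\hcap$ --- is the natural route and is consistent with the paper's stated intention.

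Two points in your sketch are, however, doing more work than you acknowledge. First, in your ``second half'' you invoke ``the continuity of the Loewner transform'' to pass from Carath\'eodory convergence of the hulls $\confmapSH(K^{(n)}_t)$ to convergence of their driving functions. That direction of the Loewner correspondence is precisely the delicate one: Carath\'eodory convergence of complements gives locally uniform convergence of mapping-out functions in the interior, but the driving function is a boundary value at the tip, and in general hull convergence does not imply driving-function convergence. The correct substitute is the coordinate-change identity $\hat{\DetDrFcn}_{s(t)} = h_t(\DetDrFcn_t)$ with $h_t = G_t \circ \confmapSH \circ g_t^{-1}$, together with a proof that $h^{(n)}_t \to h_t$ uniformly on a half-disc around the tip \emph{of uniformly positive radius} --- this is exactly where the hypothesis $d(\LoeNbhd, \bH\setminus\tilde{\LoeNbhd})>0$ enters, via an extremal-length (or Beurling) estimate keeping $g_t(\partial\tilde{\LoeNbhd})$ away from $\DetDrFcn_t$, after which Schwarz reflection and normal families upgrade interior convergence to convergence at the real point. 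Second, you assert uniformity in $t$ of the various convergences without an argument; pointwise-in-$t$ convergence must be upgraded either by equicontinuity of the family $\hat{\DetDrFcn}^{(n)}$ on $[0,S]$ (which follows from a uniform local-growth estimate for the image hulls, itself needing the uniform distortion bounds for $\confmapSH_n$) or by a compactness-and-contradiction argument in $t$. Your Dini-type treatment of the monotone capacity reparametrization is fine, but the same care is needed for the driving functions themselves. With these two steps filled in, the proposal becomes a complete proof.
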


The statement above is certainly not surprising.
However, note that differences compared to typical references addressing Loewner equation and conformal maps, e.g.,~\cite[Section~4.6]{Lawler-SLE_book}, are that we cannot apply It\^{o} calculus, and that the conformal maps $\confmapSH_n$ depend on $n$. With some effort, similar ideas can be used to prove this proposition. We have chosen not to include the proof of this proposition in this version of the paper. 




\subsection{\textbf{Some consequences}}


Let us list some consequences of Proposition~\ref{lem: ctty of driving functions - pain lemma}. Continue in the notation introduced before the statement of that lemma.

Take localization neighbourhoods $\LoeNbhd \subset \LoeNbhd_2 \subset \tilde{\LoeNbhd}$, such that $d(\LoeNbhd, \bH \setminus \LoeNbhd_2 ), d(\LoeNbhd_2, \bH \setminus \tilde{\LoeNbhd} ) > 0$. Denote the exit times of $\confmapSH(\LoeNbhd)$ and $\confmapSH(\LoeNbhd_2)$ by 
the growing hulls $\hat{K}_s$, with $\hat{K}_{s(t)} = \confmapSH(K_t)$,
by $S$ and $S_2$ respectively, and assume that the continuous  modification of the exit time $\sigma$ of $\confmapSH ( \LoeNbhd )$ is chosen using the pair of neighbourhoods  $\confmapSH(\LoeNbhd)$ and $\confmapSH(\LoeNbhd_2)$, so that $S < \sigma < S_2$. Denote the similar continuous exit times by $\hat{K}^{(n)}_s$ by $\sigma^{(n)}$. Let $\DetDrFcn^{(n)} \to \DetDrFcn$ and $\confmapSH_n \to \confmapSH$ as before.

\begin{cor}
\label{cor: coordinate change with converging maps}
In the notation above, $\sigma^{(n)} \to \sigma $ in $\R$ and $\hat{V}^{(n)}_{s \wedge \sigma^{(n)}} \to \hat{V}_{s \wedge \sigma}$ in $ \ctsfcns $.
\end{cor}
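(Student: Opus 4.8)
The plan is to derive Corollary~\ref{cor: coordinate change with converging maps} from Proposition~\ref{lem: ctty of driving functions - pain lemma} together with the Carath\'eodory stability of harmonic functions already used in Lemma~\ref{lem: cts stopping}. First I would observe that the continuous exit time $\sigma$ of $\confmapSH(\LoeNbhd)$ is, by construction in Appendix~\ref{app: continuous stopping times}, the first hitting time of a fixed level by the product $h^{(i)}_s H^{(i)}_s$ of two harmonic quantities of the hull $\hat{K}_s = \confmapSH(K_t)$ inside the quadrilateral $\confmapSH(\LoeNbhd_2) \setminus \confmapSH(\LoeNbhd)$, and that this product is strictly increasing in $s$ for each fixed growth process. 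So the argument reduces to showing that, for each fixed $s$, the relevant harmonic functions evaluated in $\bH \setminus \hat{K}^{(n)}_s$ converge to those in $\bH \setminus \hat{K}_s$; then strict monotonicity upgrades this pointwise convergence to convergence of the hitting times $\sigma^{(n)} \to \sigma$.

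The key step is to establish, for each fixed $s \le S_2$, the Carath\'eodory convergence $\bH \setminus \hat{K}^{(n)}_s \to \bH \setminus \hat{K}_s$ (with the leftmost and rightmost real points of the hulls as marked prime ends). By Proposition~\ref{lem: ctty of driving functions - pain lemma}, the driving functions $\hat{\DetDrFcn}^{(n)}_s$ converge uniformly on $[0, S]$ to $\hat{\DetDrFcn}_s$, and one extends this past $S$ using that, as noted in the text preceding Proposition~\ref{lem: ctty of driving functions - pain lemma}, the hulls $\confmapSH_n(K^{(n)}_t)$ stay inside $\confmapSH_n(\tilde{\LoeNbhd})$ up to the time the limiting hull exits $\confmapSH(\LoeNbhd_2)$, for all large $n$; so the driving functions are defined and converge uniformly on a common interval covering $[0, S_2]$. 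Then \cite[Lemma~5.1]{Kemppainen-SLE_book} combined with Lemma~\ref{lem: Cara iff conv of confmaps} gives the Carath\'eodory convergence of the slit domains for each fixed $s$, exactly as in the proof of Lemma~\ref{lem: cts stopping}. The Carath\'eodory stability of the harmonic functions $h^{(i)}_s$ and $H^{(i)}_s$ in domains with two marked boundary points (cited in Appendix~\ref{app: continuous stopping times} as being similar to~\cite[Theorem~3.12]{CS-discrete_complex_analysis_on_isoradial}) then yields $h^{(i),(n)}_s H^{(i),(n)}_s \to h^{(i)}_s H^{(i)}_s$ pointwise in $s$, and since the limit product is strictly increasing we conclude $\sigma^{(n)} \to \sigma$ in $\R$.

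For the convergence $\hat{V}^{(n)}_{s \wedge \sigma^{(n)}} \to \hat{V}_{s \wedge \sigma}$ in $\ctsfcns$, I would combine the two pieces exactly as in the final paragraph of the proof of Lemma~\ref{lem: cts stopping}: Proposition~\ref{lem: ctty of driving functions - pain lemma} already gives $\hat{V}^{(n)}_{s} \to \hat{V}_{s}$ uniformly on $[0, S_2] \supset [0, \sigma]$, hence in particular $\hat{V}^{(n)}_{s \wedge \sigma} \to \hat{V}_{s \wedge \sigma}$; and the convergence $\sigma^{(n)} \to \sigma$ together with the uniform continuity of the limit function $\hat{V}_s$ on the compact interval $[0, S_2]$ controls the discrepancy between stopping at $\sigma^{(n)}$ and at $\sigma$, via the triangle inequality $\sup_s |\hat{V}^{(n)}_{s \wedge \sigma^{(n)}} - \hat{V}_{s \wedge \sigma}| \le \sup_s |\hat{V}^{(n)}_{s \wedge \sigma^{(n)}} - \hat{V}_{s \wedge \sigma^{(n)}}| + \sup_s |\hat{V}_{s \wedge \sigma^{(n)}} - \hat{V}_{s \wedge \sigma}|$, where the first term is bounded by the uniform convergence of driving functions and the second tends to zero by uniform continuity. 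The main obstacle is really just bookkeeping the stopping-time domains so that all the convergences live on a common capacity interval independent of $n$; the analytic inputs (Proposition~\ref{lem: ctty of driving functions - pain lemma} and Carath\'eodory stability of harmonic measure) are already available, so no genuinely new estimate is needed.
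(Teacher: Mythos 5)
Your argument is correct and follows essentially the same route as the paper: the paper applies Proposition~\ref{lem: ctty of driving functions - pain lemma} with the enlarged pair of neighbourhoods $\LoeNbhd_2 \subset \tilde{\LoeNbhd}$ to obtain $\hat{V}^{(n)}_s \to \hat{V}_s$ uniformly on $[0, S_2]$, and then simply cites Lemma~\ref{lem: cts stopping} to conclude. You reach the same conclusion by unpacking the proof of Lemma~\ref{lem: cts stopping} (Carath\'eodory stability of the harmonic quantities, strict monotonicity of the product, and the triangle inequality with uniform continuity) rather than invoking its statement directly, which is fine but not a genuinely different argument.
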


\begin{proof}
Apply Proposition~\ref{lem: ctty of driving functions - pain lemma} and with the localization neighbourhoods $\LoeNbhd_2 \subset \tilde{\LoeNbhd}$. This implies that  $\hat{V}^{(n)}_{s} \to \hat{V}_{s} $ uniformly over $s \in [0, S_2]$. Combining with Lemma~\ref{lem: cts stopping} proves the claim.
\end{proof}

\begin{rem}
\label{rem: exit times of neighbourhoods before or after conf map}
Above $\sigma$ and $\sigma^{(n)}$ are the continuous exit times of $\confmapSH(\LoeNbhd)$, defined via the same neighbourhoods $\confmapSH ( \LoeNbhd )$ and $\confmapSH ( \LoeNbhd_2 )$. The statement of Corollary~\ref{cor: coordinate change with converging maps} also holds if we in stead define $\sigma^{(n)}$ to be the continuous exit times of $\confmapSH_n(\LoeNbhd)$, with the neighbourhoods $\confmapSH_n ( \LoeNbhd )$ and $\confmapSH_n ( \LoeNbhd_2 )$. 
\end{rem}

%

Taking $\confmapSH_n = \confmapSH$ for all $n$, we obtain an important special case of Corollary~\ref{cor: coordinate change with converging maps}:

\begin{cor}
\label{cor: coordinate change of LE is cts fcn of dr fcns}
In the notation above, $\sigma \in \R$ and $\hat{V}_{s \wedge \sigma} \in \ctsfcns $ are continuous functions of $V \in \ctsfcns$.
\end{cor}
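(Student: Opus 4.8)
The statement to prove is Corollary~\ref{cor: coordinate change of LE is cts fcn of dr fcns}: that the continuous exit time $\sigma \in \R$ and the stopped driving function $\hat{V}_{s \wedge \sigma} \in \ctsfcns$ of the conformal image hull $\confmapSH(K_t)$ depend continuously on the input driving function $V \in \ctsfcns$, where $\confmapSH$ is now held fixed.

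\textbf{Approach.} The plan is simply to specialize Corollary~\ref{cor: coordinate change with converging maps} to the constant sequence of conformal maps. Continuity of a map $\Phi: \ctsfcns \to \R \times \ctsfcns$, $V \mapsto (\sigma, \hat{V}_{\cdot \wedge \sigma})$, between metric spaces is equivalent to sequential continuity since $\ctsfcns$ is a metric space (indeed complete and separable, as recorded in Section~\ref{subsubsec: space of fcns}). So it suffices to show: whenever $\DetDrFcn^{(n)} \to \DetDrFcn$ in $\ctsfcns$, one has $\sigma^{(n)} \to \sigma$ in $\R$ and $\hat{V}^{(n)}_{s \wedge \sigma^{(n)}} \to \hat{V}_{s \wedge \sigma}$ in $\ctsfcns$. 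This is exactly the content of Corollary~\ref{cor: coordinate change with converging maps} applied with $\confmapSH_n = \confmapSH$ for every $n$: that corollary assumed only $\DetDrFcn^{(n)} \to \DetDrFcn$ in $\ctsfcns$ and $\confmapSH_n \to \confmapSH$ uniformly over the compact set $\tilde{\LoeNbhd}$, and the constant sequence $\confmapSH_n \equiv \confmapSH$ trivially satisfies the latter hypothesis. The localization neighbourhoods $\LoeNbhd \subset \LoeNbhd_2 \subset \tilde{\LoeNbhd}$ and the reference data defining the continuous exit time $\sigma$ are fixed throughout, independent of $n$, so the conclusion $(\sigma^{(n)}, \hat{V}^{(n)}_{\cdot \wedge \sigma^{(n)}}) \to (\sigma, \hat{V}_{\cdot \wedge \sigma})$ gives precisely the sequential continuity of $\Phi$.

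\textbf{Steps in order.} First I would fix, once and for all, the nested localization neighbourhoods $\LoeNbhd \subset \LoeNbhd_2 \subset \tilde{\LoeNbhd}$ with positive separations $d(\LoeNbhd, \bH \setminus \LoeNbhd_2) > 0$ and $d(\LoeNbhd_2, \bH \setminus \tilde{\LoeNbhd}) > 0$, and the reference point and rectangle data used in Appendix~\ref{app: continuous stopping times} to define the continuous modification $\sigma$ of the exit time of $\confmapSH(\LoeNbhd)$, exactly as in the paragraph preceding Corollary~\ref{cor: coordinate change with converging maps}. Second, I would invoke that $\ctsfcns$ (and hence $\R \times \ctsfcns$ with the product metric) is a metric space, so that continuity of $V \mapsto (\sigma, \hat{V}_{\cdot \wedge \sigma})$ reduces to preservation of convergent sequences. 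Third, given $\DetDrFcn^{(n)} \to \DetDrFcn$ in $\ctsfcns$, I would apply Corollary~\ref{cor: coordinate change with converging maps} with $\confmapSH_n = \confmapSH$ for all $n$, checking that its hypotheses hold: $\DetDrFcn^{(n)} \to \DetDrFcn$ in $\ctsfcns$ is given, and $\confmapSH_n \to \confmapSH$ uniformly over $\tilde{\LoeNbhd}$ is immediate for a constant sequence. Its conclusion, $\sigma^{(n)} \to \sigma$ in $\R$ and $\hat{V}^{(n)}_{s \wedge \sigma^{(n)}} \to \hat{V}_{s \wedge \sigma}$ in $\ctsfcns$, is the desired sequential continuity, so the two maps $V \mapsto \sigma$ and $V \mapsto \hat{V}_{\cdot \wedge \sigma}$ are continuous. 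The identical argument applies to chordal curves in $\UnitD$ by conformal invariance, as noted at the end of Appendix~\ref{app: continuous stopping times}, but we do not need that here.

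\textbf{Main obstacle.} There is essentially no obstacle: all the analytic work — the Carath\'{e}odory stability of the relevant harmonic functions defining the continuous exit time, and the continuity of Loewner driving functions under conformal images carried by Proposition~\ref{lem: ctty of driving functions - pain lemma} and Lemma~\ref{lem: cts stopping} — has already been done in the proof of Corollary~\ref{cor: coordinate change with converging maps}. The only point requiring a word of care is the logical one that the data $\LoeNbhd, \LoeNbhd_2, \tilde{\LoeNbhd}$ defining $\sigma$ must be held fixed as $n$ varies, which is automatic here since $\confmapSH$ itself is fixed; thus $\confmapSH(\LoeNbhd), \confmapSH(\LoeNbhd_2)$ do not move with $n$ and $\sigma^{(n)}$ really is the continuous exit time of the same neighbourhood $\confmapSH(\LoeNbhd)$ for every $n$, matching the setup of Corollary~\ref{cor: coordinate change with converging maps} verbatim. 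This completes the proof.
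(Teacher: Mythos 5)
Your proposal is correct and matches the paper's argument exactly: the corollary is obtained by specializing Corollary~\ref{cor: coordinate change with converging maps} to the constant sequence $\confmapSH_n = \confmapSH$, with continuity read off as sequential continuity in the metric space $\ctsfcns$. You have merely spelled out the details that the paper leaves implicit.
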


Denote by $\tau$ the continuous modification of the exit time of $\LoeNbhd$, with the pair $\LoeNbhd_2$. Recall that the continuous exit times are conformally invariant, so $s(\tau) = \sigma$. Corollary~\ref{cor: coordinate change of LE is cts fcn of dr fcns} now implies a stronger tool.

\begin{cor}
\label{cor: coordinate change is a bijection}
Equip the space of Loewner driving functions, in $t$ and $s$, stopped at exit times $\tau$ and $\sigma$, with the topology of $\ctsfcns$.
The mapping from $V_{t \wedge \tau} $ to $\hat{V}_{s \wedge \sigma}$ is a continuous bijection and its inverse is continuous.
\end{cor}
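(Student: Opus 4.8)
\textbf{Proof plan for Corollary~\ref{cor: coordinate change is a bijection}.}

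The plan is to deduce this from Corollary~\ref{cor: coordinate change of LE is cts fcn of dr fcns} applied in both directions. First I would observe that the map $V_{t\wedge\tau}\mapsto \hat V_{s\wedge\sigma}$ is well-defined and continuous directly by Corollary~\ref{cor: coordinate change of LE is cts fcn of dr fcns}: indeed, $\hat V_{s\wedge\sigma}$ is a continuous function of $V$ (equivalently of $V_{t\wedge\tau}$, since the Loewner hull $K_t$ up to time $\tau$, and hence $\confmapSH(K_t)$ and its driving function $\hat V$ up to time $\sigma = s(\tau)$, depend only on $V$ up to time $\tau$ --- here one uses conformal invariance of the continuous exit times so that $s(\tau)=\sigma$). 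For injectivity and continuity of the inverse, the key point is that $\confmapSH$ is a conformal bijection from $\tilde{\LoeNbhd}$ onto its image, so $\confmapSH^{-1}$ is also a conformal map of the type allowed in the setup of Appendix~\ref{app: deterministic LE} (it sends the real-line part of the boundary to the real line), and it maps the hulls $\hat K_s = \confmapSH(K_t)$ back to the hulls $K_t$. Applying Corollary~\ref{cor: coordinate change of LE is cts fcn of dr fcns} with $\confmapSH$ replaced by $\confmapSH^{-1}$ and the roles of the localization neighbourhoods suitably swapped (taking $\confmapSH(\LoeNbhd)\subset\confmapSH(\LoeNbhd_2)\subset\confmapSH(\tilde{\LoeNbhd})$ as the nested neighbourhoods in the image side), we obtain that $V_{t\wedge\tau}$ is a continuous function of $\hat V_{s\wedge\sigma}$. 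This gives a continuous map in the reverse direction.

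It remains to check that these two continuous maps are mutually inverse, which shows both that the forward map is a bijection and that the inverse constructed above is genuinely its inverse. For this I would argue at the level of hulls: given $V_{t\wedge\tau}$, the associated stopped hull $(K_t)_{t\le\tau}$ is uniquely determined (Loewner's equation), then $(\confmapSH(K_t))_{t\le\tau}$ is a family of hulls satisfying the local growth property, hence uniquely determines its driving function $\hat V_{s\wedge\sigma}$ after reparametrization by half-plane capacity; conversely from $\hat V_{s\wedge\sigma}$ one recovers $(\hat K_s)_{s\le\sigma}$, applies $\confmapSH^{-1}$ to get back $(K_t)_{t\le\tau}$ and hence $V_{t\wedge\tau}$. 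The only subtlety is bookkeeping with the two time parametrizations $t$ and $s$ and the stopping times $\tau$ and $\sigma=s(\tau)$, but this is handled by the conformal invariance of the continuous modifications of exit times established in Appendix~\ref{app: continuous stopping times} (via the requirement that localization neighbourhoods come in nested pairs), ensuring $\confmapSH$ maps the part of the growth stopped at $\tau$ exactly onto the part stopped at $\sigma$.

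I expect the main obstacle to be purely expository rather than mathematical: making precise that $\confmapSH^{-1}$ falls under the hypotheses of Appendix~\ref{app: deterministic LE} (in particular that $\confmapSH(\tilde{\LoeNbhd})$ is again a bounded neighbourhood with a real-line boundary segment, and that $\confmapSH^{-1}$ sends it conformally into $\bH$ with the real segment going to the real line), and carefully tracking which neighbourhood plays the role of $\LoeNbhd$, $\LoeNbhd_2$, $\tilde{\LoeNbhd}$ on each side so that Corollary~\ref{cor: coordinate change of LE is cts fcn of dr fcns} can be invoked symmetrically. Once the setup is arranged symmetrically, the argument is a formal consequence: two continuous maps, each the inverse of the other at the level of the underlying hulls, give a homeomorphism onto the image.
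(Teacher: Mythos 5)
Your proposal is correct and follows essentially the same route as the paper: the paper's proof also consists of applying Corollary~\ref{cor: coordinate change of LE is cts fcn of dr fcns} in both directions (with $\confmapSH$ and with $\confmapSH^{-1}$) and concluding bijectivity from the fact that the two continuous maps are mutually inverse. Your additional care in verifying that $\confmapSH^{-1}$ satisfies the hypotheses of Appendix~\ref{app: deterministic LE} and in tracking the time changes $t \leftrightarrow s$ and stopping times $\tau \leftrightarrow \sigma$ is exactly the bookkeeping the paper leaves implicit.
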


\begin{proof}
By Corollary~\ref{cor: coordinate change of LE is cts fcn of dr fcns},  $V_{t \wedge \tau} $ maps continuously to $\hat{V}_{s \wedge \sigma} $, the driving function of its conformal images, and  $V_{s \wedge \sigma}$ to $\hat{V}_{t \wedge \tau}$, the driving function of the conformal preimages. The bijectivity follows.
\end{proof}

Consider now another special case of Corollary~\ref{cor: coordinate change with converging maps}: take $\DetDrFcn^{(n)} = \DetDrFcn$ for all $n$ so that comparing $\hat{\DetDrFcn}^{(n)}$ and $\hat{\DetDrFcn}$ means comparing the effect of the different conformal maps. The below corollary shows that this effect is small \emph{uniformly} over the choice of $\DetDrFcn$:

\begin{cor}
\label{cor: LE coordinate changes with converging maps converge uniformly over compacts}
Take $\DetDrFcn^{(n)} = \DetDrFcn$ for all $n$, and fix the sequence of conformal maps $\confmapSH_n \to \confmapSH$. Then, we have $d_{\ctsfcns} (\hat{\DetDrFcn}^{(n)}_{s \wedge \sigma^{(n)}}, \hat{\DetDrFcn}_{s \wedge \sigma}) \to 0$ as $n \to \infty$, uniformly over $\DetDrFcn \in C$, for any compact set $C$  in the space of continous functions $\ctsfcns$.
\end{cor}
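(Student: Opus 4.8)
The plan is to deduce this from the single-sequence statement Corollary~\ref{cor: coordinate change with converging maps} together with sequential compactness of $C$, by a standard subsequence-and-contradiction argument. Write $\mathcal{D} \subset \ctsfcns$ for the set of continuous driving functions whose Loewner chains start inside $\LoeNbhd$, and define two maps $\Phi_n, \Phi \colon \mathcal{D} \to \ctsfcns$ by $\Phi_n(\DetDrFcn) = \hat{\DetDrFcn}^{(n)}_{s \wedge \sigma^{(n)}}$ (the stopped driving function of $\confmapSH_n(K_t)$, where $K_t$ is the chain generated by $\DetDrFcn$) and $\Phi(\DetDrFcn) = \hat{\DetDrFcn}_{s \wedge \sigma}$ (the stopped driving function of $\confmapSH(K_t)$); here $C$ is tacitly a compact subset of $\mathcal{D}$, since otherwise the objects in the statement are not defined. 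With this notation the claim is $\sup_{\DetDrFcn \in C} d_{\ctsfcns}\big(\Phi_n(\DetDrFcn), \Phi(\DetDrFcn)\big) \to 0$ as $n \to \infty$.

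First I would record two consequences of the already-established results. By Corollary~\ref{cor: coordinate change of LE is cts fcn of dr fcns} (the case $\confmapSH_n = \confmapSH$), the map $\Phi$ is continuous on $\mathcal{D}$. And Corollary~\ref{cor: coordinate change with converging maps}, applied with a convergent sequence of driving functions $\DetDrFcn^{(k)} \to \DetDrFcn^{(\infty)}$ in $\ctsfcns$ and the maps $\confmapSH_{n_k} \to \confmapSH$ (a subsequence of the given convergent sequence, with Remark~\ref{rem: exit times of neighbourhoods before or after conf map} used to align the stopping-time conventions), yields the ``diagonal'' convergence $\Phi_{n_k}\big(\DetDrFcn^{(k)}\big) \to \Phi\big(\DetDrFcn^{(\infty)}\big)$ in $\ctsfcns$ for any subsequence $n_k \to \infty$.

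Now suppose, for contradiction, that the asserted uniform convergence fails. Then there are $\eps > 0$, a subsequence $n_k \to \infty$, and driving functions $\DetDrFcn^{(k)} \in C$ with $d_{\ctsfcns}\big(\Phi_{n_k}(\DetDrFcn^{(k)}), \Phi(\DetDrFcn^{(k)})\big) \ge \eps$ for all $k$. Since $\ctsfcns$ is a complete separable metric space, the compact set $C$ is sequentially compact, so after passing to a further subsequence (not relabelled) we may assume $\DetDrFcn^{(k)} \to \DetDrFcn^{(\infty)}$ for some $\DetDrFcn^{(\infty)} \in C$. By the two facts recorded above, $\Phi_{n_k}(\DetDrFcn^{(k)}) \to \Phi(\DetDrFcn^{(\infty)})$ and $\Phi(\DetDrFcn^{(k)}) \to \Phi(\DetDrFcn^{(\infty)})$ in $\ctsfcns$, whence
\begin{align*}
d_{\ctsfcns}\big(\Phi_{n_k}(\DetDrFcn^{(k)}), \Phi(\DetDrFcn^{(k)})\big)
&\le d_{\ctsfcns}\big(\Phi_{n_k}(\DetDrFcn^{(k)}), \Phi(\DetDrFcn^{(\infty)})\big) \\
&\quad + d_{\ctsfcns}\big(\Phi(\DetDrFcn^{(\infty)}), \Phi(\DetDrFcn^{(k)})\big) \longrightarrow 0,
\end{align*}
contradicting the lower bound $\eps$. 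This proves the corollary.

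The argument is purely soft once the bookkeeping is in place, and the only point I would be careful about is precisely that bookkeeping in the application of Corollary~\ref{cor: coordinate change with converging maps} to the diagonal sequence $(\DetDrFcn^{(k)}, \confmapSH_{n_k})$: one must check that each $\DetDrFcn^{(k)}$ is a legitimate driving function in the setup of that corollary (its chain starts inside $\LoeNbhd$, which holds because $C \subset \mathcal{D}$), that the continuous exit times entering $\Phi_{n_k}$ and $\Phi$ are the conformally invariant modifications of Appendix~\ref{app: continuous stopping times} defined via a nested pair of neighbourhoods, and that Remark~\ref{rem: exit times of neighbourhoods before or after conf map} indeed permits passing between the exit of $\confmapSH(\LoeNbhd)$ and of $\confmapSH_{n_k}(\LoeNbhd)$ in the definition of $\sigma^{(n_k)}$. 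Since these are exactly the hypotheses under which the cited corollaries were stated, no new analytic input is required.
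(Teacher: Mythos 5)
Your proof is correct and follows essentially the same route as the paper: a contradiction argument using sequential compactness of $C$ to extract a convergent subsequence of driving functions, followed by the triangle inequality splitting the distance into a ``diagonal'' term handled by Corollary~\ref{cor: coordinate change with converging maps} and a fixed-map term handled by the continuity of the coordinate-change map. The extra bookkeeping you flag (that $C$ must consist of admissible driving functions, and the stopping-time conventions of Remark~\ref{rem: exit times of neighbourhoods before or after conf map}) is appropriate and implicit in the paper's version.
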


\begin{proof} For notational reasons, let us equip growing hulls, stopped at the continuous modification of the exit time of $\LoeNbhd$, with the topology of their driving functions. For instance, we deonte $d_{\ctsfcns} (\hat{\DetDrFcn}^{(n)}_{s \wedge \sigma_n}, \hat{\DetDrFcn}_{s \wedge \sigma}) = d ( \confmapSH_n (K_t),  \confmapSH (K_t) )$.

Assume now for a contradiction that for some $\delta$ and infinitely many values of $n$, there exist $\DetDrFcn^{(n)} \in C$ such that
\begin{align*}
d ( \confmapSH_n (K^{(n)}_t) , \confmapSH (K^{(n)}_t)) > \delta,
\end{align*}
where $K^{(n)}_t$ are the growing hulls generated by $\DetDrFcn^{(n)}$
By compactness, we may extract a subsequence (which we suppress in notation) so that $\DetDrFcn^{(n)}$ converge, $\DetDrFcn^{(n)} \to \DetDrFcn$ in $\ctsfcns$. Let $K_t$ be the growing hulls corresponding to $\DetDrFcn$. Now, compute
\begin{align*}
d ( \confmapSH_n (K^{(n)}_t) , \confmapSH (K^{(n)}_t))
\le 
d ( \confmapSH_n (K^{(n)}_t) , \confmapSH (K_t))
+ 
d ( \confmapSH (K_t) , \confmapSH (K^{(n)}_t)).
\end{align*}
By Corollary~\ref{cor: coordinate change with converging maps}, both terms on the right-hand side above converge to $0$ as $n \to \infty$. This is a contradiction, proving the claim.
\end{proof}

\begin{rem}
\label{rem: exit times of neighbourhoods before or after conf map 2}
Also Corollary~\ref{cor: LE coordinate changes with converging maps converge uniformly over compacts} holds with the alternative choice of stopping times $\sigma^{(n)}$ in Remark~\ref{rem: exit times of neighbourhoods before or after conf map}.
\end{rem}

%
%

%
%

\bibliographystyle{annotate}

\newcommand{\etalchar}[1]{$^{#1}$}

\end{document}